\newtheorem{theorem}{Theorem}
\newtheorem{lemma}{Lemma}[section]
\newtheorem{claim}{Claim}[section]
\newtheorem{corollary}{Corollary}
\newtheorem{definition}{Definition}[section]
\newtheorem{observation}{Observation}[section]
\newtheorem{proposition}{Proposition}[section]
\newcommand{\tw}{{\mathbf{tw}}}
\newcommand{\h}[1]{\end{document}}
\newcommand{\cS}{{\mathcal{S}}}
\newcommand{\cO}{{\mathcal{O}}}
\newcommand{\defparproblem}[4]{
  \vspace{1mm}
\noindent\fbox{
  \begin{minipage}{0.96\textwidth}
  \begin{tabular*}{\textwidth}{@{\extracolsep{\fill}}lr} #1  & {\bf{Parameter:}} #3 \\ \end{tabular*}
  {\bf{Input:}} #2  \\
  {\bf{Question:}} #4
  \end{minipage}
  }
  \vspace{1mm}
}
\newcommand{\I}{\cal{I}}
\newcommand{\mat}{$M=(E,{\cal I})$}
\newcommand{\matl}[1]{$M_{#1}=(E_{#1},{\cal I}_{#1})$}
\newcommand{\whnd}[1]{\widehat{#1}}
\newcommand{\rep}[2] {$\widehat{{\cal #1}} \subseteq_{rep}^{#2} {\cal #1}$}
\newcommand{\minrep}[2] {$\widehat{{\cal #1}} \subseteq_{minrep}^{#2} {\cal #1}$}
\newcommand{\maxrep}[2] {$\widehat{{\cal #1}} \subseteq_{maxrep}^{#2} {\cal #1}$}
\newcommand{\tc}[4] {  $\mathcal{T}_{\mbox{{\sl #1}}}( {#2},{#3},{#4})  $}
\newcommand{\tcwd}[4] {  \mathcal{T}_{\mbox{{\sl #1}}}\left( {#2},{#3},{#4} \right)}
\newcommand{\bnoml}[2]{  $\binom{{#1}}{{#2}}$}
\newcommand{\bnomlwd}[2]{ \binom{{#1}}{{#2}}}
\newcommand{\rank}[1]{$\mbox{\sf rank}(#1)$}
\newcommand{\wf}{${w}:{\cal S} \rightarrow \mathbb{N}$}
\newcommand{\awf}{${w}:{\cal A} \rightarrow \mathbb{N}$}
\newcommand{\tgem}{$\cO\left({p+q \choose p} t p^\omega + t {p+q \choose q} ^{\omega-1} \right)$}
\newcommand{\repmat}[1]{$A_{#1}$}
\newcommand{\In}[1]{\mbox{\sf In}(#1)}
\newcommand{\Out}[1]{\mbox{\sf Out}(#1)}
\title{ Efficient Computation of Representative Sets with Applications in Parameterized and Exact Algorithms\thanks{Preliminary versions of this paper appeared in the proceedings of SODA 2014 and ESA 2014. 
Supported by Rigorous Theory of Preprocessing, ERC Advanced    Investigator Grant 267959 and 
Parameterized Approximation, ERC Starting Grant 306992.}}
\author{
{\large\sc Fedor V. Fomin\thanks{University of Bergen, Norway. \texttt{\{fomin|daniello\}@ii.uib.no}}}\addtocounter{footnote}{-1}
\and {\large \sc Daniel Lokshtanov}\footnotemark 
%\addtocounter{footnote}{-1}
\and {\large \sc Fahad Panolan}\addtocounter{footnote}{-1}\footnotemark 
\and {\large\sc Saket Saurabh}\thanks{Institute of Mathematical Sciences, India. \texttt{saket@imsc.res.in}}
\addtocounter{footnote}{-2} \footnotemark
}
\date{}
\begin{document}
\maketitle

\thispagestyle{empty}
\begin{abstract}
%!TEX root = repset-main.tex
\footnotesize{
Let \mat{} be a matroid and let  ${\cal S}=\{S_1, \dots, S_t\}$ be a family of subsets of $E$ of size  $p$. A subfamily $\widehat{\cal{S}}\subseteq \cal S$ is {\em $q$-representative} for $\cal S$ if  for every set $Y\subseteq  E$ of size at most $q$, if there is a set $X \in \cal S$ disjoint from $Y$ with $X\cup Y \in \I$, then there is a set $\widehat{X} \in \widehat{\cal S}$ disjoint from $Y$  with $\widehat{X} \cup  Y \in \I$.  By the classical result of Bollob{\'a}s, in a uniform matroid, every family of sets of size $p$ has a $q$-representative family with at most  $\binom{p+q}{p}$ sets. In  his famous ``two families theorem''  from 1977, Lov{\'a}sz proved that the same bound also holds for any   matroid     representable over a field $ \mathbb{F}$.  As observed by Marx, Lov{\'a}sz's proof is constructive.  In this paper we  show how Lov{\'a}sz's proof can be turned into an algorithm constructing a  $q$-representative family of size at most $\binom{p+q}{p}$ in time bounded by a polynomial in $\binom{p+q}{p}$,  $t$, and the time required for field operations. 

We demonstrate how the  efficient construction of representative families can be a powerful tool for designing single-exponential parameterized and exact exponential time algorithms. The applications of our approach include the following.
\begin{itemize}
\item  In the  \textsc{Long Directed Cycle} problem the input is a directed $n$-vertex graph $G$ and the positive integer $k$. The task is to find  a directed cycle of length at least $k$ in $G$, if such a cycle exists.  As a consequence of our $6.75^{k+o(k)} n^{\cO(1)}$ time algorithm, we have that a directed cycle of length at least $\log{n}$, if such cycle exists, can be found in polynomial time.     
As it was shown by Bj{\"o}rklund, Husfeldt,  and Khanna [ICALP 2004], under an appropriate complexity assumption, it is 
impossible to improve this guarantee by more than  a constant factor. Thus our algorithm not only improves over the best previous   $\log{n}/\log\log{n}$ bound of Gabow and Nie [SODA 2004]  but also closes the gap between known lower and upper bounds for this problem.
\item In the  \textsc{Minimum Equivalent Graph} (MEG) problem we are 
 seeking a spanning subdigraph $D'$  of a given $n$-vertex digraph $D$ with as few arcs as possible in which the reachability relation is the same as in the original digraph $D$. The existence of a single-exponential $c^n$-time algorithm for some constant $c>1$ for MEG was  open since the work of Moyles and Thompson [JACM 1969]. 
 \item To demonstrate the diversity of applications of the approach, we provide an alternative proof of the results recently obtained by Bodlaender, Cygan, Kratsch and Nederlof for algorithms on graphs of bounded treewidth, who showed that 
many ``connectivity''   problems such as  \textsc{Hamiltonian Cycle} or \textsc{Steiner Tree} can be solved in time $2^{\cO(t)}n$ on $n$-vertex graphs of treewidth at most $t$.   We believe that expressing graph problems in ``matroid language''  shed light on what makes it possible to solve connectivity problems single-exponential time parameterized by treewidth.      
\end{itemize}
For the special case of uniform matroids on $n$ elements, we give a faster algorithm to compute a representative family.
 %in time  $\cO((\frac{p+q}{q})^q \cdot 2^{o(p+q)}\cdot t \cdot \log{n})$. 
 We use this algorithm to provide the fastest known  deterministic parameterized algorithms for   \textsc{$k$-Path},   \textsc{$k$-Tree}, and more generally, for  \textsc{$k$-Subgraph Isomorphism}, where the $k$-vertex pattern graph is of constant treewidth. For example, our  \textsc{$k$-Path} algorithm runs in time $\cO(2.619^{k} n \log {n} \log{W})$ on weighted graphs with maximum edge weight $W$.  }

%For the special case of uniform matroids on $n$ elements, we give a faster algorithm computing a representative family in time  $\cO((\frac{p+q}{q})^q \cdot 2^{o(p+q)}\cdot t \cdot \log{n})$. We use this algorithm to provide the fastest known  deterministic parameterized algorithms for   \textsc{$k$-Path},   \textsc{$k$-Tree}, and more generally, for  \textsc{$k$-Subgraph Isomorphism}, where the $k$-vertex pattern graph is of constant treewidth. For example, our  \textsc{$k$-Path} algorithm runs in time $\cO(2.851^{k} n \log^2 {n} \log{W})$ on weighted graphs with maximum edge weight $W$.  

%\todo[inline]{abstract needs to be cleaned.}
%\end{abstract}

%\newpage
%\setcounter{page}{1}

\end{abstract}
\tableofcontents
%\newpage
%\setcounter{page}{1}

\section{Introduction}
%!TEX root = repset-main.tex

The theory of matroids  provides   a deep insight into the tractability of many fundamental  problems in Combinatorial Optimizations like \textsc{Minimum Weight Spanning Tree} or \textsc{Perfect Matching}.  Marx in   \cite{Marx09}
was the first to apply matroids to design fixed-parameter tractable algorithms. The main tool used by  Marx was the notion of \emph{representative families}. Representative families for set systems were introduced by Monien in    \cite{Monien85}.
% who used them for his \textsc{$k$-Path} algorithm. Plehn and Voigt used  representative families for \textsc{Subgraph Isomorophism}  in 
%\cite{Plehn:1991fk}.

Let \mat{} be a matroid and let  ${\cal S}=\{S_1, \dots, S_t\}$ be a family of subsets of $E$ of size  $p$. A subfamily $\widehat{\cal{S}}\subseteq \cal S$ 
is {\em $q$-representative} for $\cal S$ 
if  for every set $Y\subseteq  E$ of size at most $q$, if there is a set $X \in \cal S$ disjoint from $Y$ with $X\cup Y \in \I$, then there is a set $\widehat{X} \in \widehat{\cal S}$ disjoint from $Y$  and $\widehat{X} \cup  Y \in \I$.  
In  other words, if a set $Y$ of size at most $q$ can be extended  to an independent set of size $|Y|+p$ by adding a subset from $\cal S$, then it also can be extended to an independent set of size $|Y|+p$ by adding a subset from $ \widehat{\cal S}$ as well.

The Two-Families Theorem of  Bollob{\'a}s \cite{Bollobas65} for extremal set systems and its generalization  to subspaces of a vector space of Lov{\'a}sz \cite{Lovasz77} (see also \cite{Frankl82}) imply that every family of sets of size $p$ has a $q$-representative family with at most  
 $\binom{p+q}{p}$ sets. These theorems are 
 the corner-stones in extremal set theory with 
numerous applications in graph and hypergraph theory, combinatorial geometry and theoretical computer science. We refer to  Section 9.2.2 of
\cite{jukna2011extremal}, surveys of Tuza \cite{Tuza94,Tuza96}, and Gil Kalai's blog\footnote{http://gilkalai.wordpress.com/2008/12/25/lovaszs-two-families-theorem/} for more information on the theorems and their applications.
  
% By  the Two-Families Theorem of  Bollob{\'a}ss \cite{Bollobas65} for universal matroids and its generalization  for representable matroids of Lov{\'a}sz \cite{Lovasz77},  see also \cite{Frankl82}, every family of sets of size $p$ has a $q$-representative family with at most  $\binom{p+q}{p}$ sets. These theorems are 
% the corner-stones in extremal set theory with 
%numerous applications in graph and hypergraph theory, combinatorial geometry and theoretical computer science. We refer to  Section 9.2.2 of
%\cite{Jukna01Ex}, surveys of Tuza \cite{Tuza94,Tuza96}, and Gil Kalai's blog\footnote{http://gilkalai.wordpress.com/2008/12/25/lovaszs-two-families-theorem/} for more information on the theorems and their applications.
  
 For set families, or equivalently for uniform matroids, Monien provided an algorithm computing a $q$-representative family  of size at most $\sum_{i=0}^q p^i$  in time  $\cO(p  q  \cdot \sum_{i=0}^q p^i \cdot t )$    \cite{Monien85}. 
 Marx in \cite{Marx:2006ys} provided another algorithm, also for uniform matroids, for finding $q$-representative families of size at most 
 $\binom{p+q}{p}$ in  time $\cO(p^q \cdot t^2)$.
For linear matroids, 
Marx~\cite{Marx09}  has shown how Lov{\'a}sz's proof can be transformed into an algorithm computing a $q$-representative family.  However, the running time of the algorithm 
given in~\cite{Marx09} is  $f(p,q)(||A_M||t)^{\cO(1)}$,  where $f(p,q)$ is a polynomial 
in $(p+q)^p$ and $\binom{p+q}{p}$,  that is, $f(p,q)=2^{\cO(p \log (p+q))} \cdot {p+q \choose p}^{\cO(1)}$, and $A_M$ is the matroid's representation matrix. 
Thus, when $p$ is a constant,  which is the way this lemma has been  recently used in the kernelization algorithms~\cite{KratschW12}, we have 
that $f(p,q)=(p+q)^{\cO(1)}$. However, for unbounded $p$ (for an example when $p=q=\frac{k}{2}$) the running time of this algorithm is bounded by 
$2^{\cO(k \log k)} (||A_M||t)^{\cO(1)}$.

\medskip
\noindent\textbf{Our results.}
We give two faster algorithms computing  representative families and show how they can be used to obtain improved parameterized and exact exponential algorithms for several fundamental and  well studied problems.

Our first result is the following 

\begin{theorem}
\label{thm:repsetlovasz}
Let \mat{}   be a linear matroid of rank $p+q=k$ given together with its representation matrix  \repmat{M}  over a field $ \mathbb{F}$. 
Let $ \cS = \{S_1,\ldots, S_t\}$ be a family of independent sets of size $p$. Then 
a $q$-representative family $\widehat{\cal{S}}\subseteq \cal S$  for $ \cS$ with at most  \bnoml{p+q}{p} sets can be found in  \tgem \, operations over $ \mathbb{F}$. Here, $\omega<2.373$ is the matrix multiplication exponent.
%known the constant in the e
%onstant \omega , called the exponent of matrix multiplication
%\footnote{To discuss dependence on $||A_M||$}
 \end{theorem}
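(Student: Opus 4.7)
The plan is to turn Lovász's exterior-algebra argument into an efficient two-stage linear-algebra computation. Let $N = \binom{p+q}{p}$ and let $A_M \in \mathbb{F}^{k \times |E|}$ be the given representation. To each $p$-set $S \subseteq E$ associate its Plücker vector $v_S \in \mathbb{F}^N$ whose entries are the $N$ maximal (that is, $p \times p$) minors of the $k \times p$ column submatrix of $A_M$ indexed by $S$; equivalently, $v_S$ represents the wedge product of those columns inside the $N$-dimensional space $\bigwedge^p \mathbb{F}^k$. For a $q$-set $Y \subseteq E$, let $u_Y \in \bigwedge^q \mathbb{F}^k$ denote its wedge. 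The basic exterior-algebra identity gives
\[
S \cup Y \in \I \;\Longleftrightarrow\; v_S \wedge u_Y \neq 0 \text{ in } \bigwedge^{p+q}\mathbb{F}^k,
\]
and $v_S \wedge u_Y$ vanishes automatically whenever $S \cap Y \neq \emptyset$, so disjointness is built into the criterion.

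From this Lovász's observation is immediate: if $\widehat{\cS} \subseteq \cS$ is chosen so that $\{v_S : S \in \widehat{\cS}\}$ spans the same subspace of $\bigwedge^p \mathbb{F}^k$ as $\{v_S : S \in \cS\}$, then $\widehat{\cS}$ is $q$-representative. Indeed, whenever $S_i \cup Y \in \I$, we may write $v_{S_i} = \sum_j \lambda_j v_{S_j}$ with $S_j \in \widehat{\cS}$, and by linearity of $\cdot \wedge u_Y$ some $v_{S_j} \wedge u_Y \neq 0$, giving a witness $S_j \in \widehat{\cS}$ with $S_j \cap Y = \emptyset$ and $S_j \cup Y \in \I$. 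Since $\dim \bigwedge^p \mathbb{F}^k = N$, any maximal linearly independent subfamily of the $v_{S_i}$'s has at most $\binom{p+q}{p}$ members, matching the promised size bound.

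The algorithm therefore has two stages. In stage~(i) we compute each $v_{S_i}$, which amounts to evaluating the $N$ maximal minors of a $k \times p$ matrix. A single $p \times p$ determinant can be computed in $\mathcal{O}(p^\omega)$ field operations, so one vector costs $\mathcal{O}(N p^\omega)$ and all $t$ vectors together cost $\mathcal{O}(tNp^\omega)$ operations, giving the first term of the target running time. In stage~(ii) we stack $v_{S_1}, \ldots, v_{S_t}$ as rows of a $t \times N$ matrix and extract a row basis, recording the indices of the chosen rows; block Gaussian elimination accelerated by fast matrix multiplication selects such a basis of $t$ vectors in $\mathbb{F}^N$ in $\mathcal{O}(tN^{\omega-1})$ operations, and using $\binom{p+q}{p} = \binom{p+q}{q}$ this matches the second term.

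The main obstacle is not conceptual---Lovász already supplies the mathematical content via the wedge criterion---but is the efficient realization of step~(ii): to meet the $\mathcal{O}(tN^{\omega-1})$ bound one cannot afford naive row-reduction pivoting one row at a time, and must instead batch candidate rows against a growing pivot block using fast matrix multiplication so that each row contributes only $\mathcal{O}(N^{\omega-1})$ amortized work. Given this standard primitive, the bookkeeping needed to return the subfamily $\widehat{\cS}$ of $\cS$ corresponding to the extracted basis is straightforward, and the overall running time matches \tgem.
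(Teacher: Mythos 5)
Your proposal is correct and follows essentially the same route as the paper: associate to each $S_i$ the vector of its $p\times p$ minors (the wedge/Pl\"ucker vector), keep a subfamily indexing a basis of the span of these vectors, prove representativity by linearity of wedging with $u_Y$ (the paper phrases this via the generalized Laplace expansion to avoid exterior-algebra terminology), and obtain the stated running time by computing the minors in $\cO(tNp^{\omega})$ operations and extracting a basis by fast Gaussian elimination in $\cO(tN^{\omega-1})$ operations. The only detail you leave implicit is the case $|Y|<q$, which the paper settles by extending $Y$ to a $q$-set using that the rank is $p+q$; in your formulation it also follows directly since the wedge criterion and the linearity argument hold verbatim in $\bigwedge^{p+|Y|}\mathbb{F}^{k}$.
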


%\begin{theorem}
%\label{thm:repsetlovasz}
%Let \mat{}   be a linear matroid of rank $p+q=k$ given together with its representation matrix  \repmat{M}  over a field $ \mathbb{F}$. 
%Let $ \cS = \{S_1,\ldots, S_t\}$ be a family of independent sets of size $p$. Then 
%a $q$-representative family $\widehat{\cal{S}}\subseteq \cal S$  for $ \cS$ with at most  \bnoml{p+q}{p} sets can be found in  \tgem \, operations over $ \mathbb{F}$. %\footnote{To discuss dependence on $||A_M||$}
% \end{theorem}
 
Actually, we will prove a  variant of  Theorem~\ref{thm:repsetlovasz} which allows sets to have weights. This extension will be used in several applications. This theorem uses  the notion of weighted representative families and computes  
a weighted  $q$-representative family of size  at most  \bnoml{p+q}{p} within the running time claimed in  Theorem~\ref{thm:repsetlovasz}.  The proof of Theorem~\ref{thm:repsetlovasz} relies on the exterior algebra based proof of Lov{\'a}sz \cite{Lovasz77}  and exploits the multi-linearity of  the 
 determinant function. 
 % to implement the proof of Lov{\'a}sz \cite{Lovasz77} in the claimed running time. 

 For the case of uniform matroids, we provide the following theorem
 
\begin{theorem}
\label{thm:fastRepUniform}
Let $ \cS = \{S_1,\ldots, S_t\}$ be a family of sets of size $p$ over a universe of size $n$ and let $0<x<1$.
% be a 
%fixed constant. 
For a given $q$, a $q$-representative family $\widehat{\cal{S}}\subseteq \cal S$  for $ \cS$  with at most  
$ {x^{-p}(1-x)^{-q}} \cdot 2^{o(p+q)} $ sets an be computed in time  
$\cO((1-x)^{-q} \cdot 2^{o(p+q)}\cdot t \cdot \log{n})$.
% There is an algorithm that given a $p$-family ${\cal A}$ of sets  over a universe $U$ of size $n$,  an integer $q$, 
%and a non-negative weight function    \awf{} with maximum value at most $W$,
%computes in time 
%$\cO(|{\cal A}| \cdot  (1-x)^{-q}\cdot 2^{o(p+q)} \cdot \log n +|{\cal A}|\cdot \log |{\cal A}| \cdot  \log{W} )$ 
%%\todo{should we write the factor $n$}  
%%$\cO(|{\cal A}|\cdot (\frac{p+q}{q})^q  \cdot  \log n +|{\cal A}|\cdot \log |{\cal A}| \cdot  \log{W})$ 
%a subfamily 
%$\widehat{\cal A}\subseteq \cal A$ such that 
%$|{\cal A}'| \leq  {x^{-p}(1-x)^{-q}} \cdot 2^{o(p+q)} $
%and 
% \minrep{A}{q} (\maxrep{A}{q}).
\end{theorem}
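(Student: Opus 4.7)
The plan is to derandomize a biased color-coding. I would construct a family $\FF$ of colorings $\chi\colon[n]\to\{0,1\}$ with the separating property that, for every disjoint pair $(X,Y)\subseteq[n]$ with $|X|=p$ and $|Y|=q$, some $\chi\in\FF$ is identically $0$ on $X$ and identically $1$ on $Y$. Given such $\FF$, the algorithm is simply: for each $\chi\in\FF$, keep in $\widehat\cS$ at most one set from $\cS$ (the minimum-weight one, in the weighted variant) that is entirely contained in $\chi^{-1}(0)$. Correctness follows immediately: given $Y$ with $|Y|\le q$ and $X\in\cS$ disjoint from $Y$, first pad $Y$ with elements outside $X\cup Y$ to exact size $q$, then invoke the separating $\chi$, and note that the representative kept for $\chi$ sits in $\chi^{-1}(0)$ and so avoids $Y\subseteq\chi^{-1}(1)$. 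Since each $\chi$ contributes at most one set, $|\widehat\cS|\le|\FF|$.

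For the size bound, a probabilistic argument drives the construction of $\FF$. A random $\chi$ with $\Pr[\chi(v)=0]=p/(p+q)$ separates a fixed pair $(X,Y)$ with probability $p^pq^q/(p+q)^{p+q}=\Theta(1/\binom{p+q}{p})$ by Stirling, so a union bound over all disjoint pairs in $[n]$ shows that a random family of size $\binom{p+q}{p}\cdot\poly(p+q)\cdot\log n=\binom{p+q}{p}\cdot 2^{o(p+q)}\log n$ is separating with positive probability. I would derandomize this biased distribution via the Naor--Schulman--Srinivasan splitter construction to obtain $\FF$ of the same cardinality deterministically in time polynomial in $|\FF|$.

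For the running time, the naive sweep that scans all of $\cS$ once per $\chi\in\FF$ costs $|\FF|\cdot O(tp)$, which is a factor of roughly $\binom{p+q}{p}/(\tfrac{p+q}{q})^q$ too large. I would recover the claim by reversing the loops: for each $S\in\cS$, enumerate only those $\chi\in\FF$ with $S\subseteq\chi^{-1}(0)$ and update the associated representative. Under the biased distribution, each fixed $S$ lies inside $\chi^{-1}(0)$ for an expected $|\FF|\cdot(p/(p+q))^p=\Theta((\tfrac{p+q}{q})^q\cdot\poly(p+q)\cdot\log n)$ colorings by the same Stirling identity, so amortized $O(1)$ per hit delivers the stated $(\tfrac{p+q}{q})^q\cdot 2^{o(p+q)}\cdot t\cdot\log n$ bound. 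The main obstacle is producing $\FF$ together with this queryable structure: a black-box invocation of the NSS splitter only certifies $|\FF|$, so the plan is to open up its Reed--Solomon / polynomial-evaluation layers so that, for a given $S$, the relevant $\chi$'s can be listed in time proportional to their number rather than $|\FF|$.
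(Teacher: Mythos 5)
Your high-level approach is essentially the paper's. The paper proves Theorem~\ref{thm:fastRepUniform} (via Theorem~\ref{thm:uniformmatroidweighted}) by exactly the strategy you describe: build a biased lopsided-universal family, assign to each $F$ the first (minimum-weight) set $A \in {\cal A}$ found inside it, and -- crucially -- process the sets $A$ rather than the $F$'s, querying only the $F$'s that contain a given $A$. Your probability calculations (the $p^p q^q / (p+q)^{p+q}$ success probability and the $(p/(p+q))^p$ expected-degree bound, giving $\binom{p+q}{p}(p/(p+q))^p \approx (\tfrac{p+q}{q})^q$) match the paper's Lemma~\ref{lem:slowBalancedUniversal}, and you correctly diagnose the sole remaining obstacle: producing the family together with a queryable structure that lists, for a given $A$, all $F \supseteq A$ in time proportional to their number rather than to $|\FF|$.

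Where you and the paper part ways is in how that obstacle is surmounted, and here your plan is underspecified. You propose to ``open up the Reed--Solomon / polynomial-evaluation layers'' of the Naor--Schulman--Srinivasan splitters. The paper instead introduces its own abstraction -- $n$-$p$-$q$-\emph{separating collections} (Lemma~\ref{lem:goodSeparatingCol}), a data structure whose size, \emph{max degree}, \emph{initialization time}, and \emph{query time} are controlled jointly -- and builds it from scratch via a three-lemma bootstrapping that it iterates twice: (i) random construction with brute-force derandomization over tiny universes (Lemma~\ref{lem:slowBalancedUniversal}), (ii) universe reduction from $n$ to $(p+q)^2$ via $(p+q)$-perfect hash functions (Lemma~\ref{lem:reduceUniverse}), and (iii) splitting the universe into roughly $(p+q)/\log^2(p+q)$ consecutive intervals and taking products of small-universe collections (Lemma~\ref{lem:splitSolution}). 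The degree control -- ensuring that each $A$ lies in few of the constructed $F$'s, not merely on average but worst-case, and that these $F$'s can be enumerated quickly -- is precisely what the separating-collection machinery certifies and what a black-box or even a modestly opened NSS splitter does not give you for free. Your sketch gestures at the right territory but leaves the central technical construction (roughly two-thirds of the paper's Section~\ref{subsec:sepCollect}) unproven, and the specific route through Reed--Solomon decoding is not the one the paper takes and is not obviously the one that works.
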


%\begin{theorem}
%\label{thm:fastRepUniform}
%Let $ \cS = \{S_1,\ldots, S_t\}$ be a family of sets of size $p$ over a universe of size $n$.
%For a given $q$, a $q$-representative family $\widehat{\cal{S}}\subseteq \cal S$  for $ \cS$  with at most  
%${p+q \choose p} \cdot 2^{o(p+q)} \cdot \log n$ sets can be computed in time
% $\cO((\frac{p+q}{q})^q \cdot 2^{o(p+q)}\cdot t \cdot \log{n})$.
%%$t\cdot {p+q \choose p} \cdot 2^{o(p+q)} \cdot \log n$.
%%
%%There is an algorithm that given a family ${\cal A}$ of sets of size $p$ over a universe $U$ of size $n$ and an integer $q$, computes in time $|{\cal A}|\cdot {p+q \choose p} \cdot 2^{o(p+q)} \cdot \log n$ a subfamily ${\cal A}' \subseteq A$ such that $|{\cal A}'| \leq {p+q \choose p} \cdot 2^{o(p+q)} \cdot \log n$ and ${\cal A'}$ $q$-represents ${\cal A}$.
%\end{theorem}

As in the case of Theorem~\ref{thm:repsetlovasz}, we prove a more general version of  Theorem~\ref{thm:fastRepUniform}  for weighted sets.
The proof of Theorem~\ref{thm:fastRepUniform} is essentially an algorithmic variant of the ``random permutation'' proof of Bollob\'{a}s Lemma (see~\cite[Theorem 8.7]{jukna2011extremal}). A slightly weaker variant of Bollob\'{a}s Lemma can be proved using random partitions instead of random permutations, the advantage of the random partitions proof being that it can be de-randomized using efficient constructions of {\em universal sets}~\cite{NaorSS95}. To obtain our results we define {\em separating collections} and give efficient constructions of them.

Separating collections can be seen as a variant of universal sets. In its simplest form, an $n$-$p$-$q$-{\em separating collection} ${\cal C}$ is a pair $({\cal F}, \chi)$, where ${\cal F}$ is a family of sets over a universe $U$ of size $n$ and $\chi$ is a function from ${U \choose p}$ to $2^{\cal F}$ such that the following two properties are satisfied;
(a) for every $A \in {U \choose p}$ and every $F \in \chi(A)$, $A \subseteq F$, 
(b) for every $A \in {U \choose p}$ and $B \in {U\setminus A \choose q}$, there is an $F \in \chi(A)$  such that $A \subseteq F$ and $F \cap B = \emptyset$.
%\end{itemize}
The {\em size} of  $({\cal F},\chi)$ is $|{\cal F}|$, whereas the {\em max degree} of $({\cal F},\chi)$ is $\max_{A \in {U \choose p}} |\chi(A)|$.  Here $2^S$ for a set $S$ is the family of all subsets of $S$ while ${S \choose p}$ is the family of all subsets of $S$ of size $p$.

An efficient construction of separating collections is an algorithm that given $n$, $p$ and $q$ outputs the family ${\cal F}$ of a separating collection $({\cal F},\chi)$ and then allows queries $\chi(A)$ for $A \in {U \choose p}$. We give constructions of separating collections of optimal (up to subexponential factors in $p+q$) size and degree, and construction and query time which is linear (up to subexponential factors in $p+q$) in the size of the output. 
%We must remark that the definition of  $n$-$p$-$q$-{\em separating collection} is more complex

In the conference version of the paper~\cite{FominLS14}, we only proved Theorem~\ref{thm:fastRepUniform}  for $x=\frac{p}{p+q}$. That is, let $ \cS = \{S_1,\ldots, S_t\}$ be a family of sets of size $p$ over a universe of size $n$. Then, for a given $q$, a $q$-representative family $\widehat{\cal{S}}\subseteq \cal S$  for $ \cS$  with at most  
${p+q \choose p} \cdot 2^{o(p+q)} \cdot \log n$ sets can be computed in time
 $\cO((\frac{p+q}{q})^q \cdot 2^{o(p+q)}\cdot t \cdot \log{n})$. Later we observed that our proof works for every $0<x<1$ and allows an interesting trade-off between the size of the computed representative families and the time taken to compute them~\cite{FominLPS14}, and that this trade-off can be exploited algorithmically to speed up ``representative families based'' algorithms. Theorem~\ref{thm:fastRepUniform} improves over the one in~\cite{FominLS14} by shaving off a multiplicative factor of $\log n$ from the upper bound on the output family size. 
%Furthermore, this trade-off can be exploited algorithmically to speed up ``representative families based'' algorithms that only use $x=\frac{p}{p+q}$. 
Independently, at the same time,  Shachnai and Zehavi~\cite{ShachnaiZ14} also observed that our initial proof could be generalized in essentially the same way as what is stated in Theorem~\ref{thm:fastRepUniform}, and that this generalization used to speed up some of the  algorithms given in the preliminary version of the paper~\cite{FominLS14}. In particular they obtain the same dependence on $k$ in the running time bounds as in this paper for $k$-{\sc Path} and {\sc Long Directed Cycle}. 

 %However, this is not the case with the algorithm of   Shachnai and Zehavi~\cite{ShachnaiZ14}.
% \todo[inline]{cite our and zehavi esa papers here also that her article has an additional $\log n.$}
%$t\cdot {p+q \choose p} \cdot 2^{o(p+q)} \cdot \log n$.

\medskip\noindent\textbf{Applications.} Here we provide the list of main applications that can be derived from our algorithms that 
compute representative families together with a short overview of previous work on each application.

\begin{table}[htp]
\begin{center}
\begin{tabular}{|c|c|c|}
\hline
Reference & Randomized & Deterministic \\ \hline
Monien  \cite{Monien85} &- & $\cO(k! n m)$ \\
Bodlaender \cite{Bodlaender93a} &- & $\cO(k!2^k n)$ \\
Alon et al.~\cite{AlonYZ}  &   $\cO(5.44^k n)$ &   $\cO(c^k n\log{n})$ for a large $c$\\
  Kneis at al.~\cite{KneisMRR06}  & $\cO^*(4^k)$  & $\cO^*(16^k)$ \\
  %Chen et al.~\cite{ChenLSZ07}  & $\cO(4^{k} k^{3.42} m)$   &  $4^{k+o(k)} m$ or $4^{k+o(k)} nm$???\\
Chen et al. \cite{ChenKLMR09} &   $\cO(4^{k} k^{2.7} m)$  & $4^{k+\cO(\log^3{k})} nm$ \\ 
Koutis~\cite{Koutis08} & $\cO^*( 2.83^k)$ &  - \\
Williams~\cite{Williams09} & $\cO^*(2^{k})$ & -\\
Bj{\"o}rklund et al. \cite{BjHuKK10} & $\cO^*(1.66^k)$ & - \\
Conference version &-&  $\cO(2.851^{k} n \log^2{n})$  \\
This paper &-&  $\cO(2.619^{k} n \log {n})$  \\
\hline
\end{tabular}
\end{center}
\caption{Results for {\sc $k$-Path}. 
We use $\cO^*()$ notation that hides factors polynomial in  
the number of vertices $n$ and the parameter $k$ in cases when the authors   do not specify the power of polynomials.} 
\label{table:kpath}
\end{table}%

\medskip\noindent
\textbf{$k$-Path.}  In the \textsc{$k$-Path} problem we are given an undirected $n$-vertex graph $G$ and integer $k$. The question is if $G$ contains a path of length $k$. 
 \textsc{$k$-Path}  was  studied intensively within the parameterized complexity paradigm \cite{DowneyF99}. For $n$-vertex graphs the problem is trivially solvable in time $\cO(n^{k})$.  Monien~\cite{Monien85} and Bodlaender showed  that the problem is fixed parameter tractable.   
Monien  used representative families for set systems for his \textsc{$k$-Path} algorithm~\cite{Monien85} and Plehn and Voigt  extended this algorithm to \textsc{Subgraph Isomorphism}  in \cite{Plehn:1991fk}. This led Papadimitriou and Yannakakis \cite{PapadimitriouY96} to conjecture that the problem is solvable in polynomial time for $k=\log{n}$. This conjecture was resolved in
  a seminal paper of Alon et al.~\cite{AlonYZ}, who introduced the method of {color-coding} and  obtained the first single exponential algorithm for the problem.  Actually, the method of  Alon et al. can be applied for more general problems, like finding a $k$-path in directed graphs, or to solve the {\sc Subgraph Isomorphism} problem in  time $2^{\cO(k)}n^{\cO(t)}$, when the treewidth of the pattern graph is bounded by $t$. 
  There has been 
a lot of efforts in parameterized algorithms to reduce the base of the exponent of both deterministic as well as the randomized 
algorithms for the {\sc $k$-Path} problem, see Table~\ref{table:kpath}. 
After the work of Alon et al.~\cite{AlonYZ}, there were several breakthrough ideas leading to faster and faster \emph{randomized} algorithms. Concerning deterministic algorithms, no improvements occurred since 2007, when 
Chen et al.~\cite{ChenLSZ07}  showed a clever way of applying  universal sets to reduce the running time of color-coding algorithm to  
 $\cO^*(4^{k+o(k)})$.

\textsc{$k$-Path}  is a special case of the {\sc $k$-Subgraph Isomorphism} problem, where for given $n$-vertex graph $G$ and $k$-vertex graph 
$F$, the question is whether $G$ contains a subgraph isomorphic to $F$.
In addition to  \textsc{$k$-Path}, parameterized algorithms for two other variants of   {\sc $k$-Subgraph Isomorphism}, when $F$ is a tree, and more generally, a graph of treewidth at most $t$, were studied in the literature.
Alon et al.~\cite{AlonYZ} showed that   {\sc $k$-Subgraph Isomorphism}, when the treewidth of the pattern graph is bounded by $t$,  is solvable 
 in time $2^{\cO(k)}n^{\cO(t)}$.   
Cohen et al. gave a randomized  algorithm that for  an input digraph $D$ decides in time  $5.704^k n^{\cO(1)}$ if $D$ contains  a given out-tree with $k$ vertices
 \cite{CohenFG10}.  They also showed how to derandomize the  algorithm  in time $ 6.14^k n^{\cO(1)}.$ 
Amini et al.~\cite{AminiFS12} introduced an inclusion-exclusion based approach in the 
classical {color-coding} and   gave a randomized $5.4^kn^{\cO(t)}$ time algorithm and a 
deterministic $5.4^{k+o(k)} n^{\cO(t)}$  time algorithm for the case when $F$ has treewidth at most 
$t$.  Koutis and Williams~\cite{KW09} generalized their algebraic approach for {\sc $k$-Path} to {\sc $k$-Tree} 
%the case when  $F$ is a tree on $k$-vertices, 
and obtained a randomized algorithm running in time 
$2^kn^{\cO(1)}$ for {\sc $k$-Tree}. 
A superset of the authors in \cite{FominLRS12}, extended this result by providing 
 a randomized algorithm for {\sc $k$-Subgraph Isomorphism}  running in time 
$2^k(nt)^{\cO(t)}$, when the treewidth of $F$ is at most $t$. However,  the fastest known deterministic algorithm for this problem  
prior to this paper, was the time $5.4^{k+o(k)} n^{\cO(t)}$ algorithm from \cite{AminiFS12}. In this paper we give deterministic algorithms for 
{\sc $k$-Path} and {\sc $k$-Tree} that run in time $\cO(2.619^{k} n \log n)$ and $\cO(2.619^{k} n^{\cO{(1)}} )$. The algorithm for {\sc $k$-Tree}  
can be generalized  to {\sc $k$-Subgraph Isomorphism} for the case when the pattern graph $F$ has treewidth at most $t$. This  algorithm will run in time $\cO(2.619^{k} n^{\cO{(t)}})$. Our approach can also be applied to find directed paths and cycles of length $k$ in time 
$\cO(2.619^{k} m \log n  )$ and $\cO(2.619^{k} n^{\cO{(1)}} )$ respectively.

 Another interesting feature of our approach is that due to using weighted representative families, we can handle the weighted version of the problem as well.   The  weighted version of \textsc{$k$-Path} is known as \textsc{ Short Cheap Tour}.  Let $G$ be a graph with maximum edge cost  $W$, then the problem is to find a path of length at least $k$ where the total sum of costs on the edges is minimized. The algorithm of Bj{\"o}rklund et al. \cite{BjHuKK10} can be adapted to solve \textsc{ Short Cheap Tour} in time $\cO( 1.66^{k}  n^{\cO(1)} {W} )$, however, their approach does not seem to be applicable to obtain  
  algorithms with polylogarithmic  dependence  on $W$.
 Williams in \cite{Williams09} observed that  a divide-and-color approach from \cite{ChenKLMR09} can be used to solve   \textsc{ Short Cheap Tour}    
in time $\cO( 4^{k}  n^{\cO(1)} \log{W} )$. No better algorithm for \textsc{ Short Cheap Tour} was known prior to our work.  As it was noted by  Williams, the $\cO( 2^{k} n^{\cO(1)})$  algorithm of his paper does not appear to extend to weighted graphs. Our approach provides deterministic  $\cO(2.619^{k} n^{\cO{(1)}} \log{W})$  time algorithm for 
\textsc{ Short Cheap Tour} and partially resolves an open question asked  by Williams. 

%\todo[inline]{change the time complexity.}

\medskip\noindent
\textbf{Long  Directed Cycle.} In the {\sc Long Directed Cycle} problem we are interested in finding a cycle of length at least $k$ in a directed graph. For this problem we give an algorithm of running time  $\cO(6.75^{k+o(k)} mn^2\log n).$

While at the first glance the problem is similar to the problem of finding a cycle or a path of length exactly $k$, it is more tricky. The reason is that the problem of finding a cycle of length $\geq  k$ may entail finding a much longer, potentially even a Hamiltonian cycle. This is why color-coding, and other techniques applicable to \textsc{$k$-Path}  do not seem to work here. Even for undirected graphs color-coding alone is not sufficient, and one needs an additional clever trick to make it work.
The first fixed-parameter tractable algorithm for {\sc Long Directed Cycle}  is due to Gabow and Nie \cite{GabowN08}, who gave algorithms with expected  running time $k^{2k}2^{\cO(k)}nm$ and worst-case times $\cO(k^{2k}2^{\cO(k)}nm\log{n})$ or $\cO(k^{3k}nm)$. These running times allow them to   
 find a directed cycle of length at least $\log{n}/\log{\log{n}}$ in expected polynomial time, if it exists. Let us note, that our algorithm implies that 
 one can find in polynomial time a directed cycle of length at least  $\log{n}$ if there is such a cycle.
 On the other hand, 
Bj{\"o}rklund et al. \cite{BjorklundHK04}  have shown that 
assuming Exponential Time Hypothesis (ETH) of Impagliazzo et al. \cite{ImpagliazzoPZ01},   there is no polynomial time algorithm that finds a  directed cycle of length $\Omega(f(n)\log{n})$, for any nondecreasing, unbounded, polynomial time computable function $f$ that tends to infinity. Thus, our work closes the gap between the upper and lower bounds for this problem. 

%\todo[inline]{change the time complexity.}

\medskip\noindent
\textbf{Minimum Equivalent Graph.} Our next application is from exact exponential time algorithms, we refer to  \cite{Fomin:2011rr} for an introduction to the area of exact algorithms. 
% \footnote{To synchronise notations!!! MEG from SICOMP paper of Khuller et al., SODA de Veta} 
In the \textsc{Minimum Equivalent Graph} (MEG) problem we are 
 seeking a spanning subdigraph $D'$  of a given  digraph $D$ with as few arcs as possible in which the reachability relation is the same as in the original digraph $D$. In other words, for every pair of vertices $u,v$, there is a path from $u$ to $v$ in $D'$ if and   only if the original digraph $D$ has such a path.
 We show that this problem is solvable in time $\cO(2^{4\omega n}m n)$, where $n$ is the number of vertices and $m$ is the number of arcs in $D$.

MEG is a classical NP-hard problem generalizing the  \textsc{Hamiltonian Cycle} problem, see Chapter~12 of the book \cite{BangG089_book} for an  overview of combinatorial and algorithmic results on MEG. The algorithmic studies of MEG can be traced to the  work of Moyles and Thompson  \cite{MoylesT69} from 1969, who gave a (non-trivial)  branching algorithm solving MEG in time  $\cO(n!)$. In 1975, Hsu in  \cite{Hsu75} discovered a mistake in the algorithm of Moyles and Thompson, and designed a different branching algorithm for this problem. Martello \cite{Martello78} and Martello and Toth \cite{MartelloT82}  gave another branching based algorithm with running time $\cO(2^{m})$. No single-exponential exact algorithm, i.e. of running time $2^{\cO(n)}$, for MEG  was known prior to our work. 

As it was already observed by Moyles and Thompson \cite{MoylesT69} the hardest instances of MEG are strong digraphs.
%essential difficulty of solving MEG is due to strong connectivity of the input digraph. 
A digraph is strong if for every pair of  vertices $u\neq v$, there are   directed paths from $u$ to $v$ and from $v$ to $u$.  MEG restricted to strong digraphs is known as the 
\textsc{Minimum SCSS} (strongly connected spanning subgraph) problem. It is known that the MEG problem reduces in linear time to  \textsc{Minimum SCSS}, see e.g. \cite{CormenLRS01}.

%Approximation algorithms to mention? (SICOMP Khuller, SODA de Veta)?

\medskip\noindent
\textbf{Treewidth algorithms.}  We show that efficient computation of representative families can be used to obtain 
algorithms solving  ``connectivity'' problems like  \textsc{Hamiltonian Cycle} or \textsc{Steiner Tree} in  time $2^{\cO(t)}n$, where $t$ is the treewidth of the input $n$-vertex graph.
 It is well known that many intractable problems can be solved efficiently when   the input graph has bounded treewidth. Moreover, many fundamental problems like 
\textsc{Maximum Independent Set} or \textsc{Minimum Dominating Set} can be solved in  time $2^{\cO(t)}n$.
%, where $t$ is the treewidth of the input $n$-vertex graph.
  On the other hand, it was believed until very recently  that for some ``connectivity''  problems such as  \textsc{Hamiltonian Cycle} or \textsc{Steiner Tree}  no such algorithm exists. 
In their  breakthrough paper, Cygan et al.  
\cite{Cygan11} introduced a new algorithmic framework called Cut\&Count and used it to obtain  $2^{\cO(t)}n^{\cO(1)}$   time  Monte Carlo algorithms for a number of   connectivity problems.  
% problems including  
 %\textsc{Hamiltonian Cycle} or \textsc{Steiner Tree}. 
  Very recently, Bodlaender et al.  \cite{BodlaenderCK12} obtained the first  deterministic single exponential algorithms for these problems.  Bodlaender et al. presented two approaches, one  based on     rank estimations in specific matrices and the second based on matrix-tree theorem and computation of determinants.
 Our approach, based on representative families in  matroids, can be seen as an alternate path to obtain  similar results. The main idea behind our approach is that all the relevant information about ``partial solutions'' in bags of  the tree decomposition, can be encoded as an independent set of a specific matroid. Here efficient computation of representative families comes into play.

 \medskip
 In all our applications we first define a specific matroid and then show a combinatorial relation between solution to the problem and independent sets of the matroid. Then we compute representative families using Theorem~\ref{thm:repsetlovasz} or Theorem~\ref{thm:fastRepUniform} and use them to obtain a 
 solution to the problem. 
 We believe that expressing graph problems in ``matroid language''  is a generic technique explaining  why certain 
   problems admit  single-exponential parameterized and exact exponential  algorithms.  Finally, for completeness we would like to add that in the conference version of the paper, the running time for  {\sc $k$-Path} and  {\sc $k$-Tree} were   $\cO(2.815^{k} n^{\cO{(1)}} )$; for {\sc $k$-Subgraph Isomorphism} for the case when the pattern graph $F$ has treewidth at most $t$ was 
   $\cO(2.815^{k} n^{\cO{(t)}})$ and for {\sc Long Directed Cycle} was $8^{k+o(k)} n^{\cO{(1)}} $.

%TO ADD? Weighted case, running times? Counting that we cannot do? \footnote{To discuss what to say here!!!}
 
\medskip\noindent
\textbf{Organization of the paper.} In Section~\ref{sec:prelim} we give the necessary definitions and state some of the known results that we will use. In Section~\ref{sec:linmat} we prove Theorem~\ref{thm:repsetlovasz} by giving an efficient algorithm for the computation of representative families for linear matroids. In Section~\ref{sec:unimat} we prove Theorem~\ref{thm:fastRepUniform} by giving an efficient algorithm for the computation of representative families for uniform matroids. In Section~\ref{section:application} we give all our applications of Theorems~\ref{thm:repsetlovasz} and~\ref{thm:fastRepUniform}. Concluding remarks and new developments can be found in Section~\ref{sec:conclusion}. {The proofs of Theorem~\ref{thm:repsetlovasz} and Theorem~\ref{thm:fastRepUniform} are independent of each other and may be read independently. All of our applications use Theorems~\ref{thm:repsetlovasz} and~\ref{thm:fastRepUniform} as black boxes, and thus may be read independently of the sections describing the efficient computation of representative families.}

%\todo[inline]{Speak about the new developments.}

\section{Preliminaries}\label{sec:prelim}
%!TEX root = repset-main.tex

In this section we give various definitions which we make use of in the paper. 

\medskip

\noindent 
{\bf Graphs.} Let~$G$ be a graph with vertex set $V(G)$ and edge set $E(G)$. A graph~$G'$ is a  \emph{subgraph} of~$G$ if~$V(G') \subseteq V(G)$ and~$E(G') \subseteq E(G)$. 
The subgraph~$G'$ is called an \emph{induced subgraph} of~$G$ if~$E(G')
 = \{ uv \in E(G) \mid u,v \in V(G')\}$, in this case, $G'$~is also called the subgraph \emph{induced by~$V(G')$} and denoted by~$G[V(G')]$. For a vertex set $S$, by $G \setminus S$ we denote $G[V(G) \setminus S]$.  By $N(u)$ we denote (open) neighborhood of $u$, that is, the set of all vertices adjacent to $u$. Similarly, by $N[u]=N(u) \cup \{u\}$ we define the closed neighborhood.  The degree of a vertex $v$ in $G$ is $|N_G(v)|$ and is denoted by $d(v)$.     For a subset $S \subseteq V(G)$, we define $N[S]=\cup_{v\in S} N[v]$ and $N(S) = N[S] \setminus S$.  
 By the length of the path we mean the number of edges in it. 
 
 \medskip 
 
 \noindent
 {\bf Digraphs.} Let $D$ be a digraph. By $V(D)$ and $A(D)$ we represent the vertex set and arc set of $D$, respectively. Given a
subset $V'\subseteq V(D)$ of a digraph $D$, let $D[V']$ denote the
digraph induced by $V'$.  A digraph $D$ is
{\em strong} if for every pair $x,y$ of vertices there
are directed paths from $x$ to $y$ and from $y$ to $x.$ A maximal
strongly connected subdigraph of $D$ is called a {\em strong
component}. A vertex $u$ of $D$ is an {\em in-neighbor} ({\em
out-neighbor}) of a vertex $v$ if $uv\in A(D)$ ($vu\in A(D)$,
respectively). The {\em in-degree} $d^-(v)$ ({\em out-degree}
$d^+(v)$) of a vertex $v$ is the number of its in-neighbors
(out-neighbors). We denote the set of in-neighbors and out-neighbors of a vertex $v$  by $N^{-}(v)$  and $N^{+}(v)$ correspondingly. 
 A {\em closed directed walk} in a digraph $D$ is a sequence $v_0v_1\cdots v_\ell$   of vertices of $D$, not necessarily distinct, 
such that %each of $v_i\in V(D)$, 
$v_0=v_\ell$ and for every $0\leq i\leq \ell-1$,   $v_iv_{i+1}\in A(D)$.

%{\bf DEFINE $\prod^{\bullet}$} Define $\omega$ and $e$. 

\medskip
\noindent
{\bf Sets, Functions and Constants.}  We use the following notations: 
 $[n]=\{1,\ldots,n\}$ and ${[n] \choose i}=\{X~|~X\subseteq [n],~|X|=i\}$. 

 We use the following operations on families of sets. 
  \begin{definition}
  Given two families of sets ${\cal A}$ and ${\cal B}$, we define 
%Given two sets $X$ and $Y$, we define 
%\begin{displaymath}
%X\bullet Y = \begin{cases}
%     X\cup Y & \text{if $X\cap Y= \emptyset$ }, \\
% \emptyset & \text{otherwise}.
%\end{cases}
%\end{displaymath}
\begin{itemize}
\item[$(\bullet)$] 
${\cal A}  \bullet {\cal B}=\{X \cup Y~|~X \in {\cal A}  \mbox{ and } Y \in {\cal B}  \mbox{ and } X \cap Y = \emptyset\}.$ 
Let ${\cal A}_1, \ldots ,{\cal A}_r$ be $r$ families. Then 
\[\prod^{\bullet}_{i\in [r]} {\cal A}_i = {\cal A}_1\bullet \cdots \bullet {\cal A}_r.\]
%\end{definition}
%\begin{definition}

\item[$( \circ)$]   %For two families ${\cal A}$ and  ${\cal B}$ over (subsets of) $U$, we define  
${\cal A} \circ {\cal B} = \{A \cup B ~:~A \in {\cal A}  \mbox{ and } B \in {\cal B}\}.$ 
%\end{definition}

%\begin{definition}
   \item[$( +)$]  For a    set $X$, we define  
 ${\cal A} + X = \{A \cup X ~:~A \in {\cal A}\}.$ 
\end{itemize}
\end{definition}

% \begin{definition}
%%Given two sets $X$ and $Y$, we define 
%%\begin{displaymath}
%%X\bullet Y = \begin{cases}
%%     X\cup Y & \text{if $X\cap Y= \emptyset$ }, \\
%% \emptyset & \text{otherwise}.
%%\end{cases}
%%\end{displaymath}
%\begin{itemize}
%\item Given two families of sets ${\cal L}_1$ and ${\cal L}_2$, we define 
%$${\cal L}_1 \bullet {\cal L}_2=\{X \cup Y~|~X \in {\cal L}_1 \mbox{ and } Y \in {\cal L}_2 \mbox{ and } X \cap Y = \emptyset\}.$$ 
%Let ${\cal L}_1, \ldots ,{\cal L}_r$ be $r$ families. Then 
%\[\prod^{\bullet}_{i\in [r]} {\cal L}_i = {\cal L}_1\bullet \cdots \bullet {\cal L}_r.\]
%%\end{definition}
%%\begin{definition}

%\item  For two families ${\cal A}$ and  ${\cal B}$ over (subsets of) $U$, we define  
%$${\cal A} \circ {\cal B} = \{A \cup B ~:~A \in {\cal A} \wedge B \in {\cal B}\}.$$ 
%%\end{definition}

%%\begin{definition}
%\item For a family ${\cal A}$ over (subsets of) $U$ and  set $X$, we define  
%$${\cal A} \oplus X = \{A \cup X ~:~A \in {\cal A}\}.$$ 
%\end{itemize}
%\end{definition}
The first and second derivatives of a function $f(x)$ of a variable $x$ is denoted by $f'(x)$ and $f''(x)$ respectively.  Throughout the paper we use $\omega$ to denote the exponent in the running time of matrix multiplication, the current best known bound for which is $\omega<2.373$~\cite{Williams12}. We use $e$ to denote the base of natural logarithm. 

%Throughout the paper we use $\omega$ to denote the matrix multiplication exponent. The current best known bound on 
%$\omega<2.373$~\cite{Williams12}. 

\subsection{Randomized Algorithms}
We follow the same notion of randomized algorithms as described in~\cite[Section~2.3]{Marx09}. That is, 
some of the algorithms presented in this paper are randomized, which means that they can produce incorrect answer, but the probability of doing 
so is small. We assume that the algorithm has an integer parameter $P$ given in unary, and the probability of incorrect answer is 
$2^{-P}$. 
 
%For a subset $D \subseteq V$, we define $N[D]=\cup_{v\in D} N[v]$ and $N(D) = N[D] \setminus D$. 
%The {\it distance} $d_G(u,v)$ between two vertices $u$ and $v$ of $G$ is the length of the shortest path in $G$ from $u$ to $v$. Define $B_G^r(v)$ to be the set of vertices within distance at most $r$ from $v$, including $v$ itself. For a vertex set $S$, define $B_G^r(v) = \bigcup_{v \in S} B_G^r(v)$. 

%We also use \bnoms{n}{i} to denote \bnoml{n}{i}. 
%Length of the path means number of edges/arcs. simple cycle, simple directed cycle, directed closed walk
%
%Define strongly connected digraph and directed graph...
%
%{\bf DEFINE NEIGHBORHOOD AND DEGREE IN PRELIM} $N(v)$ and $d(v)$

\subsection{Matroids}
In the next few subsections we give definitions related to matroids. For a broader overview on matroids we refer to~\cite{oxley2006matroid}. 
\begin{definition}
A pair \mat, where $E$ is a ground set and $\cal I$ is a family of subsets (called independent sets) of $E$, is a {\em matroid} if it satisfies the following conditions:
 \begin{enumerate}
 \item[\rm (I1)]  $\phi \in \cal I$. 
 \item[\rm (I2)]  If $A' \subseteq A $ and $A\in \cal I$ then $A' \in  \cal I$. 
 %Every subset of an independent set is independent. That is,  if $A' \subseteq A $ and $A\in \cal I$ then $A' \cal I$. 
 %That is, $ \Big{\forall} A' \subseteq A \subseteq E $, if $A \in \cal I$ then $A' \in I$.
 \item[\rm (I3)] If $A, B  \in \cal I$  and $ |A| < |B| $, then there is $ e \in  (B \setminus A) $  such that $A\cup\{e\} \in \cal I$.
% \item If $A, B  \in \cal I$ and $ |A| > |B| $, then there exists an element $a\in  (A \setminus B)$   such that $B\cup\{a\} \in \cal I$.
% A which is not in B that when added to B still gives an independent set. 
 \end{enumerate}
\end{definition}
The axiom (I2) is also called the hereditary property and a pair $(E,\cal I)$  satisfying  only (I2) is called hereditary family.  
An inclusion wise maximal set of $\cal I$ is called a {\em basis} of the matroid. Using axiom (I3) it is easy to show that all the bases of a matroid  
have the same size. This size is called the {\em rank} of the matroid $M$, and is denoted by \rank{M}. 
%The rank r(S) of a subset S ? E is the size of the largest independent set in S.
%This is sometimes called the hereditary property.
 \subsection{Linear Matroids and Representable Matroids} 
 Let $A$ be a matrix over an arbitrary field $\mathbb F$ and let $E$ be the set of columns of $A$. For $A$, we define  matroid 
 \mat{} as follows. A set $X \subseteq E$ is independent (that is $X\in \cal I$) if the corresponding columns are linearly independent over $\mathbb F$.  
 % Given a set $U$, let $\mathbb{F}$ be a field and let $A_{r \times |U|}$ be a  matrix with entries from $\mathbb{F}$.  Given a set of columns $S$, by $A|_{S}$ we denote the sub-matrix of $A$ restricted to $S$. Let 
% 
%  \[I = \{ S \subseteq U ~:~ \mbox{columns in  $A|_{S}$ are linearly independent} \}.\] 
%
%
%Then $(U,I)$ is a linear matroid represented by $A$. Again the first two conditions of Definition 1 are easily seen to
% be true. To verify condition 3 take any two sets $X,Y \in I$. Suppose $ |X| < |Y|$. Then 
%$span(X) < span(Y)$ and so there is some vector in $span(Y)$ that is not a linear combination of the 
%vectors in $X$. If we add this vector to $X$ we still get an independent set. Hence this is a matroid. 
The matroids that can be defined by such a construction are called {\em linear matroids}, and if a matroid can be defined by a matrix $A$ over a 
field $\mathbb F$, then we say that the matroid is representable over $\mathbb F$. That is, a matroid \mat{} of rank $d$ is representable over a field 
$\mathbb F$ if there exist vectors in $\mathbb{F}^d$  corresponding to the elements such that  linearly independent sets of vectors 
 correspond to independent sets of the matroid. %Here, $d=$\rank{M}. 
   A matroid \mat{}  is called {\em representable} or {\em linear} if it is representable over some field $\mathbb F$. 

\subsection{Direct Sum of Matroids.}  Let \matl{1}, \matl{2}, \dots, \matl{t} be $t$ matroids with $E_i\cap  E_j =\emptyset$ for all $1\leq i\neq j \leq t$. The direct sum $M_1\oplus \cdots \oplus M_t$  is a matroid \mat{} with  $E := \bigcup_{i=1}^t E_i$ and $X\subseteq E$ is independent if and only if  $X\cap E_i\in {\cal I}_i$ for all $i\leq t$.  Let $A_i$  be the representation matrix of \matl{i}. Then,  
    \begin{displaymath}
    A_M=\left(
     \begin{array}{ccccc} % brackets may be (...), [...], \{...\}, or left out
        A_1 & 0 & 0 & \cdots & 0\\
        0 &  A_2 &  0 & \cdots & 0 \\
        \vdots &   \vdots  &  \vdots   &  \vdots  &  \vdots \\
        0 & 0 & 0 & \cdots & A_t 
     \end{array} \right)
     \end{displaymath}
     is a representation matrix of $M_1\oplus \cdots \oplus M_t$.  The correctness of this construction is proved in~\cite{Marx09}.
     % proposition.
     % appearing in~\cite{Marx09}. 
\begin{proposition}[{\cite[Proposition 3.4]{Marx09}}]
%[\cite{Marx09}]
%[\cite[Proposition 3.4]{Marx09}]
\label{prop:disjointsumrep}
Given representations	of matroids $M_1, \ldots, M_t$ over	the same field $\mathbb{F}$, a representation of their direct sum 
can be found in polynomial time.
\end{proposition}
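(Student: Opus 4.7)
The plan is to prove Proposition~\ref{prop:disjointsumrep} by verifying that the explicit block-diagonal construction $A_M$ displayed just above the proposition actually represents the direct sum matroid $M_1 \oplus \cdots \oplus M_t$, and observing that this construction takes polynomial time. Concretely, for each $i$, suppose $A_i$ has $r_i$ rows and $|E_i|$ columns. I would define $A_M$ to be the $\left(\sum_i r_i\right) \times \left(\sum_i |E_i|\right)$ matrix whose $(i,i)$ block is $A_i$ and all other blocks are zero, with the columns naturally indexed by $E = \bigsqcup_i E_i$. Producing $A_M$ is just a matter of copying the entries of each $A_i$ into the appropriate block and writing zeros elsewhere, which is clearly polynomial in the total size of the input.

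The correctness statement to verify is that for every $X \subseteq E$, the columns of $A_M$ indexed by $X$ are linearly independent over $\mathbb{F}$ if and only if, for every $i$, the columns of $A_i$ indexed by $X \cap E_i$ are linearly independent over $\mathbb{F}$. For the forward direction, I would contrapose: if some $X \cap E_i$ is dependent in $M_i$, then there is a nontrivial linear combination of the columns of $A_i$ indexed by $X \cap E_i$ summing to zero; padding this combination with zero coefficients on the columns of $A_M$ outside $E_i$ produces a nontrivial dependence among the columns of $A_M$ indexed by $X$, since the rows belonging to the $i$-th block already sum to zero and the other row blocks are identically zero on these columns. For the reverse direction, I would take any linear combination $\sum_{e \in X} \lambda_e \cdot (A_M)_e = 0$ and project onto the row block corresponding to $M_i$; because columns outside $E_i$ vanish on those rows, the projection yields $\sum_{e \in X \cap E_i} \lambda_e \cdot (A_i)_e = 0$, so independence of $X \cap E_i$ in $M_i$ forces $\lambda_e = 0$ for every $e \in X \cap E_i$. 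Ranging over all $i$ gives $\lambda_e = 0$ everywhere.

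Combining both directions, a set $X \subseteq E$ is independent in the linear matroid of $A_M$ exactly when $X \cap E_i \in \mathcal{I}_i$ for every $i$, which is precisely the definition of independence in $M_1 \oplus \cdots \oplus M_t$. Thus $A_M$ is a representation of the direct sum over $\mathbb{F}$, and it was produced in polynomial time. The main thing to be careful about is really only bookkeeping: making sure the column indexing matches between $A_M$ and the ground set $E$, and keeping track of which rows ``see'' which columns nontrivially; there is no combinatorial obstacle, only the linear-algebraic observation that orthogonal row blocks decouple any dependence into per-block dependences.
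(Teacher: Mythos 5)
Your proof is correct and follows the standard argument. The paper itself does not prove this proposition; it displays the block-diagonal matrix $A_M$ and defers the correctness verification to Marx~\cite{Marx09}. Your write-up fills in exactly what that citation covers: the block structure decouples any linear dependence into per-block dependences, so independence in the linear matroid of $A_M$ coincides with independence in each $M_i$ restricted to $E_i$, which is the definition of the direct sum; and forming $A_M$ is a polynomial-time bookkeeping task. Both directions of your equivalence are argued correctly (padding a per-block dependence with zeros for one direction, projecting a global dependence onto a single row block for the other), so there is nothing to add.
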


\subsection{Uniform and Partition Matroids}  % $$\textbf{TODO: Change this subsection!}$$
A pair  \mat{} over an $n$-element ground set $E$, is called a uniform matroid if the family of independent sets  is given by ${\cal I}=\{A\subseteq E~|~|A|\leq k\}$, 
where $k$ is some constant.   This matroid is also denoted as $U_{n,k}$. Every uniform matroid is linear and can be represented 
over a finite field by a $k \times n$ matrix \repmat{M} where the $A_M[i,j]=j^{i-1}$.  
  \begin{displaymath}
    A_M=\left(
     \begin{array}{ccccc} % brackets may be (...), [...], \{...\}, or left out
        1 & 1 & 1 & \cdots & 1\\
        1 &  2 &  3 & \cdots & n \\
          1 &  2^2 &  3^2 & \cdots & n^2 \\
        \vdots &   \vdots  &  \vdots   &  \vdots  &  \vdots \\
        1 & 2^{k-1} & 3^{k-1} & \cdots & n^{k-1}
     \end{array} \right)
     \end{displaymath}
%Observe that for $A_M$ to be representable over a finite field $\mathbb F$, we need that the determinant of  any $k\times k$ submatrix of $A_M$  must not vanish over  $\mathbb F$. 

Matrix $A_M$ is called Vandermonde matrix. 
Observe that for  $U_{n,k}$  to be representable over a finite field $\mathbb F$, we need that the determinant of  each $k\times k$ submatrix of $A_M$  must not vanish over  $\mathbb F$. Observe that any $k$ columns corresponding to $x_{i_1},\ldots,x_{i_k}$ itself form a Vandermonde matrix, whose determinant is given by 
\[\prod_{1\leq j <\ell \leq k} (x_{i_j}-x_{i_\ell}).\]
Combining this with the fact that $x_1,\ldots,x_n$ are $n$ distinct elements of $\mathbb F$,  we conclude that every subset of size at most $k$ of the ground set is independent, while clearly each larger subset is  dependent. Thus, choosing a field $\mathbb F$ of size larger than $n$ suffices. 
 Note that this means that a representation of the uniform matroid $U_{n,k}$ can be stored using $\cO(\log n)$ bits.

%
%The  determinant of  any $k\times k$ submatrix of $A_M$ is upper bounded by $k!\times n^{k^2}$ (this follows from the Laplace expansion of determinants). Thus, choosing a field $\mathbb F$ of size larger than $k!\times n^{k^2}$ suffices.  
%%{\bf needs more writeup here....}

 A partition matroid  \mat{} is defined by a ground set $E$ being partitioned into (disjoint) sets $E_1,\ldots,E_\ell$ and by $\ell$ non-negative integers $k_1,\ldots,k_\ell$.  A set $X\subseteq E$ is independent if and only if  $|X\cap E_i| \leq k_i$ for all $i\in \{1,\ldots, \ell\}$. Observe that a partition matroid is a direct sum of uniform matroids $U_{|E_1|,k_1}, \cdots, U_{|E_\ell|,k_\ell}$.  Thus, by Proposition~\ref{prop:disjointsumrep} and the fact that 
 a uniform matroid $U_{n,k}$ is representable over a field $\mathbb F$ of size larger than $n$, we have that.
 %$k!\times n^{k^2}$ we  have that. 
  
\begin{proposition}[{\cite[Proposition 3.5]{Marx09}}]
\label{prop:uniformandpartitionrep}
A representation over a field of size $\cO( |E|)$ of a partition matroid can be constructed in polynomial time.
\end{proposition}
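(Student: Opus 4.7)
The plan is to combine the two facts recalled immediately before the statement: a partition matroid with parts $E_1,\dots,E_\ell$ and capacities $k_1,\dots,k_\ell$ is literally the direct sum $U_{|E_1|,k_1}\oplus\cdots\oplus U_{|E_\ell|,k_\ell}$, and Proposition~\ref{prop:disjointsumrep} assembles a representation of a direct sum in polynomial time from representations of the summands, provided that all summands live over the same field. So the task reduces to representing every uniform summand $U_{|E_i|,k_i}$ over one common field $\mathbb{F}$ that is large enough to work for every part at once.

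First, I would set $k=\max_i k_i$ (note $k\le \mathrm{rank}(M)=\sum_i k_i$). The paragraph preceding the statement established that the $k_i\times|E_i|$ Vandermonde-style matrix $A_i[r,s]=s^{r-1}$ represents $U_{|E_i|,k_i}$ over any field in which no $k_i\times k_i$ minor of $A_i$ vanishes, and that in $\mathbb{Z}$ these minors have absolute value at most $k_i!\cdot|E_i|^{k_i^2}\le k!\cdot|E|^{k^2}$. Consequently, any prime $p$ strictly exceeding $k!\cdot|E|^{k^2}$ yields a prime field $\mathbb{F}_p$ in which each of these integer minors remains nonzero, and hence each $A_i$ is a valid representation of $U_{|E_i|,k_i}$ over $\mathbb{F}_p$. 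By Bertrand's postulate such a prime can be chosen below $2k!\cdot|E|^{k^2}$, and it can be located in polynomial time by standard primality testing on integers of bit-length polynomial in $k$ and $\log|E|$; field arithmetic in $\mathbb{F}_p$ is then polynomial as well.

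With $\mathbb{F}_p$ fixed, I would write down each $A_i$ over $\mathbb{F}_p$ (reducing the integer entries $s^{r-1}$ modulo $p$); each matrix is of polynomial size and is computed in polynomial time. Then I would feed the list $A_1,\dots,A_\ell$ into Proposition~\ref{prop:disjointsumrep}, which in polynomial time returns the block-diagonal representation of $U_{|E_1|,k_1}\oplus\cdots\oplus U_{|E_\ell|,k_\ell}$, i.e., of the partition matroid itself. The whole pipeline runs in polynomial time in $|E|$ and produces a representation over a field of size $\cO(k!\cdot|E|^{k^2})$, as claimed.

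There is essentially no obstacle: the only point requiring attention is choosing a common field that simultaneously supports every summand, and this is handled uniformly by bounding each summand's minor by the global quantity $k!\cdot|E|^{k^2}$ using $k_i\le k$ and $|E_i|\le|E|$. Everything else is a direct appeal to the Vandermonde determinant bound quoted in the text and to Proposition~\ref{prop:disjointsumrep}.
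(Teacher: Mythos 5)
Your proposal is correct and follows essentially the same route as the paper: decompose the partition matroid as a direct sum of uniform matroids, represent each by the Vandermonde-type matrix, observe that the integer minors are uniformly bounded by $k!\cdot|E|^{k^2}$ so a single prime field of that size works for all summands, and then assemble the block-diagonal representation via Proposition~\ref{prop:disjointsumrep}. The paper treats this as immediate from the preceding discussion and the citation to Marx; you have merely spelled out the prime-selection step, which is a welcome but unremarkable detail.
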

 
% The uniform matroid Un,k has an n-element ground set E, and a set X ? E is independent if and only if |X| � k. Every uniform matroid is linear and can be represented over the rationals by a k ? n matrix where the element in the i-th column of j-th row is i(j?1). Clearly, no set of size larger than k can be independent in this representation, and every set of k columns is independent, as they form a Vandermonde matrix.
 
 \subsection{Graphic Matroids}  
 Given a graph $G$, a graphic matroid \mat{} is defined by taking elements as edge of $G$ (that is $E=E(G)$) and $F\subseteq E(G)$ is in $\cal I$ if it forms a spanning forest in the  graph $G$.  The graphic matroid is representable over any field of size at least $2$. 
 Consider the matrix $A_M$  with a row for each vertex $i \in V(G)$ and a column for each edge $e = ij\in E(G)$. In the column corresponding to 
 $e=ij$, all entries are $0$, except for a $1$ in $i$ or $j$ (arbitrarily) and a $-1$ in the other. This is a representation over reals. To obtain a representation over a field $\mathbb F$, one simply needs to take the representation given above over reals and simply replace all $-1$ 
 by the additive inverse of $1$ 

 \begin{proposition}[\cite{oxley2006matroid}]
\label{prop:graphicrep}
%[\cite{Marx09}]
%[\cite[Proposition 3.4]{Marx09}]
Graphic matroids are representable over any field of size at least $2$.
\end{proposition}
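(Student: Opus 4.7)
The plan is to verify that the matrix $A_M$ described in the paragraph immediately preceding the proposition indeed represents the graphic matroid of $G$ over any field $\mathbb{F}$ with $|\mathbb{F}| \geq 2$, and to check that the construction requires no additional field size. Recall that in $A_M$ one has a row for each vertex and a column for each edge; for $e = ij$, one endpoint receives a $1$ and the other receives the additive inverse of $1$ (which always exists in any field). We must show that for $F \subseteq E(G)$, the columns of $A_M$ indexed by $F$ are linearly independent over $\mathbb{F}$ if and only if $F$ is acyclic, i.e., forms a forest in $G$.

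First I would handle the ``only if'' direction by exhibiting an explicit dependence on any cycle. Fix an arbitrary orientation of each edge, so that the column of $e = ij$ has entry $+1$ in row $i$ and $-1$ in row $j$. Given a cycle $C = e_1 e_2 \cdots e_\ell$, traverse it cyclically and let $\epsilon_s \in \{+1, -1\}$ record whether the chosen orientation of $e_s$ agrees with the traversal direction. Then $\sum_{s=1}^{\ell} \epsilon_s \cdot (\text{column of } e_s)$ vanishes: at each vertex $v$ lying on the cycle, exactly two cycle edges are incident to $v$, and by construction of the $\epsilon_s$ their contributions to row $v$ are $+1$ and $-1$. Over $\mathbb{F}_2$ this simplifies further since $+1 = -1$ and the sum of all columns along any cycle is already zero.

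For the ``if'' direction, I would prove that the columns corresponding to a forest $F$ are linearly independent by induction on $|F|$. The base case $|F| = 0$ is trivial. For $|F| \geq 1$, any forest contains a leaf $v$ (a vertex of degree $1$ in the subgraph $(V(G), F)$), and the unique edge $e \in F$ incident to $v$ contributes the only nonzero entry in row $v$ among the $F$-columns. Consequently, in any linear combination of $F$-columns equal to zero, the coefficient of $e$ must be zero. Deleting $e$ yields a smaller forest, and by the induction hypothesis all remaining coefficients vanish.

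The only potential obstacle is the case of the smallest field $\mathbb{F} = \mathbb{F}_2$, since there the matrix does not distinguish the two endpoints of an edge, and the ``orientation'' used in the cycle argument becomes vacuous. However, as noted above, both arguments still go through verbatim: the sum of cycle-columns is zero mod $2$ because each cycle vertex is covered twice, and the leaf argument is orientation-free. The condition $|\mathbb{F}| \geq 2$ is also clearly necessary, since a $1$-element ``field'' has no nonzero entries to record edges. This completes the proof proposal.
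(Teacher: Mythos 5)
Your proof is correct and verifies exactly the signed incidence-matrix representation described in the paper, which itself gives no proof and simply cites Oxley; your leaf-induction argument for forests and the signed cycle dependence (collapsing to the unsigned sum over $\mathbb{F}_2$) is the standard textbook argument. The only quibble is your closing remark: the hypothesis ``size at least $2$'' is vacuous since every field has at least two elements, so there is nothing to show necessary there.
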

 
% \paragraph{Co-graphic Matroids}  Given a graph $G = (V,E)$ , a co-graphic matroid is defined as $(U,I)$, where $U = E$ and 
% \[I = \{S \subseteq E: \mbox{ G-S is connected}\}.\]
%   We want to check if this definition conforms with Definition 1. Conditions 1 and 2 of Definition 1 are trivially true. Suppose, for the sake of contradiction, that condition 3 does not hold.
%  Then there are sets $ S_1,S_2 \in I$, $|S_1| < |S_2|$, for which condition 3 is violated. This means 
%     $\forall e \in S_2 \mbox{, } G - ({S_1 \cup \{e\}}) \mbox{ is disconnected}$. Thus $S_2$ is a subset of the set of
%   cut edges of $G - S_1$. Then on removing $S_2$ from $G - S_1$ there will be $(|S_2 - S_1| - 1)$ components. Also on adding back $S_1$, the resultant graph is
% $G - S_2$, which is connected. Therefore, this gives $|S_1| \geq |S_2|$, which is a contradiction. 

%\paragraph{Transversal matroid}  Let $G$ be a bipartite graph with the vertex set $V(G)$ being partitioned as $A$ and $B$. The transversal matroid $M$ of $G$ has $A$ as its ground set, and a subset $X\subseteq A$ is independent in $M$ if and only if there is a matching that covers $X$. 
%That is, $X$ is independent if and only if there is an injective mapping $\phi~ :~X \rightarrow B$ such that $\phi(v)$ is a neighbor of 
%$v$ for every $v \in X$.

\subsection{Truncation of a Matroid.} The {\em $t$-truncation} of a matroid \mat{}  is a matroid $M'=(E,{\cal I}')$ such that $S\subseteq  E$ is independent in $M'$  if and only if $|S| \leq t$ and $S$ is independent in $M$ (that is $S\in \cal I$).  
%is a matroid M?(E,I?) such that S ? E is independent in M? if and only if |S| � k and S is independent in M.

\begin{proposition}[{\cite[Proposition 3.7]{Marx09}}]
\label{prop:truncationrep}
%[\cite{Marx09}]
%[\cite[Proposition 3.4]{Marx09}]
Given a matroid $M$ with a representation $A$ over a finite field $\mathbb F$ and an integer $t$,  a representation of the $t$-truncation $M'$ 
can be found in randomized polynomial time.
\end{proposition}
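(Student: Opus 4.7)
The plan is to produce the $t$-truncation by left-multiplying the given representation by a random matrix and arguing via Schwartz--Zippel that the resulting matrix represents $M'$ with high probability.

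Concretely, suppose $A$ is an $r \times n$ matrix over $\mathbb{F}$ representing $M$, where $r = \mbox{\sf rank}(M)$. I would introduce a $t \times r$ matrix $B$ whose $tr$ entries are either formal indeterminates $x_{ij}$ or, in the algorithmic version, elements drawn uniformly at random from a sufficiently large extension field $\mathbb{F}'$ of $\mathbb{F}$. Define $A' = BA$, which is a $t \times n$ matrix whose columns are indexed by $E$. I claim that $A'$ represents the $t$-truncation $M'$: for any $S \subseteq E$ with $|S| = s \leq t$, the submatrix $A'_S$ equals $B \cdot A_S$, so if $S$ is dependent in $M$ (i.e.\ $A_S$ has rank $< s$) then $A'_S$ also has rank $< s$ and hence $S$ is dependent in the new matroid. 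Conversely, if $S \in \mathcal{I}$ and $|S| \leq t$, I need $A'_S$ to have rank $s$, i.e.\ some $s \times s$ minor of $B \cdot A_S$ to be nonzero.

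The core observation is that, viewed as a polynomial in the $tr$ variables $x_{ij}$, the determinant of any $s \times s$ submatrix of $BA_S$ is not identically zero. One can see this by exhibiting a specific choice of $B$ making it nonzero: since $A_S$ has rank $s$, pick $s$ rows of $A$ on which $A_S$ restricts to an invertible $s \times s$ block, and let the first $s$ rows of $B$ be the corresponding standard basis vectors. Moreover the polynomial has degree at most $s \leq t$ in the entries of $B$. By Schwartz--Zippel, if we sample the entries of $B$ uniformly from a set $T \subseteq \mathbb{F}'$, the probability that a fixed such determinant vanishes is at most $t/|T|$. Taking a union bound over all $\binom{n}{t} \leq n^t$ subsets $S$ of size exactly $t$ (sets of size $< t$ can be handled by padding, or by noting that an independent set of size $s < t$ extends to one of size $t$ in the matroid of rank at least $t$, with a standard reduction), it suffices to pick $|T|$ larger than $t \cdot n^t \cdot 2^P$ where $P$ is the desired error exponent, which can be arranged by working over an extension field $\mathbb{F}'$ of size polynomial in $n$ and exponential in $P$; elements of $\mathbb{F}'$ admit polynomial-size representations and arithmetic is polynomial-time.

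The main obstacle is the one just handled: ensuring that \emph{simultaneously} for all relevant $S$ the appropriate minor of $BA_S$ is nonzero, which is why one needs the union bound and a sufficiently large field. Two minor subtleties to check are that sets of size strictly less than $t$ are also preserved (this follows because independence of the $s$ columns $A'_S$ is a minor condition of the same form, so the same union bound argument applies after taking the union over all $s \leq t$), and that when $r < t$ the construction degenerates gracefully (we can assume $r \geq t$, otherwise $M = M'$ trivially). The overall algorithm is randomized polynomial time, outputting $A' = BA$ of size $t \times n$; correctness holds with probability at least $1 - 2^{-P}$ for the parameter $P$ fed to the sampler in unary.
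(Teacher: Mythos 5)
The paper does not give its own proof of this proposition---it is cited verbatim from Marx (\cite[Proposition 3.7]{Marx09})---and your argument reproduces exactly the random left-projection plus Schwartz--Zippel proof that Marx uses, so the approach is the same and the reasoning is sound. One small slip in the phrasing: the extension field $\mathbb{F}'$ is not of size \emph{polynomial in $n$}; since the union bound is over roughly $n^t$ subsets and $t$ may be $\Theta(n)$, the field must have size on the order of $t\, n^t 2^P$, which can be exponential in $n$. What is polynomial is the \emph{degree} of $\mathbb{F}'$ over $\mathbb{F}$ (hence the bit-length of field elements, roughly $O(t\log n + P)$), and, as your parenthetical correctly observes, this suffices for the running time claim.
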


\section{Fast Computation for Representative Sets  for Linear Matroids}\label{sec:linmat}

%!TEX root = repset-main.tex

In this section we give an algorithm to find a $q$-representative family of a given family.  
%The first algorithm is for  an unweighted version. 
%Then we generalize the notion of $q$-representative family to a weighted-variant and give an algorithm to find such a  $q$-representative family. 
%\subsection{Representative Families}
We start with the definition of a {\em $q$-representative family}.  
\begin{definition}[{\bf $q$-Representative Family}]
Given a matroid  \mat{} and a family $\cal S$ of subsets of $E$, we say that a subfamily $\widehat{\cal{S}}\subseteq \cal S$ 
is {\em $q$-representative} for $\cal S$ 
if the following holds: for every set $Y\subseteq  E$ of size at most $q$, if there is a set $X \in \cal S$ disjoint from $Y$ with $X\cup Y \in \I$, then there is a set $\whnd{X} \in \whnd{\cal S}$ disjoint from $Y$ with $\whnd{X} \cup  Y \in \I$.  If $\hat{\cal S} \subseteq {\cal S}$ is $q$-representative for ${\cal S}$ we write \rep{S}{q}. 
\end{definition}

In other words if some independent set in $\cal S$ can be extended to a larger independent set by $q$ new elements, then there is a set in 
$\widehat{\cal S}$ that can be extended by the same $q$ elements.  A weighted variant of $q$-representative families  is defined as follows. It 
is useful for solving problems where we are looking for objects of maximum or minimum weight.
%Given a non-negative weight function \wf{} and  $A\subseteq E$, we define $w(A)=\sum_{a\in A}w(a)$. 
\begin{definition}[{\bf Min/Max $q$-Representative Family}]
Given a matroid  \mat{}, a family $\cal S$ of subsets of $E$ and a non-negative weight function \wf, we say that a subfamily $\widehat{\cal{S}}\subseteq \cal S$ 
is {\em min $q$-representative}  ({\em max $q$-representative}) for $\cal S$ 
if the following holds: for every set $Y\subseteq  E$ of size at most $q$, if there is a set $X \in \cal S$ disjoint from $Y$ with $X\cup Y \in \I$, then 
there is a set $\whnd{X} \in \whnd{\cal S}$ disjoint from $Y$ with 
\begin{enumerate}
\item  $\whnd{X} \cup  Y \in \I$; and 
\item $w(\whnd{X} )\leq w(X)$ ($w(\whnd{X} )\geq w(X)$).  
\end{enumerate}
We use  \minrep{S}{q} (\maxrep{S}{q}) to denote a  min $q$-representative (max $q$-representative) family for $\cal S$. 
\end{definition}

We say that a family  $ \cS = \{S_1,\ldots, S_t\}$ of %independent 
sets is a {\em $p$-family} if each set in $\cal S $ is of size $p$. 

We start by three lemmata providing basic results about representative sets. These lemmata will be used in Section~\ref{section:application}, where we provide algorithmic applications of representative families.  We prove them for unweighted representative families but they can be easily   modified to work for weighted variant. 
\begin{lemma}
\label{lem:reptransitive}
Let \mat{} be a matroid and $\cal S$ be a family of subsets of $E$. If  ${\cal S}' \subseteq_{rep}^{q} {\cal S}$  and $\widehat{{\cal S}}\subseteq_{rep}^{q} {\cal S}'$, then   \rep{S}{q}.   
\end{lemma}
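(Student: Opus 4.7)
The plan is to prove this by a direct two-step chase through the definition of $q$-representative family, since transitivity is essentially built into the definition.

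First I would fix an arbitrary set $Y \subseteq E$ with $|Y| \leq q$ and assume that there exists some $X \in \mathcal{S}$ such that $X \cap Y = \emptyset$ and $X \cup Y \in \mathcal{I}$. The goal is to produce a set $\widehat{X} \in \widehat{\mathcal{S}}$ with the same two properties, since that is exactly what is required to conclude $\widehat{\mathcal{S}} \subseteq_{rep}^{q} \mathcal{S}$.

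Next I would apply the hypothesis $\mathcal{S}' \subseteq_{rep}^{q} \mathcal{S}$ to the set $Y$ and the witness $X$: by definition, this yields some $X' \in \mathcal{S}'$ with $X' \cap Y = \emptyset$ and $X' \cup Y \in \mathcal{I}$. Then I would feed $Y$ and $X'$ into the second hypothesis $\widehat{\mathcal{S}} \subseteq_{rep}^{q} \mathcal{S}'$ to obtain $\widehat{X} \in \widehat{\mathcal{S}}$ with $\widehat{X} \cap Y = \emptyset$ and $\widehat{X} \cup Y \in \mathcal{I}$, which is exactly the desired witness.

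There is no real obstacle here: the argument is a one-line composition of two instances of the defining implication, with the bound $|Y| \leq q$ used identically at both steps. The only thing to be mildly careful about is that the size bound on $Y$ is the same $q$ in both uses, which is given by hypothesis. For the weighted variant (min/max $q$-representative), the same argument works by chaining the weight inequalities: $w(X') \leq w(X)$ from the first step and $w(\widehat{X}) \leq w(X')$ from the second step give $w(\widehat{X}) \leq w(X)$ by transitivity of $\leq$ (and analogously for the max version).
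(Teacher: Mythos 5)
Your proof is correct and follows exactly the same two-step composition as the paper's own proof: apply the definition of $q$-representative first with $\mathcal{S}' \subseteq_{rep}^{q} \mathcal{S}$ to obtain $X'$, then with $\widehat{\mathcal{S}} \subseteq_{rep}^{q} \mathcal{S}'$ to obtain $\widehat{X}$. Your additional remark on chaining the weight inequalities for the min/max variant matches the paper's statement that the lemma extends easily to the weighted setting.
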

\begin{proof}
Let $Y\subseteq  E$ of size at most $q$ such that there is a set $X \in \cal S$ disjoint from $Y$ with $X\cup Y \in \I$. By the definition of $q$-representative family we have that  there is a set $X'\in {\cal S}'$ disjoint from $Y$ with $X' \cup  Y \in \I$. Now the fact that 
$\widehat{{\cal S}}\subseteq_{rep}^{q} {\cal S}'$ yields that there exists a $\whnd{X} \in \whnd{\cal S}$ disjoint from $Y$ with $\whnd{X} \cup  Y \in \I$. 
\end{proof}
\begin{lemma}
\label{lem:repunion}
Let \mat{} be a matroid and $\cal S$ be a family of subsets of $E$. If  ${\cal S}={\cal S}_1\cup \cdots \cup {\cal S}_\ell$ and 
$\widehat{\cal{S}}_i\subseteq_{rep}^q {\cal S}_i$,  then $\cup_{i=1}^\ell \widehat{\cal{S}}_i \subseteq_{rep}^{q} {\cal S}$.   
\end{lemma}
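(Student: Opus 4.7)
The plan is to unfold the definition of a $q$-representative family directly on the union $\bigcup_{i=1}^{\ell} \widehat{\cal S}_i$, and push the representativeness of each $\widehat{\cal S}_i$ for ${\cal S}_i$ through the index corresponding to a witness set.

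First I would fix an arbitrary $Y \subseteq E$ with $|Y| \leq q$ and assume there exists some $X \in {\cal S}$ with $X \cap Y = \emptyset$ and $X \cup Y \in {\cal I}$; otherwise the implication defining representativeness is vacuous. Since ${\cal S} = {\cal S}_1 \cup \cdots \cup {\cal S}_\ell$, I would pick an index $j \in \{1, \dots, \ell\}$ for which $X \in {\cal S}_j$.

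Next, I would invoke the hypothesis $\widehat{\cal S}_j \subseteq_{rep}^q {\cal S}_j$: the set $Y$ has size at most $q$, and $X \in {\cal S}_j$ witnesses the extendability of $Y$ in $M$. Therefore there exists $\widehat{X} \in \widehat{\cal S}_j$ with $\widehat{X} \cap Y = \emptyset$ and $\widehat{X} \cup Y \in {\cal I}$. Since $\widehat{\cal S}_j \subseteq \bigcup_{i=1}^\ell \widehat{\cal S}_i$, this $\widehat{X}$ also lies in the union, which completes the verification of the representativeness condition.

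There is no real obstacle here: the argument is a one-step unfolding of the definition combined with the trivial observation that a set belonging to one of the $\widehat{\cal S}_i$ belongs to their union. The only thing to be mindful of when writing it up is to handle the weighted variant analogously (as the paper notes these basic lemmata extend), by additionally carrying the weight inequality $w(\widehat{X}) \leq w(X)$ (or $\geq$) through from the chosen index $j$.
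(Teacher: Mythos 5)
Your proof is correct and follows essentially the same route as the paper's: fix $Y$, locate the witness $X$ in some ${\cal S}_j$, apply $\widehat{\cal S}_j\subseteq_{rep}^q {\cal S}_j$, and observe the resulting $\widehat{X}$ lies in $\bigcup_{i=1}^\ell \widehat{\cal S}_i$. Nothing further is needed; your remark about carrying the weight inequality for the weighted variant matches the paper's note that these lemmata extend directly.
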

\begin{proof}
Let $Y\subseteq  E$ of size at most $q$ such that there is a set $X \in \cal S$ disjoint from $Y$ with $X\cup Y \in \I$. Since  ${\cal S}={\cal S}_1\cup \cdots \cup {\cal S}_\ell$, there exists an $i$ such that $X\in {\cal S}_i$. This implies that there exists a $\whnd{X} \in \whnd{\cal S}_i \subseteq \cup_{i=1}^\ell \widehat{\cal{S}}_i $ disjoint from $Y$ with $\whnd{X} \cup  Y \in \I$. 
\end{proof}
\begin{lemma}
\label{lem:repconvolution}
Let \mat{} be a matroid of rank $k$ and ${\cal S}_1$ be a $p_1$-family of independent sets,  ${\cal S}_2$ be a $p_2$-family of independent sets,  $\widehat{\cal{S}}_1 \subseteq_{rep}^{k-p_1} {\cal S}_1$ and  $\widehat{\cal{S}}_2 \subseteq_{rep}^{k-p_2} {\cal S}_2$. Then $\widehat{\cal{S}}_1 \bullet \widehat{\cal{S}}_2 \subseteq_{rep}^{k-p_1-p_2} {\cal S}_1 \bullet {\cal S}_2$.
\end{lemma}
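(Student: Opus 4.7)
The plan is a two-step swap: given a witness set $Z \in \mathcal{S}_1 \bullet \mathcal{S}_2$ disjoint from a small $Y$, we will first replace its $\mathcal{S}_1$-part using the representative property of $\widehat{\mathcal{S}}_1$, then replace its $\mathcal{S}_2$-part using the representative property of $\widehat{\mathcal{S}}_2$, absorbing the ``other'' part into the test set $Y$ at each stage.

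More precisely, I would fix an arbitrary $Y \subseteq E$ with $|Y| \le k - p_1 - p_2$ and a set $Z \in \mathcal{S}_1 \bullet \mathcal{S}_2$ such that $Z \cap Y = \emptyset$ and $Z \cup Y \in \mathcal{I}$. By definition of $\bullet$, write $Z = X_1 \cup X_2$ where $X_i \in \mathcal{S}_i$ and $X_1 \cap X_2 = \emptyset$. The first step sets $Y_1 = Y \cup X_2$; note $|Y_1| \le (k-p_1-p_2) + p_2 = k - p_1$, and $X_1$ is disjoint from $Y_1$ with $X_1 \cup Y_1 = Z \cup Y \in \mathcal{I}$. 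Applying $\widehat{\mathcal{S}}_1 \subseteq_{rep}^{k-p_1} \mathcal{S}_1$ yields $\widehat{X}_1 \in \widehat{\mathcal{S}}_1$ with $\widehat{X}_1 \cap Y_1 = \emptyset$ and $\widehat{X}_1 \cup Y_1 = \widehat{X}_1 \cup X_2 \cup Y \in \mathcal{I}$.

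The second step is symmetric. Set $Y_2 = Y \cup \widehat{X}_1$; then $|Y_2| \le k - p_2$, and by what we just verified, $X_2$ is disjoint from $Y_2$ and $X_2 \cup Y_2 \in \mathcal{I}$. Applying $\widehat{\mathcal{S}}_2 \subseteq_{rep}^{k-p_2} \mathcal{S}_2$ yields $\widehat{X}_2 \in \widehat{\mathcal{S}}_2$ with $\widehat{X}_2 \cap Y_2 = \emptyset$ and $\widehat{X}_2 \cup Y_2 = \widehat{X}_1 \cup \widehat{X}_2 \cup Y \in \mathcal{I}$. In particular $\widehat{X}_1 \cap \widehat{X}_2 = \emptyset$, so $\widehat{Z} := \widehat{X}_1 \cup \widehat{X}_2$ is a legitimate element of $\widehat{\mathcal{S}}_1 \bullet \widehat{\mathcal{S}}_2$, disjoint from $Y$, with $\widehat{Z} \cup Y \in \mathcal{I}$, as required.

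There is no real obstacle here; the only thing to watch is bookkeeping on disjointness — one must verify $X_1 \cap Y = \emptyset$ and $X_2 \cap Y = \emptyset$ separately from $Z \cap Y = \emptyset$ (immediate), and then carry the new disjointness conditions delivered by each application of the representative property into the next step. The rank bound $k$ is used exactly to ensure that enlarging $Y$ by $X_2$ (resp.\ by $\widehat{X}_1$) keeps $|Y_i|$ within the promised $q$-threshold of the corresponding representative family. The same argument works verbatim for the min/max weighted versions by additionally noting $w(\widehat{X}_1) \le w(X_1)$ and $w(\widehat{X}_2) \le w(X_2)$, so that $w(\widehat{Z}) \le w(Z)$.
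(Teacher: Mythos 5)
Your proof is correct and follows essentially the same route as the paper's: decompose $Z = X_1 \cup X_2$, absorb $X_2$ into the test set to swap $X_1$ for $\widehat{X}_1$ via $\widehat{\cal{S}}_1 \subseteq_{rep}^{k-p_1} {\cal S}_1$, then absorb $\widehat{X}_1$ into the test set to swap $X_2$ for $\widehat{X}_2$ via $\widehat{\cal{S}}_2 \subseteq_{rep}^{k-p_2} {\cal S}_2$. The bookkeeping on sizes and disjointness is exactly as in the paper, and your remark on the weighted variant matches the paper's stated extension.
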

\begin{proof} 
Let $Y\subseteq  E$ of size at most $q = k-p_1-p_2$ such that there is a set $X \in {\cal S}_1\bullet {\cal S}_2$ disjoint from $Y$ with $X\cup Y \in \I$.  This implies that there exist $X_1\in {\cal S}_1$ and $X_2 \in {\cal S}_2$ such that $X_1\cup X_2=X$ and $X_1 \cap X_2=\emptyset$.  Since 
$\widehat{\cal{S}}_1 \subseteq_{rep}^{k-p_1} {\cal S}_1$, we have that there exists a $\widehat{X}_1\in \widehat{\cal{S}}_1$ such that 
$\widehat{X}_1 \cup X_2 \cup Y\in \cal I$ and $\widehat{X}_1 \cap (X_2 \cup Y) = \emptyset$. Now since $\widehat{\cal{S}}_2 \subseteq_{rep}^{k-p_2} {\cal S}_2$, we have that there exists a $\widehat{X}_2\in \widehat{\cal{S}}_2$ such that $\widehat{X}_1 \cup \widehat{X}_2 \cup Y\in \cal I$ and $\widehat{X}_2 \cap (\widehat{X}_1 \cup Y) = \emptyset$.  This shows that $\widehat{X}_1 \cup \widehat{X}_2  \in 
\widehat{\cal{S}}_1 \bullet \widehat{\cal{S}}_2 $ and $\widehat{X}_1 \cup \widehat{X}_2 \cup Y\in \cal I$ thus $\widehat{\cal{S}}_1 \bullet \widehat{\cal{S}}_2 \subseteq_{rep}^{k-p_1-p_2} {\cal S}_1 \bullet {\cal S}_2$.
\end{proof}

The main result of this section is that given a representable matroid \mat{} of rank $k=p+q$ with its representation matrix \repmat{M},   a $p$-family of 
independent sets $ \cS $, and a non-negative weight function \wf{}, we can compute \minrep{S}{q} and \maxrep{S}{q} of size ${p+q \choose p}$ deterministically in time \tgem. 
The proof for this result is obtained by making the known exterior algebra based proof of  Lov\' asz~\cite[Theorem 4.8]{Lovasz77} algorithmic.  
%Marx~\cite[Lemma 4.2]{Marx09}  was the first to provide an algorithm proof to this lemma. However, the running time of the algorithm 
%given in~\cite{Marx09} is  $f(p,q)(||A||t)^{\cO(1)}$  where $f(p,q)$ is a polynomial 
%in $(p+q)^p$ and \bnoml{p+q}{p}; that is, $f(p,q)=2^{\cO(p \log k)} \cdot {p+q \choose p}^{\cO(1)}$. 
%Thus, when $p$ is a constant,  which is the way this lemma has been  recently used in the kernelization algorithms~\cite{KratschW12}, we have 
%that $f(p,q)=k^{\cO(1)}$. However, for unbounded $p$ (for an example when $p=\frac{k}{2}$) the running time of this algorithm is given by 
%$2^{\cO(k \log k)} (||A||t)^{\cO(1)}$.  
%Thus our implementation seems to provide a faster algorithm for the computation of \rep{S}{q}. 
Although our proof is based on exterior algebra and is essentially the same as the proof given in~\cite{Lovasz77}, we give a proof here which avoids the terminology from exterior algebra.
% and argues everything  using the properties of determinants, thus making the proof self-contained.  
 
 For our proof we also need the following well-known generalized Laplace expansion of determinants.  For a matrix $A=(a_{ij})$, the row set and the 
 column set are denoted by $\mathbf{R}(A)$ and $\mathbf{C}(A)$ respectively. For $I \subseteq \mathbf{R}(A)$ and $J\subseteq \mathbf{C}(A)$, 
 $A[I,J]=\big(a_{ij}~|~i \in I,~j\in J\big)$ means the submatrix (or minor) of $A$ with the row set $I$ and the column set $J$.  
 %If the row set $I=\mathbf{R}(A)$ then we simply denote the matrix by $A(J)$. 
 For $I\subseteq [n]$ let $\bar{I}=[n]\setminus I$ and $\sum I=\sum_{i\in I} i$. 
 
 \begin{proposition}[Generalized Laplace expansion]
 \label{prop:genlapaxp}
 For an $n\times n$ matrix $A$ and $J\subseteq \mathbf{C}(A)=[n]$, it holds that
%\begin{displaymath} 
\[\det(A)=\sum_{I\subseteq [n], |I|=|J|} (-1)^{\sum I + \sum J} \det(A[I,J]]) \det(A[\bar{I}, \bar{J}]) \]
%\end{displaymath}
 \end{proposition}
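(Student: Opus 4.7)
The plan is to derive the formula directly from the Leibniz expansion of the determinant by partitioning the symmetric group according to the preimage of $J$. Starting from
\[
\det(A)=\sum_{\sigma\in S_n}\mathrm{sgn}(\sigma)\prod_{i=1}^n a_{i,\sigma(i)},
\]
I would group permutations $\sigma$ by the set $I_\sigma := \sigma^{-1}(J)$, which is necessarily a subset of $[n]$ of size $|J|$. For a fixed $I$, a permutation with $\sigma^{-1}(J)=I$ is the same data as a pair of bijections $\sigma_1\colon I\to J$ and $\sigma_2\colon\bar I\to\bar J$, and the corresponding product in the Leibniz formula factors as
\[
\prod_{i\in I} a_{i,\sigma_1(i)}\ \cdot\ \prod_{i\in\bar I} a_{i,\sigma_2(i)},
\]
where the two factors use exactly the entries of the minors $A[I,J]$ and $A[\bar I,\bar J]$ respectively.

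The main technical step, and I expect the main obstacle, is the sign bookkeeping. I would argue that $\mathrm{sgn}(\sigma)=\varepsilon(I,J)\,\mathrm{sgn}(\sigma_1)\,\mathrm{sgn}(\sigma_2)$ for a sign $\varepsilon(I,J)$ depending only on the index sets. To compute $\varepsilon(I,J)$, conjugate by the order-preserving bijections that move $I$ to $\{1,\dots,|J|\}$ and $J$ to $\{1,\dots,|J|\}$; each such sorting permutation contributes a sign equal to $(-1)$ raised to its number of inversions. A short count shows the number of inversions of the sorting permutation of $J$ has the same parity as $\sum J - \binom{|J|+1}{2}$, and analogously for $I$. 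Since $2\binom{|J|+1}{2}$ is even, the two sorting signs multiply to $(-1)^{\sum I+\sum J}$, which is exactly the claimed factor.

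Once the sign is identified, summing over $\sigma_1$ and $\sigma_2$ separately yields
\[
\sum_{\sigma_1\colon I\to J}\mathrm{sgn}(\sigma_1)\prod_{i\in I} a_{i,\sigma_1(i)}\cdot\sum_{\sigma_2\colon\bar I\to\bar J}\mathrm{sgn}(\sigma_2)\prod_{i\in\bar I} a_{i,\sigma_2(i)} = \det(A[I,J])\,\det(A[\bar I,\bar J]),
\]
after suitably reindexing (this reindexing is where the sorting conjugation in the previous step is implicitly absorbed). Assembling the factors across all choices of $I$ gives the claimed identity.

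If the sign computation turns out to be unpleasant to carry out cleanly in closed form, I would fall back to an inductive proof on $|J|$: the base case $|J|=1$ is the usual single-column Laplace expansion along column $j\in J$, and the inductive step expands each $(n-1)\times(n-1)$ cofactor using the inductive hypothesis on $J\setminus\{j\}$, then collects terms and checks that the signs add up to $(-1)^{\sum I+\sum J}$ using a telescoping argument on the position of $j$ inside $J$ and of the deleted row index inside $I$. Either route should work; the Leibniz-partition route is conceptually cleaner but the induction route makes the sign computation more mechanical.
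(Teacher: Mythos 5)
Your argument is correct. Note, however, that the paper does not prove this proposition at all: it is stated as a known fact and the proof is delegated to the cited reference (Proposition 2.1.3 of Murota's book), so there is no in-paper argument to compare against. What you give is the standard textbook proof of Laplace's theorem: partition the Leibniz sum according to $I=\sigma^{-1}(J)$, factor each permutation with $\sigma(I)=J$ into bijections $\sigma_1\colon I\to J$ and $\sigma_2\colon \bar I\to\bar J$, and track the sign via the sorting permutations that move $I$ and $J$ to the initial segment $\{1,\dots,|J|\}$. Your inversion count is the right one (the sorter of $I=\{i_1<\cdots<i_k\}$ has $\sum_t (i_t-t)=\sum I-\binom{k+1}{2}$ inversions, and likewise for $J$), so the two sorting signs indeed combine to $(-1)^{\sum I+\sum J}$, and summing over $\sigma_1$ and $\sigma_2$ independently produces $\det(A[I,J])\det(A[\bar I,\bar J])$. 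One small point of phrasing: what you do is not literally conjugation but composition with two different sorting permutations on the two sides ($\rho_J\sigma\rho_I^{-1}$); the sign bookkeeping is unaffected, since $\mathrm{sgn}(\rho_J\sigma\rho_I^{-1})=\mathrm{sgn}(\rho_I)\,\mathrm{sgn}(\rho_J)\,\mathrm{sgn}(\sigma)$. The inductive fallback you sketch would also work but is unnecessary; the Leibniz-partition route is complete as outlined.
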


We refer to \cite[Proposition 2.1.3]{murota2000matrices}  for a proof of the above identity.  We always assume that 
the number of rows in the representation matrix \repmat{M}  of $M$ over a field $\mathbb F$ is equal to  \rank{M}$=$\rank{A_M}. Otherwise, 
using Gaussian elimination we can obtain a matrix of the desired kind in polynomial time. See~\cite[Proposition 3.1]{Marx09} for details. 
%
%We also assume that 
%the dimension of matrix \repmat{M} is $k \times |E|$; otherwise using gaussian elimination we can obtain a matrix of the desired kind. 
%
%\begin{theorem}
%\label{thm:repsetlovasz}
%Let \mat{}   be a linear matroid of rank $p+q=k$, and let $ \cS = \{S_1,\ldots, S_t\}$ be a $p$-family of independent sets. Then there 
%exists \rep{S}{q} of size \bnoml{p+q}{p}. 
%%If $|\cS |> {r+s \choose s}$, then there is a set $S\in \cS$ such that $\cS \setminus  \{S\}$ is $r$-representative for $\cS$. 
%Furthermore, given a representation \repmat{M}  of $M$ over a field $ \mathbb{F}$, we can find  \rep{S}{q} in   \tgem \, operations over $ \mathbb{F}$. 
% \end{theorem}
%
%\begin{theorem}
%\label{thm:repsetlovasz_a}
%
%
%\subsection{Weighted Representative Families}
%
We do not give the proof for Theorem~\ref{thm:repsetlovasz} but rather for the following generalization.
\begin{theorem}
\label{thm:repsetlovaszweighted}
Let \mat{}   be a linear matroid of rank $p+q=k$, $ \cS = \{S_1,\ldots, S_t\}$ be a $p$-family of independent sets and 
\wf{} be a non-negative weight function. Then there exists \minrep{S}{q} (\maxrep{S}{q}) of size \bnoml{p+q}{p}.  
% and 
%If $|\cS |> {r+s \choose s}$, then there is a set $S\in \cS$ such that $\cS \setminus  \{S\}$ is $r$-representative for $\cS$. 
%Furthermore, 
Moreover, 
given a representation \repmat{M}  of $M$ over a field $ \mathbb{F}$, we can find  \minrep{S}{q} (\maxrep{S}{q}) of size at most   \bnoml{p+q}{p} in   \tgem \, operations over $ \mathbb{F}$. 
 \end{theorem}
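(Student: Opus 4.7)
The plan is to give an algorithmic rendering of Lov\'asz's wedge-product proof, working directly with the generalized Laplace expansion of Proposition~\ref{prop:genlapaxp} so that no exterior algebra is ever manipulated. Assume without loss of generality that $A_M$ has exactly $k=p+q$ rows, indexed by $[k]$; let $D=\bnomlwd{p+q}{p}$ and coordinatize $\mathbb{F}^{D}$ by $\bnomlwd{[k]}{p}$. To each $S_i\in\cS$ associate $v_i\in\mathbb{F}^{D}$ whose $I$-coordinate is $\det(A_M[I,S_i])$, and to each candidate witness $Y\subseteq E$ of size $q$ (smaller $Y$ being handled by padding with fresh independent elements via a direct sum) associate the dual vector $u_Y\in\mathbb{F}^{D}$ whose $I$-coordinate is $(-1)^{\sum I+\sum[p]}\det(A_M[\bar I,Y])$. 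Let $N_{i,Y}$ be the $k\times k$ matrix obtained by horizontally concatenating the columns $A_M[\cdot,S_i]$ followed by $A_M[\cdot,Y]$; Proposition~\ref{prop:genlapaxp} applied to $N_{i,Y}$ yields the bilinear identity $\langle v_i,u_Y\rangle=\pm\det N_{i,Y}$. If $S_i\cap Y\neq\emptyset$ then $N_{i,Y}$ has a repeated column, so both sides vanish; otherwise $\det N_{i,Y}=\pm\det(A_M[\cdot,S_i\cup Y])$, which is nonzero precisely when $S_i\cup Y\in\I$. In summary, $\langle v_i,u_Y\rangle\neq 0$ iff $S_i\cap Y=\emptyset$ and $S_i\cup Y\in\I$.

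This bilinear reformulation reduces representativity to a linear-algebra condition: if every $v_i$ lies in the span of $\{v_{\widehat{S}}:\widehat{S}\in\widehat{\cS}\}$, then $\widehat{\cS}$ is $q$-representative, because a dependence $v_i=\sum_j\lambda_j v_{\widehat{S}_j}$ combined with $\langle v_i,u_Y\rangle\neq 0$ forces some $\langle v_{\widehat{S}_j},u_Y\rangle\neq 0$. Since $\dim\mathbb{F}^{D}=D$, we can always choose such a $\widehat{\cS}$ of size at most $\bnomlwd{p+q}{p}$, which already yields the existential statement. For the weighted version, run a matroid-style greedy: sort $\cS$ in increasing order of $w$ (decreasing, for the max-version), initialize $\widehat{\cS}=\emptyset$, and for each $S_i$ in turn, add it to $\widehat{\cS}$ precisely when $v_i$ is linearly independent of $\{v_{\widehat{S}}:\widehat{S}\in\widehat{\cS}\}$. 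The invariant that $v_i$ always lies in the span of vectors coming from previously-added sets of weight $\leq w(S_i)$ then guarantees that any witness $Y$ for $S_i$ is also a witness for some $\widehat{S}\in\widehat{\cS}$ with $w(\widehat{S})\leq w(S_i)$---exactly the min-representative property.

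For the runtime, computing each $v_i$ requires $D$ evaluations of a $p\times p$ determinant at cost $\cO(p^{\omega})$ each, for a total of $\cO\!\bigl(\bnomlwd{p+q}{p}\,t\,p^{\omega}\bigr)$ across the $t$ sets. The greedy phase maintains a row-echelon basis of the current span in $\mathbb{F}^{D}$; the $t$ candidate vectors are reduced against it in blocks, so by fast rectangular matrix multiplication the total cost is $\cO\!\bigl(t\,\bnomlwd{p+q}{q}^{\omega-1}\bigr)$. Adding the two contributions matches the bound \tgem\ claimed in Theorem~\ref{thm:repsetlovasz}.

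The main technical obstacle is verifying the bilinear identity of the first paragraph: one must carry through the signs of the generalized Laplace expansion correctly and, crucially, confirm that the case $S_i\cap Y\neq\emptyset$ is automatically detected by the vanishing of a determinant with a repeated column---so that no separate disjointness check is needed in the algorithm and the purely linear-algebraic span argument captures the full combinatorial condition. Once this identity is in hand, the size bound, the weighted greedy, and the correctness of the min/max variants all follow by essentially standard linear-algebraic reasoning. A secondary care-point is attaining the stated $\cO(t\,\bnomlwd{p+q}{q}^{\omega-1})$ cost in the greedy phase: processing vectors one at a time with naive Gaussian elimination introduces an extra factor of $D$, so one must batch the candidate vectors into blocks and invoke fast rectangular matrix multiplication for basis maintenance.
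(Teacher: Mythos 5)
Your proposal follows essentially the same route as the paper's proof: the same vectors of $p\times p$ minors $\det(A_M[I,S_i])$ indexed by $I\in\binom{[k]}{p}$, the same generalized-Laplace pairing identifying $\langle v_i,u_Y\rangle$ with $\pm\det$ of the concatenated $k\times k$ matrix (your explicit remark that a repeated column makes the disjointness check automatic is implicit in the paper but correct), the same span/column-basis argument for the size bound, and the same running-time split (minor computation plus fast basis extraction in $\cO(t\binom{p+q}{p}^{\omega-1})$, which is exactly the Bodlaender et al.\ routine the paper invokes). Your weight-sorted greedy is just the standard way of computing a minimum-weight column basis, and your greedy invariant replaces the paper's exchange argument (Claim~\ref{claim:supportweighted}) with equivalent content, so the weighted min/max variants are handled correctly.

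The one step that does not work as stated is your treatment of witnesses $Y$ with $|Y|=q'<q$ by ``padding with fresh independent elements via a direct sum.'' Taking a direct sum with a free matroid on $q-q'$ fresh elements raises the rank to $k+(q-q')>p+q$ (and adds rows to the representation), so for a padded witness $Y''$ of size $q$ the concatenated matrix $A[\cdot, S_i\cup Y'']$ has $k+(q-q')$ rows but only $p+q$ columns; it is not square, the complementary minors $\det(A[\bar I,Y''])$ with $|\bar I|=k+(q-q')-p$ are not even defined for a $q$-column block, and the bilinear identity underlying your whole argument breaks. The fix is easy and is what the paper does: since $\mbox{\sf rank}(M)=p+q$ and $S_\beta\cup Y$ is independent of size $<p+q$, the augmentation axiom lets you extend $Y$ \emph{inside} $M$ to a set $Y'$ of size exactly $q$ with $S_\beta\cap Y'=\emptyset$ and $S_\beta\cup Y'\in\I$; then the $|Y|=q$ case yields $\widehat{S}$ with $\widehat{S}\cup Y'\in\I$ and $w(\widehat{S})\le w(S_\beta)$, and a fortiori $\widehat{S}\cup Y\in\I$ and $\widehat{S}\cap Y=\emptyset$. (Alternatively, a \emph{free extension} keeping the rank at $p+q$ — generic new columns in $\mathbb{F}^k$, possibly over a field extension — would also work, but that is not a direct sum.) With this one-line repair your argument coincides with the paper's.
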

%The proof of this theorem is similar to Theorem~\ref{thm:repsetlovasz}. As before we compute the matrix $H_\mathcal{S}$. Now we define a weight function $w': \mathbf{C}(H_\mathcal{S}) \rightarrow \mathbb{R}^+$ on the set of columns of $H_\mathcal{S}$. For the column 
%$s_i$ corresponding to $S_i \in {\cal S}$, we define  $w'(s_i)=w(S_i)$. Now to find \minrep{S}{q} (\maxrep{S}{q}) one finds a minimum weight (maximum weight) basis $\mathcal{W}$ of columns. That is,  a set of linearly independent columns of size \rank{H_\mathcal{S}} and of 
%total minimum (maximum) weight. Sets corresponding to columns in $\mathcal{W}$  constitutes \minrep{S}{q} (\maxrep{S}{q}). A detailed proof is as follows. 
%A complete proof can be found in appendix.  

\begin{proof}
We only show how to  find  \minrep{S}{q} in the claimed running time. The proof for \maxrep{S}{q} is analogous, and for that case we only  point out the   places   where the proof differs.  %from  min $q$-representative family.
 If $t\leq \bnomlwd{k}{p}$,  then we can take  $\widehat{\cal S}=\cal S$. Clearly, in this case \minrep{S}{q}. So from now onwards we always assume that $t> \bnomlwd{k}{p}$. For the proof we view the representation matrix \repmat{M} as a vector space over $\mathbb F$ and each set $S_i\in \cS$  as a subspace of this vector space. For every element $e \in E$, let  $x_e$ be the corresponding $k$-dimensional column  in \repmat{M}. Observe that each  $x_e \in \mathbb{F}^{k}$.  For each subspace $S_i\in \cS$, $i\in \{1,\dots, t\}$, we associate a vector $\vec{s}_i=\bigwedge_{j\in S_i}x_j$ in $\mathbb{F}^{k \choose p}$  as follows.  In exterior algebra terminology, the vector  $\vec{s}_i$ is a wedge product of the vectors corresponding to elements in $S_i$.  For a set $S\in \cal S$ and $I \in {[k]\choose p}$, we define   $s[I]=\det(A_M[I,S])$. 

%For a set of columns $S\subseteq \mathbf{C}(A_M)$
%Let $A_M(S_i)=A_M[\mathbf{R}(A_M),S_i]$ denote the submatrix of $A_M$ given by columns corresponding to elements in $S_i$. 
We also define 
\[\vec{s}_i=\left(s_i[I]\right)_{I\in {[k]\choose p} }.\]
Thus the entries of the vector $\vec{s}_i$ are the values of $\det(A_M[I,S_i])$, where $I$ runs through all the $p$ sized subsets of rows of $A_M$.  

%\[\vec{s}_i=\left(\det(A_M[I,S_i])~\big|~I \in {[k]\choose p} \right).\]

%{\bf READ FROM HERE...}

Let $H_\mathcal{S}=(\vec{s}_1,\ldots, \vec{s}_t)$ be the ${k \choose p} \times t$ matrix obtained by taking $\vec{s}_i$ as columns.  %The dimension of the matrix $H_\mathcal{S}$ is ${k \choose p} \times t$. 
Now we define a weight function $w': \mathbf{C}(H_\mathcal{S}) \rightarrow \mathbb{R}^+$ on the set of columns of $H_\mathcal{S}$. For the column $\vec{s}_i$ corresponding to $S_i \in {\cal S}$, we define  $w'(\vec{s}_i)=w(S_i)$.  Let $\mathcal{W}$ be a set of columns of  $H_\mathcal{S}$ 
that are linearly independent over $\mathbb{F}$,  the size of $\mathcal{W}$ is equal to the \rank{H_\mathcal{S}} and is of minimum total weight 
with respect to the weight function $w'$.  That is, $\mathcal W$ is a minimum weight column basis of $H_\mathcal{S}$.  
Since   the row-rank of a matrix    is equal to the column-rank, we have that $|\mathcal{W}|=$\rank{H_\mathcal{S}}$\leq {k \choose p}$.  We define $\widehat{\cal S}=\{S_\alpha~|~\vec{s}_\alpha \in \mathcal{W}\}$. Let  $|\widehat{\cal S}|=\ell$.  Because  $ |\mathcal{W}| =|\widehat{\cal S}|$, we have that  $\ell \leq {k \choose p}$.  Without loss of generality,  let $\widehat{\cal S}=\{S_i~|~1\leq i \leq \ell\}$ (else we can rename these sets) and 
$\mathcal{W}=\{\vec{s}_1\ldots,\vec{s}_\ell\}$.  The only thing that remains to show is that indeed \minrep{S}{q}. 

Let $S_\beta \in \cal S$ be  such that $S_\beta \notin \widehat{\cal S}$. We show that if there is a set $Y\subseteq  E$ of size at most $q$ such that $S_\beta \cap Y=\emptyset$ and $S_\beta \cup Y \in \I$, then there exists a set $\whnd{S}_\beta \in \whnd{\cal S}$ disjoint from $Y$ with $\whnd{S}_\beta \cup  Y \in \I$ and $w(\whnd{S}_\beta)\leq  w(S_\beta)$. Let us first consider the case $|Y|=q$. Since $S_\beta \cap Y =\emptyset$, it follows that   $|S_\beta \cup Y|=p+q=k$. Furthermore, since $S_\beta \cup Y \in \I$, we have that the columns corresponding to $S_\beta \cup Y$ in $A_M$ are linearly independent over $\mathbb{F}$; that is, $\det(A_M[\mathbf{R}(A_M),S_\beta \cup Y])\neq 0$.  

%If for some set $\widehat{ S}\in \widehat{\cal S}$ we have that $\det(A_M[\mathbf{R}(A_M),\widehat{  S} \cup Y])\neq 0$, then we can take $ \whnd{S}_\beta = \whnd{S} $. For contradiction assume that for all  $\widehat{  S}\in \widehat{\cal S}$ we have that $\det(A_M[\mathbf{R}(A_M), \widehat{  S} \cup Y])=0$.

Recall that,  
$\vec{s}_\beta=\left(s_\beta[I]\right)_{I\in {[k]\choose p} }, $ where $s_\beta[I]=\det(A_M[I,S_\beta])$.
%$s_\beta=\left(a_I=\det(A_M(S_\beta)[I,S_\beta])~\big|~I \in {[k]\choose p} \right)$.  
%Let $A_M'=A_M[\mathbf{R}(A_M),Y]$ denote the submatrix of $A_M$ given by columns corresponding to elements in $Y$. 
 Similarly we define  $y[L]=\det(A_M[L,Y])$ and 
\[\vec{y}= \left(y[L]\right)_{L \in {[k]\choose q}}.
\]
 
Let $\sum J=\sum_{j\in S_\beta} j$. Define 
\[\gamma(\vec{s}_\beta,\vec{y})=\sum_{I\in {[k]\choose p}}(-1)^{\sum{I}+\sum J}s_\beta[I]\cdot y[{\bar{I}}].\]
Since ${k \choose p}={k \choose k-p}={k \choose q}$ the  above  formula is well defined. 
Observe that by Proposition~\ref{prop:genlapaxp}, we have that $\gamma(\vec{s}_\beta,\vec{y})=\det(A_M[\mathbf{R} (A_M),S_\beta\cup Y]) \neq 0$.  We  also know that $\vec{s}_\beta$ can be written as a linear combination of vectors in $\mathcal{W}=\{\vec{s}_1, \vec{s}_2,\ldots,\vec{s}_\ell\}$.  
%For a subset $I\subseteq {[k]\choose p} $ and $i\leq \ell$, by $w_i[I]$ we denote the corordinate o
That is,  $\vec{s}_\beta=\sum_{i=1}^\ell \lambda_i \vec{s}_i$, $\lambda_i \in \mathbb{F}$, and for some $i$, $\lambda_i\neq 0$.  
Thus, 
\begin{eqnarray*}
\gamma(\vec{s}_\beta,\vec{y})& = & \sum_{I}(-1)^{\sum{I}+\sum J}s_\beta[I]\cdot y[{\bar{I}}] \\
&=&  \sum_{I}(-1)^{\sum{I}+\sum J}\left (\sum_{i=1}^{\ell} \lambda_i s_{i}[I] \right)y[{\bar{I}}] \\
& = & \sum_{i=1}^{\ell} \lambda_i \left(\sum_{I}(-1)^{\sum{I}+\sum J} {s}_i[I]y[{\bar{I}}]\right) \\
&= & \sum_{i=1}^{\ell} \lambda_i \det(A_M[\mathbf{R}(A_M), S_i \cup Y]) ~~~~~(\mbox{by Proposition~\ref{prop:genlapaxp}})\\
%&= & 0.
\end{eqnarray*}
Define 
\[ \mathsf{sup}(S_\beta)=\Big\{S_i~\Big|~S_i\in \widehat{\cal S},~\lambda_i \det(A_M[\mathbf{R}(A_M), S_i \cup Y]))\neq 0 \Big\}. \]
Since  $\gamma(\vec{s}_\beta,\vec{y})\neq 0$,  we have that $(\sum_{i=1}^{\ell} \lambda_i \det(A_M[\mathbf{R}(A_M), S_i \cup Y]))\neq 0$ 
and thus $\mathsf{sup}(S_\beta)\neq \emptyset$. Observe that for all $S\in \mathsf{sup}(S_\beta)$ we have that 
$\det(A_M[\mathbf{R}(A_M), S \cup Y])\neq 0$ and thus $S\cup Y \in \cal I$.  
We now show that $w(S)\leq w(S_\beta)$ for all $S\in \mathsf{sup}(S_\beta)$. 
\begin{claim}
\label{claim:supportweighted}
For all $S\in \mathsf{sup}(S_\beta)$,  $w(S)\leq w(S_\beta)$.
\end{claim}
\begin{proof}
For a contradiction assume that there exists a set  $S_j \in \mathsf{sup}(S_\beta)$ such that $w(S_j)> w(S_\beta)$. Let 
$\vec{s_j}$ be the vector corresponding to $S_j$ and ${\cal W}'=({\cal W} \cup \{ \vec{s_j} \} ) \setminus \{\vec{s_\beta} \}$. Since 
$w(S_j)> w(S_\beta)$,  we have  that  $w(\vec{s_j})> w(\vec{s_\beta})$ and thus  $w'({\cal W})>w'({\cal W}')$. Now we show that 
${\cal W}'$ is also a column basis of $H_{\cal S}$. This will contradict our assumption that $\cal W$ is a minimum weight column basis of  
$H_{\cal S}$. Recall that $\vec{s}_\beta=\sum_{i=1}^\ell \lambda_i \vec{s}_i$, $\lambda_i \in \mathbb{F}$. 
Since $S_j \in \mathsf{sup}(S_\beta)$,  we have that $\lambda_j \neq 0$. Thus $\vec{s}_j$ can be written as linear combination of vectors in 
$\mathcal{W}'$. That is, 
\begin{eqnarray}
\label{eqn:lincomb}
\vec{s}_j= \lambda_{\beta} \vec{s}_\beta + \sum_{i=1, i \neq j}^\ell \lambda_i' \vec{s}_i.
\end{eqnarray}
Also every vector $\vec{s}_\gamma \notin {\cal W}$ can be written as a linear combination of vectors in $\cal W$  
\begin{eqnarray}
\label{eqn:lincombtwo}
\vec{s}_\gamma=\sum_{i=1}^\ell \delta_i \vec{s}_i,~~ \delta_i\in \mathbb{F}. 
\end{eqnarray}
By substituting \eqref{eqn:lincomb} into  \eqref{eqn:lincombtwo}, we conclude that every 
vector can be written as linear combination of vectors in ${\cal W}'$. This shows that ${\cal W}'$ is also a column basis of $H_{\cal S}$, a contradiction proving the claim. 
%Thus, by replacing  the expression of $\vec{s_j}$ given by Equation~\ref{eqn:lincomb} into Equation~\ref{eqn:lincombtwo}, we get that every 
%vector can be written as linear combination of vectors in ${\cal W}'$. This shows that ${\cal W}'$ is also a column basis of $H_{\cal S}$, a contradiction, and thus proves the claim. 
\end{proof}
%This shows that $\gamma(\vec{s}_\beta,\vec{y})=0$, however by our assumption  $\gamma(\vec{s}_\beta,\vec{y})\neq 0$, which is a contradiction. 
Claim~\ref{claim:supportweighted} and the discussions preceding  above it show that we could take any set $S\in \mathsf{sup}(S_\beta)$ as the desired  $\whnd{S}_\beta \in \whnd{\cal S}$. Also, since $\det(A_M[\mathbf{R}(A_M), S\cup Y])\neq 0$, we have that $S\cap Y=\emptyset$. This shows that indeed \minrep{S}{q} for each $Y$  of size $q$. This completes  the proof for the case $|Y|=q$.

Suppose that $|Y|=q'< q$. Since $M$ is a matroid of rank $k=p+q$, there exists a superset $Y'\in \mathcal{I}$  of $Y$ of size $q$ such that   $S_\beta \cap Y' =\emptyset$ and $S_\beta \cup Y' \in \mathcal{I}$. This implies that there exists a  
set $\widehat{ S}\in \widehat{\cal S}$ such that $\det(A_M[\mathbf{R}(A_M),\widehat{S} \cup Y'])\neq 0$ and $w(\widehat{S})\leq w(S)$. 
Thus  the columns corresponding to $\widehat{S}\cup Y$ are linearly independent. 

We now consider the running time of the algorithm. 
To make the above proof algorithmic we need to 
\begin{itemize}
\setlength{\itemsep}{-2pt}
\item[(a)] compute determinants and 
\item[(b)] apply fast Gaussian elimination to find a 
 minimum weight column basis. 
 \end{itemize}
 It is well known that one can 
 compute the determinant of a $n \times n$ matrix in time $\cO(n^{\omega})$~\cite{bunch1974triangular}.  For a rectangular matrix 
 $A$ of size $d \times n$ (with $d\leq n$), Bodlaender et al.~\cite{BodlaenderCK12} outline an algorithm  computing a  minimum weight column basis 
 in time $\cO(nd^{\omega-1})$.   
Thus given a $p$-family of independent sets $ \cS$ we can 
construct the matrix $H_\mathcal{S}$ as follows. For every set $S_i$, we first 
compute $\vec{s}_i$. To do  this we compute  $\det(A_M[I,S_i])$ for every $I \in {[k]\choose p}$. This can be done in time 
$\cO(\bnomlwd{p+q}{p}p^\omega)$. Thus, we can obtain the matrix $H_\mathcal{S}$ in time $\cO(\bnomlwd{p+q}{p}tp^\omega)$. Given 
matrix $H_\mathcal{S}$ we  can find a minimum weight column basis $\mathcal{W}$ 
%of linearly independent columns 
of $H_\mathcal{S}$ 
%of total minimum weight  
in time $\cO(t \bnomlwd{p+q}{p}^{\omega-1})$.  Given $\mathcal{W}$, we  can  easily recover $\widehat{\cal S}$. Thus, we can compute 
\minrep{S}{q} in  \tgem \, field operations. This concludes the proof  for finding \minrep{S}{q}. To find  \maxrep{S}{q}, the only change we need to 
do in the algorithm for finding \minrep{S}{q} is to  find a {\em maximum weight column basis} $\mathcal{W}$ of $H_\mathcal{S}$. This concludes the proof.
\end{proof}

In Theorem~\ref{thm:repsetlovaszweighted}  we assumed that \rank{M}$=p+q$. However, one can obtain a similar result even when 
\rank{M}$>p+q$ in lieu of randomness.  To do this we first need to compute the representation matrix of a $k$-restriction of \mat. 
For that we make use of  Proposition~\ref{prop:truncationrep}. This step returns a representation of a $k$-restriction of \mat{} with high probability.  Given this matrix, we apply Theorem~\ref{thm:repsetlovaszweighted} and arrive at  the following result. 

%\begin{theorem}
%\label{thm:repsetlovaszrandomized}
%Let \mat{}   be a linear matroid and let $ \cS = \{S_1,\ldots, S_t\}$ be a $p$-family of independent sets. Then there 
%exists \rep{S}{q} of size \bnoml{p+q}{p}. 
%%If $|\cS |> {r+s \choose s}$, then there is a set $S\in \cS$ such that $\cS \setminus  \{S\}$ is $r$-representative for $\cS$. 
%Furthermore, given a representation \repmat{M}  of $M$ over a field $ \mathbb{F}$, there is a randomized algorithm  computing  
%\rep{S}{q} in  \tgem \, operations over $ \mathbb{F}$. 
% \end{theorem}

\begin{theorem}
\label{thm:repsetlovaszrandomized}
Let \mat{}   be a linear matroid, $ \cS = \{S_1,\ldots, S_t\}$ be a $p$-family of independent sets and 
\wf{} be a non-negative weight function. Then there exists \minrep{S}{q} (\maxrep{S}{q}) of size \bnoml{p+q}{p}.  
%If $|\cS |> {r+s \choose s}$, then there is a set $S\in \cS$ such that $\cS \setminus  \{S\}$ is $r$-representative for $\cS$. 
Furthermore, given a representation \repmat{M}  of $M$ over a field $ \mathbb{F}$, there is a randomized algorithm  computing  \minrep{S}{q} (\maxrep{S}{q}) of size at most   \bnoml{p+q}{p} in   \tgem \, operations over $ \mathbb{F}$. 
%\rep{S}{q} in  \tgem \, operations over $ \mathbb{F}$. 
 \end{theorem}

\section{Fast Computation for Representative Sets  for Uniform Matroids}\label{sec:unimat}
%!TEX root = repset-main.tex

In this section we show that for uniform matroids one can avoid matrix multiplication computations in order to  compute representative families. 
%This will allow us to siginificantly speed up some of our algorithms from Section~\ref{section:application}. 
 The section is organized as follows. We start (Section~\ref{subsec:lopsided}, Theorem~\ref{thm:naiveRepUniform}) from a  relatively simple algorithm  computating   representative families  over a uniform matroid. This algorithm is already faster than the algorithm of Theorem~\ref{thm:repsetlovasz} for general matroids. In Section~\ref{subsec:sepCollect},  Theorem~\ref{thm:repset uniform general3},   we give an even faster, but more complicated algorithm. 
 Throughout this section a subfamily ${\cal A}' \subseteq {\cal A}$ of the family ${\cal A}$ is said to $q$-{\em represent} ${\cal A}$ if for every set $B$ of size $q$ such that there is an $A \in {\cal A}$ and $A \cap B = \emptyset$, there is a set $A' \in {\cal A}'$ such that $A' \cap B = \emptyset$.

\subsection{Representative Sets using Lopsided Universal Sets}\label{subsec:lopsided}
Our aim in this subsection is to prove the following theorem.

\begin{theorem}\label{thm:naiveRepUniform}
There is an algorithm that given a family ${\cal A}$ of $p$-sets   over a universe $U$ of size $n$ and an integer $q$, computes in time $|{\cal A}|\cdot {p+q \choose p} \cdot 2^{o(p+q)} \cdot \log n$ a subfamily ${\cal A}' \subseteq A$ such that $|{\cal A}'| \leq {p+q \choose p} \cdot 2^{o(p+q)} \cdot \log n$ and ${\cal A'}$ $q$-represents ${\cal A}$.
\end{theorem}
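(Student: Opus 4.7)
The plan is to reduce the problem to constructing a suitable \emph{lopsided universal set}. Call a family ${\cal F} \subseteq 2^U$ an $(n,p,q)$-lopsided universal set if for every pair of disjoint sets $A, B \subseteq U$ with $|A| = p$ and $|B| = q$ there exists $F \in {\cal F}$ with $A \subseteq F$ and $F \cap B = \emptyset$. I would first establish, by combining the Naor--Schulman--Srinivasan splitter construction with an enumeration over $p$-subsets of $[p+q]$, that there is an $(n,p,q)$-lopsided universal set of size $\binom{p+q}{p}\cdot 2^{o(p+q)} \cdot \log n$ which can be enumerated in time linear in its size (up to factors of $2^{o(p+q)}$). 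Concretely: take an $(n,p+q)$-splitter $\Phi$, i.e.\ a family of functions $\varphi \colon U \to [p+q]$ of size $2^{o(p+q)} \log n$ such that for every $(p+q)$-set $S \subseteq U$ some $\varphi \in \Phi$ restricts to a bijection on $S$; then set ${\cal F} = \{\varphi^{-1}(P) : \varphi \in \Phi,\ P \in \binom{[p+q]}{p}\}$. Given any disjoint $A,B$ of the right sizes, pick $\varphi$ injective on $A\cup B$ and $P = \varphi(A)$; then $A \subseteq \varphi^{-1}(P)$ and $\varphi^{-1}(P)\cap B = \emptyset$.

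Given ${\cal F}$, the algorithm is essentially greedy. Maintain a boolean flag \emph{covered}$(F)$, initialized to false for every $F \in {\cal F}$. Process the sets $A \in {\cal A}$ one by one. For the current $A$, iterate over all $F \in {\cal F}$ with $A \subseteq F$; if some such $F$ is still uncovered, insert $A$ into ${\cal A}'$ and mark every $F \in {\cal F}$ containing $A$ as covered. Because each $F$ can trigger at most one insertion, $|{\cal A}'| \leq |{\cal F}|$, matching the desired size bound.

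Correctness follows directly from the defining property of a lopsided universal set: if $A \in {\cal A}$ and $B \subseteq U \setminus A$ with $|B| \leq q$, extend $B$ arbitrarily inside $U \setminus A$ to a set of size exactly $q$, and pick the guaranteed $F \in {\cal F}$ with $A \subseteq F$ and $F \cap B = \emptyset$. When $A$ was processed, either $F$ was already covered by some previously inserted $A' \in {\cal A}'$, in which case $A' \subseteq F$ and hence $A' \cap B = \emptyset$, or $A$ itself was inserted. Either way ${\cal A}'$ contains a witness disjoint from $B$. The running time is dominated by the double loop over ${\cal A}$ and ${\cal F}$ with an $O(p)$ containment test, giving $|{\cal A}|\cdot |{\cal F}|\cdot p$, which absorbs into the stated bound since $p = 2^{o(p+q)}$.

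The only step that is not essentially cosmetic is the efficient deterministic construction of the $(n,p+q)$-splitter of size $2^{o(p+q)}\log n$; this is the main obstacle and will be invoked as a black box from Naor--Schulman--Srinivasan (or reproved with the standard perfect-hash-family-plus-code recipe). Everything else is a clean greedy covering argument that turns the combinatorial guarantee of a lopsided universal set directly into the $q$-representation property over the uniform matroid. A minor refinement, useful when later extending to the weighted variant needed elsewhere in the paper, is to process ${\cal A}$ in order of nondecreasing (resp.\ nonincreasing) weight, so that the first set covering a given $F$ is the one with the best weight; correctness is preserved by the same argument.
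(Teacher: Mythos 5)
Your overall architecture (build an $n$-$p$-$q$-lopsided-universal family ${\cal F}$, then do a greedy marking pass over ${\cal A}$, charging each inserted set to a previously unmarked $F$) is exactly the paper's proof, and that part of your argument — the size bound $|{\cal A}'|\le|{\cal F}|$, the correctness argument via the separating $F$, and even the weight-sorted refinement — is correct. The gap is in the one step you yourself flag as non-cosmetic: the construction of ${\cal F}$. You posit a family $\Phi$ of functions $\varphi\colon U\to[p+q]$ of size $2^{o(p+q)}\log n$ such that every $(p+q)$-set is mapped \emph{bijectively} by some $\varphi$. No such family exists at that size: a single $\varphi$ is injective on at most $\prod_{i=1}^{p+q}|\varphi^{-1}(i)|\le (n/(p+q))^{p+q}$ of the $(p+q)$-subsets, and since $\binom{n}{p+q}\ge (n/(p+q))^{p+q}\cdot e^{p+q}/\mathrm{poly}(p+q)$ (for $n$ large relative to $p+q$), any perfect hash family onto exactly $p+q$ buckets must have size $2^{\Omega(p+q)}$, in fact roughly $e^{p+q}$. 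Naor--Schulman--Srinivasan cannot be invoked as a black box here: their small families hash into a range of size $(p+q)^2$, not $p+q$, and their $(n,k,k)$-splitters have size about $e^k k^{O(\log k)}\log n$. Plugging the true bound into your construction gives $|{\cal F}|\approx\binom{p+q}{p}\cdot e^{p+q}\cdot\log n$, which is exponentially larger than the claimed $\binom{p+q}{p}2^{o(p+q)}\log n$ and is even worse than the trivial route of using a plain $(p+q)$-universal family (size $2^{p+q+o(p+q)}\log n$), so the theorem's bound is not obtained.

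The fix requires the block-splitting machinery that the paper packs into Lemma~\ref{lem:goodSeparatingCol} (of which Lemma~\ref{lem:lopsidedUniversal} is a corollary). One first hashes $U$ to a range of size $(p+q)^2$ with a $(p+q)$-perfect family of size $(p+q)^{O(1)}\log n$, then partitions the (ordered) universe into roughly $(p+q)/\log^2(p+q)$ consecutive blocks and enumerates how the $p+q$ relevant elements and the budget $p$ distribute over the blocks; on each block, which now carries only $\mathrm{polylog}(p+q)$ elements of $A\cup B$, a lopsided-universal family can be built by brute force (or by the probabilistic construction verified exhaustively), and the global family is the blockwise product. The per-block binomials multiply out to at most $\binom{p+q}{p}$ by a Vandermonde-type inequality, while the number of consecutive partitions and budget distributions is only $2^{O((p+q)/\log(p+q))}$ — this is precisely where the $2^{o(p+q)}$ overhead comes from, and it is the content you would need to supply (or cite in its correct form) to make your proof complete.
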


The main tool in our proof of Theorem~\ref{thm:naiveRepUniform} is a generalization of the notion of $n$-$k$-{\em universal} families.  A family ${\cal F}$ of sets over a universe $U$ is an $n$-$k$-{\em universal} family if for every set $A \in  {U \choose k}$ and every subset $A' \subseteq A$ there is some set $F \in {\cal F}$ whose intersection $F \cap A$ is exactly   $A'$. Naor et al.~\cite{NaorSS95} show that given $n$ and $k$ one can construct an $n$-$k$-universal family ${\cal F}$ of size $2^{k+o(k)} \cdot \log n$ in time $2^{k+o(k)} \cdot n \log n$. 

We tweak the notion of universal families as follows. We will say that a  family ${\cal F}$ of sets over a universe $U$ of size $n$ is an $n$-$p$-$q$-{\em lopsided-universal} family if for every $A \in {U \choose p}$ and $B \in {U \setminus A \choose q}$ there is an $F \in {\cal F}$ such that $A \subseteq F$ and $B \cap F = \emptyset$. An alternative definition that is easily seen to be equivalent is that ${\cal F}$ is $n$-$p$-$q$-lopsided-universal if for every subset $A \in  {U \choose p+q}$ and every subset $A' \in {A \choose p}$, there is an $F \in {\cal F}$ such that $F \cap A = A'$. From the second definition it follows that a $n$-$(p+q)$-universal family is also $n$-$p$-$q$-lopsided-universal. Thus the construction of Naor et al.~\cite{NaorSS95} of universal set families also gives an construction of $n$-$p$-$q$-lopsided universal family of size $2^{p+q+o(p+q)} \cdot \log n$, running in time $2^{p+q+o(p+q)} \cdot n \log n$.  It turns out that by slightly changing the construction of Naor et al.~\cite{NaorSS95}, one can prove the following result.
\begin{lemma}\label{lem:lopsidedUniversal}
There is an algorithm that given $n$, $p$ and $q$ constructs an  $n$-$p$-$q$-lopsided-universal family ${\cal F}$ of size ${p+q \choose p} \cdot 2^{o(p+q)} \cdot \log n$ in time $\cO({p+q \choose p} \cdot 2^{o(p+q)} \cdot n \log n)$.
\end{lemma}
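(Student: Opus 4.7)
The plan is to follow the two-stage template of the Naor--Schulman--Srinivasan construction of $n$-$k$-universal sets~\cite{NaorSS95}, modifying each stage to exploit the asymmetry between the ``cover'' set (of size $p$) and the ``avoid'' set (of size $q$). Recall that NSS first push the universe $[n]$ down to a small universe $[m]$ with $m=\poly(p+q)$ via a perfect-hash-like splitter, then place a small universal family on $[m]$, and finally take preimages. My goal is to imitate this, but to replace the small universal family by a small \emph{lopsided}-universal family whose size is $\binom{p+q}{p}\cdot 2^{o(p+q)}$ rather than $2^{p+q+o(p+q)}$, saving precisely the factor $\binom{p+q}{p}/2^{p+q}$.

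Concretely, I will first invoke the NSS splitter to obtain an $(n,p+q,(p+q)^2)$-splitter $\Phi$, i.e.\ a family of functions $\phi:[n]\to[(p+q)^2]$ of size $(p+q)^{\cO(\log(p+q))}\log n = 2^{o(p+q)}\log n$, such that every $(p+q)$-element subset of $[n]$ is mapped injectively by some $\phi\in\Phi$. This step requires no modification. Second, on the small universe $[m]=[(p+q)^2]$ I will construct an $m$-$p$-$q$-lopsided-universal family $\mathcal{U}_0$ of size $\binom{p+q}{p}\cdot 2^{o(p+q)}$. Third, I output
\[
\mathcal{F}=\bigl\{\phi^{-1}(F) : \phi\in\Phi,\; F\in\mathcal{U}_0\bigr\}.
\]
Correctness is immediate: for any $A\in\binom{[n]}{p}$ and $B\in\binom{[n]\setminus A}{q}$, pick $\phi\in\Phi$ injective on $A\cup B$; then $\phi(A),\phi(B)\subseteq[m]$ are disjoint of sizes $p$ and $q$, so by lopsided-universality of $\mathcal{U}_0$ some $F\in\mathcal{U}_0$ contains $\phi(A)$ and avoids $\phi(B)$, and injectivity of $\phi$ on $A\cup B$ gives $A\subseteq\phi^{-1}(F)$ and $\phi^{-1}(F)\cap B=\emptyset$. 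The sizes multiply to the claimed $\binom{p+q}{p}\cdot 2^{o(p+q)}\log n$.

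The main obstacle is the second stage. A biased random construction---include each element of $[m]$ independently with probability $p/(p+q)$---already matches the target size: by Stirling,
\[
\Bigl(\tfrac{p}{p+q}\Bigr)^{p}\Bigl(\tfrac{q}{p+q}\Bigr)^{q}\cdot\binom{p+q}{p}=\Theta\Bigl(\sqrt{(p+q)/(pq)}\Bigr),
\]
so $\binom{p+q}{p}\cdot (p+q)^{\cO(1)}\log m$ independent biased draws cover every one of the $m^{\cO(p+q)}$ pairs $(A,B)$ with positive probability. I plan to derandomize this on the small universe $[m]$ in essentially the same way NSS derandomize their small universal family---via a second-level splitter combined with an algebraic (Reed--Solomon-style) enumeration, equivalently via a bounded-independence sample space on the slots---but enumerating only the $\binom{p+q}{p}$ size-$p$ choices of ``occupied'' slots rather than all $2^{p+q}$ subset patterns. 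It is precisely this restriction that replaces the $2^{p+q}$ factor in the NSS bound by $\binom{p+q}{p}$. The time to build $\mathcal{U}_0$ comes out to $\binom{p+q}{p}\cdot 2^{o(p+q)}$, and combining with $\Phi$ (one pass over $\phi^{-1}$ per pair $(\phi,F)$) takes $\cO(\binom{p+q}{p}\cdot 2^{o(p+q)}\cdot n\log n)$ overall, matching the claim.
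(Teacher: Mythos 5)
Your high-level plan is the same as the paper's: reduce the universe to size $(p+q)^2$ via a $(p+q)$-perfect hash family, construct a biased-random lopsided-universal family on the small universe, derandomize there, and pull back through preimages. The composition argument and the size accounting in your first two stages are correct, and it is also a fair simplification that you target only the family $\mathcal{F}$ rather than the full ``separating collection'' abstraction $(\mathcal{F},\chi)$ that the paper builds in Lemma~\ref{lem:goodSeparatingCol} (the extra $\chi$-structure is needed for Theorem~\ref{thm:fastRepUniform}, not for Lemma~\ref{lem:lopsidedUniversal} itself).

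The gap is exactly in the step you deferred: derandomizing on the small universe $[m]$ with $m=(p+q)^2$ to obtain $\mathcal{U}_0$ in time $\binom{p+q}{p}\cdot 2^{o(p+q)}$. One round of an NSS-style argument does not suffice, and the vague appeal to ``bounded independence'' or ``Reed--Solomon enumeration'' does not close it. Concretely: (i) the method of conditional expectations on $[m]$ has $(p+q)^{\Theta(p+q)}$ constraints and therefore runs in time $\binom{p+q}{p}(p+q)^{\Theta(p+q)}=\binom{p+q}{p}2^{\Theta((p+q)\log(p+q))}$, which is not $\binom{p+q}{p}2^{o(p+q)}$; (ii) splitting the $(p+q)$-set once into blocks of size $s=\log^2(p+q)$ and brute-forcing each block costs $2^{\poly(s)}=2^{\poly\log(p+q)^{\cO(1)}}$ per block, and after absorbing the exponent this is still super-$2^{o(p+q)}$. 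The paper in fact applies its split lemma (Lemma~\ref{lem:splitSolution}) \emph{twice}, interleaved with the universe-reduction lemma (Lemma~\ref{lem:reduceUniverse}), precisely because one round of splitting still leaves an initialization term of the form $2^{(\log(p+q))^{(\log(p+q))^2+\cO(1)}}$ that only becomes $2^{o(p+q)}$ after a second split. Moreover the lopsided version of the split is genuinely more delicate than the $k$-universal case: when you cut a $(p+q)$-set into $t$ blocks you do not know how the $p$ ``occupied'' elements distribute across blocks, so you must union over all compositions $(p_1,\dots,p_t)$ with $\sum p_i=p$, and the size bound works out only because $\max_{(p_i)}\prod_i\binom{s}{p_i}\leq\binom{p+q}{p}$ (the paper's $\mathcal{Z}^p_{s,t}$ and $\mathscr{P}_t$ bookkeeping, together with the Gibbs-inequality step for the degree). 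Your phrase ``enumerating only the $\binom{p+q}{p}$ size-$p$ choices of occupied slots'' conflates this composition-enumeration with a single binomial count and does not substitute for the iterated construction. To repair the proof you would need to spell out the recursion: one universe reduction, one split into blocks of size $\log^2(p+q)$, a second universe reduction, a second split, a third universe reduction, and only then brute-force on universes of size $\poly\log\log(p+q)$ — together with the composition bookkeeping at each split.
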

We do not give a stand-alone proof of Lemma~\ref{lem:lopsidedUniversal}, however Lemma~\ref{lem:lopsidedUniversal} is a direct corollary of Lemma~\ref{lem:twin_sep_coll_construction} proved in Section~\ref{section:npqsepcollection}. 
%Lemma~\ref{lem:goodSeparatingCol} proved in Section~\ref{lem:goodSeparatingCol}. 
We will now show how to use the lemma to prove Theorem~\ref{thm:naiveRepUniform}.

\begin{proof}[Proof of Theorem~\ref{thm:naiveRepUniform}]
The algorithm starts by constructing an $n$-$p$-$q$-lopsided universal family ${\cal F}$ as guaranteed by Lemma~\ref{lem:lopsidedUniversal}. If $|{\cal A}| \leq |{\cal F}|$ the algorithm outputs ${\cal A}$ and halts. Otherwise it builds the set ${\cal A}'$ as follows. 
%We assume that $n \leq |{\cal A}| \cdot p$, since otherwise some elements of the universe are not contained in any %set in ${\cal A}$ and may safely be ignored.  
Initially ${\cal A}'$ is equal to $\emptyset$ and all sets in ${\cal F}$ are marked as unused. The algorithm goes through every $A \in {\cal A}$ and unused sets $F \in {\cal F}$. If an unused set $F \in {\cal F}$ is found such that $A \subseteq F$, the algorithm marks $F$ as used, inserts $A$ into ${\cal A}'$ and proceeds to the next set in ${\cal A}$. If no such set $F$ is found the algorithm proceeds to the next set in ${\cal A}$ without inserting $A$ into ${\cal A}'$.

The size of ${\cal A}'$ is upper bounded by $|{\cal F}| \leq {p+q \choose p} \cdot 2^{o(p+q)} \cdot \log n$ since every time a set is added to ${\cal A}'$ an unused set in ${\cal F}$ is marked as used. For the running time analysis, constructing ${\cal F}$ takes time ${p+q \choose p} \cdot 2^{\cO(\frac{p+q}{\log \log (p+q)})}  \cdot n \log n$. Then we run through all of ${\cal F}$ for each set $A \in {\cal A}$, spending time $|{\cal A}|\cdot|{\cal F}|\cdot (p+q)^{\cO(1)}$, which is at most $|{\cal A}|\cdot {p+q \choose p} \cdot 2^{o(p+q)} \cdot \log n$. Thus in total the running time is bounded by $|{\cal A}|\cdot {p+q \choose p} \cdot 2^{o(p+q)} \cdot \log n$. 

Finally we need to argue that ${\cal A}'$ $q$-represents ${\cal A}$. Consider any set $A \in {\cal A}$ and $B$ such that $|B|=q$ and $A \cap B = \emptyset$. If $A \in {\cal A}'$ we are done, so assume that $A \notin {\cal A}'$. Since ${\cal F}$ is $n$-$p$-$q$-lopsided universal there is a set $F \in {\cal F}$ such that $A \subseteq F$ and $F \cap B = \emptyset$. Since  $A \notin {\cal A}'$ we know that $F$ was already marked as used when $A$ was considered by the algorithm. When the algorithm marked $F$ as used it also inserted a set $A'$ into ${\cal A}'$. For the insertion to be made, $F$ must satisfy $A' \subseteq F$. But then $A' \cap B = \emptyset$, completing the proof.
\end{proof}

One of the factors that drive up the running time of the algorithm in Theorem~\ref{thm:naiveRepUniform} is that one needs to consider all of ${\cal F}$ for each set $A \in {\cal A}$. Doing some computations it is possible to convince oneself that in an  $n$-$p$-$q$-lopsided universal family ${\cal F}$ the number of sets $F \in {\cal F}$ containing a fixed set $A$ of size $p$ should be approximately $|{\cal F}| \cdot \big(\frac{p}{p+q}\big)^p$. Thus, if we could only make sure that this estimation is in fact correct for every $A \in {\cal A}$, {\em and} we could make sure that for a given $A \in {\cal A}$ we can list all of the sets in ${\cal F}$ that contain $A$ without having to go through the sets that don't, then we could speed up our algorithm by a factor $\big(\frac{p+q}{p}\big)^p$. This is exactly the strategy behind the main theorem of Section~\ref{subsec:sepCollect}.
%\todo{modify if we add lopsided universal sets}

%-------------------------------------------------------------------------------------------------------------------------------------------------------
%-------------------------------------------------------------------------------------------------------------------------------------------------------
% Separating Collections
%-------------------------------------------------------------------------------------------------------------------------------------------------------
%-------------------------------------------------------------------------------------------------------------------------------------------------------

\subsection{Representative Sets using Separating Collections}\label{subsec:sepCollect}
 %!TEX root = repset-main.tex

% We say that a family ${\cal F}$ {\em separates} a set $A$ from a set $B$ if there is an $F \in {\cal F}$ such that $A \subseteq F$ and $B \cap F = \emptyset$.
\label{section:npqsepcollection}
In this section we design a faster algorithm to find $q$-representative family. Our main technical tool is 
a construction of {\em $n$-$p$-$q$-separating collection}.  
% defined in~\cite{FominLS13} to compute  
%$q$-representative families of an arbitrary family. In fact we design a {\em family} of $n$-$p$-$q$-separating 
%collections of various sizes governed by a parameter $0<x<1$.  The construction of  $n$-$p$-$q$-separating 
%collection  is similar to the proof given in~\cite{FominLS13}. However, the new construction  requires some 
%additional ideas and the proof is slightly more involved. 
% Finally, we combine two $n$-$p$-$q$-separating collections 
% obtained with different parameters to obtain the desired algorithm for product families. 
%\subsection{Generalized $n$-$p$-$q$-separating collection}
We start with the formal definition of {\em  $n$-$p$-$q$-separating collection}.

\begin{definition}
\label{def:twincollection}
An  $n$-$p$-$q$-separating collection ${\cal C}$ is a tuple $({\cal F}, \chi, \chi')$, where ${\cal F}$ is a family of sets over a universe $U$ of size $n$, $\chi$ is a function from  $\underset{p'\leq p}{\bigcup} {U\choose p'}$ 
%$\bigcup_{p'\leq p}{U\choose p'}$ 
to $2^{\cal F}$ and $\chi'$ is a function from $ \underset{q' \leq q}{\bigcup} {U\choose q'}$ 
%$\bigcup_{q'\leq q}{U\choose q'}$ 
to $2^{\cal F}$ such that the following properties are satisfied
\begin{enumerate}
 \item for every $A\in  \underset{p'\leq p}{\bigcup} {U\choose p'}$ 
 %\mathop{\bigcup}_{\substack{p'\leq p}}{U\choose p'}$ 
 and $F \in \chi(A)$, $A \subseteq F$,
 \item for every $B\in  \underset{q' \leq q}{\bigcup} {U\choose q'}$  
 %\bigcup_{q'\leq q}{U\choose q'}$ 
 and $F \in \chi'(B)$, $F\cap B=\emptyset$, 
 \item for every pairwise disjoint sets $A_1\in {U \choose p_1},A_2\in {U \choose p_2},\cdots, A_r\in {U \choose p_r}$ and $B \in {U \choose q}$ such that $p_1+\cdots+p_r=p$, 
$\exists F\in \chi(A_1)\cap\chi(A_2)\ldots\chi(A_r)\cap \chi'(B).$
%  \item for every $A_1 \in {U \choose p_1}$, $A_2\in{U\setminus A_1 \choose p_2}$ and $B \in {U\setminus (A_1\cup A_2) \choose q}$, $\exists F\in \chi(A_1)\cap \chi(A_2)\cap \chi'(B)$  
% such that $A_1,A_2\subseteq F$ and $F\cap B=\emptyset$.
\end{enumerate}
The size of  $({\cal F},\chi, \chi')$ is $|{\cal F}|$, the $(\chi,p')$-degree of $({\cal F},\chi,\chi')$ for $p'\leq p$ is 
\[\max_{A \in {U \choose p'}} |\chi(A)|,\] 
% the $p_2$-degree of $({\cal F},\chi, \chi')$ is $\max_{A \in {U \choose p_2}} |\chi(A)|$ 
and the $(\chi',q')$-degree of $({\cal F},\chi, \chi')$ for $q'\leq q$ is $$\max_{B \in {U \choose q'}} |\chi'(B)|.$$ 
\end{definition}

We must remark that the definition of an $n$-$p$-$q$-separating collection in the preliminary version of this paper ~\cite{FominLS14} was slightly more restricted than the one given here.  This new definition has already been used recently to obtain faster algorithms for computing representative sets for product families~\cite{FominLPS14}. 

A {\em construction} of  separating collections is a data structure, that given $n$, $p$ and $q$ initializes and outputs a family ${\cal F}$ of sets over the universe $U$ of size $n$. 
After the initialization one can query the data structure by giving it a set  $A \in \bigcup_{p'\leq p}{U \choose p'}$ or $B\in \bigcup_{q'\leq q}{U \choose q'}$, the data structure 
then outputs a family  $\chi(A) \subseteq 2^{\cal F}$ or $\chi'(B)\subseteq 2^{\cal F}$ respectively. Together the tuple ${\cal C}= ({\cal F},\chi, \chi')$ computed by the data structure 
should form a {\em } $n$-$p$-$q$-{\em separating collection}.

We call the time the data structure takes to initialize and output ${\cal F}$ the {\em initialization time}. 
The {\em $(\chi,p')$-query time}, $p'\leq p$, of the data structure is the maximum time the data structure uses to compute $\chi(A)$ over all $A \in {U \choose p'}$. Similarly, the 
{\em $(\chi',q')$-query time}, $q'\leq q$, of the data structure is the maximum time the data structure uses to compute $\chi'(B)$ over all $B \in {U \choose q'}$.
% The initialization time and query time of the data structure and the size and degree of ${\cal C}$ are functions of  $n$, $p_1$, $p_2$ and $q$. The initialization time is denoted by 
% $\tau_I(n,p_1,p_2,q)$, the query time by $\tau_Q(n,p_1,p_2,q)$, the resulting size of ${\cal C}$ is denoted by $\zeta(n,p_1,p_2,q)$, while the degree of  ${\cal C}$ is denoted by $\Delta(n,p_1,p_2,q)$.
The initialization time of the data structure and the size of ${\cal C}$ are functions of  $n$, $p$ and $q$. The initialization time is denoted by 
$\tau_I(n,p,q)$, size of ${\cal C}$ is denoted by $\zeta(n,p,q)$. The $(\chi,p')$-query time and $(\chi,p')$-degree of 
$\cal C$, $p'\leq p$, are functions of 
$n,p',p,q$ and is denoted by ${Q_{(\chi,p')}}(n,p,q)$ and $\Delta_{(\chi,p')}(n,p,q)$ respectively. Similarly, the $(\chi',q')$-query time and $(\chi',q')$-degree of ${\cal C}$, $q'\leq q$,  are functions of 
$n,q',p,q$ and are denoted by ${Q_{(\chi',q')}}(n,p,q)$ and $\Delta_{(\chi',q')}(n,p,q)$ respectively.  
% The main technical component in the proof of Theorem~\ref{thm:fastRepUniform} is the following lemma.
We are now ready to state the main technical  tool of this subsection.

\begin{lemma}
\label{lem:twin_sep_coll_construction}
Given $0<x<1$, there is a construction of  $n$-$p$-$q$- separating collection with the following parameters
\begin{itemize} 
\setlength\itemsep{-.7mm}
\item size, $\zeta(n,p,q) \leq 2^{\cO(\frac{p+q}{\log\log(p+q)})}\cdot \frac{1}{x^p(1-x)^q}\cdot (p+q)^{\cO(1)} \cdot \log n$
\item initialization time, $\tau_I(n,p,q) \leq  2^{\cO(\frac{p+q}{\log\log(p+q)})}\cdot \frac{1}{x^p(1-x)^q}\cdot (p+q)^{\cO(1)} \cdot n\log n$
\item $(\chi,p')$-degree, $\Delta_{(\chi,p')}(n,p,q) \leq  2^{\cO(\frac{p+q}{\log\log(p+q)})}\cdot \frac{1}{x^{p-p'}(1-x)^q}\cdot (p+q)^{\cO(1)} \cdot \log n$
\item $(\chi,p')$-query time, $Q_{(\chi,p')}(n,p,q) \leq  2^{\cO(\frac{p+q}{\log\log(p+q)})}\cdot \frac{1}{x^{p-p'}(1-x)^q}\cdot (p+q)^{\cO(1)} \cdot \log n$
\item $(\chi',q')$-degree, $\Delta_{(\chi',q')}(n,p,q) \leq  2^{\cO(\frac{p+q}{\log\log(p+q)})}\cdot \frac{1}{x^{p}(1-x)^{q-q'}}\cdot (p+q)^{\cO(1)} \cdot \log n$
\item $(\chi',q')$-query time, $Q_{(\chi',q')}(n,p,q) \leq  2^{\cO(\frac{p+q}{\log\log(p+q)})}\cdot \frac{1}{x^{p}(1-x)^{q-q'}}\cdot (p+q)^{\cO(1)} \cdot \log n$
\end{itemize}
\end{lemma}

We first give the road map that we take to prove Lemma~\ref{lem:twin_sep_coll_construction}. 
The proof of  Lemma~\ref{lem:twin_sep_coll_construction} uses three auxiliary lemmata. 
\begin{enumerate}
\item[(a.)] {\bf Existential Proof (Lemma~\ref{lem:twin_sep_coll_brute_force}}). This lemma shows that there is indeed a 
 $n$-$p$-$q$-separating collection with the required sizes, degrees and query time. Essentially, it shows that if we form a family  ${\cal F}=\{F_1,\ldots,F_t\}$ of sets of $U$ such that each $F_i$ is a random subset of $U$ where each element  is inserted into $F_i$ with probability $x$, then ${\cal F}$ has the desired sizes, degrees and query time. Thus, this also gives a brute force algorithm to design the family $\cal F$ by just guessing the family of desired size and then checking whether it is indeed  a  $n$-$p$-$q$-separating collection. 
\item[(b.)]  {\bf Universe Reduction (Lemma~\ref{lem:twinreduceUniverse}).} The construction obtained in Lemma~\ref{lem:twin_sep_coll_brute_force} 
has only one drawback that the initialization time is much larger than claimed in Lemma~\ref{lem:twin_sep_coll_construction}. To overcome this lacuna, we do not apply the construction in Lemma~\ref{lem:twin_sep_coll_brute_force} directly. 
We first prove a Lemma~\ref{lem:twinreduceUniverse} which helps us in reducing the universe size to $(p+q)^2$. This is done using the  known construction of $k$-perfect hash families of size $(p+q)^{\cO(1)} \log n$. However, 
Lemma~\ref{lem:twinreduceUniverse} alone  can not reduce the universe size sufficiently, that we can apply the construction of Lemma~\ref{lem:twin_sep_coll_brute_force}. 
\item[(c.)] {\bf Splitting Lemma (Lemma~\ref{lem:splitSolution}).} We give a splitter type construction in Lemma~\ref{lem:splitSolution} that when applied with 
Lemma~\ref{lem:twinreduceUniverse} makes the universe and other parameters small enough that we can apply the construction given in 
Lemma~\ref{lem:twin_sep_coll_brute_force}. In this construction we consider all the ``consecutive partitions''  of the universe into $t$ parts, assume that the sets $A\cup B$, $A=\cup_{i=1}^r A_i$, are distributed uniformly into $t$ parts and then use this information to obtain a construction of  separating collections in each part and then take the product of these collections to obtain a collection for the original instance.
\end{enumerate}

We start with the existential proof.

\begin{lemma}\label{lem:twin_sep_coll_brute_force}
Given $0<x<1$, there is a construction of  $n$-$p$-$q$-separating collections with 
\begin{itemize}\setlength\itemsep{-.7mm}
\item size $\zeta(n,p,q) =\cO \left(\frac{1}{x^{p}(1-x)^q} \cdot (p^2+q^2+1)\log n \right)$ 
% \cO({p+q \choose q} \cdot (p+q)^{\cO(1)} \cdot \log n)$, 
\item initialization time $\tau_I(n,p,q) = \cO({2^n \choose \zeta(n,p,q)} \cdot \frac{1}{x^p(1-x)^q} \cdot n^{\cO(p+q)})$
\item $(\chi,p')$-degree for $p'\leq p$, $\Delta_{(\chi,p')}(n,p,q) = \cO\left(\frac{1}{x^{p-p'}}\cdot\frac{(p^2+q^2+1)}{(1-x)^q} \cdot \log n\right)$
% \item $p_2$-degree $\Delta_{p_2}(n,p,q) = \cO\left(\frac{1}{x^{p_2}}\cdot\frac{(p+q+1)}{(1-x)^q} \cdot \log n\right)$, and 
\item $(\chi,p')$-query time ${Q_{(\chi,p')}}(n,p,q) = \cO(\frac{1}{x^{p}(1-x)^q} \cdot n^{\cO(1)})$
\item $(\chi',q')$-degree $\Delta_{(\chi',q')}(n,p,q)=\cO\left(\frac{1}{x^{p}(1-x)^{q-q'}}\cdot(p^2+q^2+1) \cdot \log n\right)$
\item $(\chi',q')$-query time ${Q_{(\chi',q')}}(n,p,q)=\cO(\frac{1}{x^{p}(1-x)^q} \cdot n^{\cO(1)})$
\end{itemize}
\end{lemma}

\begin{proof}
We start by giving a randomized algorithm that with positive probability constructs a  
$n$-$p$-$q$-separating collection ${\cal C} = ({\cal F},\chi, \chi')$ with the desired size and degree parameters. 
We will then discuss how to deterministically compute such a ${\cal C}$ within the required time bound. Set 
$t = \frac{1}{x^p(1-x)^q} \cdot (p^2+q^2+1)\log n$ and construct the family ${\cal F} = \{F_1, \ldots, F_t\}$ as follows. 
Each set $F_i$ is a random subset of $U$, where each element of $U$ is inserted into $F_i$ with probability $x$. 
Distinct elements are inserted (or not) into $F_i$ independently, and the construction of the different sets in 
${\cal F}$ is also independent. For each  $A \in \bigcup_{p'\leq p}{U\choose p'}$ we set 
$\chi(A) = \{F \in {\cal F}~:~A \subseteq F\}$ and for each $B\in \bigcup_{q'\leq q}{U\choose q'}$ we set 
$\chi'(B)=\{F\in {\cal F}~:~F\cap B=\emptyset\}$.

The size of ${\cal F}$ is within the required bound by construction. We now argue that with positive probability 
$({\cal F},\chi, \chi')$ is indeed a  $n$-$p$-$q$-separating collection, and that the degrees of ${\cal C}$ 
is within the required bounds as well. For fixed sets $A \in {U \choose p}$, $B \in {U\setminus A \choose q}$, and 
integer $i \leq t$, we consider the probability that $A \subseteq F_i$ and $B \cap F_i = \emptyset$. 
This probability is $x^{p}(1-x)^q$. Since each $F_i$ is constructed independently from the other sets in ${\cal F}$, 
the probability that {\em no} $F_i$ satisfies $A \subseteq F_i$ and $B \cap F_i = \emptyset$ is
\begin{align*} \left(1 - x^p(1-x)^q\right)^t \leq e^{-(p^2+q^2+1)\log n} = \frac{1}{n^{p^2+q^2+1}}.\end{align*}
For a fixed $A_1,\ldots,A_r$ and $B$ (choices in condition $3$), the probability that no $F_i$ in 
$\chi(A_1)\cap\chi(A_2)\cap \cdots \cap\chi(A_r)\cap \chi'(B)$ is equal to the probability that no $F_i$ is in 
$\chi(A_1\cup A_2\cdots \cup A_r)\cap \chi'(B)$  (since $\chi(A')$ contains all the sets in ${\cal F}$ that contains 
$A'$ and $\chi'(B)$ contains all the sets in ${\cal F}$ that are disjoint from $B$). Hence the probability that 
condition $3$ fails is upper bounded by 
$$Y\cdot\frac{1}{n^{p^2+q^2+1}}$$
where $Y$ is the number of choices for $A_1,\ldots,A_r$ and $B$ in condition $3$. We upper bound $Y$ as follows.
% We upperbound the number of choices for the sets $A_1,A_2,\ldots,A_r$ and $B$ in condition $3$. 
There are ${n \choose p}$ choices for $A_1\cup\cdots\cup A_r$ and ${n \choose q}$ choices for $B$. 
For each choice of $A_1\cup\cdots\cup A_r$ there are at most $r^p$ choices of making $A_1,\ldots,A_r$ with some of 
them being empty as well. Note that $r\leq p$. Therefore the number of possible choices of sets $A_1,A_2,\ldots,A_r$ 
and $B$ in condition $3$ is upper bounded by ${n\choose p}{n\choose q}p^p\leq n^{2p+q}\leq n^{p^2+q^2}$. 
% There are ${n \choose p}$ choices for $A=A_1\cup A_2 \ldots A_r \in  {U \choose p}$, and ${n-p \choose q}$ choices for $B \in {U\setminus A \choose q}$, 
% Therefore the union bound yields that the probability that there exists an 
% $A \in  {U \choose p}$ and $B \in {U \choose q}$ such that $\nexists F\in\chi(A)\cap\cap\chi'(B)$, 
% is at most  $\frac{1}{n^{p^2+q^2+1}} \cdot n^{p+q} \leq \frac{1}{n}$. Since $\chi(A')$ contains all 
% the sets in ${\cal F}$ that contains $A'$ and $\chi'(B)$ contains all the sets in ${\cal F}$ that are disjoint from $B$, 
Hence the probability that condition $3$ in Definition~\ref{def:twincollection} fails is at most $\frac{1}{n}$.

We also need to upper bound the maximum degree of ${\cal C}$. For every $A \in  {U \choose p'}$, $|\chi(A)|$ is a random variable. For a fixed $A \in  {U \choose p'}$ and $i \leq t$ the probability 
that $A \subseteq F_i$ is exactly $x^{p'}$. Hence $|\chi(A)|$ is the sum of $t$ independent  $0/1$-random variables that each take value $1$ with probability $x^{p'}$. Hence the expected value of $|\chi(A)|$ is 
$$E[|\chi(A)|] = t \cdot x^{p'} = \frac{1}{x^{p-p'}(1-x)^q}\cdot (p^2+q^2+1)\log n$$ 
% Similarly for every $A'\in {U\choose p_2}$,
% $$E[|\chi(A')|] = t \cdot x^{p_2} = \frac{1}{x^{p_1}(1-x)^q}\cdot (p+q+1)\log n$$ 
For every $B\in {U\choose q'}$, $|\chi'(B)|$ is also a random variable. For a fixed $B \in  {U \choose q'}$ and $i \leq t$ the probability that $A \cap F_i=\emptyset$ is exactly $(1-x)^{q'}$. 
Hence the expected value of $|\chi'(B)|$ is,
$$E[|\chi'(B)|] = t \cdot (1-x)^{q'} = \frac{1}{x^{p}(1-x)^{q-q'}}\cdot (p^2+q^2+1)\log n.$$
Standard Chernoff bounds~\cite[Theorem 4.4]{mitzenmacher2005probability} show that the probability that for any $A\in{U\choose p'}$, $|\chi(A)|$ 
is at least $6E[|\chi(A)|]$ is upper bounded by $2^{-6E[|\chi(A)|]} \leq \frac{1}{n^{p^2+q^2+1}}$. 
Similarly the probability that for any $B\in{U \choose q'}$, $|\chi'(B)|$ is at least $6E[|\chi'(B)|]$ is upper bounded by $2^{-6E[|\chi'(B)|]} \leq \frac{1}{n^{p^2+q^2+1}}$.  
There are  $\sum_{p'\leq p}{n \choose p'}\leq{n^{p^2}}$ choices for $A\in\bigcup_{p'\leq p}{U\choose p'}$ and $\sum_{q'\leq q}{n \choose q'}\leq{n^{q^2}}$ choices for $B\in\bigcup_{q'\leq q}{U\choose q'}$. 
Hence the union bound yields that the probability that there exists an $A\in\bigcup_{p'\leq p}{U\choose p'}$ such that $|\chi(A)|  > 6E[|\chi(A)|]$ or there exists $B\in \bigcup_{q'\leq q}{U\choose q'}$ such 
that $|\chi'(B)|  > 6E[|\chi'(B)|]$ is upper bounded by $\frac {1}{n}$. Thus ${\cal C}$ is a family of $n$-$p$-$q$-separating collections with the desired size and degree parameters with probability at least 
$1 - \frac{2}{n} > 0$. The degenerate case that  $1 - \frac{2}{n} \leq 0$ is handled 
by the family ${\cal F}$ containing all (at most four) subsets of $U$. 

To construct ${\cal F}$ within the stated initialization time bound, it is sufficient to try all families ${\cal F}$ of size $t$ and for each of the ${2^n \choose \zeta(n,p,q)}$ 
guesses, test whether it is indeed a family of $n$-$p$-$q$-separating collections in time $\cO(t \cdot n^{\cO(p+q)}) = \cO(\frac{1}{x^p(1-x)^q} \cdot n^{\cO(p+q)})$.

For the queries, we need to give an algorithm that given $A$, computes $\chi(A)$ (or $\chi'(A)$), under the assumption that ${\cal F}$ has already has been computed in the initialization step. 
This is easily done within the stated running time bound by going through every set $F \in {\cal F}$, checking whether $A \subseteq F$ (or $A\cap F=\emptyset$), and if so, inserting $F$ into $\chi(A)$ ($\chi'(A)$). 
This concludes the proof.
\end{proof}
We will now work towards improving the time bounds of Lemma~\ref{lem:twin_sep_coll_brute_force}. 
 To that end we will need a construction of {\em $k$-perfect hash functions} by Alon et al.~\cite{AlonYZ}
\begin{definition} 
A family of functions $f_1, \ldots, f_t$ from a universe $U$ of size $n$ to a universe of size $r$ is a $k$-perfect family of hash functions if for every set $S \subseteq U$ such that $|S|=k$ 
there exists an $i$ such that the restriction of $f_i$ to $S$ is injective.
\end{definition}
Alon et al.~\cite{AlonYZ} give very efficient constructions of $k$-perfect families of hash functions from a universe of size $n$ to a universe of size $k^2$.
\begin{proposition}[\cite{AlonYZ}]\label{prop:hashFun} 
For any universe $U$ of size $n$ there is a $k$-perfect family $f_1, \ldots, f_t$ of hash functions from $U$ to 
% $\{1,2,\ldots,k^2\}$
$[k^2]$ 
with $t = \cO(k^{\cO(1)} \cdot \log n)$. 
Such a family of hash functions can be constructed in time $\cO(k^{\cO(1)}n \log n)$. 
\end{proposition}

\begin{lemma}\label{lem:twinreduceUniverse} If there is a construction of  $n$-$p$-$q$-separating collections $(\hat{\cal F},\hat{\chi},\hat{\chi}')$ with initialization time $\tau_I(n,p,q)$, size $\zeta(n,p,q)$, 
$(\hat{\chi},p')$-query time ${Q_{(\hat{\chi},p')}}(n,p,q)$, $(\hat{\chi}',q')$-query time ${Q_{(\hat{\chi}',q')}}(n,p,q)$, 
%  producing a $n$-$p$-$q$-separating collection with 
$(\hat{\chi},p')$-degree $\Delta_{(\hat{\chi},p')}(n,p,q)$, and $(\hat{\chi}',q')$-degree $\Delta_{(\hat{\chi}',q')}(n,p,q)$  
% $p_1$-degree $\Delta_{p_1}(n,p,q)$ and $p_2$-degree $\Delta_{p_2}(n,p,q)$, 
then there is a construction of  $n$-$p$-$q$-separating collections 
% using
with following parameters.
\begin{itemize} %\setlength\itemsep{-.7mm}
\item 
% size 
$\zeta'(n,p,q) \leq \zeta\left((p+q)^2,p,q\right) \cdot  (p+q)^{\cO(1)} \cdot \log n$,
\item 
% initialization time 
$\tau_I'(n,p,q) = \cO\left(\tau_I\left((p+q)^2,p,q\right) + \zeta\left((p+q)^2,p,q\right) \cdot (p+q)^{\cO(1)} \cdot n \log n\right)$,
\item 
% $(\chi,p')$-degree 
$\Delta'_{(\chi,p')}(n,p,q) \leq \Delta_{(\hat{\chi},p')}\left((p+q)^2,p,q\right) \cdot  (p+q)^{\cO(1)} \cdot \log n$, 
\item 
% $(\chi,p')$-query time 
${Q'_{(\chi,p')}}(n,p,q) = \cO\left(\left({Q_{(\hat{\chi},p')}}\left((p+q)^2,p,q\right) + \Delta_{(\hat{\chi},p')}\left((p+q)^2,p,q\right) \right) \cdot (p+q)^{\cO(1)} \cdot \log n\right)$,
\item 
% $(\chi',q')$-degree 
$\Delta'_{(\chi',q')}(n,p,q)\leq \Delta_{(\hat{\chi}',q')}\left((p+q)^2,p,q\right) \cdot  (p+q)^{\cO(1)} \cdot \log n$,
\item 
% $(\chi',q')$-query time \\
${Q'_{(\chi',q')}}(n,p,q)=\cO\left(\left({Q_{(\hat{\chi}',q')}}\left((p+q)^2,p,q\right) + \Delta_{(\hat{\chi}',q')}\left((p+q)^2,p,q\right) \right) \cdot (p+q)^{\cO(1)} \cdot \log n\right)$
\end{itemize}
\end{lemma}
\begin{proof}
We give a construction of  $n$-$p$-$q$-separating collections with initialization time, query time, size and degree $\tau_I'$, ${Q}'$, $\zeta'$ and $\Delta'$ respectively using the 
construction with initialization time, query time, size and degree $\tau_I$, ${Q}$, $\zeta$ and $\Delta$ as a black box. 
 
We first describe the initialization of the data structure. Given $n$, $p$, and $q$, we construct using Proposition~\ref{prop:hashFun} a $(p+q)$-perfect family $f_1, \ldots f_t$ 
of hash functions from the universe $U$ to $[(p+q)^2]$. The construction takes time $\cO((p+q)^{\cO(1)}n \log n)$ and $t \leq  (p+q)^{\cO(1)} \cdot \log n$. 
We will store these hash functions in memory. We use the following notations.
\begin{itemize}
 \item For a set $S \subseteq U$ and $T\subseteq [(p+q)^2]$,  \\ $f_i(S)=\{f_i(s) ~:~ s \in S\}$ and $f_i^{-1}(T)=\{s \in U ~:~ f(s) \in T\}$. 
 \item For a family ${\cal Z}$ of sets over $U$ and family ${\cal W}$ of sets over $[(p+q)^2]$,\\ $f_i({\cal Z})=\{f_i(S) ~:~ S \in {\cal Z}\}$ and $f_i^{-1}({\cal W})=\{f_i^{-1}(T) ~:~ T \in {\cal W}\}$.
\end{itemize}
% For a set $S \subseteq U$, by $f_i(S)$ we will mean $\{f_i(s) ~:~ s \in S\}$. Similarly for every $S \subseteq \{1, \ldots, (p+q)^2\}$, by $f_i^{-1}(S)$ we will mean $\{s \in U ~:~ f(s) \in S\}$. For a family ${\cal Z}$ of sets over $U$, by $f_i({\cal Z})$ we will mean $\{f_i(S) ~:~ S \in {\cal Z}\}$. Finally, for a family ${\cal Z}$ of sets over $\{1,\ldots,(p+q)^2\}$, by $f_i^{-1}({\cal Z})$ we will mean  $\{f_i^{-1}(S) ~:~ S \in {\cal Z}\}$.
 
 We first use the given black box construction for $(p+q)^2$-$p$-$q$-separating collections $(\hat{\cal F}, \hat{\chi},\hat{\chi}')$ over the universe $[(p+q)^2]$. 
% This construction computes the separating collection $(\hat{\cal F}, \hat{\chi})$. 
We run the initialization algorithm of this construction and store the family $\hat{\cal F}$ in memory. We then set
\begin{align*} {\cal F} = \bigcup_{i\leq t} f_i^{-1}(\hat{\cal F}). \end{align*}

 We spent $\cO((p+q)^{\cO(1)}n \log n)$ time to construct a $(p+q)$-perfect family of hash functions,  $\cO(\tau_I((p+q)^2,p,q))$ to construct $\hat{\cal F}$ of size $\zeta((p+q)^2,p,q)$, 
and $\cO(\zeta((p+q)^2,p,q) \cdot (p+q)^{\cO(1)} \cdot n \log n)$ time to construct ${\cal F}$ from $\hat{\cal F}$ and the family of perfect hash functions. 
Thus the upper bound on $\tau_I'(n,p,q)$ follows. Furthermore,  $|{\cal F}| \leq |\hat{\cal F}| \cdot  (p+q)^{\cO(1)} \cdot \log n$, yielding the claimed bound for $\zeta'$.
 
 We now define $\chi(A)$ for every $A \in \bigcup_{p'\leq p}{U\choose p'}$ and describe the query algorithm. For every $A \in \bigcup_{p'\leq p}{U\choose p'}$ we let
\begin{align*} \chi(A) = \bigcup_{\substack{i\leq t \\ |f_i(A)|=|A|}} f_i^{-1}(\hat{\chi}(f_i(A))). \end{align*}
Since for every $\hat{F} \in \hat{\chi}(f_i(A))$, $f_i(A) \subseteq \hat{F}$, it follows that $A \subseteq F$ for every $F \in \chi(A)$. Furthermore we can bound $|\chi(A)|$ for any $A\in \bigcup_{p'\leq p}{U\choose p'}$, 
as follows 
\begin{align*} |\chi(A)| \leq \sum_{\substack{i\leq t \\ |f_i(A)|=|A|}} |\hat{\chi}(f_i(A))| \leq \Delta_{(\hat{\chi},p')}((p+q)^2,p,q) \cdot  (p+q)^{\cO(1)} \cdot \log n.\end{align*}
Thus the claimed bound for $\Delta'_{(\chi,p')}$ follows. 
Similarly,  way can define $\chi'(B)$ for every $B\in\bigcup_{q'\leq q}{U\choose q'}$ as 
\begin{align*} \chi'(B) = \bigcup_{\substack{i\leq t \\ |f_i(A)|=|A|}} f_i^{-1}(\hat{\chi}'(f_i(A))). \end{align*}
\begin{align*} |\chi'(B)| \leq \sum_{\substack{i\leq t \\ |f_i(A)|=|A|}} |\hat{\chi}'(f_i(A))| \leq \Delta_{(\hat{\chi}',q')}((p+q)^2,p,q) \cdot  (p+q)^{\cO(1)} \cdot \log n.\end{align*}
To compute $\chi(A)$ for any $A\in\bigcup_{p'\leq p} {U \choose p'}$, we go over every $i \leq t$ and check whether $f_i$ is injective on $A$. This takes time $\cO((p+q)^{\cO(1)} \cdot \log n)$. 
For each $i$ such that $f_i$ is injective on $A$, we compute $f_i(A)$ and then $\hat{\chi}(f_i(A))$ in time 
$\cO({Q_{(\hat{\chi},p')}}((p+q)^2,p,q))$. Then we compute $f_i^{-1}(\hat{\chi}(f_i(A)))$  in time 
$\cO(|\hat{\hat{\chi}}(f_i(A))|\cdot (p+q)^{\cO(1)}) =\cO(\Delta_{(\hat{\chi},p')}((p+q)^2,p,q)\cdot (p+q)^{\cO(1)})$ and add this set to $\chi(A)$. As we need to do this $\cO((p+q)^{\cO(1)} \cdot \log n)$ times, the total time 
to compute $\chi(A)$ is upper bounded by $\cO(({Q_{(\hat{\chi},p')}}((p+q)^2,p,q) + \Delta_{(\hat{\chi},p')}((p+q)^2,p,q)) \cdot (p+q)^{\cO(1)} \cdot \log n)$, 
yielding the claimed upper bound on ${Q'_{(\chi,p')}}$. Similar way we can bound ${Q'_{(\chi',q')}}$.

It remains to argue that $({\cal F},\chi,\chi')$ is in fact a  $n$-$p$-$q$-separating collection. For any $r$, consider pairwise disjoint sets $A_1 \in {U \choose p_1},\ldots,A_r \in {U \choose p_r}$, and 
$B \in {U \choose q}$ such that $p_1+\ldots+p_r=p$. We need to show that there is $  F\in \chi(A_1)\cap\cdots\cap\chi(A_r)\cap\chi'(B)$. 
Since $f_1, \ldots, f_t$ is a $(p+q)$-perfect family of hash functions, there is an $i$ such that $f_i$ is injective on $A_1\cup\cdots \cup A_r \cup B$. 
Since $(\hat{\cal F}, \hat{\chi},\hat{\chi}')$ is a $(p+q)^2$-$p$-$q$-separating collection,  
$\exists \hat{F}\in \hat{\chi}(f_i(A_1))\cap\cdots\hat{\chi}(f_i(A_r))\cap \hat{\chi}'(f_i(B))$. Since $f_i$ is injective on $A_1,\ldots,A_r$ and $B$, $f_i^{-1}(\hat{F})\in \chi(A_1)\cap\cdots \chi(A_r) \cap\chi'(B)$. 
This concludes the proof.
\end{proof}
We now give a {\em splitting lemma}, which allows us to reduce the problem of finding  $n$-$p$-$q$-separating collections to the same problem, but with much smaller values for $p$ and $q$. 

A {\em partition} of $U$ is a family ${\cal U}_P = \{U_1, U_2, \ldots U_t\}$ of sets over $U$ such that $U_i \cap U_j = \emptyset$ for every $i \neq j$ and $U = \bigcup_{i \leq t} U_i$. Each of the sets $U_i$ are called the {\em parts} of the partition. A {\em consecutive partition} of $\{1,\ldots,n\}$ is a partition ${\cal U}_P = \{U_1, U_2, \ldots U_t\}$ of $\{1,\ldots,n\}$ such that for every integer $i \leq t$ and integers $1 \leq x \leq y \leq z$, if $x \in U_i$ and $z \in U_i$ then $y \in U_i$ as well. In other words, in a consecutive partition each part is a consecutive interval of integers. For every integer $t$, let $\mathscr{P}_t^n$ denote the collection of all consecutive partitions of $\{1,\ldots,n\}$ with exaclty $t$ parts. We do not demand that all of the parts in a partition in $\mathscr{P}_t$ are non-empty. Simple counting arguments show that for every $t$, 
$|\mathscr{P}_t^n| = {n+t-1 \choose t-1}$.

We will denote by ${\cal Z}_{s,t}^p$ the set of all $t$-tuples $(p_1,p_2, \ldots, p_t)$ of integers such that $\sum_{i \leq t} p_i = p$ and $0 \leq p_i \leq s$ for all $i$. Clearly  $|{\cal Z}_{s,t}^p| \leq {p+t-1 \choose t-1}$, since this counts all the ways of writing $p$ as a sum of $t$ non-negative integers, without considering the upper bound on each one.  For an ease of convenience we 
summarize the above in the next definition and the proposition.

%To that end we need some definitions. 

% 
% A {\em partition} of $U$ is a family ${\cal U}_P = \{U_1, U_2, \ldots U_t\}$ of sets over $U$ such that $U_i \cap U_j = \emptyset$ for every $i \neq j$ and $U = \bigcup_{i \leq t} U_i$. Each of the sets $U_i$ are called the {\em parts} of the partition. A {\em consecutive partition} of $\{1,\ldots,n\}$ is a partition ${\cal U}_P = \{U_1, U_2, \ldots U_t\}$ of $\{1,\ldots,n\}$ such that for every integer $i \leq t$ and integers $1 \leq x \leq y \leq z$, if $x \in U_i$ and $z \in U_i$ then $y \in U_i$ as well. In other words, in a consecutive partition each part is a consecutive interval of integers. For every integer $t$, let $\mathscr{P}_t$ denote the collection of all consecutive partitions of $\{1,\ldots,n\}$ with exaclty $t$ parts. We do not demand that all of the parts in a partition in $\mathscr{P}_t$ are non-empty. Simple counting arguments show that for every $t$, $|\mathscr{P}_t| = {n+t-1 \choose t-1}$.
% 
% We will denote by ${\cal Z}_{s,t}^p$ the set of all $t$-tuples $(p_1,p_2, \ldots, p_t)$ of integers such that $\sum_{i \leq t} p_i = p$ and $0 \leq p_i \leq s$ for all $i$. Clearly  $|{\cal Z}_{s,t}^p| \leq {p+t-1 \choose t-1}$, since this counts all the ways of writing $p$ as a sum of $t$ non-negative integers, without considering the upper bound on each one. 

\begin{definition}
A {\em partition} of $U$ is a family ${\cal U}_P = \{U_1, U_2, \ldots U_t\}$ of sets over $U$ such that $\forall i\neq j,\;U_i \cap U_j = \emptyset$ and $U = \bigcup_{i \leq t} U_i$. 
Each of the sets $U_i$ are called the {\em parts} of the partition. A {\em consecutive partition} of $\{1,\ldots,n\}$ is a partition ${\cal U}_P = \{U_1, U_2, \ldots U_t\}$ of $\{1,\ldots,n\}$ 
such that for every integer $i \leq t$ and integers $1 \leq x \leq y \leq z$, if $x \in U_i$ and $z \in U_i$ then $y \in U_i$ as well. 
% In other words, in a consecutive partition each part is a consecutive interval of integers. For every integer $t$, 
\end{definition}
\begin{proposition}
Let $\mathscr{P}_t^{n}$ denote the collection of all consecutive partitions of $\{1,\ldots,n\}$ with exactly $t$ parts. Let 
${\cal Z}_{s,t}^p$ be the set of all $t$-tuples $(p_1,p_2, \ldots, p_t)$ of integers such that $\sum_{i \leq t} p_i = p$ and $0 \leq p_i \leq s$ for all $i$. 
% We do not demand that all of the parts in a partition in $\mathscr{P}_t$ are non-empty. Simple counting arguments show that 
Then for every $t$, $|\mathscr{P}_t^{n}| = {n+t-1 \choose t-1}$ and $|{\cal Z}_{s,t}^p| \leq {p+t-1 \choose t-1}$.
\end{proposition}
% --------------------------------------------------------------------------------------------------
% We will denote by ${\cal Z}_{s,t}^p$ the set of all $t$-tuples $(p_1,p_2, \ldots, p_t)$ of integers such that $\sum_{i \leq t} p_i = p$ and $0 \leq p_i \leq s$ for all $i$. Clearly  $|{\cal Z}_{s,t}^p| \leq {p+t-1 \choose t-1}$, since this counts all the ways of writing $p$ as a sum of $t$ non-negative integers, without considering the upper bound on each one. 
% -------------------------------------------------------------------------------------------------------------------------

%\begin{lemma}\label{lem:splitSolution} For any $p$, $q$ let $s = \lfloor (\log (p+q))^2 \rfloor$ and 
%$t = \lceil \frac{p+q}{s} \rceil$. 
%%Let $({\cal F}_p,\chi_p,\chi'_p)$ denote the 
%If there is a construction of  $n$-$p$-$q$-separating collections  $({\cal F}_p,\chi_p,\chi'_p)$
%\begin{itemize}
%\item with initialization time $\tau_I(n,p,q)$, 
%\item query times ${Q_{({\chi}_p,p')}}(n,p,q)$ and ${Q_{({\chi}'_p,q')}}(n,p,q)$, 
%\item producing a  $n$-$p$-$q$-separating collection with size $\zeta_p(n,p,q)$, $({\chi}_p,p')$-degree 
%$\Delta_{({\chi}_p,p')}(n,p,q)$ and $({\chi}'_p,q')$-degree $\Delta_{({\chi}'_p,q')}(n,p,q)$ 
%\end{itemize}
%then there is a construction 
%of  $n$-$p$-$q$-separating collection with following parameters 

\begin{lemma}\label{lem:splitSolution} For any $p$, $q$ let $s = \lfloor (\log (p+q))^2 \rfloor$ and 
$t = \lceil \frac{p+q}{s} \rceil$. 
%Let $({\cal F}_p,\chi_p,\chi'_p)$ denote the 
If there is a construction of  $n$-$p$-$q$-separating collections  $({\cal F}_p,\chi_p,\chi'_p)$
\begin{itemize}
\setlength\itemsep{-.7mm}
\item with size $\zeta(n,p,q)$ and initialization time $\tau_I(n,p,q)$, 
\item $({\chi}_p,p')$-degree 
$\Delta_{({\chi}_p,p')}(n,p,q)$ and $({\chi}'_p,q')$-degree $\Delta_{({\chi}'_p,q')}(n,p,q)$, and  

\item query times ${Q_{({\chi}_p,p')}}(n,p,q)$ and ${Q_{({\chi}'_p,q')}}(n,p,q)$, 
%\item producing a  $n$-$p$-$q$-separating collection with size $\zeta_p(n,p,q)$, $({\chi}_p,p')$-degree 
%$\Delta_{({\chi}_p,p')}(n,p,q)$ and $({\chi}'_p,q')$-degree $\Delta_{({\chi}'_p,q')}(n,p,q)$ 
\end{itemize}
then there is a construction 
of  $n$-$p$-$q$-separating collection with following parameters

\begin{itemize}
\setlength\itemsep{-.7mm}
\item 
% size 
\[\zeta'(n,p,q) \leq |\mathscr{P}_t^n| \cdot 
\sum_{\substack{(p_1,\ldots,p_t) \in {\cal Z}_{s,t}^{p}}} \prod_{i \leq t} \zeta(n,p_i,s-p_i),\]
\item 
% initialization time 
\[ \tau_I'(n,p,q) = \cO\Big(\big(\sum_{\substack{\hat{p} \leq s,p\\ s-\hat{p}\leq q}} \tau_I(n,\hat{p},s-\hat{p})\big) + \zeta'(n,p,q) \cdot n^{\cO(1)}\Big),\]
\item 
% $(\chi,p')$-degree 
$$\Delta_{(\chi,p')}'(n,p,q) \leq \Delta_{(\chi,p')}^*(n,p,q)=|\mathscr{P}_t^n|\cdot |{\cal Z}_{s,t}^p|\cdot
\max_{\substack{(p_1,\ldots,p_t) \in {\cal Z}_{s,t}^{p} \\ p_1'\leq p_1,\ldots, p_t'\leq p_t \\ p_1'+\cdots+p_t'=p' }} \prod_{i \leq t}  \Delta_{({\chi}_{p_i},p_i')}(n,p_i,s-p_i),$$

\item 
% $(\chi,p')$-query time \\
$${Q'_{(\chi,p')}}(n,p,q) =\cO \Big(\Delta^*_{(\chi,p')}(n,p,q) \cdot n^{\cO(1)} + |\mathscr{P}_{t}^n|\cdot |{\cal Z}_{s,t}^p| \cdot t\cdot \big( \max_{\substack{ \hat{p}'\leq
\hat{p}\leq s \\ \hat{p}-\hat{p}'\leq p-p' \\ s-\hat{p}\leq q }} Q_{(\chi_{\hat{p}},\hat{p}')}(n,\hat{p},s-\hat{p})\big)  \Big),$$
\item 
% $(\chi',q')$-degree 
$$\Delta'_{(\chi',q')}(n,p,q) \leq \Delta^*_{(\chi',q')}(n,p,q)=|\mathscr{P}_t^n|\cdot |{\cal Z}_{s,t}^p|\cdot
\max_{\substack{(p_1,\ldots,p_t) \in {\cal Z}_{s,t}^{p} \\ q_1'\leq s-p_1,\ldots, q_t'\leq s-p_t \\ q_1'+\ldots+q_t'=q'}} \prod_{i \leq t}  \Delta_{({\chi}'_{p_i},q_i')}(n,p_i,s-p_i),$$ 
% \item $p_1$-query time $\tau_{Q_{p_1}}'(n,p,q) \leq O\big(\Delta_{p_1}'(n,p,q) \cdot n^{\cO(1)} + |\mathscr{P}_{t}^{n}|\cdot t s^t\cdot \sum_{\substack{\hat{p} \leq s \\ \hat{p}'\leq p_1,\hat{p}}} \tau_{Q_{\hat{p}'}}(n,\hat{p},s-\hat{p}') \big)$.
\item 
% $(\chi',q')$-query time \\
$${Q'_{(\chi',q')}}(n,p,q) =\cO \Big(\Delta^*_{(\chi',q')}(n,p,q) \cdot n^{\cO(1)} + |\mathscr{P}_{t}^n|\cdot |{\cal Z}_{s,t}^p| \cdot t\cdot \big( \max_{\substack{ \hat{q}'\leq
\hat{q}\leq s \\ \hat{q}-\hat{q}'\leq q-q' \\ s-\hat{q} \leq p }} 
Q_{(\chi'_{s-\hat{q}},\hat{q}')}(n,s-\hat{q},\hat{q})\big)  \Big).$$
\end{itemize}
\end{lemma}
%\todo[inline]{statement of the lemma does not seem correct. CHECK the last two bullets.}
\begin{proof}
Set $s = \lfloor (\log (p+q))^2 \rfloor$ and $t = \lceil \frac{p+q}{s} \rceil$.
% and $\tilde{q} = st - p$. 
We will give 
a construction of  
% $n$-$p$-$\tilde{q}$-separating collections 
$n$-$p$-${q}$-separating collections
with initialization time, query time, size and degree 
within the claimed bounds above. In this construction we will use the given construction as a black box. 
We may assume without loss of generality that $U = \{1, \ldots, n\}$.
%
%In the construction we will be using the construction with initialization time 
%$\tau_I$, query times ${Q_{({\chi},p')}}$ and ${Q_{({\chi}',q')}}$, size $\zeta$, and degrees $\Delta_{({\chi},p')}$ 
%and $\Delta_{({\chi}',q')}$ as a black box. Since  $\tilde{q} \geq q$, a $n$-$p$-$\tilde{q}$-separating collection 
%is also a $n$-$p$-$q$-separating collection. We may assume without loss of generality that $U = \{1, \ldots, n\}$.
% 
Our algorithm first runs  for every $\hat{p}$, $0\leq\hat{p}\leq s, \hat{p} \leq p , s-\hat{p}\leq {q}$, and initializes 
$n$-$\hat{p}$-$(s-\hat{p})$-separating collections,  $$({\cal F}_{\hat{p}},\chi_{\hat{p}},\chi'_{\hat{p}}).$$ 
These will be be the building blocks of our construction. 
% of  the given construction of   
%$n$-$\hat{p}$-$(s-\hat{p})$-separating collections. 
%
%We will refer by 
%$({\cal F}_{\hat{p}}, {\chi}_{\hat{p}}, {\chi}'_{\hat{p}})$ to the  separating collection constructed for 
%$\hat{p}$. For each $\hat{p}$ the initialization of the construction outputs the family ${\cal F}_{\hat{p}}$. 
 
% We need to define a few operations on families of sets. For a family ${\cal A}$ over $U$ and subset $U' \subseteq U$ we define 
% $${\cal A} \sqcap U' = \{A \cap U'~:~A \in {\cal A}\}.$$ 
% For two families ${\cal A}$ and  ${\cal B}$ over (subsets of) $U$, we define  
% $${\cal A} \circ {\cal B} = \{A \cup B ~:~A \in {\cal A} \wedge B \in {\cal B}\}.$$ 

We need to define a few operations on families of sets. For families of sets  ${\cal A}$, ${\cal B}$ over $U$ 
and subset $U' \subseteq U$ we define
\begin{eqnarray*}
{\cal A} \sqcap U' &=& \{A \cap U'~:~A \in {\cal A}\} \\ 
{\cal A} \circ {\cal B} &=& \{A \cup B ~:~A \in {\cal A} \wedge B \in {\cal B}\} 
\end{eqnarray*}
We now define ${\cal F}$ as follows.
\begin{align}\label{eqn:defineFSplit} 
{\cal F} = \bigcup_{\substack{\{U_1,\ldots,U_{t}\} \in \mathscr{P}_{t}^n \\ (p_1,\ldots,p_t) \in {\cal Z}_{s,t}^{p}~\mbox{\scriptsize such that} \\ \forall i~:~s-p_i\leq q } } 
({\cal F}_{p_1} \sqcap U_1) \circ ({\cal F}_{p_2} \sqcap U_2) \circ \ldots \circ ({\cal F}_{p_t} \sqcap U_{t}) 
\end{align}
% We claim that for any $(p_1,\ldots,p_t) \in {\cal Z}_{s,t}^{p}$, we have $s-p_i\leq \tilde{q}$ for all $i$ and hence 
% ${\cal F}_{p_i}$ in equation~\ref{eqn:defineFSplit} is computed in the initial step of the algorithm. So ${\cal F}$ is properly defined. 
% For any $(p_1,\ldots,p_t) \in {\cal Z}_{s,t}^{p}$, we have, 
% \begin{eqnarray*}
%  p_1+\cdots+p_t&=&p\nonumber\\
%  \sum_{i=1}^t(s-p_i)&=&st-p=\tilde{q}\nonumber\\
%  \mbox{Hence, }\qquad s-p_i&\leq& \tilde{q} \qquad(\because \forall j, s-p_j\geq 0)
% \end{eqnarray*}
It follows directly from the definition of  ${\cal F}$ that $|{\cal F}|$ is within the claimed bound for 
$\zeta'(n,p,q)$. For the initialization time, the algorithm spends 
$\cO\left(\sum_{\substack{\hat{p} \leq s,p\\ s-\hat{p}\leq q}} \tau_I(n,\hat{p},s-\hat{p})\right)$ time to initialize the constructions 
of the  $n$-$\hat{p}$-$(s-\hat{p})$-separating collections for all $\hat{p}\leq s$ such that $\hat{p}\leq p$ and $s-\hat{p}\leq q$ together. Now the 
algorithm can output the entries of ${\cal F}$ one set at a time by using Equation~\eqref{eqn:defineFSplit}, spending 
$n^{\cO(1)}$ time per output set. Hence the time bound for $\tau'_I(n,p,q)$ follows. 

For every set $A \in \bigcup_{p'\leq p}{U \choose p'}$ we define $\chi(A)$ as follows.
\begin{align}\label{eqn:defineChiSplit} 
\chi(A) = \bigcup_{\substack{\{U_1,\ldots,U_{t}\} \in \mathscr{P}_{t}^n\\(p_1,\ldots,p_t) \in {\cal Z}_{s,t}^{p}~\mbox{\scriptsize such that}\\ \forall U_i~:~|U_i \cap A| \leq p_i, s-p_i\leq q}} 
\Big[({\chi}_{p_1}(A \cap U_1) \sqcap U_1) \circ ({\chi}_{p_2}(A \cap U_2) \sqcap U_2) \circ \ldots \\
\nonumber ... \circ ({\chi}_{p_t}(A \cap U_t) \sqcap U_t)\Big] 
\end{align}
% Since for any $(p_1,\ldots,p_t) \in {\cal Z}_{s,t}^{p}$, we have $s-p_i\leq \tilde{q}$ for all $i$ and  so  
% $({\cal F}_{p_i},\chi_{p_i},\chi'_{p_i})$ is computed in the initial step of the algorithm. Hence $\chi(A)$ is properly defined.

Now we show that $\chi(A)\subseteq {\cal F}$. From the definition of  $n$-${p_i}$-$(s-{p_i})$-separating 
collections $({\cal F}_{{p_i}},\chi_{{p_i}},\chi'_{{p_i}})$, each family $\chi_{p_i}(A\cap U_i)$ in Equation~
\eqref{eqn:defineChiSplit} is a subset of ${\cal F}_{p_i}$. This implies that 
$\chi_{p_i}(A\cap U_i)\sqcap U_i\subseteq {\cal F}_{p_i} \sqcap U_i$. Hence $\chi(A)\subseteq {\cal F}$.   
Similarly we can define $\chi'(B)$ for any $B\in\bigcup_{q'\leq q}{U\choose q'}$ as
\begin{align}\label{eqn:defineChiprimeSplit} 
\chi'(B) = \bigcup_{\substack{\{U_1,\ldots,U_{t}\} \in \mathscr{P}_{t}^n\\(p_1,\ldots,p_t) \in {\cal Z}_{s,t}^{p}~\mbox{\scriptsize such that} \\ \forall U_i~:~|U_i \cap B| \leq s-p_i\leq q }} 
\Big[({\chi}'_{p_1}(B \cap U_1) \sqcap U_1) \circ ({\chi}'_{p_2}(B \cap U_2) \sqcap U_2) \circ \cdots \\
\nonumber \cdots \circ ({\chi}'_{p_t}(B \cap U_t) \sqcap U_t)\Big] 
\end{align}
% We claim that for any $(p_1,\ldots,p_t) \in {\cal Z}_{s,t}^{p}$, we have $s-p_i\leq \tilde{q}$ for all $i$ and hence 
% ${\cal F}_{p_i}$ in equation~\ref{eqn:defineFSplit} is computed in the initial step of the algorithm. So ${\cal F}$ is properly defined. 
% For any $(p_1,\ldots,p_t) \in {\cal Z}_{s,t}^{p}$, we have, 
% \begin{eqnarray*}
%  p_1+\cdots+p_t&=&p\nonumber\\
%  \sum_{i=1}^t(s-p_i)&=&st-p=\tilde{q}\nonumber\\
%  \mbox{Hence, }\qquad s-p_i&\leq& \tilde{q} \qquad(\because \forall j, s-p_j\geq 0)
% \end{eqnarray*}

Similar to the proof of $\chi(A)\subseteq {\cal F}$, we can show that $\chi'(B)\subseteq {\cal F}$. 
It follows directly from the definition of  $\chi(A)$ and $\chi'(B)$ that $|\chi(A)|$ and $|\chi'(B)|$ is within the 
claimed bound for $\Delta_{(\chi,p')}'(n,p,q)$ and $\Delta_{(\chi',q')}'(n,p,q)$ respectively. We now describe how 
queries $\chi(A)$ can be answered, and analyze how much time it takes. Given $A$ we will compute $\chi(A)$ using  Equation~\eqref{eqn:defineChiSplit}. Let $|A|=p'$. 
For each $\{U_1,\ldots,U_{t}\} \in \mathscr{P}_{t}^n$ and $(p_1,\ldots,p_t) \in {\cal Z}_{s,t}^{p}$ such that $p_i' = |U_i \cap A| \leq p_i, s-p_i\leq q$ for all $i \leq t$, 
we proceed as follows. First we compute ${\chi}_{p_i}(A \cap U_i)$ for each $i \leq t$, spending in total 
$\cO(\sum_{i \leq t} Q_{(\chi_{p_i},p_i')}(n,p_i,s-p_i))$ time. Now we add each set in 
$$({\chi}_{p_1}(A \cap U_1) \sqcap U_1) \circ ({\chi}_{p_2}(A \cap U_2) \sqcap U_2) \circ \ldots \circ ({\chi}_{p_t}(A \cap U_t) \sqcap U_t)$$  
to $\chi(A)$, spending $n^{\cO(1)}$ time per set, yielding the bound below, 
% for $\tau_{Q_{p_j}}'(n,p,q)$ for $j\in\{1,2\}$. Similarly we can bound $\tau'_{Q_q}$
\begin{align*}
Q'_{(\chi,p')}(n,p,q) \leq \cO\Big(\Delta^*_{(\chi,p')}(n,p,q) \cdot n^{\cO(1)} + \sum_{\substack{\{U_1,\ldots,U_{t}\} \in \mathscr{P}_{t}\\(p_1,\ldots,p_t) \in {\cal Z}_{s,t}^{p}~\mbox{\scriptsize such that}\\ \forall U_i~:~p_i' = |U_i \cap A| \leq p_i, s-p_i\leq q}} \big[\sum_{i \leq t} Q_{(\chi_{p_i},p_i')}(n,p_i,s-p_i)\big]\Big) \\
\leq \cO\Big(\Delta^*_{(\chi,p')}(n,p,q) \cdot n^{\cO(1)} + |\mathscr{P}_{t}^n|\cdot |{\cal Z}_{s,t}^p| \cdot 
\max_{\substack{(p_1,\ldots,p_t) \in {\cal Z}_{s,t}^p\\ p_1'\leq p_1,\cdots,p_t'\leq p_t~\mbox{\scriptsize such that}\\p_1'+\cdots +p_t'=p',\forall i : s-p_i\leq q}} \big( \sum_{i \leq t} Q_{(\chi_{p_i},p_i')}(n,p_i,s-p_i)\big)  \Big) \\
\leq \cO\Big(\Delta^*_{(\chi,p')}(n,p,q) \cdot n^{\cO(1)} + |\mathscr{P}_{t}^n|\cdot |{\cal Z}_{s,t}^p| \cdot t \cdot 
\max_{\substack{(p_1,\ldots,p_t) \in {\cal Z}_{s,t}^p\\ p_1'\leq p_1,\cdots,p_t'\leq p_t~\mbox{\scriptsize such that}\\p_1'+\cdots +p_t'=p',\forall i : s-p_i\leq q}} \big(Q_{(\chi_{p_i},p_i')}(n,p_i,s-p_i)\big)  \Big) \\
\leq \cO\Big(\Delta^*_{(\chi,p')}(n,p,q) \cdot n^{\cO(1)} + |\mathscr{P}_{t}^n|\cdot |{\cal Z}_{s,t}^p| \cdot t\cdot 
\big( \max_{\substack{ \hat{p}'\leq \hat{p}\leq s \\ \hat{p}-\hat{p}'\leq p-p' \\ s-\hat{p}\leq q }} Q_{(\chi_{\hat{p}},\hat{p}')}(n,\hat{p},s-\hat{p})\big)  \Big) \\
% \leq \cO\Big(\Delta'(n,p,q) \cdot n^{\cO(1)} + |\mathscr{P}_{t}|\cdot t \cdot \sum_{\hat{p} \leq s} \tau_Q(n,\hat{p},s-\hat{p}) \Big)
\end{align*} 
For any $(p_1,\ldots,p_t) \in {\cal Z}_{s,t}^p$ and $p_1'\leq p_1,\ldots, p_t'\leq p_t$ such that $\sum_{i=1}^t p_i'=p'$, we have that 
$\sum_{i=1}^t p_i-p_i'=p-p'$ and so $p_i-p_i'\leq p-p'$ for all $i$. This shows the correctness of the last inequality in the above query time analysis.

% \todo[inline]{stopped here -- why does the last line follows.}
By doing similar analysis, we get required bound for $Q'_{(\chi',q')}$.  
We now need to argue that $({\cal F}, \chi,\chi')$ is in fact a   $n$-$p$-${q}$-separating collection. For any 
$r$, consider pairwise disjoint sets $A_1\in {U \choose b_1},\ldots,A_r\in {U \choose b_r}$ and 
$B \in {U \choose {q}}$ such that $b_1+\cdots+b_r=p$. Let $A=A_1\cup\cdots\cup A_r$. There exists a 
consecutive partition $\{U_1, \ldots, U_t\} \in \mathscr{P}_t^n$ of $U$ such that for every $i \leq t$ we have that 
$|(A \cup B) \cap U_i| \leq \lceil \frac{p+{q}}{t} \rceil = s$. For each $i \leq t$ set $p_i =  |A \cap U_i|$ and 
$q_i = |B \cap U_i| = s - p_i$. Note that $p_i\leq p$ and $q_i\leq q$ for all $i$. For every $i \leq t$ the tuple $({\cal F}_{p_i}, {\chi}_{p_i},\chi'_{p_i})$ form 
a $n$-$p_i$-$q_i$-separating collection. Hence there exists a 
$ F_i \in \chi_{p_i}(A_1\cap U_i)\cap\cdots\cap\chi_{p_i}(A_r\cap U_i) \cap \chi'_{p_i}(B\cap U_i)$ 
because $|A_1\cap U_i|+\cdots+|A_r\cap U_i|=p_i$, $|B\cap U_i|=q_i$ and 
$({\cal F}_{p_i}, {\chi}_{p_i},\chi'_{p_i})$ is a $n$-$p_i$-$q_i$-separating collection. That is 
$F_i \in \chi_{p_i}(A_j\cap U_i)$ for all $j\leq r$ and $F_i \in \chi'_{p_i}(B\cap U_i)$. Let 
$F=\bigcup_{i\leq t}F_i\cap U_i$. By construction of $\chi$ and $\chi'$, $F \in \chi(A_j)$ for all $j\leq r$  and 
$F \in \chi'(B)$. Hence $F \in \chi(A_1)\cap\cdots\cap \chi(A_r)\cap\chi'(B)$. This completes the proof
\end{proof}

Now we are ready to prove Lemma~\ref{lem:twin_sep_coll_construction}. We restate the lemma for easiness of presentation. 

\medskip 

%\begin{lemma}
%\label{lem:twin_sep_coll_construction}
%Given a constant $x$ such that $0<x<1$, there is a construction of $n$-$p$-$q$- separating collection with the following parameters
%\begin{itemize} %\setlength\itemsep{-.7mm}
%\item size $\zeta(n,p,q) \leq 2^{O(\frac{p+q}{\log\log\log(p+q)})}\cdot \frac{1}{x^p(1-x)^q}\cdot (p+q)^{O(1)} \cdot \log n$
%\item initialization time $\tau_I(n,p,q) \leq  2^{O(\frac{p+q}{\log\log\log(p+q)})}\cdot \frac{1}{x^p(1-x)^q}\cdot (p+q)^{O(1)} \cdot n\log n$
%\item $(\chi,p')$-degree $\Delta_{(\chi,p')}(n,p,q) \leq  2^{O(\frac{p+q}{\log\log\log(p+q)})}\cdot \frac{1}{x^{p-p'}(1-x)^q}\cdot (p+q)^{O(1)} \cdot \log n$
%\item $(\chi,p')$-query time $Q_{(\chi,p')}(n,p,q) \leq  2^{O(\frac{p+q}{\log\log\log(p+q)})}\cdot \frac{1}{x^{p-p'}(1-x)^q}\cdot (p+q)^{O(1)} \cdot \log n$
%\item $(\chi',q')$-degree $\Delta_{(\chi',q')}(n,p,q) \leq  2^{O(\frac{p+q}{\log\log\log(p+q)})}\cdot \frac{1}{x^{p}(1-x)^{q-q'}}\cdot (p+q)^{O(1)} \cdot \log n$
%\item $(\chi',q')$-query time $Q_{(\chi',q')}(p,q,n) \leq  2^{O(\frac{p+q}{\log\log\log(p+q)})}\cdot \frac{1}{x^{p}(1-x)^{q-q'}}\cdot (p+q)^{O(1)} \cdot \log n$
%\end{itemize}
%\end{lemma}

%\begin{lemma}
%\label{lem:twin_sep_coll_construction}
{{\bf Lemma~\ref{lem:twin_sep_coll_construction}} \em 
Given   $0<x<1$, there is a construction of  $n$-$p$-$q$- separating collection with the following parameters
\begin{itemize} %\setlength\itemsep{-.7mm}
\item size: $\zeta(n,p,q) \leq 2^{\cO(\frac{p+q}{\log\log(p+q)})}\cdot \frac{1}{x^p(1-x)^q}\cdot (p+q)^{O(1)} \cdot \log n$
\item initialization time: $\tau_I(n,p,q) \leq  2^{O(\frac{p+q}{\log\log(p+q)})}\cdot \frac{1}{x^p(1-x)^q}\cdot (p+q)^{O(1)} \cdot n\log n$
\item $(\chi,p')$-degree: $\Delta_{(\chi,p')}(n,p,q) \leq  2^{O(\frac{p+q}{\log\log(p+q)})}\cdot \frac{1}{x^{p-p'}(1-x)^q}\cdot (p+q)^{O(1)} \cdot \log n$
\item $(\chi,p')$-query time: $Q_{(\chi,p')}(n,p,q) \leq  2^{O(\frac{p+q}{\log\log(p+q)})}\cdot \frac{1}{x^{p-p'}(1-x)^q}\cdot (p+q)^{O(1)} \cdot \log n$
\item $(\chi',q')$-degree: $\Delta_{(\chi',q')}(n,p,q) \leq  2^{O(\frac{p+q}{\log\log(p+q)})}\cdot \frac{1}{x^{p}(1-x)^{q-q'}}\cdot (p+q)^{O(1)} \cdot \log n$
\item $(\chi',q')$-query time: $Q_{(\chi',q')}(n,p,q) \leq  2^{O(\frac{p+q}{\log\log(p+q)})}\cdot \frac{1}{x^{p}(1-x)^{q-q'}}\cdot (p+q)^{O(1)} \cdot \log n$
\end{itemize}}

%\end{lemma}

\begin{proof}
We first explain a brute force construction of $n$-$p$-$q$-separating collection when  the value of 
$x$ is close to $0$ or close to $1$. These are discussed  in Cases $1$ and $2$  and  the result for all other values of $x$ is explained in Case $3$. Let $U$ be the universe. 

\medskip 

\noindent
\textbf{Case 1: $x\leq \frac{1}{n}$.}
In this case the algorithm will output all subset of size $p$ of the universe as the family ${\cal F}$ of sets in the $n$-$p$-$q$- separating collection. That is 
${\cal F}=\{F\subseteq U~|~|F|=p\}$. We define $\chi$ and $\chi'$ as follows. For any $A\in \bigcup_{p'\leq p} {U \choose p'}$, $\chi(A)=\{F\in {\cal F}~|~A\subseteq F\}$. 
For any $B\in \bigcup_{q'\leq q}{U \choose q'}$, $\chi'(B)=\{F\in {\cal F}~|~B\cap F=\emptyset\}$. It is easy to see that $({\cal F},\chi,\chi')$ is a $n$-$p$-$q$- separating
collection. 
% \begin{eqnarray*}
% \mbox{For any $A\in \bigcup_{p'\leq p}$, } \chi(A)&=&\{F\in {\cal F}~|~A\subseteq F\}\\
% \mbox{For any $B\in \bigcup_{q'\leq q}$, } \chi'(B)&=&\{F\in {\cal F}~|~A\cap F=\emptyset\}
% \end{eqnarray*}
Note that $|{\cal F}|={n \choose p}\leq n^p$. Since $n\leq \frac{1}{x}$, the size of the 
$n$-$p$-$q$- separating collection  is upperbound by the claimed bound.
Since we can list all the elements in ${\cal F}$ in $n^{p}$ time, the initialization time is upper bounded by 
the claimed bound. For any $A\subseteq U$, $|A|=p'$, the cardinality of $\chi(A)$ is exactly equal to ${n \choose p-p'}$ which is upper bounded by $\frac{1}{x^{p-p'}}$. Thus the
$(\chi,p')$-degree  and $(\chi,p')$-query time is bounded by the claimed bound.
For any $B\subseteq U$, $|B|=q'$, the cardinality of $\chi'(B)$ is at most $|{\cal F}|$, which is upper bounded by $\frac{1}{x^{p}}$. Thus the
$(\chi',q')$-degree  and $(\chi',q')$-query time is bounded by the claimed bound.

\medskip 

\noindent
\textbf{Case 2: $1-x\leq \frac{1}{n}$.}
In this case the algorithm will output all subset of size $n-q$ of the universe as the family ${\cal F}$ of sets in the $n$-$p$-$q$- separating collection. That is 
${\cal F}=\{F\subseteq U~|~|F|=n-q\}$. We define $\chi$ and $\chi'$ as follows. For any $A\in \bigcup_{p'\leq p}{U \choose p'}$, $\chi(A)=\{F\in {\cal F}~|~A\subseteq F\}$. 
For any $B\in \bigcup_{q'\leq q}{U \choose q'}$, $\chi'(B)=\{F\in {\cal F}~|~B\cap F=\emptyset\}$. It is easy to see that $({\cal F},\chi,\chi')$ is a $n$-$p$-$q$- separating
collection. 
% \begin{eqnarray*}
% \mbox{For any $A\in \bigcup_{p'\leq p}$, } \chi(A)&=&\{F\in {\cal F}~|~A\subseteq F\}\\
% \mbox{For any $B\in \bigcup_{q'\leq q}$, } \chi'(B)&=&\{F\in {\cal F}~|~A\cap F=\emptyset\}
% \end{eqnarray*}
Note that $|{\cal F}|={n \choose n-q}\leq n^q$. Since $n\leq \frac{1}{1-x}$, the size of the 
$n$-$p$-$q$- separating collection  is upperbound by the claimed bound.
Since we can list all the elements in ${\cal F}$ in $n^{q}$ time, the initialization time is upper bounded by 
the claimed bound. For any $A\subseteq U$, $|A|=p'$, the cardinality of $\chi(A)$ is 
is at most $|{\cal F}|$ which is upper bounded by $\frac{1}{(1-x)^{q}}$. Thus the
$(\chi,p')$-degree  and $(\chi,p')$-query time is bounded by the claimed bound.
For any $B\subseteq U$, $|B|=q'$, the cardinality of $\chi'(B)$ is exactly equal to ${n \choose q-q'}$, which is upper bounded by $\frac{1}{(1-x)^{q-q'}}$. Thus the
$(\chi',q')$-degree  and $(\chi',q')$-query time is bounded by the claimed bound.

\medskip 

\noindent
\textbf{Case 3: $x,1-x>\frac{1}{n}$.}
%  The structure of the proof is simple; first we create a construction using Lemma~\ref{lem:slowBalancedUniversal}. Applying Lemma~\ref{lem:reduceUniverse} gives a new, second, construction. 
% We then make more constructions by repeatedly applying Lemmata~\ref{lem:splitSolution} and~\ref{lem:reduceUniverse}. Specifically the third construction is obtained by obtaining Lemma~\ref{lem:splitSolution} to the second, the fourth by applying Lemma~\ref{lem:reduceUniverse} to the third, the fifth by applying Lemma~\ref{lem:splitSolution} to the fourth and the sixth and final construction is obtained by applying Lemma~\ref{lem:reduceUniverse} to the fifth. The bulk of the work is to verify that the respective constructions indeed have the claimed parameters. We now proceed with the formal proof.
The structure of the proof in this case is as follows. We first create a collection using 
% Lemma~\ref{lem:twin_sep_coll_construction}.
Lemma~\ref{lem:twin_sep_coll_brute_force}. 
Then we apply Lemma~\ref{lem:twinreduceUniverse} and obtain another construction. From here onwards we keep applying Lemma~\ref{lem:splitSolution} and Lemma~\ref{lem:twinreduceUniverse} in phases until we achieve the 
required bounds on size, degree, query and intializitaion time. 

%\paragraph{\em Brute Force Construction and Universe Reduction.}
We first apply Lemma~\ref{lem:twin_sep_coll_brute_force} and get a construction of $n$-$p$-$q$-separating collections with the following parameters.
\begin{itemize}\setlength\itemsep{-.7mm}
\item size, $\zeta^1(n,p,q) =\cO \left(\frac{1}{x^{p}(1-x)^q} \cdot (p^2+q^2+1)\log n \right)$, 
% \cO({p+q \choose q} \cdot (p+q)^{\cO(1)} \cdot \log n)$, 
\item initialization time, $\tau_I^1(n,p,q) = \cO({2^n \choose \zeta(n,p,q)} \cdot \frac{1}{x^p(1-x)^q} \cdot n^{\cO(p+q)})$,
\item $(\chi_1,p')$-degree for $p'\leq p$, $\Delta^1_{(\chi_1,p')}(n,p,q) = \cO\left(\frac{1}{x^{p-p'}}\cdot\frac{(p^2+q^2+1)}{(1-x)^q} \cdot \log n\right)$
% \item $p_2$-degree $\Delta_{p_2}(n,p,q) = \cO\left(\frac{1}{x^{p_2}}\cdot\frac{(p+q+1)}{(1-x)^q} \cdot \log n\right)$, and 
\item $(\chi_1,p')$-query time ${Q^1_{(\chi_1,p')}}(n,p,q) = \cO(\frac{1}{x^{p}(1-x)^q} \cdot n^{\cO(1)})=\cO(2^n n^{\cO(1)})$
\item $(\chi'_1,q')$-degree for $q'\leq q$, $\Delta^1_{(\chi'_1,q')}(n,p,q)=\cO\left(\frac{1}{x^{p}(1-x)^{q-q'}}\cdot(p^2+q^2+1) \cdot \log n\right)$
\item $(\chi'_1,q')$-query time, ${Q^1_{(\chi_1',q')}}(n,p,q)=\cO(\frac{1}{x^{p}(1-x)^q} \cdot n^{\cO(1)})=\cO(2^n n^{\cO(1)})$
\end{itemize}
We apply Lemma~\ref{lem:twinreduceUniverse} to this construction to get a new construction with the following parameters.
\begin{itemize} %\setlength\itemsep{-.7mm}
\item size, $\zeta^2(n,p,q)=\cO\left(\frac{1}{x^{p}(1-x)^q} \cdot  (p+q)^{\cO(1)} \cdot \log n\right)$ 
\item initialization time, 
\begin{eqnarray*}
\tau_I^2(n,p,q) &=& \cO\left(\tau_I^1\left((p+q)^2,p,q\right) + \zeta^1\left((p+q)^2,p,q\right) \cdot (p+q)^{\cO(1)} \cdot n \log n\right)\\
&=& \cO\left(
% {2^{(p+q)^2} \choose \zeta^1((p+q)^2,p,q)} \cdot 
\frac{2^{2^{(p+q)^2}}}{x^p(1-x)^q} \cdot (p+q)^{\cO(p+q)} + \left(\frac{1}{x^{p}(1-x)^q} \cdot (p+q)^{\cO(1)} \cdot n \log n \right)\right)\\
&=&\cO\left( \frac{(p+q)^{\cO(p+q)}}{x^{p}(1-x)^q}\left({2^{2^{(p+q)^2}}}+n \log n\right)\right)
\end{eqnarray*}
% \item $p_1$-degree $\Delta'_{p_1}(n,p,q) \leq \Delta_{p_1}\left((p+q)^2,p,q\right) \cdot  (p+q)^{\cO(1)} \cdot \log n$,
\item $(\chi_2,p')$-degree, $\Delta_{(\chi_2,p')}^2(n,p,q) =\cO\left( \frac{1}{x^{p-p'}{(1-x)^q}} \cdot  (p+q)^{\cO(1)} \cdot \log n\right)$
\item $(\chi_2,p')$-query time, $Q_{(\chi_2,p')}^2(n,p,q)=\cO\left(\left(2^{(p+q)^2} + \frac{1}{x^{p-p'}{(1-x)^q}}\right)   (p+q)^{\cO(1)} \cdot \log n\right)$
\item $(\chi_2',q')$-degree, $\Delta_{(\chi_2',q')}^2(n,p,q) =\cO\left( \frac{1}{x^{p}(1-x)^{q-q'}} \cdot  (p+q)^{\cO(1)} \cdot \log n\right)$
\item $(\chi_2,q')$-query time, $Q_{(\chi'_2,q')}^2(n,p,q)=\cO\left(\left(2^{(p+q)^2} + \frac{1}{x^{p}(1-x)^{q-q'}}\right)   (p+q)^{\cO(1)} \cdot \log n\right)$
% \begin{eqnarray*}
% \tau_{Q_{p_j}}^2(n,p,q) &\leq& \cO\left(\left(\tau_{Q_{p_j}}^1\left((p+q)^2,p,q\right) + \Delta_{p_j}^1\left((p+q)^2,p,q\right) \right) \cdot (p+q)^{\cO(1)} \cdot \log n\right)\\
% &\leq& \cO^*\left(\tau_{Q_{p_j}}^1\left((p+q)^2,p,q\right) + \Delta_{p_j}^1\left((p+q)^2,p,q\right) \right)
% \end{eqnarray*}
\end{itemize}
We apply Lemma~\ref{lem:splitSolution} to this construction. Recall that in Lemma~\ref{lem:splitSolution} 
we set $s=\lfloor(\log (p+q))^2 \rfloor$ and $t = \lceil \frac{p+q}{s} \rceil$.
\begin{eqnarray*}
 \zeta^3(n,p,q) &\leq& |\mathscr{P}_t^{n}| \cdot 
\sum_{(p_1,\ldots,p_t) \in {\cal Z}_{s,t}^{p}} \prod_{i \leq t} \zeta^2(n,p_i,s-p_i)\\
&\leq& n^{\cO(t)}\cdot |{\cal Z}_{s,t}^{p}| \cdot \max_{(p_1,\ldots,p_t) \in {\cal Z}_{s,t}^{p}} \prod_{i \leq t} \zeta^2(n,p_i,s-p_i)\\
&\leq& n^{\cO(t)}\cdot (p+q)^{\cO(t)}\cdot \frac{1}{x^{p}(1-x)^{q+s}}\cdot s^{\cO(t)}\cdot (\log n)^{\cO(t)}\\
&\leq& n^{\cO(\frac{p+q}{\log^2(p+q)})}\cdot \frac{1}{x^{p}(1-x)^{q}} \qquad\qquad\quad \left(\mbox{Because} \left(\frac{1}{1-x}\right)^s\leq n^s\leq n^{\cO(t)}\right)\\
\end{eqnarray*}
\begin{eqnarray*}
%%%%%%%%%% 
 \tau_I^3(n,p,q) &=&\cO\left(\left(\sum_{\substack{\hat{p} \leq s,p \\ s-\hat{p}\leq q}} \tau_I^2(n,\hat{p},s-\hat{p})\right) + \zeta^3(n,p,q) \cdot n^{\cO(1)}\right)\\
&=&\cO\left(\left(\sum_{\substack{\hat{p} \leq s,p \\ s-\hat{p}\leq q}} \frac{s^{\cO(s)}}{x^{\hat{p}}(1-x)^{s-\hat{p}}}\left({2^{2^{s^2}}}+n \log n\right)\right) + \zeta^3(n,p,q)
\cdot n^{\cO(1)}\right)\\
&=& \cO\left(\frac{(\log(p+q))^{\cO(\log^2(p+q))}}{x^{p}(1-x)^{q}}\left({2^{2^{\log^4(p+q)}}}+n \log n\right) + n^{\cO(\frac{p+q}{\log^2(p+q)})}\cdot \frac{1}{x^{p}(1-x)^{q}}\right)\\
% &=& O\left(\frac{2^{\cO(\frac{p+q}{\log(p+q)})}}{x^{p}(1-x)^{q}}\left({2^{2^{(p+q)^2}}}+n \log n + n^{O(\frac{p+q}{\log^2(p+q)})}\right)\right)\\
% &&(\mbox{ Because } s^{O(s)}\leq 2^{\cO(\frac{p+q}{\log(p+q)})})\\\\
\end{eqnarray*}
\begin{eqnarray*}
\Delta_{(\chi_3,p')}^3(n,p,q) &\leq& \Delta_{(\chi_3,p')}^{*3}(n,p,q)  \\
&=&|\mathscr{P}_t^n|\cdot |{\cal Z}_{s,t}^{p}| \cdot 
\max_{\substack{(p_1,\ldots,p_t) \in {\cal Z}_{s,t}^{p} \\ p_1'\leq p_1,\ldots, p_t'\leq p_t \\ p_1'+\ldots+p_t'=p'}} \prod_{i \leq t}  \Delta_{(\chi,p')}^2(n,p_i,s-p_i)\\
&\leq& n^{\cO(t)}\cdot (p+q)^{\cO(t)} \cdot\frac{1}{x^{p-p'}(1-x)^{q+s}}\cdot s^{\cO(t)}\cdot (\log n)^{\cO(t)}\\
&\leq& n^{\cO(\frac{p+q}{\log^2(p+q)})}\cdot \frac{1}{x^{p-p'}(1-x)^{q}}  \qquad\qquad\quad \left(\mbox{Because} \left(\frac{1}{1-x}\right)^s\in n^{\cO(t)}\right)\\
% \end{eqnarray*}
% \begin{eqnarray*}
\Delta_{(\chi_3',q')}^{3}(n,p,q) &\leq& \Delta_{(\chi_3',q')}^{*3}(n,p,q) \\ 
&=& |\mathscr{P}_t^n|\cdot |{\cal Z}_{s,t}^{p}| \cdot 
\max_{\substack{(p_1,\ldots,p_t) \in {\cal Z}_{s,t}^{p} \\ q_1'\leq s-p_1,\ldots, q_t'\leq s-q_t \\ q_1'+\ldots+q_t'=q'}} \prod_{i \leq t}  \Delta_{(\chi',q_i')}^2(n,p_i,s-p_i)\\
&\leq& n^{\cO(t)}\cdot (p+q)^{\cO(t)}\cdot \frac{1}{x^{p}(1-x)^{q+s-q'}}\cdot s^{\cO(t)}\cdot (\log n)^{\cO(t)}\\
&\leq& n^{\cO(\frac{p+q}{\log^2(p+q)})}\cdot \frac{1}{x^{p}(1-x)^{q-q'}} \qquad\qquad\quad \left(\mbox{Because} \left(\frac{1}{1-x}\right)^s\in n^{\cO(t)}\right)\\
% \cdot (\log(p+q))^{O(1)}\cdot \log n\\
\end{eqnarray*}
\begin{eqnarray*}
Q_{(\chi_3,p')}^3(n,p,q) &\leq& \cO\left(\Delta_{(\chi_3,p')}^{*3}(n,p,q) \cdot n^{\cO(1)} + 
|\mathscr{P}_{t}^{n}|\cdot |{\cal Z}_{s,t}^{p}| \cdot t \cdot 
\max_{\substack{ \hat{p}'\leq \hat{p}\leq s \\ \hat{p}-\hat{p}'\leq p-p' \\ s-\hat{p}\leq q }} Q_{(\chi_2,\hat{p}')}^2(n,\hat{p},s-\hat{p}) \right)\\
 &\leq& \cO\left(\Delta_{(\chi_3,p')}^{*3}(n,p,q) \cdot n^{\cO(1)} + n^{\cO(t)} \cdot 
\max_{\substack{ \hat{p}'\leq \hat{p}\leq s \\ \hat{p}-\hat{p}'\leq p-p' \\ s-\hat{p}\leq q }} \left(2^{s^2}+\frac{1}{x^{\hat{p}-\hat{p}'}(1-x)^{s-\hat{p}}}\right)s^{\cO(1)} \log n
\right)\\
 &\leq& \cO\left(\frac{n^{\cO(\frac{p+q}{\log^2(p+q)})}}{x^{p-p'}(1-x)^{q}} + n^{\cO(t)} \cdot s^{\cO(1)}\cdot \log n
\left(2^{s^2}+\frac{1}{x^{p-p'}(1-x)^{q}}\right)  \right)\\
 &\leq& \cO\left(\frac{n^{\cO(\frac{p+q}{\log^2(p+q)})}}{x^{p-p'}(1-x)^{q}}  \right)\\
% Q_{(\chi',q')}^3(n,p,q) &\leq& \cO\left(\frac{n^{\cO(\frac{p+q}{\log^2(p+q)})}}{x^{p}(1-x)^{q-q'}}  \right)
\end{eqnarray*}
Similar way we can bound $Q_{(\chi_3',q')}^3$ as,
\begin{eqnarray*}
 Q_{(\chi_3',q')}^3(n,p,q) &\leq& \cO\left(\frac{n^{\cO(\frac{p+q}{\log^2(p+q)})}}{x^{p}(1-x)^{q-q'}}  \right)
\end{eqnarray*}

%
% ------------------------------------------------------------------------------\\
%
We apply Lemma~\ref{lem:twinreduceUniverse} to this construction to get a new construction with the following parameters.
\begin{itemize} %\setlength\itemsep{-.7mm}
\item  size, 
$\zeta^4(n,p,q) \leq 2^{\cO(\frac{p+q}{\log(p+q)})}\cdot \frac{1}{x^{p}(1-x)^{q}} \cdot  (p+q)^{\cO(1)} \cdot \log n$,
 \item  initialization time, 
\begin{eqnarray*}
\tau_I^4(n,p,q) &\leq& \cO\left(\tau_I^3\left((p+q)^2,p,q\right) + \zeta^3\left((p+q)^2,p,q\right) \cdot (p+q)^{\cO(1)} \cdot n \log n\right)\\
&\leq& 2^{2^{\log^4(p+q)}}\cdot\frac{(\log(p+q))^{\cO(\log^2(p+q))}}{x^p(1-x)^q} + \frac{2^{\cO(\frac{p+q}{\log (p+q)})}}{x^p(1-x)^q} \cdot (p+q)^{\cO(1)}n\log n\\
% &\leq& \frac{2^{O(\frac{p+q}{\log (p+q)})}}{x^p(1-x)^q} \left( 2^{2^{(p+q)^2}} +  (p+q)^{O(1)}n\log n\right)  
\end{eqnarray*}
\item $(\chi_4,p')$-degree, 
\begin{eqnarray*}
\Delta_{(\chi_4,p')}^4(n,p,q) &\leq& \Delta_{(\chi_3,p')}^3\left((p+q)^2,p,q\right) \cdot  (p+q)^{\cO(1)} \cdot \log n \\
&\leq& \frac{2^{\cO(\frac{p+q}{\log (p+q)})}}{x^{p-p'}(1-x)^q} \cdot  (p+q)^{\cO(1)} \cdot \log n
\end{eqnarray*}
\item $(\chi_4',q')$-degree, 
\begin{eqnarray*}
\Delta_{(\chi_4',q')}^4(n,p,q) &\leq& \Delta_{(\chi_3',q')}^3\left((p+q)^2,p,q\right) \cdot  (p+q)^{\cO(1)} \cdot \log n \\
&\leq& \frac{2^{\cO(\frac{p+q}{\log (p+q)})}}{x^{p}(1-x)^{q-q'}} \cdot  (p+q)^{\cO(1)} \cdot \log n
\end{eqnarray*}
\item $(\chi_4,p')$-query time, 
\begin{eqnarray*}
Q_{(\chi_4,p')}^4(n,p,q) &\leq& \cO\left(\left( Q_{(\chi_3,p')}^3\left((p+q)^2,p,q\right) + \Delta_{(\chi_3,p')}^3\left((p+q)^2,p,q\right) \right) \cdot (p+q)^{\cO(1)} \cdot \log
n\right)\\
&\leq& \frac{2^{\cO(\frac{p+q}{\log (p+q)})}}{x^{p-p'}(1-x)^q}\cdot (p+q)^{\cO(1)}\log n
\end{eqnarray*}
\item $(\chi_4',q')$-query time, 
\begin{eqnarray*}
Q_{(\chi_4',q')}^4(n,p,q) 
% &\leq& \cO\left(\left(\tau_{Q_{p_j}}^3\left((p+q)^2,p,q\right) + \Delta_{p_j}^3\left((p+q)^2,p,q\right) \right) \cdot (p+q)^{\cO(1)} \cdot \log n\right)\\
&\leq& \frac{2^{\cO(\frac{p+q}{\log (p+q)})}}{x^{p}(1-x)^{q-q'}}\cdot (p+q)^{\cO(1)}\log n 
\end{eqnarray*}
\end{itemize}
We apply Lemma~\ref{lem:splitSolution} to this construction by setting $s=\lfloor(\log (p+q))^2 \rfloor$ and 
$t = \lceil \frac{p+q}{s} \rceil$.
\begin{itemize}%\setlength\itemsep{-.7mm}
\item size,
\begin{eqnarray*}
\zeta^5(n,p,q) &\leq& |\mathscr{P}_t^n| \cdot 
\sum_{(p_1,\ldots,p_t) \in {\cal Z}_{s,t}^{p} } \prod_{i \leq t} \zeta^4(n,p_i,s-p_i)\\
&\leq& n^{\cO(t)}\cdot(p+q)^{\cO(t)}\cdot s^{\cO(t)} \cdot 2^{\cO(\frac{st}{\log s})}\cdot (\log n)^{\cO(t)}\cdot \frac{1}{x^p(1-x)^{q+s}}\\ 
&\leq& n^{\cO(\frac{p+q}{\log^2(p+q)})} \cdot 2^{\cO(\frac{p+q}{\log\log(p+q)})} \frac{1}{x^p(1-x)^q} \qquad\quad \left(\mbox{Because } \left(\frac{1}{1-x}\right)^s\in
n^{\cO(t)}\right)
\end{eqnarray*}
\item initialization time,
\begin{eqnarray*}
\tau_I^5(n,p,q) &\leq& \cO\left(\left(\sum_{\substack{\hat{p} \leq s,p\\ s-\hat{p}\leq q}} \tau_I^4(n,\hat{p},s-\hat{p})\right) + \zeta^5(n,p,q) \cdot n^{\cO(1)}\right)\\
&\leq&\cO\left(s\frac{2^{2^{\log^4s}}\cdot (\log s)^{\cO(\log^2s)}}{x^p(1-x)^q}+ \frac{2^{\cO(\frac{s}{\log s})}}{x^p(1-x)^q}\cdot n\log n + n^{\cO(\frac{p+q}{\log^2(p+q)})} \cdot  \frac{2^{\cO(\frac{p+q}{\log\log(p+q)})}}{x^p(1-x)^q}
\right)\\
&\leq&\cO\left(s\frac{2^{2^{\log^4s}}\cdot (\log s)^{\cO(\log^2s)}}{x^p(1-x)^q} + n^{\cO(\frac{p+q}{\log^2(p+q)})} \cdot  \frac{2^{\cO(\frac{p+q}{\log\log(p+q)})}}{x^p(1-x)^q}
\right)\\
&\leq&\cO\left(\frac{2^{2^{\log^4s}}\cdot (s)^{\cO(s)}}{x^p(1-x)^q} + n^{\cO(\frac{p+q}{\log^2(p+q)})} \cdot  \frac{2^{\cO(\frac{p+q}{\log\log(p+q)})}}{x^p(1-x)^q}
\right)\\
&\leq&\cO\left(\frac{2^{2^{(2\log\log(p+q))^4}}\cdot (\log(p+q))^{\cO((\log(p+q))^2)}} {x^p(1-x)^q} 
+ n^{\cO(\frac{p+q}{\log^2(p+q)})} \cdot  \frac{2^{\cO(\frac{p+q}{\log\log(p+q)})}}{x^p(1-x)^q}
\right)\\
&\leq&\cO\left(n^{\cO(\frac{p+q}{\log^2(p+q)})} \cdot  \frac{2^{\cO(\frac{p+q}{\log\log(p+q)})}}{x^p(1-x)^q} \right) \\
&&\qquad \left(\mbox{Because }  2^{2^{(2\log \log (p+q))^4}} , (\log (p+q) )^{\cO(\log^2(p+q))} \leq 2^{\cO(\frac{p+q}{\log\log(p+q)})}\right)
% &\leq& \cO\left( \frac{2^{2^{(\log^4(p+q))}}}{x^p(1-x)^q} + n^{\cO(\frac{p+q}{\log^2(p+q)})} \cdot 2^{O(\frac{p+q}{\log\log(p+q)})} \frac{1}{x^p(1-x)^q}
% \right)
\end{eqnarray*}
 \item $(\chi_5,p')$-degree, 
\begin{eqnarray*}
\Delta_{(\chi_5,p')}^5(n,p,q) &\leq& \Delta_{(\chi_5,p')}^{*5}(n,p,q)\\ 
&=& |\mathscr{P}_t^n|\cdot |{\cal Z}_{s,t}^{p}| \cdot  
\max_{\substack{(p_1,\ldots,p_t) \in {\cal Z}_{s,t}^{p} \\ p_1'\leq p_1,\ldots, p_t'\leq p_t \\ p_1'+\ldots+p_t'=p' }} \prod_{i \leq t}  \Delta_{(\chi_4,p_i')}^4(n,p_i,s-p_i)\\
&\leq& n^{\cO(t)}\cdot (p+q)^{\cO(t)} \cdot \frac{2^{\cO(\frac{st}{\log s})}}{x^{p-p'}(1-x)^{q+s}}\cdot s^{\cO(t)}\cdot (\log n)^{\cO(t)}\\
&\leq& n^{\cO(\frac{p+q}{\log^2(p+q)})} \cdot 2^{\cO(\frac{p+q}{\log\log(p+q)})} \cdot \frac{1}{x^{p-p'}(1-x)^q} \qquad\quad \left(\mbox{Because } \left(\frac{1}{1-x}\right)^s\in
n^{\cO(t)}\right)
\end{eqnarray*}
\item $(\chi_5',q')$-degree, 
\begin{eqnarray*}
\Delta_{(\chi_5',q')}^5(n,p,q) &\leq& \Delta_{(\chi_5',q')}^{*5}(n,p,q)  \\ 
% &\leq& |\mathscr{P}_t^n|\cdot \max_{\substack{(r_1,\ldots,r_t) \in {\cal Z}_{s,t}^{p} \\ r_1''\leq r_1,\ldots, r_t''\leq r_t \\ r_1''+\ldots+r_t''=p_2  }} \prod_{i \leq t}  \Delta_{s-r_i}^4(n,r_i,s-r_i)\\
&\leq& n^{\cO(\frac{p+q}{\log^2(p+q)})} \cdot 2^{\cO(\frac{p+q}{\log\log(p+q)})} \cdot \frac{1}{x^{p}(1-x)^{q-q'}}
\end{eqnarray*}
% \item $p_1$-query time $\tau_{Q_{p_1}}'(n,p,q) \leq O\big(\Delta_{p_1}'(n,p,q) \cdot n^{\cO(1)} + |\mathscr{P}_{t}^{n}|\cdot t s^t\cdot \sum_{\substack{\hat{p} \leq s \\ \hat{p}'\leq p_1,\hat{p}}} \tau_{Q_{\hat{p}'}}(n,\hat{p},s-\hat{p}') \big)$.
\item $(\chi_5,p')$-query time, 
\begin{eqnarray*}
Q_{(\chi_5,p')}^5(n,p,q) &\leq& \cO\left(\Delta_{(\chi_5,p')}^{*5}(n,p,q) \cdot n^{\cO(1)} + |\mathscr{P}_{t}^{n}|\cdot |{\cal Z}_{s,t}^{p}| \cdot   
\max_{\substack{ \hat{p}'\leq \hat{p}\leq s \\ \hat{p}-\hat{p}'\leq p-p' \\ s-\hat{p}\leq q }}
Q_{(\chi_4,\hat{p}')}^4(n,\hat{p},s-\hat{p}) \right)\\
&\leq& n^{\cO(\frac{p+q}{\log^2(p+q)})} \cdot 2^{\cO(\frac{p+q}{\log\log(p+q)})} \cdot \frac{1}{x^{p-p'}(1-x)^q}
\end{eqnarray*}
\item $(\chi_5',q')$-query time,
\begin{eqnarray*}
 Q_{(\chi_5',q')}^5(n,p,q) 
% &\leq& \cO\left(\Delta_{(\chi',q')}^5(n,p,q) \cdot n^{\cO(1)} + |\mathscr{P}_{t}^{n}|\cdot t s^t\cdot \max_{\substack{\hat{p} \leq s}} \tau_{Q_{s-\hat{p}}}^4(n,\hat{p},s-\hat{p}) \big)\\
&\leq& n^{\cO(\frac{p+q}{\log^2(p+q)})} \cdot 2^{\cO(\frac{p+q}{\log\log(p+q)})} \cdot \frac{1}{x^{p}(1-x)^{q-q'}}
\end{eqnarray*}
\end{itemize}
We apply Lemma~\ref{lem:twinreduceUniverse} to this construction to get a new construction with the following parameters.
\begin{itemize} %\setlength\itemsep{-.7mm}
\item size, 
\begin{eqnarray*}
\zeta(n,p,q) &\leq& \zeta^5\left((p+q)^2,p,q\right) \cdot  (p+q)^{\cO(1)} \cdot \log n\\
&\leq & 2^{\cO(\frac{p+q}{\log\log(p+q)})}\cdot \frac{1}{x^p(1-x)^q}\cdot (p+q)^{\cO(1)} \log n
\end{eqnarray*}
\item initialization time, 
\begin{eqnarray*}
\tau_I(n,p,q) &\leq& \cO\left(\tau_I^5\left((p+q)^2,p,q\right) + \zeta^5\left((p+q)^2,p,q\right) \cdot (p+q)^{\cO(1)} \cdot n \log n\right)\\
% &\leq& \cO\left(\frac{1}{x^p(1-x)^q} \cdot \left(2^{2^{\log^4(p+q)}}+2^{\cO(\frac{p+q}{\log\log(p+q)})}\right) \cdot (p+q)^{\cO(1)} \cdot n \log n\right)\\
% &\leq& \cO\left(\frac{1}{x^p(1-x)^q} \cdot 2^{2^{\log^4(p+q)}} \cdot (p+q)^{\cO(1)} \cdot n \log n\right)
% &=&\cO\left(\frac{2^{2^{(2\log\log(p+q))^4}}\cdot (\log(p+q))^{\cO((\log(p+q))^2)}} {x^p(1-x)^q} + 2^{\cO(\frac{p+q}{\log\log(p+q)})}\cdot \frac{(p+q)^{\cO(1)}}{x^p(1-x)^q}\cdot n\log n \right)
&=&\cO\left(2^{\cO(\frac{p+q}{\log\log(p+q)})}\cdot \frac{1}{x^p(1-x)^q}\cdot (p+q)^{\cO(1)} n\log n \right)
\end{eqnarray*}
\item $(\chi,p')$-degree, 
\begin{eqnarray*}
\Delta_{(\chi,p')}(n,p,q) &\leq& \Delta_{(\chi_5,p')}^5\left((p+q)^2,p,q\right) \cdot  (p+q)^{\cO(1)} \cdot \log n \\
&\leq& \cO\left( 2^{\cO(\frac{p+q}{\log\log(p+q)})} \cdot \frac{1}{x^{p-p'}(1-x)^q} \cdot (p+q)^{\cO(1)} \cdot \log n\right)
\end{eqnarray*}
\item $(\chi,p')$-query time,
\begin{eqnarray*}
Q_{(\chi,p')}(n,p,q) &\leq& \cO\left(\left(Q_{(\chi_5,p')}^5\left((p+q)^2,p,q\right) + \Delta_{(\chi_5,p')}^5\left((p+q)^2,p,q\right) \right) \cdot (p+q)^{\cO(1)} \cdot \log
n\right)\\
&\leq& \cO\left( 2^{\cO(\frac{p+q}{\log\log(p+q)})} \cdot \frac{1}{x^{p-p'}(1-x)^q} \cdot (p+q)^{\cO(1)} \cdot \log n\right)
\end{eqnarray*}
\item $(\chi',q')$-degree, 
\begin{eqnarray*}
\Delta_{(\chi',q')}(n,p,q)&=&\Delta_{(\chi_5',q')}^5\left((p+q)^2,p,q\right) \cdot  (p+q)^{\cO(1)} \cdot \log n\\
&\leq& \cO\left( 2^{\cO(\frac{p+q}{\log\log(p+q)})} \cdot \frac{1}{x^{p}(1-x)^{q-q'}} \cdot (p+q)^{\cO(1)} \cdot \log n\right)
\end{eqnarray*}
\item $(\chi',q')$-query time, 
\begin{eqnarray*}
Q_{(\chi',q')}(n,p,q)&=&\cO\left(\left(Q_{(\chi_5',q')}^5\left((p+q)^2,p,q\right) + \Delta_{(\chi_5',q')}^5\left((p+q)^2,p,q\right) \right) \cdot (p+q)^{\cO(1)} \cdot \log
n\right)\\
&\leq& \cO\left( 2^{\cO(\frac{p+q}{\log\log(p+q)})} \cdot \frac{1}{x^{p}(1-x)^{q-q'}} \cdot (p+q)^{\cO(1)} \cdot \log n\right)
\end{eqnarray*}
\end{itemize}
The final construction satisfies all the claimed bounds. This concludes the proof. 
\end{proof}

\begin{lemma}
\label{thm:repset uniform general}
There is an algorithm that given a $p$-family ${\cal A}$ of sets  over a universe $U$ of size $n$,  an integer $q$, a $0<x<1$, 
and a non-negative weight function    \awf{} with maximum value at most $W$,
computes in time 
\[ \cO( {x^{-p}(1-x)^{-q}} \cdot 2^{o(p+q)}\cdot n\log n+|{\cal A}|\cdot \log |{\cal A}| \cdot  \log{W}  + |{\cal A}| \cdot  (1-x)^{-q}\cdot 2^{o(p+q)} \cdot \log n ) \] 
%\todo{should we write the factor $n$}  
%$\cO(|{\cal A}|\cdot (\frac{p+q}{q})^q  \cdot  \log n +|{\cal A}|\cdot \log |{\cal A}| \cdot  \log{W})$ 
a subfamily 
$\widehat{\cal A}\subseteq \cal A$ such that 
$|\hat{\cal A}| \leq  {x^{-p}(1-x)^{-q}} \cdot 2^{o(p+q)}  \cdot \log n$
and 
 \minrep{A}{q} (\maxrep{A}{q}).
\end{lemma}
%\begin{theorem}
%\label{thm:repset uniform general}
%There is an algorithm that given a family ${\cal A}$ of sets of size $p$ over a  universe $U$ of size $n$, an integer $q$ and $0<x<1$, computes in time 
%%$\cO(|{\cal A}|\cdot (\frac{p+q}{q})^q \cdot  \log n)$ 
%%$\cO((|{\cal A}|+nx^{-p}) \cdot  (1-x)^{-q}\cdot 2^{o(p+q)}\cdot (p+q)^{\cO(1)} \cdot \log n)$ 
%$\cO(|{\cal A}| \cdot  (1-x)^{-q}\cdot 2^{o(p+q)}\cdot (p+q)^{\cO(1)} \cdot n\log n)$ 
%\todo{should we write the factor $n$} 
%a subfamily ${\cal A}' \subseteq A$ such that 
%$|{\cal A}'| \leq  {x^{-p}(1-x)^{-q}} \cdot 2^{o(p+q)} \cdot (p+q)^{\cO(1)} \cdot \log n$
%%${p+q \choose p} \cdot 2^{o(p+q)} \cdot \log n$ 
%and ${\cal A'}$ $q$-represents ${\cal A}$.
%\end{theorem}

\begin{proof}
The algorithm first checks whether  $|{\cal A}| \leq {x^{-p}(1-x)^{-q}} \cdot 2^{o(p+q)}  \cdot \log n$. If yes then it outputs ${\cal A}$ (as  $\widehat{\cal A}$) and halts. So we assume that $|{\cal A}| > {x^{-p}(1-x)^{-q}} \cdot 2^{o(p+q)}  \cdot \log n$. 
The algorithm starts by constructing a generalized $n$-$p$-$q$-separating collection $({\cal F}, {\chi}, {\chi}')$ as guaranteed by Lemma~\ref{lem:twin_sep_coll_construction}. If $|{\cal A}| \leq |{\cal F}|$ the algorithm outputs ${\cal A}$ and halts. Otherwise it builds the set $\hat{\cal A}$ as follows. Initially $\hat{\cal A}$ is equal to $\emptyset$ and all sets in ${\cal F}$ are marked as unused. 
Now we sort the sets in ${\cal A}$ in the increasing order of weights, given by  \awf{}. The algorithm goes through every $A \in {\cal A}$ in the sorted order and queries the separating collection to get the set $\chi(A)$. It then looks for a set $F \in \chi(A)$ that is not yet marked as used. The first time such a set $F$ is found the algorithm marks $F$ as used, inserts $A$ into $\hat{\cal A}$ and proceeds to the next set in ${\cal A}$. If no such set $F$ is found the algorithm proceeds to the next set in ${\cal A}$ without inserting $A$ into $\hat{\cal A}$.

The size of $\hat{\cal A}$ is upper bounded by 
%$|{\cal F}| \leq 2^{o(p+q)}\cdot \frac{1}{x^p(1-x)^q}\cdot (p+q)^{\cO(1)} \cdot \log n$ 
$|{\cal F}| \leq {x^{-p}(1-x)^{-q}} \cdot 2^{o(p+q)}  \cdot \log n$
since every time a set is added to $\hat{\cal A}$ an unused set in ${\cal F}$ is marked as used. For the running time analysis, the initialization of $({\cal F}, {\chi})$ takes time 
$ {x^{-p}(1-x)^{-q}}\cdot (p+q)^{\cO(1)} \cdot 2^{o(p+q)}\cdot n\log n$. 
Sorting  ${\cal A}$ takes $\cO(|{\cal A}|\cdot \log |{\cal A}| \cdot  \log{W})$ time. 
For each element $A \in {\cal A}$ the algorithm first queries $\chi(A)$, using time 
$ {(1-x)^{-q}}\cdot 2^{o(p+q)}\cdot (p+q)^{\cO(1)} \cdot \log n$. Then it goes through all sets in $\chi(A)$ and checks whether they have already been marked as used, taking time $ {(1-x)^{-q}}\cdot (p+q)^{\cO(1)} \cdot 2^{o(p+q)}\cdot \log n$. Thus in total, the running time for these steps  is bounded by 
%$\cO((|{\cal A}|+nx^{-p}) \cdot  (1-x)^{-q}\cdot 2^{o(p+q)}\cdot (p+q)^{\cO(1)} \cdot \log n)$
$\cO(|{\cal A}| \cdot  (1-x)^{-q}\cdot 2^{o(p+q)} \cdot \log n +|{\cal A}|\cdot \log |{\cal A}| \cdot  \log{W})$. Adding the initialization time to this gives the claimed running time.

Finally we need to argue that  \minrep{A}{q}. Consider any set $A \in {\cal A}$ and $B$ such that $|B|=q$ and $A \cap B = \emptyset$. If $A \in  \hat{\cal A}$ we are done, so assume that $A \notin \hat{\cal A}$. 
%Since $\chi(A)$ separates $A$ from $B$ there is a set $F \in \chi(A)$ such that $A \subseteq F$ and $F \cap B = \emptyset$. 
Since $({\cal F},\chi,\chi')$ is a  $n$-$p$-$q$-separating collection, 
we have that there exists $F\in \chi(A)\cap \chi'(B)$, i.e, $A \subseteq F$ and $F \cap B = \emptyset$.   
Since  $A \notin  \hat{\cal A}$ we know that $F$ was marked as used when $A$ was considered by the algorithm. When the algorithm marked $F$ as used it also inserted a set $A'$ into $ \hat{\cal A}$, with the property that $F \in \chi(A')$. Thus $A' \subseteq F$ and hence $A' \cap B = \emptyset$. Furthermore, $A'$ was considered before $A$ and thus $w(A')\leq w(A)$. 
But $A' \in  \hat{\cal A}$, completing the proof.
\end{proof}

Next we prove a ``faster version of Lemma~\ref{thm:repset uniform general}'', that speeds up the running time to compute the representative families.

\begin{lemma}
\label{thm:repset uniform general-fasterruntime}
There is an algorithm that given a $p$-family ${\cal A}$ of sets  over a universe $U$ of size $n$,  an integer $q$, a $0<x<1$, 
and a non-negative weight function    \awf{} with maximum value at most $W$,
computes in time 
\[ \cO( (p+q)^{\cO(1)}  n\log n + |{\cal A}|\cdot \log |{\cal A}| \cdot  \log{W}  + |{\cal A}| \cdot  (1-x)^{-q}\cdot 2^{o(p+q)} \cdot \log n ) \] 
%\todo{should we write the factor $n$}  
%$\cO(|{\cal A}|\cdot (\frac{p+q}{q})^q  \cdot  \log n +|{\cal A}|\cdot \log |{\cal A}| \cdot  \log{W})$ 
a subfamily 
$\widehat{\cal A}\subseteq \cal A$ such that 
$|\hat{\cal A}| \leq  {x^{-p}(1-x)^{-q}} \cdot 2^{o(p+q)}  \cdot \log n$
and 
 \minrep{A}{q} (\maxrep{A}{q}).
\end{lemma}

\begin{proof}
The algorithm first checks whether  $|{\cal A}| \leq {x^{-p}(1-x)^{-q}} \cdot 2^{o(p+q)}  \cdot \log n$. If yes then it outputs ${\cal A}$ (as  $\widehat{\cal A}$) and halts. So we assume that $|{\cal A}| > {x^{-p}(1-x)^{-q}} \cdot 2^{o(p+q)}  \cdot \log n$. 

We start by  constructing a $(p+q)$-perfect family $f_1, \ldots, f_t$ of hash functions from $U$ to 
$[(p+q)^2]$ with $t = \cO((p+q)^{\cO(1)} \cdot \log n)$ in time  $\cO(k^{\cO(1)}n \log n)$ using Proposition~\ref{prop:hashFun}. Now we sort the sets in ${\cal A}$ in the increasing order of weights, given by  \awf{}. 
For every $f_j$, $1\leq j\leq t$,  we construct a  family $\hat{\cal A}_j$ as follows. The algorithm starts by constructing a generalized $[(p+q)^2]$-$p$-$q$-separating collection $({\cal F}_j, {\chi}_j, {\chi}'_j)$ as guaranteed by Lemma~\ref{lem:twin_sep_coll_construction}. 
%If $|{\cal A}| \leq |{\cal F}|$ the algorithm outputs ${\cal A}$ and halts. 
It builds the set $\hat{\cal A}_j$ as follows. Initially $\hat{\cal A}_j$ is equal to $\emptyset$ and all sets in ${\cal F}$ are marked as unused.  The algorithm goes through every $A \in {\cal A}$ in the sorted order and does as follows. 
\begin{itemize}
\item It first check whether every element in $A$ gets mapped to distinct integers by $f_j$. 
That is, $|\{f_j(a)~|~a\in A\}|=|A|$. If $|\{f_j(a)~|~a\in A\}|<|A|$ 
then the algorithm proceeds to the next set in ${\cal A}$ without inserting $A$ into $\hat{\cal A}$. Else, we move to the next step. 
\item It queries the separating collection to get the set $\chi(A)$. 
It looks for a set $F \in \chi_j(A)$ that is not yet marked as used. The first time such a set $F$ is found the algorithm marks $F$ as used, inserts $A$ into $\hat{\cal A}_j$ and proceeds to the next set in ${\cal A}$. If no such set $F$ is found the algorithm proceeds to the next set in ${\cal A}$ without inserting $A$ into $\hat{\cal A}_j$.
 \end{itemize}
Finally, we return $\hat{\cal A}=\bigcup_{j=1}^t \hat{\cal A}_j$. 

The size of $\hat{\cal A}_j$ is upper bounded by 
%$|{\cal F}| \leq 2^{o(p+q)}\cdot \frac{1}{x^p(1-x)^q}\cdot (p+q)^{\cO(1)} \cdot \log n$ 
$|{\cal F}| \leq {x^{-p}(1-x)^{-q}} \cdot 2^{o(p+q)}  \cdot  \log (p+q)$
since every time a set is added to $\hat{\cal A}$ an unused set in ${\cal F}$ is marked as used. 
Thus, the size of $\hat{\cal A}$ is upper bounded by 
%$|{\cal F}| \leq 2^{o(p+q)}\cdot \frac{1}{x^p(1-x)^q}\cdot (p+q)^{\cO(1)} \cdot \log n$ 
$|{\cal F}| \leq {x^{-p}(1-x)^{-q}} \cdot 2^{o(p+q)}  \cdot \log (p+q) \cdot (p+q)^{\cO(1)} \cdot \log n \leq {x^{-p}(1-x)^{-q}} \cdot 2^{o(p+q)}  \cdot \log n $. The running time analysis follows similar to the one given in 
Lemma~\ref{thm:repset uniform general}. 

Finally we need to argue that  \minrep{A}{q}. Consider any set $A \in {\cal A}$ and $B$ such that $|B|=q$ and $A \cap B = \emptyset$. If $A \in  \hat{\cal A}$ we are done, so assume that $A \notin \hat{\cal A}$. By the properties of $(p+q)$-perfect family $f_1, \ldots, f_t$ of hash functions from $U$ to $[(p+q)^2]$, there exists an integer $j \in \{1,\ldots,t\}$ such that $f_j$ is injective on $A\cup B$. We focus now on the construction of $\hat{\cal A}_j$. 
Since $({\cal F}_j,\chi_j,\chi'_j)$ is a  $[(p+q)^2]$-$p$-$q$-separating collection, 
we have that there exists $F\in \chi_j(A)\cap \chi'_j(B)$, i.e, $A \subseteq F$ and $F \cap B = \emptyset$.   
Since   $A \notin  \hat{\cal A}_j$  (as $A \notin  \hat{\cal A}$) we know that $F$ was marked as used when $A$ was considered by the algorithm. When the algorithm marked $F$ as used it also inserted a set $A'$ into $ \hat{\cal A}$, with the property that $F \in \chi(A')$. Thus $A' \subseteq F$ and hence $A' \cap B = \emptyset$. Furthermore, $A'$ was considered before $A$ and thus $w(A')\leq w(A)$. 
But $A' \in  \hat{\cal A}_j\subseteq \hat{\cal A}$, completing the proof.
\end{proof}

While applying Lemma~\ref{thm:repset uniform general-fasterruntime} we can reduce the 
universe size to at most $|{\cal A}|p+q$. The next lemma formalizes this.
% and hence we get the following result.  
\begin{lemma}
\label{cor:repset uniform general2}
There is an algorithm that given a $p$-family ${\cal A}$ of sets  over a universe $U$ of size $n$,  an integer $q$, 
a $0<x<1$ 
and a non-negative weight function    \awf{} with maximum value at most $W$,
computes in time 
\[ \cO(|{\cal A}|\cdot \log |{\cal A}| \cdot  \log{W}  + |{\cal A}| \cdot  (1-x)^{-q}\cdot 2^{o(p+q)} \cdot \log n ) \] 
%$\cO(|{\cal A}| \cdot  (1-x)^{-q}\cdot 2^{o(p+q)} \cdot \log n +|{\cal A}|\cdot \log |{\cal A}| \cdot  \log{W} )$ 
%\todo{should we write the factor $n$}  
%$\cO(|{\cal A}|\cdot (\frac{p+q}{q})^q  \cdot  \log n +|{\cal A}|\cdot \log |{\cal A}| \cdot  \log{W})$ 
a subfamily 
$\widehat{\cal A}\subseteq \cal A$ such that 
$|\widehat{\cal A}| \leq  {x^{-p}(1-x)^{-q}} \cdot 2^{o(p+q)}  \cdot \log |{\cal A}|$
and 
 \minrep{A}{q} (\maxrep{A}{q}).
\end{lemma} 
\begin{proof} 
We first construct a new universe $U'$ as follows. If $n\leq |{\cal A}|p+q$, then we set $U'=U$, otherwise $U'$ will consist of elements from $U$,   
which are part of any set in ${\cal A}$ and $q$ new elements. The universe $U'$ can be constructed in $\cO(|{\cal A}|p+q)$ time. Also note that $|U'|\leq |{\cal A}|p+q$ 
and $|U'|\leq n$. Now we claim that a $q$-representative family $\widehat{\cal A}$ of ${\cal A}$ with respect to the universe $U'$ is  also the required representative family over $U$. 
Suppose $X\in {\cal A}$ and $Y\subseteq U$, $|Y|\leq q$ such that $X\cap Y=\emptyset$. Let $Y'=Y\setminus U'$ and let $Y''$ be an arbitrary subset of size $|Y'|$ of $U'\setminus U$. 
Let $Z=(Y\setminus Y')\cup Y''$. It is easy to see that $|Z|=|Y|$ and $X\cap Z=\emptyset$. By the definition of $q$-representative family, there exists $\widehat{X}\in \widehat{A}$ such that 
$\widehat{X}\cap Z=\emptyset$. Since $Y'\cap \widehat{X}=\emptyset$, we have that $\widehat{X}\cap Y=\emptyset$. 

Thus we apply Lemma~\ref{thm:repset uniform general-fasterruntime} to compute  $q$-representative family $\widehat{\cal A}$ of ${\cal A}$ with respect to the universe $U'$ and output it as the desired family. The 
 claimed running time as well as the size bound on the output representative family follow by substituting the upper bound on $|U'|$ in the bounds coming from Lemma~\ref{thm:repset uniform general-fasterruntime}. 
% The algorithm  first upper bounds the universe size to $|{\cal A}|p+q$ by deleting all but $q$ elements 
% from the elements which are not part of any set in ${\cal A}$. It is easy to see the correctness of this step. This step can be implemented in $\cO(n+|{\cal A}|p)$ time. Thus, after this  universe reduction we can assume that $|U|\leq |{\cal A}|p+q$. 
% Next we apply Lemma~\ref{thm:repset uniform general-fasterruntime} and get the claimed running time as well as the size bound on the output 
% representative family. 
\end{proof}

Finally, we give our main theorem. 

\begin{theorem}
\label{thm:repset uniform general3}
There is an algorithm that given a $p$-family ${\cal A}$ of sets  over a universe $U$ of size $n$,  an integer $q$, 
a $0<x<1$
and a non-negative weight function    \awf{} with maximum value at most $W$,
computes in time 
\[ \cO(|{\cal A}|\cdot \log |{\cal A}| \cdot  \log{W}  + |{\cal A}| \cdot  (1-x)^{-q}\cdot 2^{o(p+q)} \cdot \log n ) \] 
%\[ \cO( {x^{-p}(1-x)^{-q}} \cdot 2^{o(p+q)}\cdot n\log n+|{\cal A}|\cdot \log |{\cal A}| \cdot  \log{W}  + |{\cal A}| \cdot  (1-x)^{-q}\cdot 2^{o(p+q)} \cdot \log n ) \] 
%$\cO(|{\cal A}| \cdot  (1-x)^{-q}\cdot 2^{o(p+q)} \cdot \log n +|{\cal A}|\cdot \log |{\cal A}| \cdot  \log{W} )$ 
%\todo{should we write the factor $n$}  
%$\cO(|{\cal A}|\cdot (\frac{p+q}{q})^q  \cdot  \log n +|{\cal A}|\cdot \log |{\cal A}| \cdot  \log{W})$ 
a subfamily 
$\widehat{\cal A}\subseteq \cal A$ such that 
$|\hat{\cal A}| \leq  {x^{-p}(1-x)^{-q}} \cdot 2^{o(p+q)} $
and 
 \minrep{A}{q} (\maxrep{A}{q}).
\end{theorem} 
\begin{proof}
Let ${\cal A}={\cal A}_1$. 
We compute a sequence of representative families 
 $${\cal A}_2\subseteq_{minrep}^q {\cal A}_1, 
 %{\cal A}_3\subseteq_{minrep}^q {\cal A}_2,
% \cal A}_3\subseteq_{minrep}^q {\cal A}_2,
\cdots,{\cal A}_m\subseteq_{minrep}^q {\cal A}_{m-1}$$
 using Corollary~\ref{cor:repset uniform general2}, such that $m$ is the least integer with the property that $|{\cal A}_m|\geq |{\cal A}_{m-1}|/2$. In other words, for all $i<m$ we have that 
$|{\cal A}_i|\leq |{\cal A}_{i-1}|/2$ and $|{\cal A}_m|\geq |{\cal A}_{m-1}|/2$. We output ${\cal A}_m$ as the $q$-representative family for ${\cal A}$. The correctness of this following from  Lemma~\ref{lem:reptransitive}.
%  ${\cal A}_m$ is a $q$-representative family for ${\cal A}$.
By Corollary~\ref{cor:repset uniform general2}, 
\begin{eqnarray*}
|{\cal A}_m|&\leq& {x^{-p}(1-x)^{-q}} \cdot 2^{o(p+q)}  \cdot \log |{\cal A}_{m-1}|\\
&\leq& {x^{-p}(1-x)^{-q}} \cdot 2^{o(p+q)}  \cdot \log 2|{\cal A}_{m}|\\
\mbox{ Thus, } \frac{|{\cal A}_m|}{\log |{\cal A}_m| }&\leq& {x^{-p}(1-x)^{-q}} \cdot 2^{o(p+q)}.
\end{eqnarray*}
We know that for some number $a$ and $b$, if $a \leq b $ then $a \log^2 a \leq b \log^2 b$. Applying this identity we get the following. 
\begin{eqnarray*}
\frac{|{\cal A}_m|}{\log |{\cal A}_m| } \log^2 \left(\frac{|{\cal A}_m|}{\log |{\cal A}_m| }\right) &\leq& {x^{-p}(1-x)^{-q}} \cdot 2^{o(p+q)}
%\quad (\because LHS \log^2 LHS \leq RHS \log^2 RHS)
\end{eqnarray*}
The above inequality implies that 
\begin{eqnarray*}
|{\cal A}_m| \leq \frac{|{\cal A}_m|}{\log |{\cal A}_m| } \log^2 \left(\frac{|{\cal A}_m|}{\log |{\cal A}_m| }\right) &\leq& {x^{-p}(1-x)^{-q}} \cdot 2^{o(p+q)}
%\quad (\because LHS \log^2 LHS \leq RHS \log^2 RHS)
\end{eqnarray*}
and thus $|{\cal A}_m| \leq  {x^{-p}(1-x)^{-q}} \cdot 2^{o(p+q)}$. 
By Lemma~\ref{cor:repset uniform general2}, the total running time $T$ to compute ${\cal A}_m$ is, 
\begin{eqnarray*}
T&=&  \sum_{i=1}^{m-1}   |{\cal A}_i|\cdot \log |{\cal A}_i| \cdot  \log{W}  + |{\cal A}_i| \cdot  (1-x)^{-q}\cdot 2^{o(p+q)} \cdot \log n ) \\
&=&   \sum_{i=1}^{m-1}\cO\Big(\frac{|{\cal A}|}{2^{i-1}}\cdot \log |{\cal A}| \cdot  \log{W} + \frac{|{\cal A}|}{2^{i-1}} \cdot  (1-x)^{-q}\cdot 2^{o(p+q)} \cdot \log n \Big) \quad (\mbox{since } |{\cal A}_i|\leq \frac{|{\cal A}|}{2^{i-1}}) \\ 
%& & \quad (\mbox{since } |{\cal A}_i|\leq \frac{|{\cal A}|}{2^{i-1}}) \\
%&=&\sum_{i=1}^{m-1}\cO(\frac{|{\cal A}|}{2^{i-1}} \cdot  (1-x)^{-q}\cdot 2^{o(p+q)} \cdot \log n +\frac{|{\cal A}|}{2^{i-1}}\cdot \log |{\cal A}| \cdot  \log{W} ) \quad (\because |{\cal A}_i|\leq \frac{|{\cal A}|}{2^{i-1}})\\
&=&\cO(|{\cal A}|\cdot \log |{\cal A}| \cdot  \log{W}  + |{\cal A}| \cdot  (1-x)^{-q}\cdot 2^{o(p+q)} \cdot \log n )
\end{eqnarray*}
This concludes the proof. 
%\begin{eqnarray*}
%T&=&\sum_{i=1}^{m-1}\cO({x^{-p}(1-x)^{-q}} \cdot 2^{o(p+q)}\cdot n\log n+|{\cal A}_i| \cdot  (1-x)^{-q}\cdot 2^{o(p+q)} \cdot \log n +|{\cal A}_i|\cdot \log |{\cal A}_i| \cdot  \log{W} )\\
%&=&\sum_{i=1}^{m-1}\cO(\frac{|{\cal A}|}{2^{i-1}} \cdot  (1-x)^{-q}\cdot 2^{o(p+q)} \cdot \log n +\frac{|{\cal A}|}{2^{i-1}}\cdot \log |{\cal A}| \cdot  \log{W} ) \quad (\because |{\cal A}_i|\leq \frac{|{\cal A}|}{2^{i-1}})\\
%&=&\cO(|{\cal A}| \cdot  (1-x)^{-q}\cdot 2^{o(p+q)} \cdot \log n +|{\cal A}|\cdot \log |{\cal A}| \cdot  \log{W} )
%\end{eqnarray*}
%\todo[inline]{to clean this proof.}
\end{proof}

The size of the output representative family in Theorem~\ref{thm:repset uniform general3} is minimized when $x=\frac{p}{p+q}$. By substituting $x=\frac{p}{p+q}$ in Theorem~\ref{thm:repset uniform general3} we get the following corollary.
%\todo{proof??} 
\begin{corollary}
\label{coro:uniformmatroidweighted}
There is an algorithm that given a $p$-family ${\cal A}$ of sets  over a universe $U$ of size $n$,  an integer $q$, 
and a non-negative weight function    \awf{} with maximum value at most $W$,
computes in time 
%$\cO(|{\cal A}|\cdot (\frac{p+q}{q})^q \cdot 2^{o(p+q)} \cdot \log n +|{\cal A}|\cdot \log |{\cal A}| \cdot  \log{W})$
%\todo{factor $n$?} 
\[ \cO( |{\cal A}|\cdot \log |{\cal A}| \cdot  \log{W}  + |{\cal A}| \cdot  (1-x)^{-q}\cdot 2^{o(p+q)} \cdot \log n ) \] 
a subfamily 
$\widehat{\cal A}\subseteq \cal A$ such that $|\widehat{{\cal A}}| \leq {p+q \choose p} \cdot 2^{o(p+q)} $ and 
 \minrep{A}{q} (\maxrep{A}{q}).
\end{corollary}

\section{Applications}\label{section:application}
In this section we demonstrate how the  efficient construction of representative families can be used to design 
single-exponential  parameterized  and exact exponential time algorithms.  Our applications include best known 
deterministic algorithms for {\sc Long Directed Cycle},  \textsc{Minimum Equivalent Graph}, {\sc $k$-Path} 
and {\sc $k$-Tree}. 
% We also provide alternate deterministic algorithms running in time $2^{\cO(t)} n$ for ``connectivity''   problems such as \textsc{Hamiltonian Cycle} 
% or \textsc{Steiner Tree} on $n$-vertex graphs of treewidth at most $t$.     

\medskip

%\todo{do we need the following pararaph here, or should we keep necessary notations in Long cycle subsection}
Let \mat{}  be a matroid with the ground set of size $n$ and $ \cS = \{S_1,\ldots, S_t\}$ be a $p$-family of independent sets.  Then for specific matroids 
we use the following notations to denote the time required  to compute  the following  $q$-representative families of  $ \cS$:
%a 
%$p$-family of independent sets. 
%
%\begin{center}
%\begin{tabular}{|l|c|}
%\hline
%% after \\ : \hline or \cline{col1-col2} \cline{col3-col4} ...
%  \tc{rm}{t}{p}{q} &    \rep{S}{q} of size $
%  \binom{p+q}{q}$, when $M$ is a linear matroid.  \\
% \hline
%  \tc{um}{t}{p}{q}  &       \rep{S}{q} of size ${p+q \choose p} \cdot 2^{o(p+q)} \cdot \log n$, when $M$ is a uniform matroid. \\
% %  \hline
%%   \tc{gm}{t}{p}{q} &  Time to compute  \rep{S}{q} when $M$ is graphic matroid.  \\
%\hline
%\end{tabular}
%\end{center}

\begin{itemize}
\item
  \tc{rm}{t}{p}{q}  is the time required to compute a family   \rep{S}{q} of size $
  \binom{p+q}{q}$, when $M$ is a linear matroid.  
 
\item  \tc{um}{t}{p}{q} is the time     required to compute a family   \rep{S}{q} of size ${p+q \choose p} \cdot 2^{o(p+q)} \cdot \log n$, when $M$ is a uniform matroid and $x$ is chosen to be $\frac{p}{p+q}$.  
\end{itemize}

Let us remind, that by Theorem~\ref{thm:repsetlovasz}, when rank of $M$ is  $p+q$, \tc{rm}{t}{p}{q} is bounded by  \tgem \, multiplied by  the time required to perform operations over $ \mathbb{F}$.
By Corollary~\ref{coro:uniformmatroidweighted},  \tc{um}{t}{p}{q}$=  \cO(t\cdot (\frac{p+q}{q})^q \cdot \log n)$
% \todo{factor $n$?}. 

\subsection{Long Directed Cycle}
\label{subsection:LDC}
%!TEX root = repset-main.tex
In this section we give our first application of algorithms based on representative families.  We study the following problem.

 \defparproblem{ {\sc  Long Directed Cycle} }{A $n$-vertex and $m$-arc directed graph $D$ and a positive integer $k$.}{$k$ }
 {Does there exist a directed cycle of length  at least $ k$ in $D$?}

\bigskip

\noindent 
Observe that the  {\sc  Long Directed Cycle} problem is different from the well-known problem of finding a directed cycle of length {\em exactly} $k$. It is quite possible that the only directed cycle that has length at least $k$ is much longer than $k$, and possibly even is a Hamiltonian cycle. 
Let $D$ be a directed graph,   $k$ be a positive integer, and  \mat{} be a uniform matroid $U_{n,2k}$ where $E=V(D)$  and 
${\cal I}=\{S\subseteq V(D)~|~|S|\leq 2k\}.$ 
 In this subsection whenever we talk about independent sets, these are independent sets of the uniform matroid $U_{n,2k}$.  
 For a pair of vertices $u,v\in V(D)$, we define 
 \begin{eqnarray*}
 {\cal P}_{uv}^i& = & \Big\{X~\Big|~X\subseteq V(D),~u, v \in X, ~|X|=i, \mbox{ and there is a directed $uv$-path  in $D$  } \\ 
   & & \hspace{1cm} \mbox{ of  length $i-1$      with all the vertices  belonging to $X$}. \Big\}
  \end{eqnarray*}
%Now we are ready to state and prove the main structural lemma. 

We start with a structural lemma %which relates representative families and the {\sc  Longest Directed Cycle}  problem. This
providing the  key insight to our algorithm. 

\begin{figure}[t]
\begin{center}
\scalebox{.7}{
\includegraphics[height=8cm]{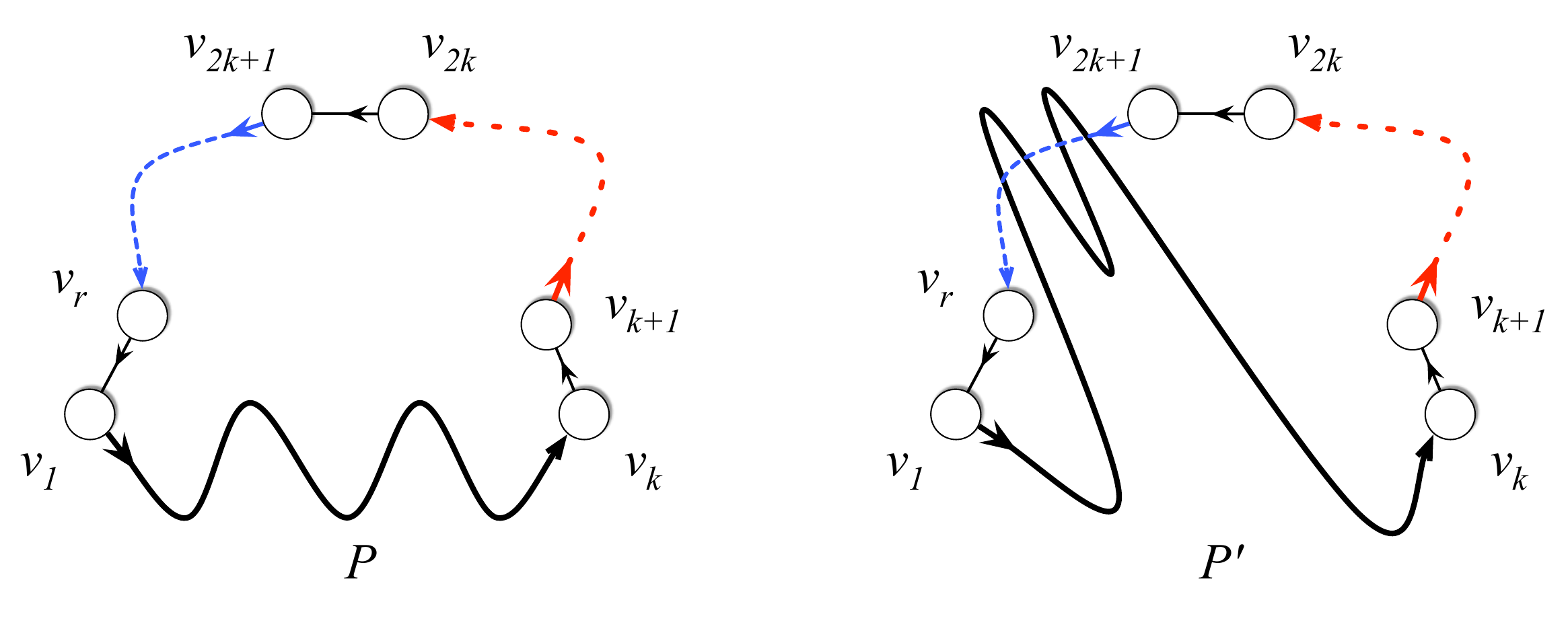}
}

\caption{Illustration to the proof of Lemma~\ref{lem:directedcyle}. }
\label{fig:dircycleproof}
\end{center}
\end{figure}

%\begin{lemma}
%\label{lem:directedcyle}
%Let $D$ be a directed graph. Then $D$ has a directed  cycle of length at least $k$ if and only if  there exists a pair of vertices $u,v\in V(D)$ and 
%  $P' \in  \widehat{{\cal P}}_{uv}^k \subseteq_{rep}^k  {\cal P}^k_{uv}$ such that $D$ has a directed cycle $C$ and in this cycle vertices of  $P'$ induce a directed path (that is, vertices of  $P'$ form a consecutive segment in $C$). 
%\end{lemma}
\begin{lemma}
\label{lem:directedcyle}
Let $D$ be a directed graph. Then $D$ has a directed  cycle of length at least $k$ if and only if  there exists a pair of vertices $u,v\in V(D)$ and 
  $X \in  \widehat{{\cal P}}_{uv}^k \subseteq_{rep}^k  {\cal P}^k_{uv}$ such that $D$ has a directed cycle $C$ and in this cycle vertices of  $X$ induce a directed path (that is, vertices of  $X$ form a consecutive segment in $C$). 
\end{lemma}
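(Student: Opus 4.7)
The backward direction $(\Leftarrow)$ is essentially a tautology: if some $X \in \widehat{\cal P}_{uv}^k$ appears as a consecutive segment in a directed cycle $C$ of $D$, then all $k$ distinct vertices of $X$ lie on $C$, so $|V(C)| \geq k$. For the forward direction, fix a directed cycle $C^* = v_0 v_1 \cdots v_{\ell-1} v_0$ in $D$ of length $\ell \geq k$. Set $u := v_0$, $v := v_{k-1}$, and $X_0 := \{v_0, \ldots, v_{k-1}\}$, so that the subpath $v_0 \to v_1 \to \cdots \to v_{k-1}$ certifies $X_0 \in {\cal P}_{uv}^k$. Let $P^*$ denote the complementary $v$-to-$u$ subpath of $C^*$ and write $Y_0 := \{v_k, \ldots, v_{\ell-1}\}$ for its internal vertex set. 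The idea is to feed the pair $(X_0, Y_0)$ into the $k$-representative property and obtain a candidate $\hat X$ that still closes up into a cycle via $P^*$.

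I would handle first the case $\ell \leq 2k$, which is the heart of the lemma. Here $|Y_0| = \ell - k \leq k$, $X_0 \cap Y_0 = \emptyset$ (since $C^*$ is simple), and $|X_0 \cup Y_0| = \ell \leq 2k$, so $X_0 \cup Y_0 \in \mathcal{I}$ in the uniform matroid $U_{n,2k}$. Applying the $k$-representative property to the pair $(X_0, Y_0)$ produces some $\hat X \in \widehat{\cal P}_{uv}^k$ with $\hat X \cap Y_0 = \emptyset$ and $\hat X \cup Y_0 \in \mathcal{I}$. Since $\hat X \in {\cal P}_{uv}^k$, there is a directed $u$-to-$v$ path $P$ in $D$ whose vertex set is exactly $\hat X$; concatenating $P$ with $P^*$ yields a closed directed walk whose vertex set is $\hat X \cup Y_0$. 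The disjointness of $\hat X$ and $Y_0$, together with the internal distinctness of $P$ and $P^*$, forces every vertex of the walk to occur only once, so it is a simple directed cycle $C$ of $D$, and $\hat X$ forms the consecutive segment of $C$ corresponding to $P$.

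The main obstacle is the case $\ell > 2k$, where $Y_0$ has more than $k$ elements and $X_0 \cup Y_0$ fails to be independent in $U_{n,2k}$, so the representative property cannot be invoked directly on the pair $(X_0, Y_0)$. I expect the fix is to reduce to the $\ell \leq 2k$ case by finding a suitable shorter cycle through $u$ and $v$ (the construction suggested by Figure~\ref{fig:dircycleproof}): one either replaces the long tail of $C^*$ by a short $v$-to-$u$ detour in $D$ whose internal vertex set has size at most $k$, or one argues that in the absence of such a shortcut the family ${\cal P}_{uv}^k$ is so constrained that $X_0$ itself must already lie in $\widehat{\cal P}_{uv}^k$, in which case $X_0$ together with $P^*$ provides the required cycle $C$ with $X = X_0$ directly.
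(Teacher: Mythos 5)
Your backward direction and your case $\ell \leq 2k$ are correct and coincide with the paper's Case A (the paper takes $Y$ to be the internal vertices of the complementary subpath, exactly as you do, and replaces the first segment by the representative path). The genuine gap is the case $\ell > 2k$, which you leave as a guess, and neither of your two proposed fixes works. A ``short $v$-to-$u$ detour with at most $k$ internal vertices'' need not exist: if, say, the only cycle of length at least $k$ in $D$ has length $100k$ and $D$ has no other arcs off that cycle beyond some extra $u$--$v$ paths, every return route from $v$ to $u$ is long, so you cannot reduce to the $\ell \leq 2k$ case by shortening the cycle. Your fallback, that in this situation $X_0$ itself must lie in $\widehat{{\cal P}}_{uv}^k$, has no justification: the representative family is only guaranteed to contain, for each small extension set $Y$, \emph{some} compatible member, and when ${\cal P}^k_{uv}$ is large it may well discard $X_0$.

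The paper's resolution of the long case rests on an extremal choice you never make: take $C^*=v_1\cdots v_r v_1$ to be a \emph{shortest} cycle of length at least $k$. When $r \geq 2k+1$, split $C^*$ into $P=v_1\cdots v_k$, $Q=v_{k+1}\cdots v_{2k}$ and $R=v_{2k+1}\cdots v_r$, and invoke the representative property only with $Y=Q$ (which has size exactly $k$, so independence in $U_{n,2k}$ is automatic); this yields $P'$ with $X=V(P')\in\widehat{{\cal P}}_{uv}^k$ and $X\cap Q=\emptyset$, but possibly $X\cap R\neq\emptyset$. The key step, missing from your proposal, is that an intersection with $R$ is impossible: letting $v_\alpha$ be the last vertex of $X\cap R$ along $P'$, the concatenation of $P'[v_\alpha,v_k]$, $Q$, and the prefix $R'$ of $R$ ending at $v_{\alpha-1}$ (closed by the arc $v_{\alpha-1}v_\alpha$ of $C^*$) is a simple cycle of length at least $|Q|=k$ that is strictly shorter than $C^*$ (it misses $v_1$ and $v_\alpha$ from $P'$ and $R$ respectively), contradicting the minimality of $C^*$. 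Hence $X$ is disjoint from $Q\cup R$, and replacing $P$ by $P'$ in $C^*$ gives the required cycle containing $X$ as a consecutive segment. Without the minimality of $C^*$ and the window trick of using only the $k$ vertices of $Q$ as the set $Y$, the lemma does not follow, so as written your argument only establishes the statement when $D$ has a cycle of length between $k$ and $2k$.
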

\begin{proof}
The reverse direction of the proof is straightforward---if cycle $C$ contains a path  of length $k$, the length of $C$ is at least $k$. We proceed with the proof of the forward direction. Let $C^*=v_1v_2 \cdots  v_{r}v_1$ be a smallest 
directed cycle in $D$ of length at least $k$. That is, $r\geq k$ and there is no directed cycle of length $r'$ where $k \leq r'<r$. We consider two cases.

\medskip\noindent \textbf{Case A: $r\leq 2k$.}
If $r\leq 2k$, then we take $u=v_1$ and $v=v_k$. We define paths  $P=v_1v_2\cdots v_k$  and    $Q=v_{k+1}\cdots v_r$. Because $|Q|\leq k$, by the definition  of $\widehat{{\cal P}}_{uv}^k \subseteq_{rep}^k  {\cal P}^k_{uv}$,  there exists a directed $uv$-path $P'$ such that $X=V(P') \in  \widehat{{\cal P}}_{uv}^k $  and $X\cap Q=\emptyset$. By replacing 
$P$ with $P'$ in $C^*$ we obtain a directed  cycle $C$  of length at least $k$ containing $P'$ as a subpath. 
%In this case we put $X=V(P')$.

%the assertion follows from the property of 
\medskip\noindent \textbf{Case B: $r\geq 2k+1$.}
%From now onwards we assume that $r\geq 2k+1$. Let  $C^*=v_1 \cdots v_k v_{k+1}\cdots v_{2k} \cdots v_{r}v_1 $ be a directed cycle. 
In this case we set $u=v_1$, $v=v_k$, and split $C^*$ into three paths $P=v_1\cdots v_k$, $Q=v_{k+1}\cdots v_{2k}$, and $R=v_{2k+1}\cdots v_r$.  Since $|Q|=k$ and $\widehat{{\cal P}}_{uv}^k \subseteq_{rep}^k  {\cal P}^k_{uv}$, it follows that   there exists an $uv$-path $P'$ such that $X=V(P') \in  \widehat{{\cal P}}_{uv}^k $  and $X\cap Q=\emptyset$.  However, $P'$ is not necessarily disjoint with $R$ and by replacing 
$P$ with $P'$ in $C^*$ we can obtain a {\em closed walk} $C'$  containing $P'$ as a subpath.  See Fig.~\ref{fig:dircycleproof} for an illustration.  

If $X\cap R=\emptyset$,  then $C'$ is a simple cycle and we   take $C'$ as  the desired $C$. 
We claim that this is the only possibility.  
Let us assume targeting towards a  contradiction that $X\cap R\neq \emptyset$. We want to show that in this case there is a cycle of length at least $k$ but shorter than $C^*$, contradicting the choice of $C^*$. Let $v_\alpha$ be the last vertex in $X\cap R$ when we walk from $v_1$ to $v_k$ along $P'$. Let $P'[v_\alpha, v_k]$ be the 
subpath of $P'$ starting at $v_\alpha$ and ending at $v_k$. If $v_\alpha=v_{2k+1}$, we set $R'=\emptyset$. Otherwise  we put  
  $R'=R[v_{2k+1}, v_{\alpha -1}]$ to be the subpath of $R$ starting at 
$v_{2k+1}$ and ending at $v_{\alpha-1}$. Observe that since the  arc $v_{\alpha-1}v_\alpha$ is present in $D$ (in fact it is   an arc of  the cycle $C^*$), we have that  
$\overline{C}=P'[v_\alpha ,v_k]QR'$ is a simple cycle in $D$. Clearly, $|\overline{C}|\geq |Q|\geq k$. Furthermore, since $v_1$ is not present in 
$P'[v_\alpha, v_k]$ we have that $|P'[v_\alpha, v_k]|<|P'|=|P|$. Similarly since $v_\alpha$ is not present in $R'$, we have that $|R'|<|R|$. Thus  we have
\[ k\leq | \overline{C}|=|P'[{v_\alpha, v_k}]|+|Q|+|R'|< |P|+|Q|+|R|=|C^*|. \]
This implies that $ \overline{C}$ is a directed simple cycle of length at least $k$ and  strictly smaller than $r$. This is a contradiction. % to our assumption that there is no directed simple cycle of length $r'$ where $k \leq r'<r$ in $D$. 
Hence  by replacing 
$P$ with $P'$ in $C^*$ we obtain a directed cycle $C$  containing $P'$ as a subpath. This concludes the proof. 
\end{proof}

Next   lemma provides  an efficient computation  of    family   $\widehat{{\cal P}}_{uv}^k \subseteq_{rep}^k  {\cal P}^k_{uv}$. 
The next lemma is provided to give a simple exposition of representative families based dynamic programming algorithm. 
  
%Towards this we give the following lemma. 

\begin{lemma}
\label{lem:pathrepsetfinder}
Let $D$ be a directed/unidrected graph with $n$ vertices and $m$ edges, $u\in V(D)$ and 
 \mat{} be an uniform matroid $U_{n,\ell}$ where $E=V(D)$  and ${\cal I}=\{S\subseteq V(D)~|~|S|\leq \ell\}$. 
 Then for  every $p\leq \ell$ and  $v\in V(D)\setminus \{u\}$, a family $\widehat{{\cal P}}_{uv}^p \subseteq_{rep}^{\ell-p}  {\cal P}^p_{uv}$ 
 of size at most 
 \[{\ell \choose p} \cdot 2^{o(\ell)} 
 \]
 can be found 
 in time   
\[ \cO\left( 2^{o(\ell)}   m \log n \max_{i\in [p]} \left\{{\ell \choose i-1}   \left( \frac{\ell}{\ell-i} \right)^{\ell-i} \right\} \right).\]
 %\[\cO\left(\max_{i\in [p]} {l \choose i}^{\omega} p \ell m \log n\right)\]
% we can compute  $\widehat{{\cal P}}_{uv}^p \subseteq_{rep}^{\ell-p}  {\cal P}^p_{uv}$ for all 
% $v\in V(D)\setminus \{u\}$. 
 Furthermore, within the same running time every set in $\widehat{{\cal P}}_{uv}^p$ can be ordered in a way that 
 it corresponds to a directed  (undirected) path in $D$. 
\end{lemma}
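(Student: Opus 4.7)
The plan is a dynamic programming on path length, where at each step we invoke the representative-family algorithm for uniform matroids (Theorem~\ref{thm:fastRepUniform}) to keep the family sizes small. Concretely, for $i=1,2,\dots,p$ and every vertex $v \in V(D)$, I will compute a family $\widehat{{\cal P}}_{uv}^i \subseteq_{rep}^{\ell-i} {\cal P}_{uv}^i$ of size at most ${\ell \choose i}\cdot 2^{o(\ell)}\cdot \log n$, alongside, for each set $X$ in the family, a specific $u$-to-$v$ path in $D$ using exactly the vertices of $X$ (this handles the ordering claim at the end). The base case is $\widehat{{\cal P}}_{uu}^1 = \{\{u\}\}$ (and $\widehat{{\cal P}}_{uw}^1=\emptyset$ for $w\neq u$).

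For the inductive step, observe that every set $X \in {\cal P}_{uv}^{i+1}$ arises as $X' \cup \{v\}$ for some predecessor $w \in N^-(v)$ and some $X' \in {\cal P}_{uw}^i$ with $v \notin X'$. Thus
\[
{\cal P}_{uv}^{i+1} = \bigcup_{w \in N^-(v)} \bigl({\cal P}_{uw}^{i} \bullet \{v\}\bigr).
\]
I will compute, for each edge $wv$, the family ${\cal Q}_{uv}^{i+1,w} := \widehat{{\cal P}}_{uw}^{i} \bullet \{v\}$, take ${\cal Q}_{uv}^{i+1} := \bigcup_w {\cal Q}_{uv}^{i+1,w}$, and then apply Theorem~\ref{thm:fastRepUniform} to ${\cal Q}_{uv}^{i+1}$ with parameters $p'=i+1$, $q'=\ell-i-1$ to obtain $\widehat{{\cal P}}_{uv}^{i+1}$. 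Correctness is routine: Lemma~\ref{lem:repconvolution} applied with the trivial family $\{\{v\}\}$ (which is its own representative) shows that each ${\cal Q}_{uv}^{i+1,w}$ $(\ell-i-1)$-represents ${\cal P}_{uw}^{i}\bullet\{v\}$; Lemma~\ref{lem:repunion} then gives that ${\cal Q}_{uv}^{i+1}$ $(\ell-i-1)$-represents ${\cal P}_{uv}^{i+1}$; and Lemma~\ref{lem:reptransitive} chains this with the output of Theorem~\ref{thm:fastRepUniform} to conclude $\widehat{{\cal P}}_{uv}^{i+1} \subseteq_{rep}^{\ell-i-1} {\cal P}_{uv}^{i+1}$. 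The final size bound ${\ell \choose i+1}\cdot 2^{o(\ell)}\cdot \log n$ is immediate from Theorem~\ref{thm:fastRepUniform}.

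For the running time, the real work is a careful accounting. At stage $i\to i+1$, the total size of the input families $\bigcup_v {\cal Q}_{uv}^{i+1}$ summed over all $v$ is at most $\sum_{wv \in A(D)} |\widehat{{\cal P}}_{uw}^i| \le m \cdot {\ell \choose i}\cdot 2^{o(\ell)}\cdot \log n$. By Theorem~\ref{thm:fastRepUniform}, the per-vertex cost of computing $\widehat{{\cal P}}_{uv}^{i+1}$ from ${\cal Q}_{uv}^{i+1}$ is $\cO\bigl(|{\cal Q}_{uv}^{i+1}|\cdot (\tfrac{\ell}{\ell-i-1})^{\ell-i-1}\cdot 2^{o(\ell)}\cdot \log n\bigr)$; summing over $v$, the total cost of stage $i\to i+1$ is
\[
\cO\!\left(2^{o(\ell)}\cdot m\cdot \log^2 n \cdot {\ell \choose i}\cdot \Bigl(\tfrac{\ell}{\ell-i-1}\Bigr)^{\ell-i-1}\right).
\]
Summing over $i\in\{0,1,\dots,p-1\}$ and bounding the sum by $p$ times the maximum (with the factor $p$ absorbed into $2^{o(\ell)}$) yields the claimed bound, after a reindexing shift ($i+1 \to i$) that matches the statement.

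The main obstacle is purely bookkeeping in the time analysis: one must make sure that the per-stage cost genuinely splits along edges (rather than paying for the full sum at every vertex), and that the $\log n$ and $2^{o(\ell)}$ factors from each invocation of Theorem~\ref{thm:fastRepUniform} telescope to the single $\log^2 n$ and $2^{o(\ell)}$ in the statement. Correctness of the path-ordering addendum is handled by storing, with every set $X$ inserted into $\widehat{{\cal P}}_{uv}^i$, an explicit ordering $u,\dots,v$ witnessing membership in ${\cal P}_{uv}^i$: when $X=X'\cup\{v\}$ is produced from $X'\in \widehat{{\cal P}}_{uw}^{i-1}$ via the edge $wv$, we append $v$ to the stored ordering of $X'$; representative-set reductions only remove sets, never modify them, so these orderings survive unchanged.
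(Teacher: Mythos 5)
Your proposal is correct and follows essentially the same route as the paper: a dynamic program over path lengths with the recurrence $\bigcup_{w\in N^-(v)}\widehat{{\cal P}}^{i}_{uw}\bullet\{v\}$, a call to Theorem~\ref{thm:fastRepUniform} at every step combined with Lemma~\ref{lem:reptransitive}, per-edge accounting of the query sizes for the running time, and storing the build order of each set to recover the path. The only (harmless) deviations are that you derive the intermediate representation claim from Lemmata~\ref{lem:repunion} and~\ref{lem:repconvolution} where the paper proves it directly as a standalone claim, and you start the induction at $i=1$ with $\{\{u\}\}$ rather than at $i=2$.
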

\begin{proof}
We prove the lemma only for digraphs. The proof for undirected graphs is analogous and we only point out the differences with the proof for the directed case.   We describe a dynamic programming based algorithm. Let $V(D)=\{u,v_1,\ldots,v_{n-1}\}$ and 
${\cal D}$ be a $(p -1) \times (n-1)$ matrix where the rows are indexed from integers in $\{2,\ldots, p\}$ 
and the columns are indexed from vertices in $\{v_1,\ldots,v_{n-1}\}$. The entry ${\cal D}[i,v]$ will store the family  
$\whnd{\cal P}^i_{uv} \subseteq_{rep}^{\ell-i} {\cal P}^i_{uv}$. We 
fill the entries in the matrix $\cal D$ in the increasing order of rows. For $i=2$, ${\cal D}[2,v]=\{\{u,v\}\}$ if $uv\in A(D)$ (for an undirected graph we check whether $u$ and $v$ are adjacent). Assume that 
we have filled all the entries until the row $i$. Let 
\[ {\cal N}_{uv}^{i+1}=\bigcup_{w \in N^{-}(v)} \whnd{\cal P}^i_{uw} \bullet \{v\}. \]
For undirected graphs we use the following definition  
\[{\cal N}_{uv}^{i+1}=\bigcup_{w \in N(v)} \whnd{\cal P}^i_{uw} \bullet \{v\}. \]

\begin{claim}
\label{lem:kpathauxrepset}
${\cal N}_{uv}^{i+1} \subseteq_{rep}^{\ell-(i+1)} {\cal P}_{uv}^{i+1}$.
\end{claim}
\begin{proof}
Let $S \in {\cal P}_{uv}^{i+1}$ and  $Y$ be a set of size $\ell-(i+1)$ (which is essentially an independent set of $U_{n,\ell}$) such that 
$S\cap Y=\emptyset$. We will show that there exists a set $S' \in {\cal N}_{uv}^{i+1}$ such that $S' \cap Y=\emptyset$. This will imply 
the desired result. Since $S \in {\cal P}_{uv}^{i+1}$ there exists a directed path $P=ua_1\cdots a_{i-1}v$  in $D$ such that 
$S=\{u,a_1,\ldots ,a_{i-1},v\}$ and $a_{i-1} \in N^{-}(v)$. The existence of path $P[u,a_{i-1}]$, the subpath of $P$ between $u$ and $a_{i-1}$, implies that  $X^*=S\setminus \{v\} \in {\cal P}_{ua_{i-1}}^{i}$. 
Take $Y^*=Y \cup \{v\}$. Observe that $X^* \cap Y^*=\emptyset$  and $|Y^*|=\ell-i$. Since $\whnd{\cal P}^i_{ua_{i-1}} \subseteq_{rep}^{\ell-i} {\cal P}^i_{ua_{i-1}} $  
there exists a set $\widehat{X}^*\in \whnd{\cal P}^i_{ua_{i-1}} $ such that $\widehat{X}^*\cap Y^*=\emptyset$. However, since $a_{i-1}\in N^{-}(v)$ 
and $\widehat{X}^*\cap \{v\}=\emptyset$ (as $\widehat{X}^*\cap Y^*=\emptyset$), we have  $\widehat{X}^* \bullet \{v\}= \widehat{X}^*\cup \{v\}$ and $\widehat{X}^*\cup \{v\} \in {\cal N}_{v}^{i+1}$. 
Taking $S'=\widehat{X}^*\cup \{v\}$ suffices for our purpose. This completes the proof of the lemma. 
\end{proof}

We fill the entry for ${\cal D}[i+1,v]$ as follows. Observe that 
$${\cal N}_{uv}^{i+1}= \bigcup_{w \in N^{-}(v)} {\cal D}[i,w] \bullet \{v\} .$$
We already have computed the family corresponding to ${\cal D}[i,w] $ for $w\in N^{-}(v)$. By Corollary~\ref{coro:uniformmatroidweighted},   
 $|\whnd{\cal P}^i_{uw}|\leq {\ell \choose i}2^{o(\ell)} $ and thus $|{\cal N}_{uv}^{i+1}|\leq d^-(v) {\ell \choose i} 2^{o(\ell)}$.  Furthermore,  
 we can  compute ${\cal N}_{uv}^{i+1}$ in time $\cO\left(d^-(v) {\ell \choose i} 2^{o(\ell)}  \right)$. Now using Corollary~\ref{coro:uniformmatroidweighted},   
 we  compute  $\widehat{\cal N}_{uv}^{i+1} \subseteq_{rep}^{\ell-i-1}{\cal N}_{uv}^{i+1}$ in time \tc{um}{t}{i+1}{\ell-i-1}, where 
 $t=d(v){\ell \choose i}2^{o(\ell)} $.  
 By Claim~\ref{lem:kpathauxrepset}, we know that ${\cal N}_{uv}^{i+1} \subseteq_{rep}^{\ell-i-1} {\cal P}_{uv}^{i+1}$. Thus Lemma~\ref{lem:reptransitive} implies that $\widehat{\cal N}_{uv}^{i+1} = \whnd{\cal P}^{i+1}_{uv} \subseteq_{rep}^{\ell-i-1} {\cal P}^{i+1}_{uv}$. We assign this family to ${\cal D}[i+1,v]$. This completes the description and the correctness of the algorithm. We give ordering to the vertices of 
 the sets in $\widehat{{\cal P}}_{uv}^p$ in the following way so that it corresponds to a directed  (undirected) path in $D$.  We keep the sets in the order in which they are built using the $\bullet$ operation. That is, we can view these sets as strings and 
 $\bullet$ operation as concatenation. Then every ordered set in our family represents a path in the graph. 
% The field size for uniform matroids 
% are upper bounded by $\cO((\ell n)^\ell)$ and thus we can perform all the field operations in time $\cO(\ell \log n)$. 
% Thus, the running time of this algorithm is upper bounded by 
% \[\cO\left( \sum_{i=2}^{p} \sum_{j=1}^{n-1} \tcwd{rm}{d^-(v_j){\ell \choose i-1}}{i}{\ell-i} \right)=\cO\left(\max_{i\in [p]} {l \choose i}^{\omega} p \ell m \log n\right).\]
The running time of the algorithm is bounded by
\begin{eqnarray*}
& & \cO\left( \sum_{i=2}^{p} \sum_{j=1}^{n-1} \tcwd{um}{d^-(v_j){\ell \choose i-1}2^{o(\ell)}}{i}{\ell-i} \right) \\
& = & \cO\left( \sum_{i=2}^{p} \sum_{j=1}^{n-1} d^-(v_j) {\ell \choose i-1}   \left( \frac{\ell}{\ell-i} \right)^{\ell-i} 2^{o(\ell)}  \log n \right) \\
& = & \cO\left( 2^{o(\ell)}  \log n \sum_{i=2}^{p}   \sum_{j=1}^{n-1} d^-(v_j) {\ell \choose i-1}   \left( \frac{\ell}{\ell-i} \right)^{\ell-i}  \right)\\
& = & \cO\left( 2^{o(\ell)} m \log n  \max_{i\in [p]} \left\{{\ell \choose i-1}   \left( \frac{\ell}{\ell-i} \right)^{\ell-i} \right\} \right)
%\cO\left(\max_{i\in [p]} {l \choose i}^{\omega} p \ell m \log n\right) \\
\end{eqnarray*}
This completes the proof
% \todo{factor in $n$ in running time}.
\end{proof}

Finally, we are ready to state the main result of this section. 
 \begin{theorem}
 \label{thm:longdirectedcycle}
  {\sc  Long Directed Cycle}  can be solved in time $\cO(8^{k+o(k)}  mn^2).$
 \end{theorem}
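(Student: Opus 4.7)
The plan is to combine the structural characterization in Lemma~\ref{lem:directedcyle} with the efficient representative family construction of Lemma~\ref{lem:pathrepsetfinder}. The algorithm iterates over every ordered pair of distinct vertices $(u,v)$ of $D$ and, for each pair, searches for a witness triple $(u,v,X)$ of the form guaranteed by Lemma~\ref{lem:directedcyle}.

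Concretely, for each $u \in V(D)$ I would invoke Lemma~\ref{lem:pathrepsetfinder} with $\ell = 2k$ and $p = k$ to build, simultaneously for every $v \in V(D) \setminus \{u\}$, a family $\widehat{{\cal P}}_{uv}^k$ that is $k$-representative for ${\cal P}_{uv}^k$ with respect to the uniform matroid $U_{n,2k}$, together with an ordering on each set realising a directed $uv$-path $P_X$. Each such family has size at most $\cO({2k \choose k} \cdot 2^{o(k)} \cdot \log n) = \cO(4^{k+o(k)} \log n)$. Then, for every $v$ and every $X \in \widehat{{\cal P}}_{uv}^k$, I would test by a single reachability scan in the subgraph $D - (V(P_X) \setminus \{u,v\})$ whether there is a directed path from $v$ back to $u$; the algorithm outputs YES as soon as any such $vu$-path is found, and NO after exhausting all choices.

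Correctness follows immediately from Lemma~\ref{lem:directedcyle}. If $D$ has a cycle of length at least $k$, the lemma supplies some pair $(u,v)$ and some $X \in \widehat{{\cal P}}_{uv}^k$ that extends to a cycle in $D$ via a $vu$-path avoiding $X \setminus \{u,v\}$, which is precisely what the reachability scan detects. Conversely, any $vu$-path returned by the scan composes with $P_X$ into a closed walk whose two halves share only the endpoints $u$ and $v$ by construction, so the walk is a simple cycle of at least $|X| = k$ vertices, hence of length at least $k$.

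The running time has two ingredients. Computing the representative families for a fixed $u$ costs
\[
\cO\left( 2^{o(k)} \, m \log^2 n \cdot \max_{i \in [k]} \left\{ {2k \choose i-1} \left(\frac{2k}{2k-i}\right)^{2k-i}\right\} \right),
\]
and the main technical step is to verify that this maximum is $8^{k+o(k)}$. A short calculation shows that it is essentially attained at $i$ close to $k$, where ${2k \choose k-1} \cdot (2k/k)^k \sim (4^k/\sqrt{k}) \cdot 2^k = 8^{k-o(k)}$, while Stirling's approximation confirms that no other $i \in [k]$ can exceed this value by more than a subexponential factor. Summing over the $n$ choices of $u$ therefore contributes $\cO(8^{k+o(k)} mn \log^2 n)$. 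The final reachability scans cost $\cO(m+n)$ each; performed for $\cO(4^{k+o(k)} \log n)$ sets in each of $n^2$ pairs, they total $\cO(4^{k+o(k)} mn^2 \log n)$. Both contributions fit inside $\cO(8^{k+o(k)} mn^2 \log n)$, giving the claimed bound. The only real obstacle is the binomial--power optimisation needed to read the $8^{k+o(k)}$ factor out of Lemma~\ref{lem:pathrepsetfinder}; everything else is a direct composition of the two supporting lemmata.
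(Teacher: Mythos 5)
Your proposal is correct and follows essentially the same route as the paper: compute $\widehat{{\cal P}}_{uv}^k$ for all pairs via Lemma~\ref{lem:pathrepsetfinder} with $\ell=2k$, $p=k$, then for each surviving $X$ with ordered path $P_X$ run a reachability scan from $v$ to $u$ in $D$ minus the interior of $P_X$, with correctness by Lemma~\ref{lem:directedcyle}. Your explicit verification that the binomial--power maximum in Lemma~\ref{lem:pathrepsetfinder} is attained near $i=k$ and evaluates to $8^{k-o(k)}$ is a detail the paper leaves implicit, and your use of a $vu$-path (rather than the paper's apparent typo ``$uv$-path'') is the correct reading.
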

\begin{proof} Let $D$ be a directed graph. 
We solve the problem by applying the structural characterization proved in Lemma~\ref{lem:directedcyle}. 
By Lemma~\ref{lem:directedcyle},   $D$ has a directed  cycle of length at least $k$ if and only if  there exists a pair of 
vertices $u,v\in V(D)$ and a path $P'$ with $V(P') \in  \widehat{{\cal P}}_{uv}^k \subseteq_{rep}^k  {\cal P}^k_{uv}$ such that $D$ has a directed cycle 
$C$ containing $P'$ as a subpath. 

We first compute  $ \widehat{{\cal P}}_{uv}^k \subseteq_{rep}^k  {\cal P}^k_{uv}$ for all $u,v\in V(D)$. 
For that we apply Lemma~\ref{lem:pathrepsetfinder} for  each vertex $u\in V(D)$ with $\ell=2k$ and $p=k$.  Thus, we can compute 
$ \widehat{{\cal P}}_{uv}^k \subseteq_{rep}^k  {\cal P}^k_{uv}$ for all $u,v\in V(D)$ in time $\cO\left(8^{k+o(k)} mn\log n\right)$. 
Moreover, for every $X\in \widehat{{\cal P}}_{uv}^k$ we also compute a directed $uv$-path $P_X$ using vertices of $X$.  Let 
\[{\cal Q}=\bigcup_{u,v\in V(D)} \widehat{{\cal P}}_{uv}^k. \]
Now for every 
set $X\in \cal Q$ and the corresponding $uv$-path $P_X$ with endpoint, we check if there is a $uv$-path in $D$ avoiding all vertices of $X$ but $u$ and $v$. 
%path $P=uP'v\in \cal Q$, we check whether there is a directed path from $v$ to $u$ in $D \setminus P'$. Here, $P'$ is the subpath of $P$ excluding $u$ and $v$. 
 This check can be done by a standard graph traversal algorithm like BFS/DFS   in time  $\cO(m+n)$. 
%whether there exists a directed path from $v$ to $u$ in $D[V(D)\setminus P']$. 
If we succeed in finding a path for at least  one $X\in \cal Q$, we answer 
{\sf YES} and return the corresponding directed cycle obtained by merging $P_X$ and another path. 
Otherwise, if we did not succeed to find such a path for any of the sets $X\in \cal Q$, this means that 
% there is no 
%If we have that for every $P=uP'v\in \cal Q$ there is no directed path from $v$ to $u$ in 
%$D[V(D)\setminus P']$ then we return that 
 there is no directed cycle of length at least $k$ in $D$. The correctness of the algorithm follows from 
Lemma~\ref{lem:directedcyle}. By 
%Theorem~\ref{thm:fastRepUniform}
Corollary~\ref{coro:uniformmatroidweighted}, the size of $\cal Q$ is upper bounded by 
$n^2 {2k \choose k} 2^{o(k)} 
\leq n^2 4^{k+o(k)}$. Thus the overall running time of the algorithm is upper bounded by 
\[\cO(8^{k+o(k)} mn\log n + 4^{k+o(k)} (n^2m+n^3) ).\]
% \todo{factor $n$}
This concludes the proof. 
\end{proof}

\subsection{Faster Long Directed Cycle}
\label{subsection:fasterLDC}
%!TEX root = repset-main.tex
In this subsection we design a faster algorirthm for {\sc Long Directed Cycle}.
% using Theorem~\ref{thm:repset uniform general}. 
In Subsection~\ref{subsection:LDC} we have seen an algorithm for  
{\sc Long Directed Cycle} where the running time mainly depend on the 
computation of representative families $\widehat{{\cal P}}_{uv}^p \subseteq_{rep}^{q}  {\cal P}^p_{uv}$ for $2\leq p\leq k$ and $q=2k-p$. We used  Theorem~\ref{thm:repset uniform general} with $x=\frac{p}{p+q}$ (i.e, Corollary~\ref{coro:uniformmatroidweighted}) 
to compute representative families. The choice $x=\frac{p}{p+q}$ minimizes the size of representative family. But in fact, we can choose 
$x$ that minimizes the running time instead. 

Now we find out the choice of $x$ which minimizes the computation of 
$\widehat{{\cal P}}_{uv}^p \subseteq_{rep}^{q}  {\cal P}^p_{uv}$ 
for $2\leq p\leq k$ and $q=2k-p$. Let $s_{p,q}$ denote the size of 
$\widehat{{\cal P}}_{uv}^p$. We know that  the computation of 
$\widehat{{\cal P}}_{uv}^p \subseteq_{rep}^{q}    {\cal N}_{uv}^{p}\subseteq_{rep}^{q}  {\cal P}^p_{uv}$ 
depends on  $|{\cal N}_{uv}^{p}|$, which depends on the size of 
the representative families $\widehat{{\cal P}}_{uw}^{p-1}$. That is 
$|{\cal N}_{uv}^{p}|\leq s_{p-1,q+1}\cdot n$.  Thus the value of $s_{p-1,q+1}$ and  $s_{p,q}$ are ``almost equal'' and we denote it by 
$s_{p-1,q+1}\approx s_{p,q}$. 
%We can assume that 
%$s_{p-1,q+1}\approx s_{p,q}$ \todo{is this enough?} . 
By Theorem~\ref{thm:repset uniform general3}, the running time to compute  
$\widehat{{\cal P}}_{uv}^p \subseteq_{rep}^{q}    {\cal N}_{uv}^{p}\subseteq_{rep}^{q}  {\cal P}^p_{uv}$ is, 
%$\widehat{{\cal P}}_{uv}^p \subseteq_{rep}^{q}  {\cal P}^p_{uv}$ is, 
% a function $f$ on $x$. That is 
%$f(x)=|{\cal P}^p_{uv}|\cdot $
\begin{eqnarray*}
 &&\cO\left(|{\cal N}_{uv}^{p}|\cdot(1-x)^{-q}\cdot 2^{o(p+q)} \cdot \log n\right)\\
&=&\cO\left(s_{p,q} \cdot(1-x)^{-q}\cdot 2^{o(p+q)} \cdot n \log n \right)\\
&=&\cO\left( x^{-p} \cdot(1-x)^{-2q}\cdot 2^{o(p+q)} \cdot n \log n\right)\\
\end{eqnarray*}
To minimize the above running time it is enough to minimize the function 
$f(x)=x^{-p} \cdot(1-x)^{-2q}$. Using methods from calculus we know that  the value $x^*$ of $x$ for which  $f'(x^*)=0$ 
% \todo{add notation in prelims}  
corresponds to  a minimum value of the function $f(x)$ if 
$f''(x^*)>0$.  The derivative of $f(x)$ is, $f'(x)=-px^{-p-1}(1-x)^{-2q}+2q\cdot x^{-p}(1-x)^{-2q-1}$.  Now consider the value of $x$ for which $f'(x)=0$.
\begin{eqnarray*}
-px^{-p-1}(1-x)^{-2q}+2q\cdot x^{-p}(1-x)^{-2q-1}&=&0\\
-p(1-x)+2q\cdot x &=&0\\
x&=&\frac{p}{p+2q}
\end{eqnarray*}
Set $x^*=\frac{p}{p+2q}$. To prove $f(x)$ is minimized at $x^*$, it is enough to show that $f''(x^*)>0$. 
\begin{eqnarray*}
f'(x)&=&-px^{-p-1}(1-x)^{-2q}+2q\cdot x^{-p}(1-x)^{-2q-1}\\
&=&  x^{-p} (1-x)^{-2q}(-p\cdot x^{-1}+2q\cdot (1-x)^{-1})\\
&=&  f(x)\cdot (-p\cdot x^{-1}+2q\cdot (1-x)^{-1})\\
f''(x) &=& f(x) \cdot (p\cdot x^{-2}+2q\cdot (1-x)^{-2})+ 
f'(x)\cdot (-p\cdot x^{-1}+2q\cdot (1-x)^{-1})\\
f''(x^*) &=& f(x^*) \cdot (p\cdot (x^*)^{-2}+2q\cdot (1-(x^*))^{-2})>0
\end{eqnarray*}
Hence the run time to compute $\widehat{{\cal P}}_{uv}^p \subseteq_{rep}^{q}  {\cal P}^p_{uv}$ is minimized when $x=\frac{p}{p+2q}$. 
\begin{lemma}
\label{lem:FastLDCpathrepset} 
Let $D$ be a directed graph with $n$ vertices and $m$ edges, $u\in V(D)$ and 
 \mat{} be an uniform matroid $U_{n,\ell}$ where $E=V(D)$  and ${\cal I}=\{S\subseteq V(D)~|~|S|\leq \ell\}$.  Then 
for every  $v\in V(D)\setminus\{u\}$ and  
integer $2 \leq p \leq \ell$ 
%and $q=\ell-p$, 
there is an algorithm that computes a family 
%$\widehat{\cal P}_{uv}^{p}$ that $q$-represents ${\cal P}_{uv}^p$. 
$\widehat{{\cal P}}_{uv}^p \subseteq_{rep}^{\ell-p}  {\cal P}^p_{uv}$ 
%The size of $\widehat{\cal P}_{uv}^{p}$ is at most 
%$$\left(\frac{p+2q}{p}\right)^p\left(\frac{p+2q}{2q}\right)^{q} \cdot 2^{o(p+q)},$$
%and the algorithm takes 
of size $\left(\frac{2\ell-p}{p}\right)^p\left(\frac{2\ell-p}{2\ell-2p}\right)^{\ell-p}\cdot 2^{o(\ell)}$ 
in time
$\cO\left(2^{o(\ell)} \cdot m \log n \cdot \max_{i\in[p]}\left\{\left(\frac{2\ell-i}{i}\right)^i \left(\frac{2\ell-i}{2\ell-2i}\right)^{2\ell-2i}\right\}\right)$
% \todo{factor $n$}
%$6.75^{k+o(k)}n^{O(1)}$.
\end{lemma}
\begin{proof}
The proof is same as the proof of Lemma~\ref{lem:pathrepsetfinder}, except the choice of $x$ while applying Theorem~\ref{thm:repset uniform general} (instead of Corollary~\ref{coro:uniformmatroidweighted}).  As in the proof of Lemma~\ref{lem:pathrepsetfinder},
%We describe a dynamic programming  algorithm. 
%Let $V(D)=\{u,v_1,\ldots,v_{n-1}\}$ and 
we have a dynamic programming table 
${\cal D}$ where the rows are indexed from integers in $\{2,\ldots, p\}$ 
and the columns are indexed from vertices in $\{v_1,\ldots,v_{n-1}\}$. The entry ${\cal D}[i,v]$ will store the family  
$\whnd{\cal P}^i_{uv} \subseteq_{rep}^{\ell-i} {\cal P}^i_{uv}$. We 
fill the entries in the matrix $\cal D$ in the increasing order of rows. For $i=2$, ${\cal D}[2,v]=\{\{u,v\}\}$ if $uv\in A(D)$. Assume that 
we have filled all the entries until the row $i$. Let 
\[ {\cal N}_{uv}^{i+1}=\bigcup_{w \in N^{-}(v)} \whnd{\cal P}^i_{uw} \bullet \{v\}. \]
Due to Claim~\ref{lem:kpathauxrepset}, we have that  
${\cal N}_{uv}^{i+1} \subseteq_{rep}^{\ell-(i+1)} {\cal P}_{uv}^{i+1}$. 
Lemma~\ref{lem:reptransitive} implies that $\widehat{\cal N}_{uv}^{i+1} = \whnd{\cal P}^{i+1}_{uv} \subseteq_{rep}^{\ell-i-1} {\cal P}^{i+1}_{uv}$. We assign this family to ${\cal D}[i+1,v]$.  

Now we explain the computation of  $\widehat{\cal N}_{uv}^{i+1}= \whnd{\cal P}^{i+1}_{uv}$. For any $j$, to compute 
$\widehat{\cal N}_{uv}^{j}= \whnd{\cal P}^{j}_{uv}$, we apply 
Theorem~\ref{thm:repset uniform general3} with  the value $x_j$ for $x$, where 
$$x_j=\frac{j}{j+2(\ell-j)}=\frac{j}{2\ell-j}$$  
Let $s_{j,\ell-j}$ be the size of the representative family   $\widehat{\cal N}_{uv}^{j}= \whnd{\cal P}^{j}_{uv}$ when we apply Theorem~\ref{thm:repset uniform general3} with  the value $x_j$. 
That is $s_{j,\ell-j}=(x_j)^{-j}(1-x_j)^{\ell-j}\cdot 2^{o(\ell)}$.  
Assume that we have computed   $\whnd{\cal P}^{j}_{uw}$ of size $s_{j,\ell-j}$ 
and stored it in ${\cal D}[j,w]$ for all $j\leq i$ and 
$w\in \{v_1,\ldots,v_{n-1}\}$. Now consider the computation of $\widehat{\cal N}_{uv}^{i+1}= \whnd{\cal P}^{i+1}_{uv}$. 
We apply  Theorem~\ref{thm:repset uniform general3} with value $x_{i+1}$ for $x$ to compute $\widehat{\cal N}_{uv}^{i+1} \subseteq_{rep}^{\ell-(i+1)} {\cal N}_{uv}^{i+1}$. 
Since  ${\cal N}_{uv}^{i+1}=\bigcup_{w \in N^{-}(v)} \whnd{\cal P}^i_{uw} \bullet \{v\}$, we have that 
\begin{eqnarray*}
|{\cal N}_{uv}^{i+1}|&\leq& s_{i,\ell-i}\cdot d^{-}(v)\\
&\leq& (x_i)^{-i}(1-x_i)^{\ell-i}\cdot 2^{o(\ell)} d^{-}(v)
\end{eqnarray*}  
By Theorem~\ref{thm:repset uniform general3}, the running time to compute $\widehat{\cal N}_{uv}^{i+1}$ is,
\begin{align}\label{eqn:runtime1}
 &s_{i,\ell-i} \cdot (1-x_{i+1})^{\ell-(i+1)} \cdot 2^{o(\ell)} \cdot d^{-}(v)\cdot \log n 
\end{align}
% \todo{factor $n$}
To analyze the running time further we need the following claim.
\begin{claim}\label{clm:value of s} For any $3<i<p$,  
$s_{i,\ell-i} \leq e^2 \cdot (i+1) \cdot s_{i+1,\ell-i-1}$. 
\end{claim}
\begin{proof}
By applying the definition of $s_{i}$ and $x_{i+1}$ we get he following inequality.
\begin{align*}
\frac{s_{i,\ell-i}}{s_{i+1,\ell-i-1}} & = 
\frac{x_i^{-i}(1-x_i)^{-\ell+i}}{x_{i+1}^{-(i+1)}(1-x_{i+1})^{-\ell+(i+1)}}\\
&=\left(\frac{2\ell-i}{i} \right)^i \left(\frac{2\ell-i}{2\ell-2i} \right)^{\ell-i} 
\left(\frac{i+1}{2\ell-(i+1)} \right)^{i+1} \left(\frac{2\ell-2(i+1)}{2\ell-(i+1)} \right)^{\ell-(i+1)}\\
&=\left(\frac{2\ell-i}{2\ell-(i+1)} \right)^{\ell}\cdot \frac{(i+1)^{i+1}}{i^i} \cdot
 \frac{(2\ell-2(i+1))^{\ell-(i+1)}}{2\ell-2i)^{\ell-i}}\\
&\leq \left(1+\frac{1}{2\ell-(i+1)} \right)^{2\ell-(i+1)}\cdot (i+1)\cdot \left(1+\frac{1}{i}\right)^i\\
& \leq e^2 \cdot (i+1).
\end{align*}
In the last transition we used that $(1+1/x)^x < e$ for every $x > 0$.
\end{proof} 
From Equation~\ref{eqn:runtime1} and Claim~\ref{clm:value of s} we have that the running time for computing $\widehat{\cal P}_{uv}^{p}$ 
%from $\widetilde{\cal P}_{uv}^{p,q}$ 
is bounded by
\begin{align*}
&\cO\left(\sum_{i=2}^{p}\sum_{j=1}^{n-1}s_{i,\ell-i}\cdot d^{-}(v_j) \cdot (1-x_{i})^{-\ell+i} \cdot 2^{o(\ell)} \cdot \log n\right) \\
=\;&\cO\left(2^{o(\ell)} \cdot m \log n \cdot \max_{i\in[p]}\left\{\left(\frac{2\ell-i}{i}\right)^i \left(\frac{2\ell-i}{2\ell-2i}\right)^{2\ell-2i}\right\}\right)  
\end {align*}
The size of the family $\widehat{{\cal P}}_{uv}^p \subseteq_{rep}^{\ell-p} {\cal N}_{uv}^{p} \subseteq_{rep}^{\ell-p}  {\cal P}^p_{uv}$ is, 
$$s_{p,\ell-p}=(x_p)^{-p}(1-x_p)^{-\ell+p}\cdot 2^{o(\ell)} = \left(\frac{2\ell-p}{p}\right)^p\left(\frac{2\ell-p}{2\ell-2p}\right)^{\ell-p}\cdot 2^{o(\ell)}.$$
%Simple calculus shows that the maximum is attained for $i=k$.  
%Hence the running time of the algorithm is upper bounded by $6.75^{k+o(k)}n^{O(1)}$, 
This completes the proof.
\end{proof}
We now have a faster algorithm to compute the representative family~$\widehat{{\cal P}}_{uv}^k \subseteq_{rep}^{k}  {\cal P}^p_{uv}$. 
%$\widehat{\cal P}_{uv}^{k}$. 
Using  Lemma~\ref{lem:FastLDCpathrepset}, we can compute $\widehat{\cal P}_{uv}^{k}$, for all $v\in V(D)\setminus \{u\}$ 
in time  
$$\cO\left(2^{o(k)} \cdot m \log n \cdot \max_{i\in[p]}\left\{\left(\frac{4k-i}{i}\right)^i \left(\frac{4k-i}{4k-2i}\right)^{4k-2i}\right\}\right).$$ 
%$2^{o(\ell)} \cdot n^{O(1)} \cdot \max_{i\in[p]}\left\{\left(\frac{2\ell-i}{i}\right)^i \left(\frac{2\ell-i}{2\ell-2i}\right)^{2\ell-2i}\right\}$. 
Simple calculus shows that the maximum is attained for $i=k$. Hence the running time to compute  $\widehat{\cal P}_{uv}^{k}$ for all $u,v\in V(D)$ is upper bounded by $\cO(6.75^{k+o(k)}nm\log n)$. 
This yields an improved bound for the running time of our algorithm for {\sc Long Directed Cycle}. 

We apply Lemma~\ref{lem:FastLDCpathrepset} for each $u\in V(D)$ with $\ell=2k$ and $p=k$. Thus, we can compute $\widehat{{\cal P}}_{uv}^k \subseteq_{rep}^{k}  {\cal P}^p_{uv}$ for all $u,v\in V(D)$ in time 
$\cO(6.75^{k+o(k)}nm\log n)$.    
%However the family may now be bigger, but their total size is not more than $6.75^{k+o(k)}nm\log n$, since it took that much time to compute them.
The size of the family  $\widehat{\cal P}_{uv}^{k}$ for any $u,v\in V(D)$ is upper bounded by $\cO(4.5^{k+o(k)})$. 
Thus, if we now loop over every set in the representative families and run a breadth first search, just as in the proof of Theorem~\ref{thm:longdirectedcycle}, 
this will take at most $\cO(6.75^{k+o(k)}nm\log^ n + 4.5^{k+o(k)} (n^3+n^2m))$ time. Hence we arrive at the following theorem. 
\begin{theorem}\label{thm:fastLDC}
There is a $\cO(6.75^{k+o(k)}mn^2)$ time algorithm for {\sc Long Directed Cycle}
\end{theorem}
%%%%%%%%%%%%%%%%%%%%%%

\subsection{Minimum Equivalent Graph}
%!TEX root = repset-main.tex
%\section{Minimal Equivalent Graph of a Strongly Connected Digraph}

%\textsc{Minimal Equivalent Graph of a Strongly Connected Digraph} is the problem of 
% removing the maximum number of edges from a digraph without affecting the rechability of the digraph.
%Mentioned in Chapter 3 of Harary et. al book 1965. Generalization of Hamiltonian cycle/path.
% 
% Reduced to strongly connected digraphs (strong digraph?). MSSS-problem.
% 
%Moyles and Thompson   \cite{MoylesT69}
%running time $O(n!)$. 
%Flaw found by 
% Hsu in 
% \cite{Hsu75}. Martello and Toth \cite{MartelloT82}, claimed flaw in Hsu algorithm, gave branching based algorithm with time $O(2^{m-2})$.
% 
% Combinatorial study \cite{BessyThom03}. Chapter 12 of \cite{BangG089_book}
% [TO DO] To go from directed graph to strong    digraphs. 
% 

 %{\sc Minimum Equivalent Subdigraph ({\sc MES})}
 
%{\bf START FROM HERE}
 For a given digraph $D$, a subdigraph $D'$ of $D$ is said to be an {\em equivalent} subdigraph of $D$ if for any pair of vertices $u,v\in V(D)$ if there is a directed path in $D$ from $u$ to $v$  then there is also a directed path from $u$ to $v$ in $D'$.  That is, reachability of vertices in $D$ 
 and $D'$ is same. In this section we study a problem  where given a digraph $D$ the objective is to find an equivalent subdigraph of $D'$ of $D$ with as few arcs as possible. Equivalently, the objective is to remove the maximum number of arcs from a digraph $D$  without affecting 
its  reachability.  More precisely the problem we study is as follows. 
 
 %Find an equivalent subdigraph $D'$ of $D$ with the minimum number of arcs. 

 \begin{center} 
\fbox{\begin{minipage}{0.96\textwidth}
\noindent{\sc Minimum Equivalent Graph ({\sc MEG})}% \hfill {\bf Parameter:} $n$ 
 \\
\noindent {\bf Input}: A  directed graph $D$\\
\noindent{\bf Task}:  Find an equivalent subdigraph   of $D$ with the minimum number of arcs. 
%minimum sized set of arcs $F\subseteq A(D)$ such that the subdigraph $D'$  on  \\
%\noindent{\phantom{{\em  Question}:}}  vertex reachability relations of  the original digraph $D$. 
\end{minipage}}
\end{center}
\medskip

%This problem was mentioned by Harary et al.~\cite[Chapter 7]{HararyNC65} in an old book on digraphs written in 1965 
%and this is a generalization of the classical 
%{\sc Hamiltonian cycle/path} problem.  Observe that if a graph is strong and contains a Hamiltonian cycle then the answer to  
%{\sc Minimum Equivalent Subdigraph} is a  Hamiltonian cycle.  So an algorithm to compute minimum equivalent subdgraph can also 
%be used to test whether it contains Hamiltonian cycle/path. 

The following proposition is due to  Moyles and Thompson~\cite{MoylesT69},  see also \cite[Sections 2.3]{BangG089_book},  reduces the problem of finding a minimum equivalent subdigraph of an arbitrary $D$ to a strong digraph. 

\begin{proposition}\label{prop:medredtostrong}
Let $D$ be a digraph  on $n$ vertices  with strongly connected components $C_1,\ldots,C_r$.  Given a minimum equivalent subdigraph $C_i'$ for each $C_i$, $i\in [r]$, one can obtain a minimum equivalent subdigraph $D'$ of $D$ containing each of $C'_i$ in $\cO(n^\omega)$ time. 
\end{proposition}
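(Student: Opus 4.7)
The plan is to decompose the reachability structure of $D$ into two orthogonal pieces: the reachability \emph{inside} each strong component, which is already handled by the given optimal subdigraphs $C_i'$, and the reachability \emph{between} strong components, which is encoded by the condensation of $D$ (the acyclic digraph $\tilde D$ obtained by contracting each $C_i$ to a single vertex). Since every inter-component arc of $D$ lies between two distinct $C_i$'s and no such arc participates in any intra-component reachability, these two pieces interact only in a trivial way, so a minimum equivalent subdigraph of $D$ is obtained as the disjoint union of the $C_i'$'s together with an arc set realizing a minimum reachability-preserving set of inter-component arcs.

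First I would prove the following separation lemma: for any equivalent subdigraph $D''$ of $D$, the sub-digraph $D''\cap C_i$ must itself be an equivalent subdigraph of $C_i$ (because reachability of any two vertices inside a single strong component must be realized by a path that stays inside the component, using the strongness of $C_i$), and the set of inter-component arcs of $D''$ must induce, after contraction to $\tilde D$, a subgraph of $\tilde D$ that preserves reachability in $\tilde D$. Conversely, any union of (i) equivalent subdigraphs of the $C_i$ and (ii) inter-component arcs whose contraction preserves reachability in $\tilde D$ is an equivalent subdigraph of $D$. This reduces the problem to minimizing each piece separately, and the intra-component pieces are given.

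For the inter-component piece, I would use that $\tilde D$ is a DAG, so its \emph{transitive reduction} is uniquely defined and is a subgraph of $\tilde D$; any reachability-preserving subgraph of $\tilde D$ must contain this transitive reduction. Thus the minimum-cost choice is: for each arc $(C_i, C_j)$ in the transitive reduction of $\tilde D$, select one arbitrary arc of $D$ going from some vertex of $C_i$ to some vertex of $C_j$ (such an arc exists by definition of the condensation). The output $D'$ is the union of the $C_i'$ and these selected arcs; combined with the lemma above, $D'$ is a minimum equivalent subdigraph of $D$ that contains every $C_i'$.

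The only nonobvious ingredient is the running time, and this is where matrix multiplication enters. The strong components and the condensation $\tilde D$ can be found in $\cO(n+m)=\cO(n^2)$ time. The transitive closure of the DAG $\tilde D$ can be computed by Boolean matrix multiplication in $\cO(n^\omega)$ time; from the transitive closure, the transitive reduction of a DAG is obtained in the same asymptotic time by a single additional Boolean matrix product (an arc $(u,v)$ of $\tilde D$ is kept iff there is no intermediate vertex $w$ with $u\to w\to^* v$, which is a direct matrix-product test). The final assembly of $D'$ from the $C_i'$ and the chosen representative inter-component arcs is a linear pass, so the overall time is $\cO(n^\omega)$ as claimed. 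The main obstacle is simply to be careful that ``minimum equivalent subdigraph'' of a DAG coincides with its transitive reduction; this is a classical fact and follows from the uniqueness of shortest paths witnessing each reachability in the DAG setting.
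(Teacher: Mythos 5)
Your proof is correct, and it follows the classical route for this fact (the paper itself gives no proof but only cites Moyles--Thompson and Bang--Jensen--Gutin). The key observations you use are exactly the right ones: (i) because a strong component is a strong component, any walk between two of its vertices must stay inside it, so the restriction of any equivalent subdigraph to $C_i$ must itself be an equivalent subdigraph of $C_i$; (ii) the condensation $\tilde D$ is a DAG, so its transitive reduction is a unique subgraph of $\tilde D$ and is contained in every reachability-preserving subgraph of $\tilde D$; (iii) the intra-component and inter-component arc counts can therefore be minimized independently, and one representative $D$-arc per transitive-reduction arc achieves the inter-component minimum. The converse direction of your separation lemma (gluing the $C_i'$ together with representative inter-component arcs still preserves all of $D$'s reachability) also goes through, using strong connectivity of each $C_i$ to route between the entry and exit vertices of a component. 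The $\cO(n^\omega)$ bound is obtained as you say: Tarjan's SCC decomposition in $\cO(n^2)$, transitive closure $R$ of $\tilde D$ by Boolean matrix multiplication in $\cO(n^\omega)$, and the transitive reduction as $\{(u,v): A[u,v]=1 \text{ and } (AR)[u,v]=0\}$ with one more Boolean product, which is the standard Aho--Garey--Ullman argument. One cosmetic point: in your transitive-reduction test ``$u\to w\to^* v$'' you must take $\to^*$ to mean reachability by a path of length at least one (equivalently require $w\neq v$); otherwise the test would vacuously delete every arc. With $w\to^+ v$ this is automatic in a DAG, and the formula $(AR)[u,v]$ above already encodes it. Apart from that notational slip, the argument is complete.
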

%We refer to Section~$2.3$ of \cite{BangG089_book} for a proof of Proposition~\ref{prop:medredtostrong}.
  Observe that for a strong digraph $D$ any equivalent subdigraph is also  strong. %Thus, for a   strong digraph $D$, an equivalent subdigraph is just called {\em strong spanning subdigraph}.  
  By Proposition~\ref{prop:medredtostrong}, {\sc MEG} reduces to the following problem.

%{\bf Write  a paragraph and the lemma that reduces the problem to the following.}

 \begin{center} 
\fbox{\begin{minipage}{0.96\textwidth}
\noindent{\sc Minimum  Strongly Connected Spanning Subgraph ({\sc Minimum SCSS})}% \hfill {\bf Parameter:} $n$
 \\
\noindent {\bf Input}: A strongly connected directed  graph $D$\\
\noindent{\bf Task}: Find a strong spanning subdigraph   of $D$ with the minimum number of arcs. 
%Find a minimum sized set of arcs $F\subseteq A(D)$ such that the subdigraph $D'$  on   \\
%\noindent{\phantom{{\em  Question}:}}  vertex  set $V(D)$ and arc set $F$ is strong. 
\end{minipage}}
\end{center}
\medskip
It seems to be no established agreement in the literature on how to call these problems. 
{\sc MEG} 
sometimes is also referred  as 
{\sc Minimum Equivalent Digraph} and {\sc Minimum Equivalent Subdigraph}, while 
{\sc Minimum SCSS} is also called 
{\sc Minimum  Spanning Strong Subdigraph ({\sc MSSS})}.

A  digraph $T$ is an {\em out-tree} (an {\em in-tree}) if  $T$ is an oriented tree with just one vertex $s$ of in-degree zero (out-degree zero). 
The vertex $s$ is the root of $T$. If an out-tree (in-tree) $T$ is a spanning subdigraph of $D$, $T$ is called an 
{\em out-branching} (an {\em in-branching}).  We use the notation $B_s^+$ ($B_s^-$) to denote an out-branching 
(in-branching) rooted at $s$ of the digraph.    
%For a strong digraph $D$, let  $\gamma(D)$ denote the cardinality of minimum sized set of arcs $F\subseteq A(D)$ such that the subdigraph $D'$  on vertex  set $V(D)$ and arc set $F$ is strong.  It is known  that a digraph is strong if and only 
%if it contain an out-branching and an in-branching rooted at  some vertex $v\in V(D)$. Thus, we can obtain a strong spanning subdigraph of the 
%given strong digraph $D$ by taking an out-branching and in-branching rooted at some arbitrary vertex of $D$. This implies that for a $n$-vertex strong digraph $D$, we have that 
%\[ n\leq \gamma(D) \leq 2n-2.\]

It is known  that a digraph is strong if and only if it contain an out-branching and an in-branching rooted at some vertex $v\in V(D)$~\cite[Proposition 12.1.1]{BangG089_book}.
%We encapsulate this property in the following proposition
%We will use this property in a form of the following proposition, see e.g. \cite[Proposition 12.1.1]{BangG089_book}.
 \begin{proposition}\label{prop:two_branching}
Let $D$ be a strong digraph on $n$ vertices, let $v$ be an arbitrary vertex of $V(D)$, and $\ell\leq n-2 $ be a natural number. Then there exists a strong spanning subdigraph of $D$ with at most $2n-2 -\ell$ arcs if and only if $D$ contains an in-branching $B_v^-$ and an out-branching $B_v^+$ with root $v$ so that $|A(B_v^+)\cap A(B_v^-)|\geq \ell$ (that is, they have at least $\ell$ common arcs). 
 \end{proposition}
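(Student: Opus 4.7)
The proof will use a direct counting argument based on inclusion-exclusion, together with the standard characterization of strong digraphs (Proposition 12.1.1 of \cite{BangG089_book}) already cited in the excerpt: a digraph is strong iff for some vertex $v$ it contains both an out-branching and an in-branching rooted at $v$.

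For the forward direction, suppose $D'$ is a strong spanning subdigraph of $D$ with $|A(D')| \leq 2n-2-\ell$. Since $D'$ is strong, one can pick an out-branching $B_v^+$ and an in-branching $B_v^-$ of $D'$ rooted at $v$. Each branching has exactly $n-1$ arcs, and their union lies inside $A(D')$. Then
\[
|A(B_v^+) \cap A(B_v^-)| = |A(B_v^+)| + |A(B_v^-)| - |A(B_v^+) \cup A(B_v^-)| \geq (n-1)+(n-1)-(2n-2-\ell) = \ell.
\]

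For the backward direction, suppose $B_v^+$ and $B_v^-$ are an out- and in-branching of $D$ rooted at $v$ sharing at least $\ell$ arcs. Let $D'$ be the spanning subdigraph of $D$ whose arc set is $A(B_v^+) \cup A(B_v^-)$. By Proposition 12.1.1 of \cite{BangG089_book}, containing both an out-branching and an in-branching rooted at the same vertex implies $D'$ is strong. Moreover,
\[
|A(D')| = |A(B_v^+)| + |A(B_v^-)| - |A(B_v^+) \cap A(B_v^-)| \leq (n-1)+(n-1)-\ell = 2n-2-\ell,
\]
so $D'$ is the required strong spanning subdigraph.

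Both directions are essentially a single application of inclusion-exclusion, so there is no real obstacle; the only point that must be invoked from outside is the standard branching characterization of strong connectivity. The hypothesis $\ell \leq n-2$ is used only implicitly to ensure that $2n-2-\ell \geq n$, which is the minimum possible number of arcs in a strong spanning subdigraph on $n$ vertices, so the bound is meaningful.
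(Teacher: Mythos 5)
Your proof is correct, and it is the natural inclusion--exclusion argument that the cited reference would use. Note, however, that the paper itself states this proposition without proof, attributing it to Bang-Jensen and Gutin (it is essentially a rephrasing of material from Chapter~12 of \cite{BangG089_book}), so there is no in-paper argument to compare against.

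One small technicality worth tightening: the characterization you quote from \cite[Proposition~12.1.1]{BangG089_book} says a digraph is strong iff it contains an out-branching and an in-branching rooted at \emph{some} common vertex, whereas your forward direction needs them rooted at the \emph{given} vertex $v$. The stronger (and equally standard) statement is that a strong digraph contains an out-branching and an in-branching rooted at \emph{every} vertex---for a strong $D'$ there is a path from $v$ to each vertex (yielding $B_v^+$) and a path from each vertex to $v$ (yielding $B_v^-$). It is worth saying this explicitly, since the whole point of the proposition is that the root $v$ is fixed in advance. With that observation the two inclusion--exclusion computations go through exactly as you wrote them, using $|A(B_v^+)|=|A(B_v^-)|=n-1$ and, in the forward direction, $|A(B_v^+)\cup A(B_v^-)|\le |A(D')|$. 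Your remark on the role of $\ell\le n-2$ is also apt: it keeps the arc bound $2n-2-\ell\ge n$ nonvacuous, as every strong digraph on $n\ge 2$ vertices needs at least $n$ arcs.
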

 
 Proposition~\ref{prop:two_branching} implies that  the  {\sc Minimum SCSS} problem is equivalent to finding, for an arbitrary vertex $v \in  V (D)$, an out-branching  $B_v^+$ and an in-branching $B_v^-$ that maximizes  $|A(B_v^+)\cap A(B_v^-)|$.  For our exact algorithm for {\sc Minimum SCSS} 
 we implement this equivalent version using representative sets.

%
% Let us observe that an intersection of out-branching and in-branching is a forest. 
% 
% By the classical result of Nash-Williams, 
%  for  matroids $
% M_1=(E_1,{\cal I}_1),   M_2=(E_2,{\cal I}_2),\dots,  M_t=(E_t,{\cal I}_t),  
% $
%their sum
% \[
% M=M_1  \oplus M_2 \oplus\cdots \oplus   M_t, 
% \]
% which 
%  is a pair $M=(E,{\cal I})$, 
%  where
%  $E=E_1\cup E_2\cdots E_t$, and 
%  \[
%  {\cal I}=\{I ~| I=I_1\cup I_2\cup\cdots\cup I_t, I_1\in {\cal I}_1, \dots,  I_t\in {\cal I}_t \},
%  \] 
%  is also a matroid. 
%  

 Let $D$ be a strong digraph and $s\in V(D)$ be a fixed vertex. For $v\in V(D)$  we use $\In{v}$ and $\Out{v}$ to denote the sets of in-coming and out-going arcs incident with $v$. By $D_s^{-}$ we denote the digraph obtained from $D$ by deleting the arcs in $\Out{s}$. Similarly, 
by  $D_s^{+}$  we denote the digraph obtained from $D$  by deleting the arcs in $\In{s}$. 
 
 We take two copies $E_1, E_2$ of $A(D)$ (that is $E_i=\{e_i~|~e\in A(D)\}$) , a copy $E_3$ of $A(D_s^+)$ and a copy $E_4$ of 
 $A(D_s^-)$ and construct four matroids as follows. Let $U(D)$ denote the underlying undirected graph of $D$. 
 %That is, $U(D)$ is an undirected graph obtained from $D$  by forgetting the directions on the arcs 
 %of $A(D)$. 
  The first two matroids \matl{1}, \matl{2} are the graphic matroids on $U(D)$.
Observe that \[A(D_s^+)=\biguplus_{v\in V(D_s^+)} \In{v}\mbox{ and } A(D_s^-) = \biguplus_{v\in V(D_s^-)} \Out{v}. \]
Thus the arcs of $D_s^+$ can be partitioned into sets  of in-arcs  and similarly the  arcs of $D_s^-$ 
into sets of out-arcs. The other two matroids are the following partition matroids \matl{3}, \matl{4}, where 
  \[
   {\cal I}_3=\{I ~|~ I\subseteq A(D_s^+), \text{ for every } v\in V(D_s^+)=V(D),   |I\cap \In{v}|\leq 1 \text \},
  \]
  and
  \[
   {\cal I}_4=\{I ~|~ I\subseteq A(D_s^-), \text{ for every } v\in V(D_s^-)=V(D), |I\cap \Out{v}|\leq 1 \}.
  \] 
  \noindent 
 We define the matroid \mat{} as the direct sum    $M=M_1  \oplus M_2 \oplus    M_3\oplus    M_4$. Since each of $M_i$ is a  representable matroids over the same field (by Propositions~\ref{prop:uniformandpartitionrep} and \ref{prop:graphicrep}), we have that  $M$ is also representable (Proposition~\ref{prop:disjointsumrep}). The reason we say that $M_i$ is representable over the same field $\mathbb F$ is that the graphic matroid is representable over any field and the partition matroids defined here are  representable over a finite field of size 
$n^{\cO(1)}$. So if we take $\mathbb F$ as a finite field of size  $n^{\cO(1)}$ then  $M$ is representable over $\mathbb F$. The rank of this 
 matroid is $4n-4$.   %From now onwards whenever we say matroid or independent sets we mean with respect to $\cal B$. 

 Let us note that for each arc $e\in A(D)$ which is not incident with $s$, we have four elements in the matroid $M$, corresponding to the copies of $e$ in 
 $M_i$,  $i\in \{1,\dots, 4\}$. We denote these elements  by $e_i$, $i\in \{1,\dots, 4\}$. For every edge $e \in A(D)$  incident with $s$, we have three corresponding elements. We denote them by $e_1, e_2, e_3$, or $e_1, e_2, e_4$, depending on the case when $e$ is in- or out-arc for $s$.
 
 For   $i\in \{1,\ldots, n-1\}$, we define 
 \begin{eqnarray*}
 {\cal B}^{4i} =   \Big\{W~\Big|~W\in {\cal I},~|W|=4i,~\forall~e\in A(D)  \mbox{ either $W\cap \{e_1,e_2,e_3,e_4\}=\emptyset$ or 
 $ \{e_1,e_2,e_3,e_4\} \subseteq W$}  \Big\}.  
 %  & & \hspace{1cm} \mbox{     in $G$    with all the vertices belonging to $X$}. \Big\}
  \end{eqnarray*}
 \noindent 
 For $W\in \cal I$, by $A_W$ we denote the set of arcs $e\in A(D)$ such that $ \{e_1,e_2,e_3,e_4\} \cap W\neq \emptyset$.   Now we are 
 ready to state the lemma that relates representative sets and the {\sc Minimum SCSS} problem.
 
 \begin{lemma}
 \label{lem:medrelationrepset}
  Let $D$ be a strong digraph on $n$ vertices and $\ell\leq n-2 $ be a natural number. 
  Then there exists a strong spanning subdigraph $D'$ of $D$ with at most $2n-2 -\ell$ arcs if and only if  there exists a set 
  $\widehat{F} \in \widehat{\cal B}^{4\ell}\subseteq_{rep}^{n'-4\ell}  {\cal B}^{4\ell}$ such that $D$ has a strong spanning subdigraph 
  $\bar{D}$ with $A_{\widehat{F}}\subseteq A(\bar{D})$.  Here, $n'=4n-4$. 
 \end{lemma}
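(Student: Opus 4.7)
The plan is to translate the combinatorial characterization of Proposition~\ref{prop:two_branching} into the matroid language of $M = M_1\oplus M_2\oplus M_3\oplus M_4$. The governing observation is that an out-branching $B_s^+$ rooted at $s$ is exactly an $(n-1)$-element subset of $A(D_s^+)$ whose image in $E_1$ is independent in the graphic matroid $M_1$ (a spanning tree in $U(D)$) and whose image in $E_3$ is independent in the partition matroid $M_3$ (at most one in-arc per vertex); symmetrically, an in-branching $B_s^-$ corresponds to a set that is independent in $M_2\oplus M_4$.

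For the forward direction, suppose $D$ admits a strong spanning subdigraph with at most $2n-2-\ell$ arcs. By Proposition~\ref{prop:two_branching} there exist an out-branching $B_s^+$ and an in-branching $B_s^-$ with $|A(B_s^+)\cap A(B_s^-)|\geq \ell$; fix $F\subseteq A(B_s^+)\cap A(B_s^-)$ with $|F|=\ell$. The canonical lift of $F$ into the matroid is
\[
W \;=\; \bigcup_{e\in F}\{e_1,e_2,e_3,e_4\},
\]
which lies in $\cal B^{4\ell}$ by construction. I will complete $W$ to a basis of $M$ using the remaining arcs of the two branchings:
\[
Y \;=\; \bigcup_{e\in A(B_s^+)\setminus F}\{e_1,e_3\}\;\cup\;\bigcup_{e\in A(B_s^-)\setminus F}\{e_2,e_4\}.
\]
The four coordinate projections of $W\cup Y$ onto $E_1,E_2,E_3,E_4$ recover $A(B_s^+), A(B_s^-), A(B_s^+), A(B_s^-)$ respectively, each of which is independent in the corresponding matroid; hence $W\cup Y\in\cal I$, $W\cap Y=\emptyset$, and $|Y|=n'-4\ell$, so $W\cup Y$ is in fact a basis of $M$.

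Invoking the defining property of $\widehat{\cal B}^{4\ell}\subseteq_{rep}^{n'-4\ell}\cal B^{4\ell}$ on the pair $(W,Y)$ yields a set $\widehat{F}\in\widehat{\cal B}^{4\ell}$ disjoint from $Y$ with $\widehat{F}\cup Y\in\cal I$. Membership in $\cal B^{4\ell}$ enforces the ``all four copies or none'' condition per arc, so $|A_{\widehat{F}}|=\ell$. The crucial step is to extract branchings from $\widehat{F}\cup Y$: the $E_1$-projection corresponds to the arc set $A_{\widehat{F}}\cup(A(B_s^+)\setminus F)$ (a disjoint union of $\ell$ and $n-1-\ell$ arcs), which is a forest on $n$ vertices with exactly $n-1$ edges, hence a spanning tree of $U(D)$; simultaneously the $E_3$-projection shows that every non-root vertex receives at most one in-arc, and the edge count forces equality. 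A short ``walk backwards along unique in-arcs'' argument then promotes this arc set to an out-branching $\bar B_s^+$ of $D$. The symmetric argument on $(E_2,E_4)$ produces an in-branching $\bar B_s^-$. Both contain $A_{\widehat{F}}$, so $|A(\bar B_s^+)\cap A(\bar B_s^-)|\geq \ell$, and Proposition~\ref{prop:two_branching} in the reverse direction delivers the desired strong spanning $\bar D$ with at most $2n-2-\ell$ arcs and $A_{\widehat{F}}\subseteq A(\bar D)$. The reverse direction of the lemma is then immediate, taking $D':=\bar D$.

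The step I expect to require the most care is the extraction of branchings in the second half of the forward direction: the ``all four copies or none'' constraint in $\cal B^{4\ell}$, the direct-sum structure of $M$, and the specific choice of $Y$ must all conspire so that the $M_1$- and $M_3$-projections of $\widehat{F}\cup Y$ refer to exactly the same arc set of $D$ (and likewise for $M_2,M_4$). This coherence is what converts a purely matroid-theoretic independence statement into the combinatorial conclusion that a genuine pair of branchings sharing the $\ell$ arcs of $A_{\widehat{F}}$ exists.
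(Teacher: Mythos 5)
Your proposal is correct and follows essentially the same route as the paper: the same four-copy lift of the $\ell$ shared arcs, the same completion by $\{e_1,e_3\}$- and $\{e_2,e_4\}$-pairs of the remaining branching arcs to an independent set of size $n'$, the same invocation of the $(n'-4\ell)$-representative property, and the same projection argument onto $M_1,M_3$ (resp.\ $M_2,M_4$) to recover an out-branching and an in-branching rooted at $s$ that both contain $A_{\widehat{F}}$. The only cosmetic difference is that you finish by citing Proposition~\ref{prop:two_branching} in reverse, whereas the paper directly takes $\bar{D}$ with arc set $X\cup Y\cup A_{\widehat{F}}$; these amount to the same construction.
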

 \begin{proof}
We only show the forward	 direction of the proof, the reverse direction is straightforward. Let $D'$ be a strong spanning subdigraph of $D$ with at most  $2n-2 -\ell$ arcs. Thus, by Proposition~\ref{prop:two_branching} we have that for any vertex $v\in V(D') $, there exists  an 
out-branching $B_v^+$ and an in-branching $B_v^-$ in $D'$ such that  $|A(B_v^+)\cap A(B_v^-)|\geq \ell$. Observe that the arcs in $A(B_v^+)\cap A(B_v^-)$ form an out-forest (in-forest). Let  $F'$ be an arbitrary subset of $A(B_v^+)\cap A(B_v^-)$ containing exactly  $\ell$  arcs. Take $X=A(B_v^+)\setminus F'$ and $Y=A(B_v^-)\setminus F'$. Observe that $X$ and $Y$ need not be disjoint. Clearly,  $|X|=|Y|=n-1-\ell$. 
 
%   Let $D$ be a strongly connected subgraph, which is a union of an out-branching $B_{O}$ and an in-branching   $B_{I}$  
%  rooted at vertex $v$. We assume that  $|B_{O}\cup B_{I}|=2n-\ell -2$ for some natural number  $\ell\leq n-2 $ .
%   Let is remark that  the edge set $X=B_{O}\cap B_{I}$ is an in- and out-forrest in $G$ of size  $\ell$ and that
%   $F_O=B_{O}\setminus X$   and  $F_I=B_{I}\setminus X$ are  out-forrest and in-forrest of size $n-1-\ell$. 
   
   In matroid $M$, one can associate with $D'$ an independent set $I_{D'}$ of size $4n-4$ as follows: 
   \[I_{D'}= \bigcup_{e \in F'} \{e_1,e_2,e_3,e_4\}   \bigcup_{e\in X}  \{e_1,e_3\} \bigcup_{e \in Y} \{e_2,e_4\}.\]
   By our construction,  we have that $I_{D'}$ is an independent set in $\cal I$ and  $|I_{D'}|=4\ell+4(n-1-\ell)=n'$. Let  
   $F= \bigcup_{e \in F'} \{e_1,e_2,e_3,e_4\}$, $\bar{X}= \bigcup_{e\in X}  \{e_1,e_3\}$ and $\bar{Y}=\bigcup_{e \in Y} \{e_2,e_4\}$. Then notice that $F  \in  {\cal B}^{4\ell} $ and $F\subset I_{D'}$.  
   This implies that there exists a set  $\widehat{F} \in \widehat{\cal B}^{4\ell}\subseteq_{rep}^{n'-4\ell}  {\cal B}^{4\ell}$ such that 
   $I_{\bar{D}}=\widehat{F} \cup \bar{X}\cup \bar{Y}\in \cal I$.  
   We 
   show that  $D$ has a strong spanning subdigraph $\bar{D}$ with $A_{\widehat{F}}\subseteq A(\bar{D})$. Let 
   $\bar{D}$ be the digraph with the vertex set $V(D)$ and the arc set $A(\bar{D})=X\cup Y \cup A_{\widehat{F}}$. Consider the following four sets. 
   \begin{enumerate}
   \item Let $W_1=\{e_1~|~e \in X \cup  A_{\widehat{F}} \}$ then we have that $W_1\subseteq I_{\bar{D}}$ and thus $W_1\in {\cal I}_1$. This together with     the fact that $|W_1|=n-1$ implies that $ X \cup A_{\widehat{F}}$ forms a spanning tree in $U(D)$. 
  \item Let $W_2=\{e_2~|~e \in Y \cup  A_{\widehat{F}} \}$. Similar to the first case,  then 
  %we have that $W_2\subseteq I_{\bar{D}}$ and thus $W_2\in {\cal I}_2$. This together    with     the fact that $|W_2|=n-1$ implies that 
   $Y \cup A_{\widehat{F}}$ forms a spanning tree in $U(D)$. 
   \item Let $W_3=\{e_3~|~e \in X \cup  A_{\widehat{F}} \}$   then we have that $W_3\subseteq I_{\bar{D}}$ and thus $W_3\in {\cal I}_3$. 
   This together with the fact that $|W_1|=|W_3|=n-1$ and that $X \cup  A_{\widehat{F}}$ is a a spanning tree in $U(D)$ implies that
  $X \cup  A_{\widehat{F}}$ forms an out-branching rooted at $s$ in $D_s^+$. 
   
  \item Let $W_4=\{e_3~|~e \in Y \cup  A_{\widehat{F}} \}$.  
  % then we have that $W_4\subseteq I_{\bar{D}}$ and thus $W_4\in {\cal I}_4$. 
  % This together with the fact that $|W_2|=|W_4|=n-1$ and that $Y \cup  A_{\widehat{F}}$ is a a spanning tree in $U(D)$ implies that $Y \cup  A_
  %{\widehat{F}}$ forms an in-branching rooted at $s$ in $D_s^-$. 
   Similar to the previous case,    then $Y \cup  A_{\widehat{F}}$ forms an in-branching rooted at $s$ in $D_s^-$. 
\end{enumerate}
 We have shown that $\bar{D}$ contains $A_{\widehat{F}}$ and has an out-branching and in-branching rooted at $s$. This implies that 
   $\bar{D}$ is the desired strong spanning subdigraph of $D$ containing a set from $\widehat{\cal B}^{4\ell}$. This concludes the proof of the lemma.  
 \end{proof}

 \begin{lemma}
 \label{lem:medrepsetcomp}
    Let $D$ be a strong digraph on $n$ vertices and $\ell\leq n-2 $ be a natural number.  Then in time $\cO\left(\max_{i\in [\ell]} {n'  \choose 4i}^{\omega}  mn^2 \log n\right)$ we can compute 
    $\widehat{\cal B}^{4\ell}\subseteq_{rep}^{n'-4\ell}  {\cal B}^{4\ell}$ of size ${n'\choose 4 \ell}$. Here, $n'=4n-4$.
  \end{lemma}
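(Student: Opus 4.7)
The plan is to compute $\widehat{\cal B}^{4\ell}$ by a bottom-up dynamic programming scheme that grows the representative family one arc-block at a time, exactly in the spirit that is implicit in the auxiliary lemmata \ref{lem:repunion}, \ref{lem:repconvolution}, \ref{lem:reptransitive}. First, fix an arbitrary enumeration $A(D)=\{a_1,\ldots,a_m\}$ of the arcs and, for $e\in A(D)$, write $\beta(e)=\{e_1,e_2,e_3,e_4\}$ for the corresponding arc-block in $M=M_1\oplus M_2\oplus M_3\oplus M_4$ (with the obvious adjustment to three elements when $e$ is incident with $s$). Let ${\cal B}^{4j}_i\subseteq{\cal B}^{4j}$ denote the subfamily of sets that only use blocks $\beta(a_r)$ with $r\le i$. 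I will construct, for each pair $(i,j)$ with $0\le i\le m$ and $0\le j\le \ell$, a family $\widehat{\cal B}^{4j}_i\subseteq_{rep}^{n'-4j}{\cal B}^{4j}_i$ of size at most $\binom{n'}{4j}$.

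The initialization is $\widehat{\cal B}^{0}_0=\{\emptyset\}$ and $\widehat{\cal B}^{4j}_0=\emptyset$ for $j\ge 1$. The transition is based on the identity
\[
{\cal B}^{4j}_i \;=\; {\cal B}^{4j}_{i-1} \;\cup\; \bigl({\cal B}^{4(j-1)}_{i-1}\bullet\{\beta(a_i)\}\bigr),
\]
which partitions the sets of ${\cal B}^{4j}_i$ according to whether the block $\beta(a_i)$ is used or not. By Lemma~\ref{lem:repconvolution}, the family $\widehat{\cal B}^{4(j-1)}_{i-1}\bullet\{\beta(a_i)\}$ is $(n'-4j)$-representative for ${\cal B}^{4(j-1)}_{i-1}\bullet\{\beta(a_i)\}$; by Lemma~\ref{lem:repunion}, the union
\[
{\cal T}^{4j}_i\;=\;\widehat{\cal B}^{4j}_{i-1}\;\cup\;\bigl(\widehat{\cal B}^{4(j-1)}_{i-1}\bullet\{\beta(a_i)\}\bigr)
\]
is $(n'-4j)$-representative for ${\cal B}^{4j}_i$. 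I then invoke Theorem~\ref{thm:repsetlovasz} on the $4j$-family ${\cal T}^{4j}_i$ of size at most $2\binom{n'}{4j}$ to obtain $\widehat{\cal B}^{4j}_i\subseteq_{rep}^{n'-4j}{\cal T}^{4j}_i$ of size at most $\binom{n'}{4j}$; Lemma~\ref{lem:reptransitive} then guarantees that $\widehat{\cal B}^{4j}_i\subseteq_{rep}^{n'-4j}{\cal B}^{4j}_i$, closing the induction. The final answer is $\widehat{\cal B}^{4\ell}=\widehat{\cal B}^{4\ell}_m$.

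To bound the running time, a representation matrix $A_M$ of $M$ over a finite field of size $n^{\cO(n)}$ is obtained in polynomial time by combining Propositions~\ref{prop:graphicrep}, \ref{prop:uniformandpartitionrep} and \ref{prop:disjointsumrep}; each field operation costs $\cO(n\log n)$ time. By Theorem~\ref{thm:repsetlovasz}, one application of the representative-set reduction to the family ${\cal T}^{4j}_i$ (of size at most $2\binom{n'}{4j}$, over a matroid of rank $n'$) costs
\[
\cO\!\left(\binom{n'}{4j}(4j)^{\omega}\cdot\binom{n'}{4j}\;+\;\binom{n'}{4j}^{\omega}\right)\;=\;\cO\!\left(\binom{n'}{4j}^{\omega}\cdot n\right)
\]
field operations. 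Summing over $j\in[\ell]$ and $i\in[m]$ and multiplying by the $\cO(n\log n)$ cost of a single field operation yields a total bound of $\cO\!\left(\max_{i\in[\ell]}\binom{n'}{4i}^{\omega}\cdot mn^{2}\log n\right)$, matching the statement.

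The main obstacle is not mathematical but bookkeeping: one has to verify carefully that $\widehat{\cal B}^{4(j-1)}_{i-1}\bullet\{\beta(a_i)\}$ is indeed a family of independent sets (i.e., that adjoining $\beta(a_i)$ to an already-independent set keeps it in $\cal I$ whenever the union is disjoint), and that the ground sets of the blocks $\beta(a_i)$ are pairwise disjoint so that the $\bullet$ operation behaves as required in Lemma~\ref{lem:repconvolution}; both facts hold by construction of the direct sum $M$ and by the disjointness of the copies $E_1,E_2,E_3,E_4$.
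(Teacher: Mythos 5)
Your overall architecture (grow the family block by block, alternate the $\bullet$ operation with a reduce step via Theorem~\ref{thm:repsetlovasz}, and glue the correctness together with Lemmata~\ref{lem:reptransitive}, \ref{lem:repunion}, \ref{lem:repconvolution}) is the same as the paper's, which does a single-index DP over the number of blocks and appends all arc-blocks at once; your per-arc refinement is a harmless variation. The genuine problem is the claim you single out at the end and dismiss as "holding by construction": that adjoining $\beta(a_i)=\{e_1,e_2,e_3,e_4\}$ to an independent set keeps it in ${\cal I}$ whenever the union is disjoint. This is false. Disjointness of the copies $E_1,E_2,E_3,E_4$ only guarantees that the elements being added are new; it does not guarantee independence in the direct sum. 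For instance, if the current set already contains (in the graphic copy $M_1$) the $E_1$-copies of edges forming a path between the endpoints of $a_i$ in $U(D)$, then adding $(a_i)_1$ closes a cycle and the union is dependent; likewise, if some other arc with the same head $v$ as $a_i$ already contributes its $E_3$-copy, then adding $(a_i)_3$ violates $|I\cap \In{v}|\leq 1$ in the partition matroid $M_3$. Consequently your transition identity is only the containment ${\cal B}^{4j}_i \subseteq {\cal B}^{4j}_{i-1} \cup \bigl({\cal B}^{4(j-1)}_{i-1}\bullet\{\beta(a_i)\}\bigr)$, not an equality: the right-hand side contains sets outside ${\cal I}$. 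Your intermediate families ${\cal T}^{4j}_i$ then contain dependent sets, so the invocation of Theorem~\ref{thm:repsetlovasz} (whose hypothesis is a family of \emph{independent} sets) is not licensed as stated, and the invariant $\widehat{\cal B}^{4j}_i\subseteq {\cal B}^{4j}_i$ — which is needed for $\widehat{\cal B}^{4\ell}\subseteq_{rep}^{n'-4\ell}{\cal B}^{4\ell}$ to even be well-formed under the paper's definition — is not established.

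The repair is exactly the step the paper takes and you omit: after the $\bullet$ operation, intersect with ${\cal I}$, i.e.\ use ${\cal T}^{4j}_i=\widehat{\cal B}^{4j}_{i-1}\cup\bigl(\bigl(\widehat{\cal B}^{4(j-1)}_{i-1}\bullet\{\beta(a_i)\}\bigr)\cap{\cal I}\bigr)$, checking independence of each candidate by Gaussian elimination on $A_M$. Discarding dependent sets is harmless for representativity, since a dependent set can never serve as the witness $X$ (as $X\cup Y\in{\cal I}$ forces $X\in{\cal I}$ by heredity) nor as the returned set $\widehat X$; with this modification your chain repconvolution--repunion--reduce--transitivity goes through verbatim. (One could alternatively observe that in the algorithm behind Theorem~\ref{thm:repsetlovaszweighted} a dependent set yields the zero column and is automatically excluded from any column basis, but that is an argument about the algorithm's internals, not about the theorem as stated, and it is not the argument you give.) Separately, your time accounting is slightly loose — summing $\cO\bigl(\binom{n'}{4j}^{\omega}n\bigr)$ field operations over the $m\ell$ table entries and multiplying by the cost of a field operation gives an extra polynomial factor beyond the stated bound — but this is of the same order of sloppiness as the paper's own bookkeeping and is not the substantive issue.
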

  \begin{proof}
  We describe a dynamic programming based algorithm.   Let 
${\cal D}$ be an array of size $\ell$. The entry ${\cal D}[i]$ will store the family  
$\widehat{\cal B}^{4i}\subseteq_{rep}^{n'-4i}  {\cal B}^{4\ell}$. We 
fill the entries in the array $\cal D$ in the increasing order of its index, that is, from $0,\ldots,\ell$.  For the base case define 
$\widehat{\cal B}^{0}=\{\emptyset\}$ and let $W=\{\{e_1,e_2,e_3,e_4\}|~e \in A(D)\}$.  
Given that  ${\cal D}[i]$ is filled for all $i'\leq i$, we fill  ${\cal D}[i+1]$ as follows. Define  
${\cal N}^{4(i+1)}= \left(\widehat{\cal B}^{4i} \ \bullet W \right) \cap {\cal I}.$

%We know that for  
%$i=1$,  \[{\cal B}^{4}=\{\{e_1,e_2,e_3,e_4\}~|~e\in A(D) \mbox{ and } \{e_1,e_2,e_3,e_4\} \in {\cal I} \}.\] 

\begin{claim}
\label{claim:medauxrepset}
For all $0\leq i\leq \ell-1$, 
${\cal N}^{4(i+1)} \subseteq_{rep}^{n'-4(i+1)} {\cal B}^{4(i+1)}$.
\end{claim}
\begin{proof}
Let $S \in {\cal B}^{4(i+1)}$ and  $Y$ be a set of size $n'-4(i+1)$ such that 
$S\cap Y=\emptyset$ and $S\cup Y \in {\cal I}$. We will show that there exists a set $\hat{S} \in {\cal N}^{4(i+1)}$ such that 
$\hat{S} \cap Y=\emptyset$ and $\hat{S}\cup Y \in {\cal I}$. This will imply the desired result. 

Let $e\in A(D)$ such that $\{e_1,e_2,e_3,e_4\}\subseteq S$. Define 
$S^*=S\setminus \{e_1,e_2,e_3,e_4\}$ and $Y^*=Y \cup \{ e_1,e_2,e_3,e_4\}$.  Since $S\cup Y \in \cal I$ we have that $S^*\in \cal I$ and 
$Y^*\in  \cal I$. Observe that $S^*\in {\cal B}^{4i}$,  $S^* \cup Y^*\in \cal I$ and the size of $Y^*$ is $n'-4i$. This implies that there exists 
$\widehat{S}^*$ in  $\widehat{\cal B}^{4i}\subseteq_{rep}^{n'-4i}  {\cal B}^{4\ell}$ such that $\widehat{S}^* \cup Y^*\in \cal I$. Thus 
$\widehat{S}^* \cup \{e_1,e_2,e_3,e_4\} \in \cal I$ and also in $\widehat{\cal B}^{4i} \ \bullet W $ and thus in ${\cal N}^{4(i+1)}$. Taking 
$\widehat{S}=\widehat{S}^*\cup \{e_1,e_2,e_3,e_4\}$ suffices for our purpose. This completes the proof of the claim.  
\end{proof}

We fill the entry for ${\cal D}[i+1]$ as follows. Observe that ${\cal N}_{uv}^{4(i+1)}=  ({\cal D}[i,w] \bullet W)\cap \cal I .$ 
We already have computed the family corresponding to ${\cal D}[i] $. By Theorem~\ref{thm:repsetlovasz}, 
 $|\whnd{\cal B}^{4i}|\leq {n' \choose 4i}$ and thus $|{\cal N}^{4(i+1)}|\leq  4m {n' \choose 4i}$. Furthermore,  
 we can  compute ${\cal N}^{4(i+1)}$ in time $\cO\left(mn {n' \choose 4i} \right)$. Now using Theorem~\ref{thm:repsetlovasz},   
 we can compute  $\widehat{\cal N}^{4(i+1)} \subseteq_{rep}^{n'-4(i+1)}{\cal N}^{4(i+1)}$ in time \tc{rm}{t}{4i+4}{n'-4(i+1)}, where 
 $t=4m {n' \choose 4i}$.

 By Claim~\ref{claim:medauxrepset} we know that ${\cal N}^{4(i+1)} \subseteq_{rep}^{n'-4(i+1)} {\cal B}^{4(i+1)}$. Thus Lemma~\ref{lem:reptransitive} implies that $\widehat{\cal N}^{4(i+1)} = \whnd{\cal B}^{4(i+1)}\subseteq_{rep}^{n'-4(i+1)} {\cal B}^{4(i+1)}$. We assign this family to ${\cal D}[i+1]$. This completes the description and the correctness of the dynamic programming.  The field size for uniform matroids 
 are upper bounded by $n^{\cO(1)}$ and thus we can perform all the field operations in time $\cO( \log n)$. Thus, the running time of this algorithm is upper bounded by 
 \[\cO\left( \sum_{i=1}^{\ell}  \tcwd{rm}{4m{n' \choose 4(i-1)}}{4i}{n'-4i} \right)=\cO\left(\max_{i\in [\ell]} {n'  \choose 4i}^{\omega}  m \log n\right).\]
This completes the proof.
\end{proof}
 
 \begin{lemma}
 \label{lem:msssthm}
 {\sc Minimum  SCSS} can be solved in time $\cO(2^{4\omega n}m n)$.
 \end{lemma}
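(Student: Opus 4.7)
The plan is to combine Lemma~\ref{lem:medrelationrepset} with the efficient computation from Lemma~\ref{lem:medrepsetcomp} and a polynomial-time realizability test that checks, for each candidate $\widehat F$, whether it can be completed to a strong spanning subdigraph with the promised number of arcs. First, I would fix an arbitrary root $s \in V(D)$ and build the direct sum matroid $M = M_1 \oplus M_2 \oplus M_3 \oplus M_4$ along with its representation over a sufficiently large finite field (via Propositions~\ref{prop:disjointsumrep}, \ref{prop:uniformandpartitionrep} and~\ref{prop:graphicrep}). Then I would run the dynamic program of Lemma~\ref{lem:medrepsetcomp} once to produce, in a single sweep, the families $\widehat{\cal B}^{4i} \subseteq_{rep}^{n'-4i} {\cal B}^{4i}$ for every $i \in \{0,1,\dots,n-2\}$, where $n'=4n-4$.

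Next, the algorithm scans $\ell$ from $n-2$ downward. For each $\widehat F \in \widehat{\cal B}^{4\ell}$ it looks for an out-branching $B_s^+$ and an in-branching $B_s^-$ rooted at $s$ in $D$ with $A_{\widehat F} \subseteq A(B_s^+) \cap A(B_s^-)$. Because $\widehat F$ is independent in $M$, independence in $M_1$ and $M_3$ forces $A_{\widehat F}$ to be an out-forest, while independence in $M_2$ and $M_4$ forces it to be an in-forest; hence $A_{\widehat F}$ is a vertex-disjoint union of directed paths, none of which touches $s$ (arcs incident to $s$ have only three copies in $M$ and so can never satisfy the quadruple condition defining ${\cal B}^{4\ell}$). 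Realizability therefore reduces to two standard branching-extension checks: contract each directed path of $A_{\widehat F}$ to a single super-vertex, and test in polynomial time whether the contracted digraph admits an out-branching (respectively in-branching) rooted at the super-vertex containing $s$. The largest $\ell$ for which some $\widehat F$ passes both tests yields, via Proposition~\ref{prop:two_branching}, a strong spanning subdigraph with $|A(B_s^+) \cup A(B_s^-)| \leq 2n-2-\ell$ arcs, and Lemma~\ref{lem:medrelationrepset} certifies this is optimal.

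For the running time, the dominant term is the dynamic program of Lemma~\ref{lem:medrepsetcomp}, which runs in $\cO\!\left(\max_{i \leq n-2} \binom{n'}{4i}^{\omega} \cdot mn^2 \log n\right)$ time. Using $\binom{n'}{4i} \leq \binom{4n-4}{2n-2} \leq 2^{4n-4}$, this is at most $\cO(2^{4\omega n} \cdot mn^2 \log n)$, comfortably inside the claimed bound $\cO(2^{4\omega n} m^2 n)$ (in particular, we may assume $n \log n = \cO(m)$ after deleting isolated vertices of a strong digraph). The total cost of the realizability tests across all $\ell$ is $\sum_{\ell} |\widehat{\cal B}^{4\ell}| \cdot \mathrm{poly}(n,m) = \cO(2^{4n} \cdot mn)$, which is absorbed. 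The most delicate step is the realizability test: one needs to argue that the matroid independence of $\widehat F$ really forces $A_{\widehat F}$ into a shape where extension reduces to a polynomial branching problem, rather than to something NP-hard in disguise. Once that structural observation is in hand, assembling the bound becomes routine.
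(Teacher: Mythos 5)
Your proof follows the same skeleton as the paper: fix a root $s$, build the direct-sum matroid $M = M_1 \oplus M_2 \oplus M_3 \oplus M_4$, compute $\widehat{\cal B}^{4\ell}$ via Lemma~\ref{lem:medrepsetcomp}, and for each candidate $\widehat F$ run a polynomial-time test for whether $A_{\widehat F}$ can be extended to both an out-branching and an in-branching rooted at $s$, then deduce the optimum via Proposition~\ref{prop:two_branching} and Lemma~\ref{lem:medrelationrepset}. The running-time analysis is also essentially identical.

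Where you deviate is the realizability test, and as written it has a gap. Your structural observation --- that independence in $M$ forces $A_{\widehat F}$ to be a vertex-disjoint union of directed paths avoiding $s$ --- is correct. But the proposed contraction, ``contract each directed path to a super-vertex and test the contracted digraph for an out-branching (resp.\ in-branching),'' is unsound if taken literally. Contracting a path $u_1 \to \cdots \to u_k$ into a super-vertex $v$ keeps, in the ordinary sense of digraph contraction, in-arcs to every $u_j$; an out-branching of the contracted digraph may enter $v$ via an arc whose original head was some $u_j$ with $j>1$, and upon uncontraction $u_j$ would then have in-degree two, so what you recover is not an out-branching. To repair this you need an \emph{asymmetric} contraction: for the out-branching test, keep only the in-arcs of the path's first vertex $u_1$ (plus out-arcs of all $u_i$), and symmetrically-but-differently for the in-branching test, so the two tests cannot share one contracted digraph. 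The paper sidesteps this entirely: it assigns weight $0$ to arcs of $A_{\widehat F}$ and weight $1$ to all others, runs Edmonds' minimum-weight branching algorithm, and checks whether the optimum equals $n-1-|A_{\widehat F}|$. That test needs no structural information about $A_{\widehat F}$ and is correct as stated. Finally, your aside that $n\log n = \cO(m)$ for strong digraphs is false (a directed Hamiltonian cycle has $m=n$); it is harmless here since the exponential term dominates, but the claimed inequality itself does not hold.
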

  \begin{proof}
Let us fix $n'=4n-4$. Proposition~\ref{prop:two_branching} implies that  the  {\sc Minimum SCSS} problem is equivalent to finding, for an arbitrary vertex $s \in  V (D)$, an out-branching  $B_v^+$ and an in-branching $B_v^-$ that maximizes  $|A(B_v^+)\cap A(B_v^-)|$.  We guess the value of $|A(B_v^+)\cap A(B_v^-)|$ and let this be $\ell$. By Lemma~\ref{lem:medrelationrepset}, there exists a strong spanning subdigraph $D'$ of $D$ with at most $2n-2 -\ell$ 
arcs if and only if  there exists a set   $\widehat{F} \in \widehat{\cal B}^{4\ell}\subseteq_{rep}^{n'-4\ell}  {\cal B}^{4\ell}$ such that $D$ 
has a strong spanning subdigraph  $\bar{D}$ with $A_{\widehat{F}}\subseteq A(\bar{D})$. Recall that for $X\in \cal I$, 
by $A_X$ we denote the set of arcs $e\in A(D)$ such that $ \{e_1,e_2,e_3,e_4\} \cap X\neq \emptyset$.  Now using 
Lemma~\ref{lem:medrepsetcomp} we compute  $\widehat{\cal B}^{4\ell}\subseteq_{rep}^{n'-4\ell}  {\cal B}^{4\ell}$ in time 
$\cO\left(\max_{i\in [\ell]} {n'  \choose 4i}^{\omega}  m \log n\right)$.  

For every $\widehat{F} \in \widehat{\cal B}^{4\ell}$ we test whether  $A_{\widehat{F}}$ can be extended to an out-branching in $D_s^+$ and to an in-branching in $D_s^-$. We can do it in $\cO(n(n+m))$-time by putting weights $0$ to the arcs of $A_{\widehat{F}}$ and weights $1$ to all remaining arcs and then by running the classical algorithm of Edmonds \cite{Edmonds67}.  Since $\ell \leq n-2$, the running time of this algorithm is upper bounded by $\cO(2^{4\omega n}mn)$. This concludes the proof. 
%In fact this algorithm can also be used to find a 
%[ to an   out-branching and in-branching of minimum weight. 
%The total running time of the algorithm is bounded by $O^*(2^{4\omega n})$.
 \end{proof}
  Finally, we are ready to prove the main result of this section
 \begin{theorem}
 {\sc Minimum Equivalent Graph} can be solved in time $\cO(2^{4\omega n}m n)$.
 \end{theorem}
  \begin{proof}
Given an arbitrary digraph $D$ we first find its strongly connected components $C_1,\ldots,C_s$. Now on each $C_i$, we apply  Lemma~\ref{lem:msssthm} and obtain a minimum equivalent subdigraph $C_i'$. After this we apply Proposition~\ref{prop:medredtostrong} and obtain a minimum equivalent subdigraph of $D$. Since all the steps except Lemma~\ref{lem:msssthm} takes polynomial time we get the desired running time. This completes the proof.  
\end{proof}

A weighted variant of  {\sc Minimum Equivalent Graph} has also been studied in literature. 
More precisely the problem is defined as follows.   
 \begin{center} 
\fbox{\begin{minipage}{0.96\textwidth}
\noindent{\sc Minimum Weight Equivalent Graph ({\sc MWEG})}% \hfill {\bf Parameter:} $n$
 \\
\noindent {\bf Input}: A directed graph $D$ and a weight function $w: A(D)\rightarrow \mathbb{N}. $\\
\noindent{\bf Task}:  Find a minimum weight equivalent subdigraph  of $D$. 
%minimum sized set of arcs $F\subseteq A(D)$ such that the subdigraph $D'$  on  \\
%\noindent{\phantom{{\em  Question}:}}  vertex reachability relations of  the original digraph $D$. 
\end{minipage}}
\end{center}
\medskip

{\sc MWEG} can be solved along the same line as {\sc MEG} but to do this we need to use the notion of min $q$-representative family and use 
Theorem~\ref{thm:repsetlovaszweighted} instead of Theorem~\ref{thm:repsetlovasz}. These changes give us the following theorem. 
\begin{theorem}
 {\sc Minimum Weight Equivalent Graph} can be solved in time $\cO(2^{4\omega n}m n \log W)$. Here, $W$ is the maximum value assigned by the weight function $w: A(D)\rightarrow \mathbb{N}$. 
 \end{theorem}
\subsection{Dynamic Programming over graphs of bounded treewidth}
%!TEX root = repset-main.tex

 %\defparproblem{ {\sc  Steiner Tree} }{An undirected $G$ together with a tree-decomposition of width $w$, $T\subseteq V(G)$ and positive integers $k$.}{$w$ }
 %{Is there a set $X\subseteq V(G)$ of size at most $k$ such that $T\subseteq X$ and $G[ X]$ is connected?}
 
In this section we discuss deterministic algorithms for ``connectivity problems'' such as {\sc Hamiltonian Path}, {\sc Steiner Tree}, 
{\sc Feedback Vertex Set}  parameterized by the treewidth of the input graph. The algorithms are based on Theorem~\ref{thm:repsetlovasz} 
and use graphic matroids to take care of connectivity constraints. The approach is  generic and can be used whenever all the relevant information about a ``partial solution'' can be encoded as an independent set of a specific linear matroid. We exemplify the approach on the {\sc Steiner Tree} problem. 
%We provide a cookbook for this approach in Section~\ref{subsect:frameworktw}.

 \medskip
\begin{center} 
\fbox{\begin{minipage}{0.96\textwidth}
\noindent{\sc Steiner Tree} \\ %\hfill {\bf Parameter:} $\tw$ \\
\noindent {\bf Input}: An undirected graph $G$ 
%together with a tree-decomposition $(\mathbb{T},\mathcal{ X})$ of width $\tw$, $T\subseteq V(G)$  \\
with a set of terminals $T\subseteq V(G)$, and  a  weight\\
\noindent{\phantom{{\em Input}:}} 
 function $w:E(G)\rightarrow \mathbb{N}$.\\
\noindent{\bf Task}: Find a subtree  in $G$ of minimum weight spanning all vertices of $T$. 
%and $G[ X]$ is \\
%\noindent{\phantom{{\em  Question}:}} connected?
\end{minipage}}
\end{center}
\medskip

\subsubsection{Treewidth}
\label{subsect:twprelim}
%\noindent 
Let $G$ be a graph.  A {\em tree-decomposition} of a graph $G$ is a pair $(\mathbb{T},\mathcal{ X}=\{X_{t}\}_{t\in V({\mathbb T})})$
such that
\begin{itemize}
\setlength\itemsep{-1mm}
\item $\cup_{t\in V(\mathbb{T})}{X_t}=V(G)$,
\item for every edge $xy\in E(G)$ there is a $t\in V(\mathbb{T})$ such that  $\{x,y\}\subseteq X_{t}$, and 
\item for every  vertex $v\in V(G)$ the subgraph of $\mathbb{T}$ induced by the set  $\{t\mid v\in X_{t}\}$ is connected.
\end{itemize}

The {\em width} of a tree decomposition is $\max_{t\in V(\mathbb{T})} |X_t| -1$ and the {\em treewidth} of $G$ 
is the  minimum width over all tree decompositions of $G$ and is denoted by $\tw(G)$. 
%  If in the definition of {\em tree decomposition}  
%the tree $\mathbb T$ is a path, then the corresponding  tree-decomposition is the {\em path decomposition}. 
% The {\em pathwidth} of a graph of $G$  is the  minimum width over all path-decompositions of $G$ and is denoted by $\pw(G)$. 

A tree
decomposition  $(\mathbb{T},\mathcal{ X})$ is called a {\em nice tree
decomposition} if $\mathbb{T}$ is a tree rooted at some node $r$ where
$X_{r}=\emptyset$, each node of $\mathbb{T}$ has at most two children, and each
node is of one of the following kinds:
\begin{enumerate}
\item {\bf Introduce node}: a node $t$ that has only one child $t'$ where $X_{t}\supset X_{t'}$ and  $|X_{t}|=|X_{t'}|+1$.
%, and such that $t'$ is not an introduce node.
\item {\bf  Forget node}: a node $t$ that has only one child $t'$  where $X_{t}\subset X_{t'}$ and  $|X_{t}|=|X_{t'}|-1$.
%, and such that $t'$ is not a forget node.
\item {\bf Join node}:  a node  $t$ with two children $t_{1}$ and $t_{2}$ such that $X_{t}=X_{t_{1}}=X_{t_{2}}$.
\item {\bf Base node}: a node $t$ that is a leaf of $\mathbb T$, is different than the root, and $X_{t}=\emptyset$. 
\end{enumerate}
Notice that, according to the above definition, the root $r$ of $\mathbb{T}$ is
either a forget node or a join node. It is well known that any tree
decomposition of $G$ can be transformed into a nice tree decomposition
maintaining the same
width in linear time~\cite{Kloks94}. We use $G_t$ to denote the graph induced  by the
vertex set  $\cup_{t'}X_{t'}$, where $t'$ ranges over all descendants of $t$,
including $t$. By $E(X_t)$ we denote the edges present in $G[X_t]$.
We use $H_t$ to denote the graph on vertex set $V(G_t)$ and the edge set 
$E(G_t)\setminus E(X_t)$.  For clarity of presentation we use the term nodes to refer to the vertices of the tree 
$\mathbb T$.   

\subsubsection{{\sc Steiner Tree} parameterized by treewidth}
\label{subsect:steinertreetw}
%\medskip
%Minimum Weight 

%A $T\subseteq V(G)$ (defined in the problem) is called {\em terminals}.  

Let $G$ be an input  graph of the {\sc Steiner Tree} problem. Throughout this section, we say that $E'\subseteq E(G)$ is a {\em solution} if the subgraph induced on this edge set is connected and 
it contains all the terminal vertices. We call $E'\subseteq E(G)$ an {\em optimal solution} if $E'$ is a solution of the  minimum weight. % tree that contains all the terminal vertices.   
Let $\mathscr{S}$ be the family of edge subsets such that every edge subset corresponds to an optimal solution. That is, 
$$\mathscr{S}=\{E'\subseteq E(G)~|~E' \mbox{ is an optimal solution}\}.$$ 
We start with few definitions that will be useful in explaining the algorithm. 
Let $(\mathbb{T},\mathcal{ X})$  be a tree decomposition of $G$ of width $\tw$. Let $t$ be a node of $V(\mathbb{T})$. By $\mathcal{S}_t$ we denote the family of edge subsets  of $E(H_t)$, $\{E'\subseteq E(H_t)\}$,  that satisfies the following properties.  
\begin{itemize}
\item Either $E'$ is  a solution (that is, the subgraph formed by  this edge set is connected and  contains all the terminal vertices); or 
\item every vertex of $(T\cap V(G_t))\setminus X_t$ is incident with  some edge from  $E'$,  and every connected component of the graph  induced by $E'$  contains a  vertex from $X_t$.
\end{itemize}
%restriction of edge sets in $\mathbb S$ to 
%the edge set of $G_t$. That is, 
%\begin{eqnarray*}
% \mathcal{S}_t  & = & \Big\{E'\subseteq E(H_t)~|~\mbox{every vertex of $(T\cap V(G_t))\setminus X_t$ is incident with  some edge from  $E'$}, \\ 
% & & \mbox{ and every connected component of the graph  induced by $E'$ that has a vertex }  \\
% & & \mbox{from $T$ also contains a  vertex from $X_t$}\Big\}.
%\end{eqnarray*}

We call $\mathcal{S}_t$ a \emph{family of partial solutions} for $t$. We denote by  $K^t$   a complete graph on the vertex set $X_t$. 
For an edge subset $E^* \subseteq E(G)$ and  a  bag $X_t$ corresponding to a node $t$, we define the following. 
%by $\partial^t(E')$ we denote the vertex subset of $X_t$ such that for every 
%$v\in \partial^t(E') $ there exists an edge $e\in E'$ that is adjacent to $v$.  For an edge subset $E^*\subseteq E(H_t)$ we define the following.
\begin{enumerate}
\item  Set  $\partial^t(E^*)= X_t\cap V(E^*)$,  the set of endpoints  of $E^*$ in $X_t$.
%we denote the vertex subset of $X_t$ such that for every $v\in \partial^t(E^*) $ there exists an edge $e\in E^*$ that is incident with $v$. 
%\item By $\partial(E^*)$ we denote the veretx subset of $X_t$ such that for every $v\in \partial(E^*) $ there exists an edge $e\in E^*$ that is adjacent to $v$. 
\item Let $G^*$ be the subgraph of $G$ on the vertex set $V(G)$ and the edge set $E^*$. Let $C_1',\ldots ,C_\ell'$ be the connected components of  $G^*$ such that for all $i\in [\ell]$,  $C_i'\cap X_t\neq \emptyset$. Let $C_i=C_i'\cap X_t$. Observe that $C_1,\ldots,C_\ell$ is a partition of $\partial^t(E^*)$. 
%We denote this partition by ${\cal P}(E^*)$\footnote{FF: Cannot find where it is used!}. % we denote this partition $\{C_1~|~\cdots ~|~C_\ell \}$. 
%\item 
By $F(E^*)$ we denote a forest $\{Q_1,\ldots,Q_\ell\}$ where each $Q_i$ is an arbitrary spanning tree of $K^t[C_i]$. For an example, since 
$K^t[C_i]$ is a complete graph we could take $Q_i$ as a star. The purpose of  $F(E^*)$ is to keep track  
for   the vertices in $C_i$ whether they are in the same connected component of $G^*$. 
\item We define $w(F(E^*))=w(E^*)$. 
\end{enumerate}

% Let ${\cal A}$ and ${\cal B}$ be two family of edge subsets of $E(G)$, then we define 
% $${\cal A}\diamond {\cal B}=\{E_1\cup E_2\;|\;E_1\in {\cal A}\wedge E_2\in {\cal B}\wedge E_1\cap E_2=\emptyset \wedge G[E_1\cup E_2] \mbox{ is a forest}\}.$$

Our description of the algorithm  slightly deviates from the usual table look-up based expositions of dynamic programming algorithms on graphs of bounded treewidth.
%algorithms  for the problems parameterized by the treewidth. 
%standard for these type of algorithm. 
With every node $t$ of  $\mathbb T$,   we associate a subgraph of $G$. In our case it will be $H_t$. For every node $t$, rather than keeping a table, we keep  a family of partial solutions for the graph $H_t$. That is, for every optimal solution $L\in \mathscr{S}$ and its intersection $L_t=E(H_t)\cap L$ with the graph $H_t$, we have some  partial solution in the family that is ``as good as $L_t$''. More precisely, we have some partial solution, say $\hat{L}_t$ in our family such that $\hat{L}_t\cup L_R$ is also an optimum solution for the whole graph. Here, $L_R=L\setminus L_t$. As we move from one node $t$ in the decomposition tree to the next node $t'$ the graph $H_t$ changes to $H_{t'}$, and so does the set of partial solutions. The algorithm updates its set of partial solutions accordingly.
%The algorithm according to the changes to the graph $H_t$---a vertex is added, a vertex is forgot, or two graphs are glued at the same boundary---updates the family of partial solutions. Family of sets $\hat{L}_t$ play the role of a representative family of the family of partial solutions. 
Here matroids come into play: in order to bound the size of the family of partial solutions that the algorithm stores at each node we employ Theorem~\ref{thm:repsetlovaszweighted} for graphic matroids. More details are given in the proof of the following theorem, which is  the main result of this section.

%{\bf Need to check if the weight enters the running time of the theorem below as well as weighted representative theorem.}

\begin{theorem}\label{thm:steinertree_DP}
Let $G$ be an $n$-vertex graph given together with its tree decomposition of width $\tw$. Then 
{\sc Steiner Tree} on $G$ can be solved in time $\cO((1+2^{\omega+1})^{\tw} \tw^{\cO(1)}n)$.
\end{theorem}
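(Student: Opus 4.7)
The plan is to build a dynamic programming algorithm over a nice tree decomposition $(\mathbb{T},\mathcal{X})$ of $G$, maintaining at each node $t$ a weighted family $\mathcal{A}_t$ of forests on the complete graph $K^t$ that faithfully represents all partial solutions in $H_t$ with respect to future extensions. To each partial solution $E^*\in\mathcal{S}_t$ we associate $F(E^*)$, which, as defined in the excerpt, is an independent set of the graphic matroid $M(K^t)$ of rank $|X_t|-1\leq\tw$. The key observation is that a partial solution $E^*$ inside $H_t$ extends to an optimal global solution iff the "outside" contributes a collection of edges whose representative forest $F_{\text{out}}$ on $K^t$ satisfies $F(E^*)\cup F_{\text{out}}$ is a spanning tree of $K^t$. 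This is precisely the situation to which Theorem~\ref{thm:repsetlovaszweighted} applies: if we group forests by size $p$ and keep only a min $(|X_t|-1-p)$-representative subfamily over $M(K^t)$, then we preserve all optimal extensions.

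The algorithm processes the four nice-tree-decomposition node types bottom-up. At a base node, $\mathcal{A}_t=\{\emptyset\}$ with weight $0$. At an introduce node adding $v$ we have $H_t=H_{t'}$, so partial solutions are unchanged; we simply re-invoke Theorem~\ref{thm:repsetlovaszweighted} on $M(K^t)$, which has one larger rank than $M(K^{t'})$. At a forget node removing $v$, we first incorporate the (newly internal) edges of $E(X_{t'})\setminus E(X_t)$ by taking a $\bullet$-product with a small family encoding optional use of these edges, then discard forests in which $v$ appears as a singleton component whenever $v$ is a terminal, and finally contract $v$ out of each forest to obtain a forest on $K^t$. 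At a join node with children $t_1,t_2$, we form the combined family by pairing each $F_1\in\mathcal{A}_{t_1}$ with each $F_2\in\mathcal{A}_{t_2}$, taking a spanning forest of $F_1\cup F_2$ in $K^t$ with weight $w(F_1)+w(F_2)$, and then reduce using Theorem~\ref{thm:repsetlovaszweighted}. Correctness of the reductions is guaranteed by Lemmata~\ref{lem:reptransitive}, \ref{lem:repunion}, and \ref{lem:repconvolution}, which show that representativity is preserved under transitive shrinking, unions over size classes, and $\bullet$-convolution respectively; optimality then follows by a straightforward induction along $\mathbb{T}$.

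By Theorem~\ref{thm:repsetlovaszweighted} the maintained family at any node has at most $\sum_{p=0}^{\tw}\binom{\tw}{p}=2^{\tw}$ elements. The main obstacle is the running-time analysis at join nodes, which dominates the cost. Before reduction, the join produces a family of size $\cO(4^{\tw})$, and invoking Theorem~\ref{thm:repsetlovaszweighted} on each size class costs roughly $\cO(\binom{\tw}{p}\cdot t\cdot \tw^{\omega}+t\binom{\tw}{p}^{\omega-1})$ with $t\leq 4^{\tw}$. To get the refined bound $(1+2^{\omega+1})^{\tw}$ claimed in the theorem, rather than the coarser $2^{(\omega+1)\tw}$, we will need a finer accounting that exploits the binomial structure: for each size $p$, the contributions from pairs $(p_1,p_2)$ with $p_1+p_2=p$, together with the $\binom{\tw}{p}^{\omega-1}$ factor from the matroid reduction, should sum telescopically via an identity of the shape $\sum_p \binom{\tw}{p}\cdot (2^{\omega+1})^{p}=(1+2^{\omega+1})^{\tw}$. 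Multiplying the resulting per-node cost by the $\cO(n)$ nodes of the nice tree decomposition then yields the running time asserted in Theorem~\ref{thm:steinertree_DP}.
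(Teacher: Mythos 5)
Your plan is close in spirit to the paper's, but there is a genuine gap: you work with a single family $\mathcal{A}_t$ of forests on the complete graph $K^t$ (over all of $X_t$) and state the representativity condition as ``$F(E^*)\cup F_{\text{out}}$ is a spanning tree of $K^t$.'' For {\sc Steiner Tree} this is incorrect, because an optimal solution need not visit every vertex of $X_t$: if some $v\in X_t$ lies outside the Steiner tree, neither $F(E^*)$ nor $F_{\text{out}}$ touches $v$, so no spanning tree of $K^t$ is ever formed, and the ``iff'' breaks. The paper handles this by indexing the stored families by the \emph{boundary} $Z=\partial^t(E^*)\subseteq X_t$ of each partial solution, maintaining a separate family $\widehat{\mathcal{S}}_t[Z]$ for each $Z$, and invoking Theorem~\ref{thm:repsetlovaszweighted} on the graphic matroid of $K^t[Z]$ (not $K^t$): the gluing test becomes ``spanning tree of $K^t[Z]$.'' Without this $Z$-indexing, the representative-set computation is applied against the wrong matroid, and partial solutions whose boundary is a proper subset of $X_t$ will be pruned incorrectly.

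This omission also invalidates your running-time accounting. The bound $(1+2^{\omega+1})^{\tw}$ in the paper arises precisely from summing over the boundary subsets: at a forget or join node the cost is $\sum_{i=1}^{\tw+1}\binom{\tw+1}{i}\cdot 4^{i}\cdot 2^{i(\omega-1)}\cdot\tw^{\cO(1)}\cdot n=(1+2^{\omega+1})^{\tw+1}\tw^{\cO(1)}n$, where $i=|Z|$ counts the boundary size, $4^{i}$ bounds $|\widehat{\mathcal{S}}_t[Z]|$ before shrinking, and $2^{i(\omega-1)}$ is the per-element cost of the shrinking lemma for rank $i$ (obtained by summing $\binom{i-1}{p}^{\omega-1}$ over forest sizes $p$ inside the lemma). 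Your proposed identity $\sum_p\binom{\tw}{p}(2^{\omega+1})^p$ is indexed by forest size $p$ rather than boundary size $i$ and does not emerge from your single-family setup; it is the structure of subsets $Z\subseteq X_t$ that produces the base $(1+2^{\omega+1})$. Secondary issues: at introduce nodes the paper simply sets $\widehat{\mathcal{S}}_t[Z]=\widehat{\mathcal{S}}_{t'}[Z\setminus\{v\}]$ with no recomputation (no edges change), and at forget nodes one does not ``contract $v$ out of each forest''; rather one takes $\bullet$-products with subsets of the newly internal edges $\mathcal{E}_v[Z]$ and indexes the result by the shrunken boundary, with the forest on $K^t[Z]$ recomputed from the partial solution itself.
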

\begin{proof}
We first outline an algorithm with running time $\cO((1+2^{\omega+1})^{\tw} \tw^{\cO(1)}n^2)$ for a simple exposition. Later we point out how we can remove the extra factor of $n$ at the cost of a factor polynomial in $\tw$.

For every node $t$ of  $\mathbb T$ and subset $Z\subseteq X_t$,  we store a family of edge subsets  $\widehat{\mathcal{S}}_t[Z]$ of $H_t$  satisfying the following correctness invariant.
\begin{quote}
%\item 
{\bf Correctness Invariant:} For every $L\in \mathscr{S}$ we have the following. Let $L_t=E(H_t)\cap L$, $L_R=L\setminus L_t$,  and $Z=\partial^t(L)$. Then there exists $\hat{L}_t\in \widehat{\mathcal{S}}_t[Z]$ such that $w(\hat{L}_t)\leq w(L_t)$, $\hat{L}=\hat{L}_t\cup L_R$ is a solution, and $\partial^t(\hat{L})=Z$.   
Observe that since $w(\hat{L}_t)\leq w(L_t)$ and $L\in \mathscr{S}$, we have that $\hat{L} \in \mathscr{S}$.
\end{quote}

We process the nodes of the tree $\mathbb T$  from base nodes to the root node while doing the dynamic programming. Throughout the process we maintain the correctness invariant, which will prove the correctness of the algorithm. However, our main idea is to use representative sets  to obtain   $\widehat{\mathcal{S}}_t[Z]$ of small size. That is, given the set $\widehat{\mathcal{S}}_t[Z]$ that satisfies the correctness invariant,  we use Theorem~\ref{thm:repsetlovaszweighted} to obtain a subset $\widehat{\mathcal{S}}_t'[Z]$ of  $\widehat{\mathcal{S}}_t[Z]$ that also satisfies the  correctness invariant  and has size upper bounded by $2^{|Z|}$.  Thus, we maintain the following size invariant.

\begin{quote}
%\item 
{\bf Size Invariant:} After  node $t$ of $\mathbb T$ is processed by the algorithm,  for every $Z\subseteq X_t$ we have that 
 $|\widehat{\mathcal{S}}_t[Z]|\leq2^{|Z|}$. 
\end{quote}
 
 The new ingredient of the dynamic programming algorithm for {\sc Steiner Tree} is the use of Theorem~\ref{thm:repsetlovaszweighted} to compute 
 $\widehat{\mathcal{S}}_t[Z]$ maintaining the size invariant.  The next  lemma shows how to implement  it. 

\begin{lemma}[Shrinking Lemma]
\label{lem:sizeinvariant}
Let $t$ be a node of  $\mathbb T$, and let  $Z\subseteq X_t$ be a set of size $k$. Furthermore, let  $\widehat{\mathcal{S}}_t[Z]$ be a 
family of edge subsets of $H_t$   satisfying the correctness invariant.  If $|\widehat{\mathcal{S}}_t[Z]|=\ell$, then in time 
$\cO\left(2^{k(\omega-1)} k^{\cO(1)} \ell \cdot n\right)$ we can compute $\widehat{\mathcal{S}}_t'[Z] \subseteq \widehat{\mathcal{S}}_t[Z]$ 
satisfying correctness and size invariants. 
%that also satisfies the correctness invariant   
%and  $| \widehat{\mathcal{S}}_t'[Z] | \leq 2^{k}$. 
\end{lemma}
\begin{proof}
We start by associating a matroid with  node $t$ and the set $Z\subseteq X_t$  as follows. We consider a graphic matroid \mat{} on 
$K^t[Z]$. 
% a complete graph on $X_t\cap Z$. 
Here, the element set $E$ of the matroid is  the edge set $E(K^t[Z])$ and the family of independent sets  $\cal I$ consists of spanning forests of  
$K^t[Z]$.

Let $\widehat{\mathcal{S}}_t[Z]=\{E_1^t,\ldots,E_\ell^t\}$ and let ${\cal N}=\{ F(E_1^t),\ldots,F(E_\ell^t)\}$ be the set of forests in $K^t[Z]$ corresponding to the edge subsets in $\widehat{\mathcal{S}}_t[Z]$. % and the bag $X_t$. 
 For $i\in \{1,\ldots,k-1\}$, let ${\cal N}_i$ be the family of forests of ${\cal N}$ with $i$ edges. 
For each family ${\cal N}_i$ we apply Theorem~\ref{thm:repsetlovaszweighted} and compute its min $(k-1-i)$-representative. That is, 
$$\widehat{{\cal N}}_i \subseteq_{minrep}^{k-1-i}{\cal N}_i.$$  Let $\widehat{\mathcal{S}}_t'[Z] \subseteq \widehat{\mathcal{S}}_t[Z]$ be  such that for every $E_j^t\in \widehat{\mathcal{S}}_t'[Z] $ we have that $F(E_j^t)\in \cup_{i=1}^{k-1} \widehat{{\cal N}}_i $. By Theorem~\ref{thm:repsetlovaszweighted},   $|\widehat{\mathcal{S}}_t'[Z]| \leq \sum_{i=1}^{k-1}{k \choose i}\leq 2^k$. Now we show that  $\widehat{\mathcal{S}}_t'[Z]$ maintains the correctness invariant. 

Let $L\in \mathscr{S}$ and let $L_t=E(H_t)\cap L$, $L_R=L\setminus L_t$ and $Z=\partial^t(L)$. Then there exists $E_j^t \in \widehat{\mathcal{S}}_t[Z]$ such that $w(E_j^t)\leq w(L_t)$, $\hat{L}=E_j^t \cup L_R$ is an optimal solution and $\partial^t(\hat{L})=Z$.  Consider the forest $F(E_j^t)$. Suppose its size is $i$, then  $F(E_j^t)\in {\cal N}_i$. Now let $F(L_R)$ be the forest corresponding to $L_R$ with respect to the bag $X_t$. Since $\hat{L}$ is a solution, we have that $F(E_j^t)\cup F(L_R)$ is a spanning tree in  $K^t[Z]$. Since   $\widehat{{\cal N}}_i \subseteq_{minrep}^{k-1-i}{\cal N}_i$, we have that there exists a forest $F(E_h^t) \in \widehat{{\cal N}}_i $ such that $w(F(E_h^t)) \leq w(F(E_i^t)) $ and $F(E_h^t) \cup F(L_R)$ is a spanning tree in  $K^t[Z]$. 
Thus, we know that $E_h^t \cup L_R$ is an optimum solution and $E_h^t \in \widehat{\mathcal{S}}_t'[Z]$. This proves that $\widehat{\mathcal{S}}_t'[Z]$ maintains the invariant.  
%Thus we {\bf \em take new} $\widehat{\mathcal{S}}_t[Z]=\widehat{\mathcal{S}}_t'[Z]$. 

The running time to compute $\widehat{\mathcal{S}}_t[Z]$ is dominated by:
\begin{eqnarray*}
\cO\left(\sum_{i=1}^{k-1}   \binom{k-1}{i}^{\omega-1} k^{\cO(1)} \ell \right)= \cO\left( 2^{k(\omega-1)} k^{\cO(1)} \ell \right).
% \left(\sum_{i=1}^{k-1}  \binom{k-1}{i}^{\omega-1}\right) \right). 
\end{eqnarray*}
For a given edge set we also need to compute the forest and that can take $\cO(n)$ time. 
\end{proof}

In our algorithm the size of $\widehat{\mathcal{S}}_t[Z]$ can grow larger than $2^{|Z|}$ in intermediate steps but it will be at most 
$4^{|Z|}$ and thus we can use Shrinking Lemma (Lemma~\ref{lem:sizeinvariant}) to reduce its size efficiently.

 We now return to the dynamic programming algorithm over the tree-decomposition   $(\mathbb{T},\mathcal{ X})$ of $G$ and prove that it maintains the correctness invariant. We assume that $(\mathbb{T},\mathcal{ X})$ is a nice tree-decomposition of $G$. 
  By $\widehat{\mathcal{S}}_t$ we denote $\bigcup_{Z\subseteq X_t} \widehat{\mathcal{S}}_t[Z]$ (also called a \emph{representative family of partial solutions}). 
 We show how  
 $\widehat{\mathcal{S}}_t$  is obtained by doing dynamic programming from base node to the root node.

\paragraph{Base node $t$.}  Here the graph $H_t$ is empty and thus we take $\widehat{\mathcal{S}}_t=\emptyset$.  

\paragraph{Introduce node $t$ with child $t'$.} Here, we know that $X_{t}\supset X_{t'}$ and  $|X_{t}|=|X_{t'}|+1$. Let $v$ be the vertex in 
 $X_{t}\setminus X_{t'}$. Furthermore observe that $E(H_t)=E(H_{t'})$ and $v$ is degree zero vertex in $H_t$. Thus the graph $H_t$ only differs from $H_{t'}$ at a isolated vertex $v$. Since we have not added any edge to the new graph, the family of solutions, which contains edge-subsets, does not change. Thus, we take $\widehat{\mathcal{S}}_t=\widehat{\mathcal{S}}_{t'}$. Formally,  we take $\widehat{\mathcal{S}}_t[Z]=\widehat{\mathcal{S}}_{t'}[Z\setminus \{v\}]$.   
 %If $v\notin Z$ and $\widehat{\mathcal{S}}_{t'}[Z]$ otherwise. 
 Since, $H_t$ and $H_{t'}$ have  same set of edges the invariant is vacuously maintained. 
 
 \paragraph{Forget node $t$ with child $t'$.} Here we know $X_{t}\subset X_{t'}$ and  $|X_{t}|=|X_{t'}|-1$. Let $v$ be the vertex in 
 $X_{t'}\setminus X_{t}$. Let ${\cal E}_v[Z]$ denote the set of edges between $v$ and the vertices in $Z\subseteq X_t$. 
 Let ${\cal P}_v[Z]=\{Y\;|\; \emptyset \neq Y\subseteq {\cal E}_v[Z]\}$. 
 Observe that $E(H_t)=E(H_{t'})\cup {\cal E}_v[X_t]$. Before we define  things formally, observe that in this step the graphs $H_t$ and $H_{t'}$ differ by at most $\tw$ edges - the
edges with one endpoint in $v$ and the other in $X_t$. We go through every possible way an optimal solution can intersect with these newly added edges. 
%Either, it does not contain anyone of them or contains some subset of it. So 
The idea is that for every edge subset in our family of partial solutions we make several new partial solutions, one each for every  subset of newly added edges.  
More formally the new set of partial solutions is defined as follows. 
% Let  
%   $$ \widehat{\mathcal{S}}_t[Z]= \widehat{\mathcal{S}}_{t'}[Z] \bigcup_{X\subseteq {\cal E}_v[Z]} \widehat{\mathcal{S}}_{t'}[Z\cup \{v\}]\circ  X.$$
 
 \[ \widehat{\mathcal{S}}_t[Z] = \left\{
  \begin{array}{l l}
     \left(\widehat{\mathcal{S}}_{t'}[Z\cup \{v\}]\circ  {\cal P}_v[Z]\right)\cup \left\{A\in \widehat{\mathcal{S}}_{t'}[Z\cup
\{v\}] : A\in {\mathcal S}_t \right \} & \quad \text{if } v\in T \vspace*{2mm}\\
    \left(\widehat{\mathcal{S}}_{t'}[Z\cup \{v\}]\circ  {\cal P}_v[Z]\right)\cup \left\{A\in \widehat{\mathcal{S}}_{t'}[Z\cup
\{v\}] : A\in {\mathcal S}_t \right\}\cup \widehat{\mathcal{S}}_{t'}[Z] & \quad \text{if } v\notin T 
  \end{array} \right.\]

Recall that for two families ${\cal A}$ and  ${\cal B}$, we defined  ${\cal A} \circ {\cal B} = \{A \cup B ~:~A \in {\cal A} \wedge B \in {\cal B}\}.$ 
Now we claim that  $\widehat{\mathcal{S}}_t[Z]\subseteq {\mathcal S}_t$. Towards the proof we first show that 
$\widehat{\mathcal{S}}_{t'}[Z\cup \{v\}]\circ  {\cal P}_v[Z]\subseteq {\mathcal S}_t$. Let $E'\in \widehat{\mathcal{S}}_{t'}[Z\cup \{v\}]\circ  {\cal P}_v[Z]$. 
Note that $E'\cap {\cal E}_v[Z] \neq \emptyset$. If $E'$ is a solution tree then $E'\in {\mathcal S}_t$ and we are done. Since $E'\setminus {\cal E}_v[Z]\in
\widehat{\mathcal{S}}_{t'}[Z\cup \{v\}]\subseteq {\mathcal{S}}_{t'}$, every vertex of $(T\cap V(G_t))\setminus (X_t\cup \{v\})$ is incident with  some edge from  $E'$. Since
$E'\cap {\cal E}_v[Z] \neq \emptyset$, there exists an edge 
in $E'$ which is incident to $v$. This implies that every vertex of $(T\cap V(G_t))\setminus X_t$ is incident with  some edge from  $E'$. Now consider any connected 
component $C$ in $G[E']$. If $v\notin V(C)$, then $C$ contains a vertex from $X_{t'}\setminus \{v\}=X_t$, because $E'\setminus {\cal E}_v[Z]\in \widehat{\mathcal{S}}_{t'}[Z\cup
\{v\}] \subseteq {\mathcal S}_{t'}$. If $v\in V(C)$, then $C$ contains a vertex from $X_{t}$ because $E'\cap {\cal E}_v[Z]\neq \emptyset$. Thus we have shown that 
$E'\in {\mathcal S}_{t}$. It is easy to see that $\{A\in \widehat{\mathcal{S}}_{t'}[Z\cup \{v\}] : A\in {\mathcal S}_t \}\subseteq {\mathcal S}_t$.
If $v\notin T$ then $\widehat{\mathcal{S}}_{t'}[Z]\subseteq {\mathcal S}_t$, because  $\widehat{\mathcal{S}}_{t'}[Z]\subseteq {\mathcal S}_{t'}$ and
$X_t=X_{t'}\setminus \{v\}$.

 Now we show that $\widehat{\mathcal{S}}_t$ maintains the invariant of the algorithm. Let $L\in \mathscr{S}$.
 
  \begin{enumerate}
  \label{eqn:twdpone}
    \item  Let $L_t=E(H_t)\cap L$ and $L_R=L\setminus L_t$. Furthermore, edges of $L_t$ can be partitioned into 
    $L_{t'}=E(H_{t'})\cap L$ and $L_v=  L_t\setminus L_{t'}$.  That is, $L_t= L_{t'} \uplus L_v$. 
   % $L_R=L \setminus( L_{t'} \cup L_v)$. 
    \item Let $Z=\partial^t(L)$ and $Z'=\partial^{t'}(L)$. 
    \end{enumerate}

% Let $L\in \mathscr{S}$ and let $L_{t'}=E(H_t)\cap L$,  $L_v=  {\cal E}_v)\cap (L\setminus L_{t'})$, 
% $L_R=L \setminus( L_{t'} \cup L_v)$, $Z=\partial^t(L)$ and $Z'=\partial^{t'}(L)$. 
 
% Now observe that 
% \begin{eqnarray*}
% L\in \mathscr{S}  & \iff &  L_{t'} \uplus L_v \uplus L_R \in \mathscr{S}  \\
%                            & \iff &  \hat{L}_{t'} \uplus L_v \uplus L_R \in \mathscr{S}~~~~~~\mbox{by the propery of $\widehat{\mathcal{S}}_{t'}$}
% \end{eqnarray*}
 
 By the property of $\widehat{\mathcal{S}}_{t'}$, there exists a $\hat{L}_{t'}\in \widehat{\mathcal{S}}_{t'}[Z']$ such that 
 \begin{eqnarray}
 \label{eqn:twdptwo}
 L\in \mathscr{S}  & \iff &  L_{t'} \uplus L_v \uplus L_R \in \mathscr{S}  \nonumber\\
                            & \iff &  \hat{L}_{t'} \uplus L_v \uplus L_R \in \mathscr{S} 
 \end{eqnarray}
 and $\partial^{t'}(L)=\partial^{t'}(\hat{L}_{t'} \uplus L_v \uplus L_R )=Z'$.  
 
 \medskip
 
 \noindent
 We put { $\hat{L}_t=\hat{L}_{t'} \cup L_v$ and $\hat{L}=\hat{L}_t \cup L_R $.}  We know show that $\hat{L}_t \in \widehat{\mathcal{S}}_t[Z]$. 
 Towards this just note that since $Z'=Z$ or $Z'=Z\cup\{v\}$, we have that $\widehat{\mathcal{S}}_t[Z]$ contains $\widehat{\mathcal{S}}_{t'}[Z']\circ \{L_v\}$. By
\eqref{eqn:twdptwo}, $\hat{L} \in \mathscr{S} $. Finally, we need to show that   $\partial^{t}(\hat{L})=Z$.  Towards this just note that $\partial^{t}(\hat{L})=Z'\setminus
\{v\}=Z$.  This concludes the proof for the fact that $\widehat{\mathcal{S}}_t$ maintains the correctness invariant. 
 
% If $Z'=Z$ then $v\notin Z$ and thus we  $Z'\subseteq X_t$ and thus $\partial^{t}(\hat{L})= \partial^{t'}(L)=Z$. If 
% $v\in Z'$ then we know that 

 %If $Z'=Z$ then we know that $v$
 %Observe that $Z'$ is either $Z$ or 
 %$Z\cup \{v\}$. If $Z'=Z$ then we know that $v\notin T$ as there is no edge in $L$ that is adjacent to $v$ and thus $L_v=\emptyset$.  
 
% Thus in this case we know that  
% By our choice we have that $\hat{L}_t \cup L_R \in \mathscr{S}$.  
%Observe that $\hat{L}_{t'} \in \widehat{\mathcal{S}}_{t'}$ and  $L_v\subseteq {\cal E}_v)$ and thus they are present in $\widehat{\mathcal{S}}_t$. Also note that if $v\in T$ 
%then every  $L\in \mathscr{S}$ contains an edge from $E(H_t)$. 

 \paragraph{Join node $t$ with two children $t_{1}$ and $t_{2}$.} Here, we know that  $X_{t}=X_{t_{1}}=X_{t_{2}}$.  Also we know that the edges of $H_t$ is obtained by the union of edges of $H_{t_1}$ and $H_{t_2}$ which are disjoint. Of course they are separated by the vertices in $X_t$. A natural way to obtain a family of partial solutions for $H_t$ is that we take the union of edges subsets of the families stored at nodes $t_1$ and $t_2$. This is exactly what we do. Let 
    $$ \widehat{\mathcal{S}}_t[Z]= \widehat{\mathcal{S}}_{t_1}[Z]\circ  \widehat{\mathcal{S}}_{t_2}[Z].$$ 
%    $$ \widehat{\mathcal{S}}_t[Z]= \widehat{\mathcal{S}}_{t_1}[Z]\diamond  \widehat{\mathcal{S}}_{t_2}[Z].$$ 
   
    Now we show that $\widehat{\mathcal{S}}_t$ maintains the invariant. Let $L\in \mathscr{S}$. 
    \begin{enumerate}
    \item Let $L_t=E(H_t)\cap L$ and $L_R=L\setminus L_t$. Furthermore edges of $L_t$ can be partitioned into those belonging to $H_{t_1}$ and those belonging to  $H_{t_2}$. Let  $L_{t_1}=E(H_{t_1})\cap L$ and $L_{t_2}=E(H_{t_2})\cap L$. Observe that since 
    $E(H_{t_1})\cap E(H_{t_2})=\emptyset$,  we have that $L_{t_1} \cap L_{t_2}=\emptyset$. Also observe that $L_t=L_{t_1}\uplus L_{t_2}$. 
    \item  Let $Z=\partial^t(L)$. Since $X_{t}=X_{t_{1}}=X_{t_{2}}$  this implies  that $Z=\partial^t(L)=\partial^{t_1}(L)=\partial^{t_2}(L)$. 
    \end{enumerate}

 Now observe that 
 \begin{eqnarray*}
 L\in \mathscr{S}  & \iff &  L_{t_1} \uplus  L_{t_2} \uplus L_R \in \mathscr{S}  \\
                            & \iff &  \hat{L}_{t_1}  \uplus L_{t_2} \uplus L_R \in \mathscr{S}~~~~\mbox{(by the property of $\widehat{\mathcal{S}}_{t_1}$ we have 
                                     that  $\hat{L}_{t_1}\in \widehat{\mathcal{S}}_{t_1}[Z]$)}\\
                            & \iff &  \hat{L}_{t_1} \uplus \hat{L}_{t_2} \uplus L_R \in \mathscr{S}~~~~\mbox{(by the property of $\widehat{\mathcal{S}}_{t_2}$ we have  that  $\hat{L}_{t_2}\in \widehat{\mathcal{S}}_{t_2}[Z]$)}
 \end{eqnarray*}
 \noindent 
{We put    $\hat{L}_t=\hat{L}_{t_1} \cup \hat{L}_{t_2}$.}  
By the definition of  $\widehat{\mathcal{S}}_t[Z]$,  we have that $\hat{L}_{t_1} \cup \hat{L}_{t_2}\in \widehat{\mathcal{S}}[Z]$. The above inequalities also show that $\hat{L}=\hat{L}_t\cup L_R \in \mathscr{S}$. It remains to show  that 
$\partial^{t}(\hat{L})=Z$.  
%Observer that 
 %For $L_t=L_{t_1}\cup L_{t_2}$
  Since $\partial^{t_1}(L)=Z$,  we have that 
 $\partial^{t_1}(\hat{L}_{t_1}  \uplus L_{t_2} \uplus L_R)=Z$. Now since $X_{t_1}=X_{t_2}$ we have that  $\partial^{t_2}(\hat{L}_{t_1}  \uplus L_{t_2} \uplus L_R)=Z$ and thus $\partial^{t_2}(\hat{L}_{t_1}  \uplus \hat{L}_{t_2} \uplus L_R)=Z$. Finally, because $X_{t_2}=X_t$, we conclude  that 
 $\partial^{t}(\hat{L}_{t_1}  \uplus \hat{L}_{t_2} \uplus L_R)=\partial^{t}(\hat{L})=Z$. This concludes the proof of correctness invariant. 
 
 \paragraph{Root node $r$.} Here, $X_{r}=\emptyset$. We go through all the solution in $\widehat{\mathcal{S}}_r[\emptyset]$ and output the one with the 
 minimum weight.   This concludes the description of the dynamic programming algorithm.  

\paragraph{Computation of $\widehat{\mathcal{S}}_t$.}
Now we show how to implement the algorithm described above in the desired running time by making use of  Lemma~\ref{lem:sizeinvariant}. 
For our discussion let us fix a node $t$ and   $Z\subseteq X_t$ of size $k$.  
While doing dynamic programming algorithm from the base nodes to the root node we always maintain the size invariant. That is, $\widehat{\mathcal{S}}_t[Z]|\leq  2^{k}. $
%Towards this we associate a matroid with the node $t$, and a $Z\subseteq X_t$  as follows. We consider a graphic matroid \mat{} on $K^t_Z=K(X_t)[Z]$. Here, the element set $E$ of the matroid is  $E(K^t_Z)$ and the family of independent sets  $\cal I$ consists of spanning forests of  $K^t_Z$. 

%\begin{description}
%\item[Base node $t$.] 
\paragraph{Base node $t$.} Trivially, in this case we have $|\widehat{\mathcal{S}}_t[Z]|\leq   2^{k}$. 
\paragraph{Introduce node $t$ with child $t'$.}
Here, we have that $\widehat{\mathcal{S}}_t[Z]=\widehat{\mathcal{S}}_{t'}[Z\setminus \{v\}]$ and thus 
$|\widehat{\mathcal{S}}_t[Z]|= | \widehat{\mathcal{S}}_{t'}[Z\setminus \{v\}]| \leq     2^{k-1} \leq  2^{k}$.  
\paragraph{Forget node $t$ with child $t'$.} In this case,
%   \[ \widehat{\mathcal{S}}_t[Z]= \widehat{\mathcal{S}}_{t'}[Z] \bigcup_{X\subseteq {\cal E}_v[Z]} \widehat{\mathcal{S}}_{t'}[Z\cup \{v\}]\circ  X.\]
\[ \widehat{\mathcal{S}}_t[Z] = \left\{
  \begin{array}{l l}
     \left(\widehat{\mathcal{S}}_{t'}[Z\cup \{v\}]\circ  {\cal P}_v[Z]\right)\cup \left\{A\in \widehat{\mathcal{S}}_{t'}[Z\cup
\{v\}] : A\in {\mathcal S}_t \right \} & \quad \text{if } v\in T \vspace*{2mm}\\
    \left(\widehat{\mathcal{S}}_{t'}[Z\cup \{v\}]\circ  {\cal P}_v[Z]\right)\cup \left\{A\in \widehat{\mathcal{S}}_{t'}[Z\cup
\{v\}] : A\in {\mathcal S}_t \right\}\cup \widehat{\mathcal{S}}_{t'}[Z] & \quad \text{if } v\notin T 
  \end{array} \right.\]
  
Observe that,  
%   \[ |\widehat{\mathcal{S}}_t[Z]|=|\widehat{\mathcal{S}}_{t'}[Z] |+\sum_{X\subseteq {\cal E}_v[Z]} | \widehat{\mathcal{S}}_{t'}[Z\cup \{v\}]\circ  X|\leq 2^k+\sum_{i=1}^{k} {k \choose i} 2^{k+1} =\cO(4^k).\] 
 \begin{eqnarray*}
  \left|\widehat{\mathcal{S}}_t[Z]\right| &\leq& \left|\widehat{\mathcal{S}}_{t'}[Z\cup \{v\}]\circ  {\cal P}_v[Z]\right|+ \left|\left\{A\in \widehat{\mathcal{S}}_{t'}[Z\cup
\{v\}] : A\in {\mathcal S}_t \right\}\right| + \left| \widehat{\mathcal{S}}_{t'}[Z] \right|\\
&\leq& \left(\sum_{i=1}^{k} {k \choose i} 2^{k+1} \right) +  2^{k+1}+ 2^k=\cO(4^k).
 \end{eqnarray*}

It can happen in this case that  the size of $\widehat{\mathcal{S}}_t[Z]$ is larger than $2^k$ and thus we need to reduce the size of family.  We apply  
  Lemma~\ref{lem:sizeinvariant} and obtain $\widehat{\mathcal{S}}_t'[Z]$ that maintains the correctness and size invariants. % and has size at most $2^k$.  
We update  $\widehat{\mathcal{S}}_t[Z]=\widehat{\mathcal{S}}_t'[Z]$.

 The running time to compute $\widehat{\mathcal{S}}_t$ (that is, across all subsets of $X_t)$ is % dominated by:
%\begin{eqnarray*}
%\cO\left(\sum_{i=1}^{\tw +1} \binom{\tw+1}{i}\sum_{j=1}^{i} {i \choose j} 2^{i\omega} \right)= \cO\left(\sum_{i=1}^{\tw +1} \binom{\tw+1}{i} 2^{i\omega}2^i\right)=\cO\left( (1+2^{\omega +1})^\tw \right). 
%\end{eqnarray*}
%
%\textbf{SHOULD IT BE?}
\begin{eqnarray*}
\cO\left(\sum_{i=1}^{\tw +1} \binom{\tw+1}{i} 2^{i(\omega-1)} 4^i  \cdot \tw^{\cO(1)} n \right)= \cO\left( (1+2^{\omega +1})^\tw   \cdot \tw^{\cO(1)} n \right). 
\end{eqnarray*}

%\footnote{To check this formula. Also it should have $n$ factor.}
%For a given edge set we also need to compute the forest and that can take $\cO(n)$. Thus, this step takes 
%$\cO\left( (1+2^{\omega +1})^\tw n \right)$. 

\paragraph{Join node $t$ with two children $t_{1}$ and $t_{2}$.}   Here we defined  
%    $$ \widehat{\mathcal{S}}_t[Z]= \widehat{\mathcal{S}}_{t_1}[Z]\circ  \widehat{\mathcal{S}}_{t_2}[Z].$$ 
$$ \widehat{\mathcal{S}}_t[Z]= \widehat{\mathcal{S}}_{t_1}[Z]\circ  \widehat{\mathcal{S}}_{t_2}[Z].$$ 
The size of $\widehat{\mathcal{S}}_t[Z]$ is $2^k \cdot 2^k=4^k$. Now, we apply  
  Lemma~\ref{lem:sizeinvariant} and obtain $\widehat{\mathcal{S}}_t'[Z]$ that maintains the correctness invariant and has size at most $2^k$.  
We {put } $\widehat{\mathcal{S}}_t[Z]=\widehat{\mathcal{S}}_t'[Z]$.

%  Let $\widehat{\mathcal{S}}_t[Z]=\{E_1^t,\ldots,E_\ell^t\}$ and Let ${\cal N}=\{ F(E_1^t),\ldots,F(E_\ell^t)\}$ be the set of forests corresponding to the edge subsets in $\widehat{\mathcal{S}}_t[Z]$ and the bag $X_t$. We now partition the forests in $\cal N$ based on the number of edges it contains. This gives us ${\cal N}_i$, $i\in \{1,\ldots,k-1\}$.  Now for each of ${\cal N}_i$ we apply Theorem~\ref{thm:repsetlovaszweighted} and compute its min $k-1-i$-representative. That is, 
%$\widehat{{\cal N}}_i \subseteq_{minrep}^{k-1-i}{\cal N}_i$.  Let $\widehat{\mathcal{S}}_t'[Z] \subseteq \widehat{\mathcal{S}}_t[Z]$ such that for every $E_j^t\in \widehat{\mathcal{S}}_t'[Z] $ we have that $F(E_j^t)\in \cup_{i=1}^{k-1} \widehat{{\cal N}}_i $. By Theorem~\ref{thm:repsetlovaszweighted},   $|\widehat{\mathcal{S}}_t'[Z]| \leq \sum_{i=1}^{k-1}{k \choose i}\leq 2^k$. The proof for the fact that $\widehat{\mathcal{S}}_t'[Z]$ maintains the invariant is same as the one given for {\bf Forget node}. 
%Thus we {\bf \em take new} $\widehat{\mathcal{S}}_t[Z]=\widehat{\mathcal{S}}_t'[Z]$.  
The running time to compute $\widehat{\mathcal{S}}_t$  is 
\begin{eqnarray*}
\cO\left(\sum_{i=1}^{\tw +1} \binom{\tw+1}{i} 4^i 2^{i(\omega -1)}  \cdot \tw^{\cO(1)} n \right)=\cO\left( (1+2^{\omega +1})^\tw   \cdot \tw^{\cO(1)} n\right). 
\end{eqnarray*}
%For a given edge set we also need to compute the corresponding forest and that can take $\cO(n)$. Thus, this step takes 
%$\cO\left( (1+2^{\omega +1})^\tw n \right)$. 
%\end{description}

Thus the whole algorithm takes time $\cO\left( (1+2^{\omega +1})^\tw   \cdot \tw^{\cO(1)} \cdot n^2 \right)$ as the number of nodes in a nice tree-decomposition is upper bounded by $\cO(n)$. However, observe that we do not need to compute the forests and the  associated weight at every step of the algorithm. The size of the forest is at most $\tw+1$ and we can maintain these forests across the bags during dynamic programming in time $\tw^{\cO(1)}$. This will lead to an algorithm with the claimed running time. The last remark we would like to make is that one can do better at {\bf forget node} by forgetting a single edge at a time. However, we did not try to optimize this, as the running time to compute the family of 
partial solutions at  
{\bf join node} is the most expensive operation. This completes the proof. 
\end{proof}

The approach of  Theorem~\ref{thm:steinertree_DP} can be used  to obtain single-exponential algorithms parameterized by the treewidth of an input graph for several other connectivity problems such as 
\textsc{Hamiltonian Cycle}, \textsc{Feedback Vertex Set}, and  \textsc{Connected Dominated Set}. For all these problems, checking 
whether two partial solutions can be glued together to form a global solution can be checked by testing independence in a specific graphic 
matroid. We believe that there exist interesting problems where this check corresponds to testing independence in a different class of linear matroids.

%Write a PARA here about other connectivity problems and how the partial solution and matroid coding ....

%\subsubsection{A framework for obtaining algorithms on graphs of bounded treewidth}
%\label{subsect:frameworktw}
%
% THIS SECTION NEEDS TO BE FILLED

\subsection{Path, Trees and Subgraph Isomorprhism} 
%!TEX root = repset-main.tex

In this section we outline algorithms for {\sc $k$-Path}, {\sc $k$-Tree} and {\sc $k$-Subgraph Isomorphism} using 
representative families.  All results in this section are based on computing representative families with respect to uniform matroids. 
%The  matroid \mat{} of our interest in this section will  be  uniform matroid.

% ----------------------------------------------------------------------------------------------------------------------
% ----------------------------------------------------------------------------------------------------------------------
% K - Path
% ----------------------------------------------------------------------------------------------------------------------
% ----------------------------------------------------------------------------------------------------------------------
\subsubsection{\sc  $k$-Path}
 The problem we study in this section is as follows. 
 
 \defparproblem{ {\sc  $k$-Path} }{An  undirected $n$-vertex  and $m$-edge graph $G$ and a  positive integer $k$.}{$k$ }
 {Does there exist a simple path of length $k$ in $G$?}
 
 \medskip
 
 We start by modifying the graph slightly. We add a new vertex, say $s$ not present in $V(G)$, to $G$ by making it adjacent to every vertex in $V(G)$. Let the modified graph be called $G'$. It is clear that $G$ has a path of length $k$ if and only if $G'$ has a path of length $k+1$ starting from $s$. 
 For ease of presentation we rename $G'$ to $G$ and the objective is to find a path of length $k+1$ starting from $s$.  
 Let \mat{} be an uniform matroid $U_{n,k+2}$ where $E=V(G)$ and ${\cal I}=\{S\subseteq V(G)~|~|S|\leq k+2\}.$  
 In this section whenever we speak about independent sets we mean independence with respect to the uniform matroid $U_{n,k+2}$ defined above. 
 For a given pair of vertices $s,v\in V(G)$, recall that we defined 
 \begin{eqnarray*}
 {\cal P}_{sv}^i& = & \Big\{X~\Big|~X\subseteq V(G),~v,s \in X, ~|X|=i \mbox{ and there is a path from $s$ to $v$ of length $i$} \\ 
   & & \hspace{1cm} \mbox{     in $G$    with all the vertices belonging to $X$}. \Big\}
  \end{eqnarray*}
The problem can be reformulated to asking whether there exists  $v\in V(G)$ such that ${\cal P}_{sv}^{k+2}$ is non-empty. Our algorithm will check whether ${\cal P}_{uv}^{k+2}$ is non-empty by computing 
$\widehat{\cal P}_{sv}^{k+2} \subseteq_{rep}^0 {\cal P}_{sv}^{k+2}$ and checking whether $\widehat{\cal P}_{sv}^{k+2}$ is non-empty. 
%for every $u$, $v$, $2 \leq p \leq k$ and $0 \leq q \leq k-p$ a $q$-representative family 
%$$\widehat{\cal P}_{uv}^{p,q} \subseteq_{rep}^q {\cal P}_{uv}^p.$$
%Here, just as for \probLongDirCycle the underlying matroid is the uniform matroid of rank $p+q$.
%
%The crucial observation is that $\widehat{\cal P}_{uv}^{k,0}$ $0$-represents ${\cal P}_{uv}^p$. 
The correctness of this algorithm is as follows. 
If ${\cal P}_{sv}^{k+2}$ is non-empty then ${\cal P}_{sv}^{k+2}$ contains some set $A$ which does not intersect the empty set $\emptyset$. But then $\widehat{\cal P}_{sv}^{k+2}\subseteq_{rep}^0 {\cal P}_{sv}^{k+2}$ must also contain a set which does not intersect  with $\emptyset$, and hence $\widehat{\cal P}_{sv}^{k+2}$ must be non-empty as well. Thus, having computed the representative familes $\widehat{\cal P}_{sv}^{k+2}$ all we need to do is to check whether there is a  vertex $v$ such that  $\widehat{\cal P}_{sv}^{k+2}$ is non-empty. All that remains is an algorithm that computes the representative families  $\widehat{\cal P}_{sv}^{k+2}\subseteq_{rep}^0 {\cal P}_{sv}^{k+2}$ for all $v\in V(G)\setminus\{s\}$.
 \noindent

Now using Lemma~\ref{lem:FastLDCpathrepset} (by setting  $\ell=p=k+2$)  we compute $\widehat{{\cal P}}_{sv}^{k+2} \subseteq_{rep}^0  {\cal P}^{k+2}_{sv}$ for all  $v\in V(G)\setminus \{s\}$  
in time $$2^{o(k)} \cdot m\log n  \cdot \max_{i\in[k+2]}\left\{\left(\frac{2(k+2)-i}{i}\right)^i \left(\frac{2(k+2)-i}{2(k+2)-2i}\right)^{2(k+2)-2i}\right\}.$$ 
% \todo{factor $n$}. 
Simple calculus shows that the running time is maximized for $i = (1-\frac{1}{\sqrt{5}})(k+2)$, 
%For  = \phi^{2k}$ where $\phi$ is the golden ratio $\frac{1+\sqrt{5}}{2}$. 
and thus the running time to compute $\widehat{{\cal P}}_{sv}^{k+2} \subseteq_{rep}^0  {\cal P}^{k+2}_{sv}$ for all  $v\in V(G)\setminus \{s\}$ together is upper bounded by 
$\phi^{2k+o(k)}m\log^2n = \cO(2.619^{k}m\log n)$, where  
where $\phi$ is the golden ratio $\frac{1+\sqrt{5}}{2}$.
% %$\cO(2^{k\omega} k^2 m \log n)$. 
% \[ \cO\left( 2^{o(k)}   m \log^2 n  \max_{i\in [k+2]} \left\{{k+2 \choose i-1}   \left( \frac{k+2}{k+2-i} \right)^{k+2-i} \right\} \right).\]
%The maximum is achieved at $i=\alpha k$ where $\alpha=1+\frac{1-\sqrt{1+4e}}{2e}$. Thus, the running time of the algorithm is 
%$\cO(2.8505^{k}\cdot 2^{o(k)}m \log^2 n)=\cO(2.851^{k}\cdot m \log^2 n).$
%\\---------------------------------------\\ 
 Furthermore, in the same time every set in 
 $\widehat{{\cal P}}_{sv}^p$ can be ordered in a way that it corresponds to an undirected path in $G$. A graph $G$ has a path of length $k+1$ 
starting from $s$ if and only if for some $v\in V(G)\setminus \{s\}$, we have that $\widehat{{\cal P}}_{sv}^{k+2}\neq \emptyset$. Thus  the running time of this algorithm is upper bounded by 
$\cO(2.619^{k}m\log n)$. 
 %\[\cO\left( \sum_{i=1}^{k+1} \sum_{j=1}^n \tcwd{rm}{d(v_j){k \choose i-1}}{i}{k+1-i} \right)=\cO(2^{k\omega} km \log n).\]
 Let us remark that almost the same arguments show that the version of the problem on directed graphs is solvable within the same running time. 
 However on undirected graphs we can speed up the algorithm slightly by using the following standard trick. We need the following result. 
 \begin{proposition}[\cite{Bodlaender93a}]
 \label{prop:bodpath}
 There exists an algorithm, that given a graph $G$ and an integer $k$, in time $\cO(k^2n)$ either finds a simple path of length $\geq k$ or computes  a  DFS (depth first search) tree rooted at some vertex of $G$ of depth at most $k$. 
 \end{proposition}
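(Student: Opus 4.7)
The plan is to run depth-first search from an arbitrary starting vertex $s\in V(G)$, keeping track of the recursion stack, which at any moment represents a simple path in $G$ from $s$ to the vertex currently being explored. I would augment the standard DFS with a depth cutoff: as soon as the stack represents a simple path of length $\geq k$, abort and return this path as the first alternative of the proposition. If instead DFS terminates normally without ever triggering the cutoff, then every root-to-leaf branch of the resulting DFS tree has fewer than $k$ edges, so the tree has depth at most $k$, and we return it as the second alternative.

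For the running time I would invoke the classical fact that an undirected DFS produces only tree edges and back edges -- there are no cross or forward edges. Hence, whenever the (partial or final) DFS tree has depth at most $k$, every already-processed vertex has at most $k$ ancestors in the tree and therefore contributes at most $k$ incident back edges. The connected component visited by the DFS thus contains at most $(k+1)n$ edges in total, so running DFS on that component costs $\cO(n+kn)=\cO(kn)$ time. When the search aborts at the cutoff, the same bound applies to the work done up to abortion, and the stack is returned in an extra $\cO(k)$ steps. Disconnected inputs are handled by iterating over connected components: in each one we either exhibit a long path (and stop) or build a DFS tree of depth at most $k$ (and may then proceed to the next component); the aggregate work across components is still linear in $n$ times the depth bound, comfortably within the stated $\cO(k^2 n)$ bound.

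The only implementation subtlety is that the depth cutoff must be checked in $\cO(1)$ time per recursive call, which is achieved by maintaining an integer depth counter that is incremented on each push and decremented on each pop of the DFS stack. No deeper technical obstacle arises: the entire argument reduces to the standard back-edge structural fact for undirected DFS combined with a one-line early-termination rule, so the proposition follows essentially immediately. The apparent slack between my $\cO(kn)$ analysis and the stated $\cO(k^2 n)$ bound is harmless, and in fact provides room for straightforward implementations that do not meticulously amortize edge inspections.
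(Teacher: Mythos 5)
Your proof is correct: the paper does not prove this proposition itself but imports it from Bodlaender, and your DFS-with-depth-cutoff argument (abort as soon as the recursion stack represents a path of length $\geq k$, otherwise output the resulting tree), combined with the fact that undirected DFS yields only tree and back edges so the explored subgraph has $\cO(kn)$ edges, is exactly the standard argument behind that citation. Your sharper $\cO(kn)$ accounting comfortably implies the stated $\cO(k^2 n)$ bound, and your component-by-component treatment of disconnected inputs is a harmless strengthening (in the paper's application the added universal vertex makes $G$ connected anyway).
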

 
 We  first apply Proposition~\ref{prop:bodpath} and  in time $\cO(k^2n)$ either find a simple path of length $\geq k$ in $G$  or compute  a
  DFS tree of $G$ of depth at most $k$. In the former case we simply output the same path. In the later case since all the root to leaf paths are upper bounded by $k$ and there are no cross edges in a DFS tree, we have that  the number of edges in $G$ is upper bounded by $
  \cO(k^2n)$. Now on this $G$ we apply the 
  representative set based algorithm described above. This results in the following theorem. 
  \begin{theorem}
  {\sc  $k$-Path} can be solved in time $\cO(2.619^{k}n\log n)$.
  \end{theorem}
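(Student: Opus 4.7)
The plan is to combine the representative-family dynamic program for paths (Lemma on $\widehat{\mathcal{P}}_{uv}^p$) with a standard preprocessing step that reduces the number of edges we need to consider. First I would reduce the problem to finding a path of length $k+1$ starting at a prescribed vertex $s$: add a universal vertex $s$ adjacent to every vertex of $V(G)$; then $G$ contains a $k$-path iff $G \cup \{s\}$ contains a $(k+1)$-path beginning at $s$. Working with a uniform matroid $U_{n,k+2}$ on the vertex set makes the families $\mathcal{P}_{sv}^{i}$ of vertex sets of $sv$-paths of length $i-1$ well-defined, and computing $\widehat{\mathcal{P}}_{sv}^{k+2}\subseteq_{rep}^{k+2}\mathcal{P}_{sv}^{k+2}$ for every $v$ solves the problem: the graph has a $(k+1)$-path from $s$ iff $\widehat{\mathcal{P}}_{sv}^{k+2}\ne\emptyset$ for some $v$.

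Second I would invoke the path-oriented representative-set construction (the Lemma building $\widehat{\mathcal{P}}_{uv}^{p}$ by dynamic programming over path-length, which applies to undirected graphs as well). Instantiated with $\ell=k+2$ and $p=k+2$, its running time is
\[
\cO\!\left(2^{o(k)} \, m \log^2 n \; \max_{i\in[k+2]} \binom{k+2}{i-1}\left(\tfrac{k+2}{k+2-i}\right)^{k+2-i}\right).
\]
The main calculation is to optimize the maximum. Stirling's approximation turns $\binom{k+2}{i-1}\bigl(\tfrac{k+2}{k+2-i}\bigr)^{k+2-i}$ into an exponential expression in $\alpha=i/k$, and a short calculus exercise shows the maximizing $\alpha$ is $1+\tfrac{1-\sqrt{1+4e}}{2e}$, at which the base becomes $\approx 2.8505$. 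Absorbing the $2^{o(k)}$ factor gives a bound of $\cO(2.851^k \, m \log^2 n)$. This is the key quantitative step.

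Finally, to replace the $m$ in the bound with $n$, I would apply Proposition~\ref{prop:bodpath} (Bodlaender's DFS-based preprocessing) as a black box: in $\cO(k^2 n)$ time it either already exhibits a simple path of length $\ge k$ (in which case we output it and stop) or it returns a DFS tree of depth at most $k$. In the latter case, since DFS on an undirected graph produces no cross edges, every non-tree edge joins a descendant to an ancestor within a root-to-leaf path of length $\le k$, so the total number of edges is $\cO(k^2 n)$. Running the representative-set algorithm above on this sparsified graph replaces $m$ by $\cO(k^2 n)$, and the $k^2$ is absorbed into the $2^{o(k)}$ slack, yielding the claimed $\cO(2.851^k \cdot n \log^2 n)$ bound.

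The main obstacle I anticipate is purely the numerical optimization of the maximum in the running-time expression: identifying the optimal $i=\alpha k$ and verifying that the resulting base is strictly below $2.851$ after absorbing lower-order factors. Everything else is either stated as a previous lemma (the representative-family construction, the path DP, Bodlaender's proposition) or is a standard black-box reduction.
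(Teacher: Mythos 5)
Your proposal matches the paper's proof essentially step for step: the universal-vertex reduction to a $(k+1)$-path rooted at $s$, the uniform matroid $U_{n,k+2}$ with the dynamic-programming representative-family lemma for $\widehat{\mathcal{P}}_{sv}^{p}$, the optimization of $\binom{k+2}{i-1}\bigl(\tfrac{k+2}{k+2-i}\bigr)^{k+2-i}$ at $i=\alpha k$ with $\alpha = 1+\frac{1-\sqrt{1+4e}}{2e}$, and Bodlaender's DFS preprocessing to sparsify to $m=\cO(k^2n)$ edges (with $k^2$ absorbed into $2^{o(k)}$). The argument is correct and is the same route the paper takes.
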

 
%  If we use Theorem~\ref{thm:uniformmatroidweighted} instead of Theorem~\ref{thm:fastRepUniform} in our algorithm for  {\sc  $k$-Path} one can solve the \textsc{ Short Cheap Tour} problem. In this problem a  graph $G$  with maximum edge cost
%    $W$ is given, and the objective is to find a path of length at least $k$ where the total sum of costs on the edges is minimized. 
Our algorithm for  {\sc  $k$-Path} can be used to solve the weighted version of the problem, i.e,  \textsc{ Short Cheap Tour}. In this problem a  graph $G$  with maximum edge cost
$W$ is given, and the objective is to find a path of length at least $k$ where the total sum of costs on the edges is minimized. 
% \todo{specify both weighted and unweighted and use the theorem}
  \begin{theorem}
\textsc{ Short Cheap Tour} can be solved in time $\cO(2.619^{k} n^{\cO{(1)}} \log{W})$. 
  \end{theorem}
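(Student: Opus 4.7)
The plan is to follow the exact structure of the unweighted $k$-Path algorithm, but substitute Theorem~\ref{thm:uniformmatroidweighted} for Theorem~\ref{thm:fastRepUniform} at every step where representative families are computed, using minimum $q$-representative families instead of plain $q$-representative families. First I would add a new source vertex $s$ adjacent to every vertex of $G$ with cost $0$ edges, so that finding a minimum weight tour of length at least $k$ in $G$ reduces to finding, for some $v$, a minimum weight $s$--$v$ path of length $k+1$ in the augmented graph $G'$. As before, I set up the uniform matroid $U_{n,k+2}$ and define the families ${\cal P}^i_{sv}$ of vertex sets of $s$--$v$ paths of length $i-1$, but now each set $X \in {\cal P}^i_{sv}$ carries the weight $w(X)$ equal to the minimum total edge cost of an $s$--$v$ path realizing $X$.

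Next I would redo Lemma~\ref{lem:pathrepsetfinder} in weighted form: a dynamic program that maintains $\widehat{\cal P}^i_{sv} \subseteq_{minrep}^{(k+2)-i} {\cal P}^i_{sv}$ for all $v$ and all $i$ from $2$ to $k+2$. The transition is the same as in the unweighted case, namely ${\cal N}^{i+1}_{sv} = \bigcup_{w \in N(v)} \widehat{\cal P}^i_{sw} \bullet \{v\}$, except that each extended set $X \cup \{v\}$ is assigned weight $w(X) + w(wv)$, and when a vertex set is produced along several extensions we keep only the one of smallest weight. Once ${\cal N}^{i+1}_{sv}$ is built, I would invoke Theorem~\ref{thm:uniformmatroidweighted} to shrink it to a min $((k+2)-(i+1))$-representative subfamily. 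The weighted analogue of Claim~\ref{lem:kpathauxrepset}, together with a weighted version of Lemma~\ref{lem:reptransitive} (which follows by the same proof, simply checking the weight condition is preserved along the chain), then guarantees that the resulting family is min $q$-representative for ${\cal P}^{i+1}_{sv}$.

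For the running time, each application of Theorem~\ref{thm:uniformmatroidweighted} at level $i$ over all $v$ costs, up to polynomial-in-$k$ and $\log n$ factors, $m \cdot \binom{k+2}{i-1}\big(\tfrac{k+2}{k+2-i}\big)^{k+2-i} \cdot 2^{o(k)}$ field operations, plus an additive $\cO(t \log t \log W)$ from the sort in the shrinking step. Maximizing $\binom{k+2}{i-1}\big(\tfrac{k+2}{k+2-i}\big)^{k+2-i}$ over $i$ gives, exactly as in the unweighted $k$-Path analysis, the base $2.8505^k$; combined with the additional $\log W$ factor this is $\cO(2.851^k\, n^{O(1)} \log W)$. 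Finally, the optimum length-$(k+1)$ $s$--$v$ path weight is found by scanning the union $\bigcup_v \widehat{\cal P}^{k+2}_{sv}$ and outputting the minimum weight entry; correctness is immediate from the min-representative property applied with $Y = \emptyset$.

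The main obstacle I expect is verifying that the min-representative invariant is preserved through the $\bullet$ operation and the repeated shrinking: I need a small weighted analogue of Lemma~\ref{lem:repconvolution} showing that if $\widehat{\cal A} \subseteq_{minrep}^{q_1} {\cal A}$ and $\widehat{\cal B} \subseteq_{minrep}^{q_2} {\cal B}$ with appropriate weight functions summing under union, then $\widehat{\cal A} \bullet \widehat{\cal B} \subseteq_{minrep}^{q_1+q_2-\text{rank adjustment}} {\cal A} \bullet {\cal B}$. This is a routine but somewhat delicate bookkeeping exercise, because after extension by a single vertex $\{v\}$ different paths in ${\cal N}^{i+1}_{sv}$ may share the same vertex set but have different weights, so the ``keep the cheapest duplicate'' step has to be built into the $\bullet$ operation. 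Once this is in place, the rest is a direct lift of the $k$-Path proof.
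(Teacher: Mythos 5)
Your proposal is correct and follows exactly the route the paper takes: the paper's entire proof is the observation that one should run the $k$-Path dynamic program but invoke the weighted representative-family computation (Theorem~\ref{thm:uniformmatroidweighted}) in place of the unweighted one, and your proposal simply spells out the details of that substitution, including the correct handling of the weight function (minimum path cost per vertex set, keeping the cheapest duplicate under $\bullet$) and the source of the $\log W$ factor in the running time.
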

\subsubsection{{\sc $k$-Tree} and {\sc $k$-Subgraph Isomorphism}}
 
In this section we consider the following problem.

\smallskip
\defparproblem{ {\sc  $k$-Tree} }{An undirected $n$-vertex, $m$-edge graph $G$ and a tree $T$ on $k$ vertices.}{$k$ }
 {Does $G$ contains a subgraph  isomorphic to  $T$?}
\smallskip

We design an algorithm for {\sc $k$-Tree} using the method of representative sets.  The algorithm for {\sc  $k$-Tree} is more involved than for %not as straightforward 
%as for
  {\sc  $k$-Path}. The reason to that is  due to the fact that paths poses perfectly balanced separators of size one while trees not. 
%To use the separation properties of {\sc $k$-Tree} we define the notion of $2$-terminal trees. 
We select a leaf $r$ of $T$ and root the tree at $r$. For vertices $x$,$y \in V(T)$ we say that $y \leq x$ if $x$ lies on the path from $y$ to $r$ in $T$ (if $x = r$ we also say that  $y \leq x$). For a set $C$ of vertices in $T$ we will say that $x \preceq_C y$ if $x \leq y$ and there is no $z \in C$ such that $x \leq z$ and $z \leq y$. For a pair $x$, $y$ of vertices such that $y \leq x$ in $T$ we define 
\begin{equation*}
C^{xy} =
\begin{cases}
\emptyset & \text{if } xy \in E(T),\\
\text{The unique component $C$ of $T \setminus \{x,y\}$ such that $N(C) = \{x,y\}$ } & \text{otherwise}.
\end{cases}
\end{equation*}
We also define $T^{uv} = T[C^{uv} \cup \{u,v\}]$. We start by making a few simple observations about sets of vertices in trees.
\begin{lemma}\label{lem:treeDivide} For any tree $T$, a pair $\{x,y\}$ of vertices in $V(T)$ and integer $c \geq 1$ there exists a set $W$ of vertices such that $\{x, y\} \subseteq W$, $|W| = \cO(c)$ and every connected component $U$ of $T \setminus W$ satisfies $|U| \leq \frac{|V(T)|}{c}$ and $|N(U)| \leq 2$.
\end{lemma}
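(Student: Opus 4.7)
The plan is to construct $W$ by an iterative splitting procedure that starts from $W=\{x,y\}$ and repeatedly picks a component $U$ of $T\setminus W$ with $|V(U)|>|V(T)|/c$, adding one carefully chosen vertex $v\in V(U)$ to $W$ until no such component remains. Throughout I will maintain the invariant that every component of the current $T\setminus W$ has at most two neighbors in $W$; the choice of $v$ will guarantee this automatically, and the termination condition is exactly the size bound required by the lemma.

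The rule for choosing $v$ splits into two cases. If the current large component $U$ has at most one neighbor in $W$, I take $v$ to be a centroid of $U$, so every component of $U\setminus\{v\}$ has size at most $|V(U)|/2$ and inherits at most one previous boundary vertex plus $v$, preserving $|N(\cdot)|\leq 2$. If $U$ has two neighbors $a,b\in W$, then in the tree $T$ there is a unique $a$--$b$ path, and it meets $U$ in a subpath $P$; as we march along $P$ the size of the component of $U\setminus\{\cdot\}$ on the $a$-side is monotone non-decreasing and the $b$-side is monotone non-increasing, so by an intermediate-value argument I can pick $v\in V(P)$ for which both of these components have size at most $|V(U)|/2$. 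Because $v$ lies on the $a$--$b$ path in $T$, removing $v$ separates $a$ from $b$, and hence every component of $U\setminus\{v\}$ has boundary contained in one of $\{v,a\}$, $\{v,b\}$, or $\{v\}$; the boundary invariant is therefore preserved.

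The main obstacle will be establishing $|W|=\mathcal{O}(c)$. One-boundary (centroid) splits always halve the component size, but a two-boundary split may leave a \emph{hanging} subtree at $v$ of size close to $|V(U)|$; crucially, however, such a hanging subtree inherits only $v$ as its boundary vertex, so the next recursive call on it falls into the one-boundary case and is handled by a centroid, which halves it. Thus along any root-to-leaf branch of the recursion tree the component size is halved after at most two consecutive splits, so the recursion has depth $\mathcal{O}(\log c)$. To count the total number of splits, I will view the procedure as producing a recursion tree whose leaves correspond to final components. A call is \emph{branching} if at least two of its children are themselves nontrivial; non-branching calls cannot chain, because the only way a call has exactly one nontrivial child is when a two-boundary split passes a hanging subtree on to a one-boundary call, and the subsequent centroid split on that child produces components all of size $\leq |V(U)|/4$, so it is branching or leaf-producing. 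Hence $|W|$ is at most a constant factor times the number of branching and leaf-producing calls, and a standard double-counting (each leaf-producing call ends a chain whose total complement size must drop by at least the target threshold $|V(T)|/c$) shows that this quantity is $\mathcal{O}(c)$. Since the procedure terminates only once every component of $T\setminus W$ has size at most $|V(T)|/c$ and the boundary invariant ensures $|N(U)|\leq 2$ for each such component, this will complete the proof.
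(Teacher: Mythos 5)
Your construction and its invariant are sound: the centroid step, the intermediate-value choice of $v$ on the $a$--$b$ path, and the fact that $v$ separates $a$ from $b$ in $T$ do preserve $|N(U)|\leq 2$, and termination gives the size bound on the final components. The genuine gap is in the count $|W|=\cO(c)$. The claim that ``non-branching calls cannot chain'' is false, and so is its justification: a one-boundary (centroid) split can itself leave exactly one component above the threshold $|V(T)|/c$, as can a two-boundary split whose surviving child is the $a$-side or $b$-side; moreover, the centroid of a large hanging subtree $H$ only guarantees pieces of size at most $|H|/2$, which may be close to $|V(U)|/2$ rather than $|V(U)|/4$, and in any case ``all pieces small'' does not force the next call to be branching or leaf-producing. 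Consequently chains of non-branching calls of length $\Theta(\log c)$ are possible (repeated halvings, each shedding only below-threshold components), so the step ``$|W|$ is at most a constant times the number of branching and leaf-producing calls'' is not established; the correct parts of your argument (fewer than $c$ leaf calls, branching calls bounded by leaves, chains of length $\cO(\log c)$) only yield $|W|=\cO(c\log c)$.

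The bound can be repaired with a different accounting, using that the vertex removed at a call, together with the below-threshold components it creates, is discarded forever, and these discarded sets are pairwise disjoint across calls. Leaf calls act on pairwise disjoint components of size greater than $|V(T)|/c$, so there are fewer than $c$ of them, and branching calls number at most the leaf calls. Every chain call except a two-boundary call whose surviving child is a hanging subtree discards at least $|V(U)|/2-1\geq |V(T)|/(2c)-1$ vertices, so there are $\cO(c)$ of those (the regime $|V(T)|/c=\cO(1)$ being trivial). Finally, a chain call that passes a hanging subtree has a one-boundary call as its unique nontrivial child, distinct such calls have distinct children, and no one-boundary call is of this type, so their number is at most the number of calls of the other kinds, i.e.\ $\cO(c)$; altogether $|W|=\cO(c)$. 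For comparison, the paper avoids this bookkeeping entirely: it greedily adds at most $c$ lowermost vertices whose subtrees contain at least $|V(T)|/c$ vertices, inserts $x$ and $y$, and closes the set under least common ancestors; the closure at most doubles the size and immediately forces $|N(U)|\leq 2$.
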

\begin{proof}
We first find a set $W_1$ of size at most $c$ such that every connected component $U$ of 
$T \setminus W_1$ satisfies $|U| \leq \frac{|V(T)|}{c}$. Start with $W_1 = \emptyset$ and select a lowermost vertex $u \in V(T)$ such that the subtree rooted at $u$ has at least  $\frac{|V(T)|}{c}$ vertices. Add $u$ to $W_1$ and remove the subtree rooted at $u$ from $T$. The process must stop after $c$ iterations since each iteration removes $\frac{|V(T)|}{c}$ vertices of $T$. Each component $U$ of $T \setminus W_1$ satisfies $|U| \leq \frac{|V(T)|}{c}$ because (a) whenever a vertex $u$ is added to $W_1$, all components below $u$ have size strictly less than $\frac{|V(T)|}{c}$ and (b) when the process ends the subtree rooted at $r$ has size at most $|U| \leq \frac{|V(T)|}{c}$. Now, insert $x$ and $y$ into $W_1$ as well.

We build $W$ from $W_1$ by taking the {\em least common ancestor closure} of $W_1$; start with $W = W_1$ and as long as there exist two vertices $u$ and $v$ in $W$ such that their least common ancestor $w$ is not in $W$, add $w$ to $W$. Standard counting arguments on trees imply that this process will never increase the size of $W$ by more than a factor $2$, hence $|W| \leq 2|W_1| = O(c)$.

We claim that every connected component $U$ of $T \setminus W$ satisfies $N(U) \leq 2$. Suppose not and let $u$ be the vertex of $u$ closest to the root. Since $N(U) > 2$ at least two vertices $v$ and $w$ in $N(U)$ are descendants of $u$. Since $U$ is connected $v$ and $w$ can't be descendants of each other, but then the least common ancestor of $v$ and $w$ is in $U$, contradicting the construction of $W$.
\end{proof}

\begin{observation}\label{obs:treeHasLeaf} For any tree $T$, set $W \subseteq V(T)$ and component $U$ of $T \setminus W$ such that $|N(U)| = 1$, $U$ contains a leaf of $T$.
\end{observation}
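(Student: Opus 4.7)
The plan is to prove Observation~\ref{obs:treeHasLeaf} by a short case analysis based on the size of $U$, using the fact that $T$ is acyclic. I will consider $U$ as a subgraph of $T$; since $U$ is a connected component of $T \setminus W$ and $T$ is a tree, $U$ itself is a subtree of $T$ (an induced connected acyclic subgraph).

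The key preliminary observation is that there is \emph{exactly one} vertex $u^\star \in U$ adjacent in $T$ to the unique vertex $w \in N(U) \subseteq W$. Indeed, if two distinct vertices $u_1, u_2 \in U$ were both adjacent to $w$, then concatenating the edges $u_1 w$, $w u_2$ with a path from $u_2$ to $u_1$ inside the connected set $U$ would yield a cycle in $T$, contradicting that $T$ is a tree. So $u^\star$ is the unique ``attachment point'' of $U$ to $W$, and every other vertex of $U$ has all its $T$-neighbors lying inside $U$.

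Now I split into two cases. If $|U| = 1$, then $U = \{u^\star\}$ and $d_T(u^\star) = 1$ (only the edge $u^\star w$), so $u^\star$ is a leaf of $T$ and we are done. If $|U| \geq 2$, then the subtree $U$ has at least two leaves (a standard fact for trees with at least two vertices); pick a leaf $v$ of $U$ with $v \neq u^\star$. Then $d_U(v) = 1$, and because $u^\star$ is the only vertex of $U$ with a $T$-neighbor in $W$, the vertex $v$ has no neighbors in $W$ either. Hence $d_T(v) = d_U(v) = 1$, so $v$ is a leaf of $T$ contained in $U$.

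I do not anticipate any real obstacle here; the only subtlety is noticing that $|N(U)| = 1$ does not immediately preclude $w$ from being adjacent to many vertices of $U$, so one must invoke acyclicity to reduce to a unique attachment point $u^\star$ before arguing about leaves. Everything else is standard tree combinatorics.
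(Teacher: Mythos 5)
Your proof is correct and rests on the same core fact as the paper's: a tree on at least two vertices has at least two leaves, and a leaf of the relevant subtree lying in $U$ has all its $T$-neighbors inside that subtree, hence is a leaf of $T$. The paper argues slightly more compactly by applying this directly to $T[U\cup N(U)]$ (so at most one of its leaves lies in $N(U)$), whereas you work with $T[U]$ after isolating the unique attachment vertex via acyclicity; the difference is cosmetic.
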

\begin{proof}
$T[U \cup N(U)]$ is a tree on at least two vertices and hence it has at least two leaves. At most one of these leaves is in $N(U)$, the other one is also a leaf of $T$.
\end{proof}

\begin{lemma}\label{lem:identifyComponents}
Let $W \subseteq V(T)$ be a set of vertices such that for every pair of vertices in $W$ their least common ancestor is also in $W$. Let $X$ be a set containing one leaf of $T$ from each connected component $U$ of $T \setminus W$ such that $|N(U)| = 1$. Then, for every connected component $U$ such that $|N(U)| = 1$ there exist $x \in W$, $y \in X$ such that $U = C^{xy} \cup \{y\}$. For every other connected component $U$ there exist $x$, $y \in W$ such that $U = C^{xy}$.
\end{lemma}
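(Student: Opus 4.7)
My plan is to argue via a single structural ingredient: for any component $U$ of $T \setminus W$, identify the vertex $u^* \in U$ closest to the root $r$, and then read off the neighbors of $U$ from the structure of the tree rooted at $u^*$. The topmost vertex $u^*$ is well defined because $U$ is connected in the tree $T$; if $u^* \neq r$, its parent lies outside $U$ and is therefore either in $W$ or in another component, but since the parent is actually adjacent to $u^*$ it must lie in $W$ and hence in $N(U)$. Any further element $w \in N(U)$ must be a proper descendant of $u^*$ lying in $W$ whose parent is in $U$.

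The first real step uses the LCA-closure assumption to bound $|N(U)| \leq 2$. If two distinct descendants $w_1, w_2 \in N(U) \cap W$ existed below $u^*$, their least common ancestor in $T$ would also be a descendant of $u^*$ and, moreover, would lie on the $u^*$-to-$w_i$ paths, which are entirely inside $U$ except at their endpoints; so the LCA would be in $U$ and therefore \emph{not} in $W$, contradicting LCA-closure. Thus $N(U)$ consists of at most the parent of $u^*$ plus at most one $W$-descendant of $u^*$.

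Now I split into the two cases of the conclusion. If $|N(U)| = 2$, write $N(U) = \{x,y\}$ with $x$ the parent of $u^*$ and $y$ the unique $W$-descendant of $u^*$ in $N(U)$; then $y \leq x$. Because $x$ and $y$ are the only edges from $U$ to the rest of $T$, removing them separates $U$ from $V(T) \setminus (U \cup \{x,y\})$, so $U$ is a connected component of $T \setminus \{x,y\}$ with $N_T(U) = \{x,y\}$, and uniqueness of $C^{xy}$ gives $U = C^{xy}$. If $|N(U)| = 1$, write $N(U) = \{x\}$. When $x$ is the parent of $u^*$, a short argument shows that every descendant of $u^*$ lies in $U$ (otherwise the first $W$-vertex on the path down would be a second element of $N(U)$). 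Hence $U$ is the full subtree rooted at $u^*$, which contains a leaf of $T$; let $y \in X$ be the chosen such leaf. Then $y \leq u^* \leq x$, and $U \setminus \{y\}$ is connected (deleting a leaf of the subtree) with $N_T(U \setminus \{y\}) = \{x,y\}$, so $U \setminus \{y\} = C^{xy}$, giving $U = C^{xy} \cup \{y\}$; the degenerate subcase $xy \in E(T)$ forces $U = \{y\}$, consistent with $C^{xy} = \emptyset$.

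The main obstacle I anticipate is the boundary case $u^* = r$ (so $u^*$ has no parent), where the role of $x$ would have to be played by a $W$-descendant and the ordering $y \leq x$ required by the definition of $C^{xy}$ is not automatic. I will handle this by noting that in the intended application $r \in W$ (or by rerooting the tree so that the distinguished leaf used to root it is forced into $W$), which guarantees $u^* \neq r$ for every component $U$; alternatively, one can observe that for any component containing $r$ and having $|N(U)| = 1$, choosing $y = r$ itself and reversing the ordering to $r \leq x$ on the relevant path gives the same identity $U = C^{xr} \cup \{r\}$. Once this edge case is disposed of, the two cases above cover all remaining components, completing the proof.
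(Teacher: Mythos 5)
Your proof is correct and follows essentially the same route as the paper's: the paper also works with the topmost vertex of $U$ and uses LCA-closure to force $|N(U)| \leq 2$, citing the identical argument back to the end of the proof of Lemma~\ref{lem:treeDivide} and using Observation~\ref{obs:treeHasLeaf} for the $|N(U)|=1$ case. One small imprecision worth fixing: when you conclude that the LCA of $w_1,w_2$ lies in $U$ because it is on the $u^*$-to-$w_i$ paths, you should separately dispose of the possibility that the LCA equals $w_1$ or $w_2$ (i.e., one is an ancestor of the other) — the paper handles this explicitly with its ``$v$ and $w$ can't be descendants of each other'' remark, and in your framework it still yields a contradiction since that ancestor would then be an \emph{interior} vertex of the $u^*$-to-(the other) path and hence in $U$.
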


\begin{proof}
It follows from the argument at the end of the proof of Lemma~\ref{lem:treeDivide} that every component $U$ of $T \setminus W$ satisfies $|N(U)| \leq 2$. If $|N(U)| = 2$, let $N(U) = \{x,y\}$. We have that $x \leq y$ or $y \leq x$ since least common ancestor of $x$ and $y$ can not be in $U$ and would therefore be in $N(U)$, contradicting $|N(U)| = 2$. Without loss of generality $y \leq x$. But then $U = C^{xy}$. If $N(U) = 1$, let $N(U) = \{x\}$. By Observation~\ref{obs:treeHasLeaf} $U$ contains a leaf $y$ of $T$. Then $U = C^{xy} \cup \{y\}$.
\end{proof}

Given two graphs $F$ and $H$, a graph {\em homomorphism} from
 $F$ to $H$ is a map $f$ from $V(F)$ to $V(H)$, that is $f:~V(F)\rightarrow V(H)$, such that if $uv\in E(F)$, then $f(u)f(v)\in E(H)$. Furthermore,
when the map $f$ is injective,  
%and $|V(H)|=|V(F)|$
 $f$ is called a {\em subgraph isomorphism}.
 %\todo{is $|V(H)|=|V(F)|$ for subgraph isomorphism?}.  
 For every $x,y \in V(T)$ such that $y \leq x$, and every $u$,$v$ in $V(G)$ we define
\begin{eqnarray*}
{\cal F}^{xy}_{uv} & = & \Big\{F \in {V(G) \setminus \{u,v\} \choose |C^{xy}|}~:~ \exists \mbox{ subgraph isomorphism $f$} \\ 
& & \mbox{from $T^{xy}$ to $G[F \cup \{u,v\}]$ such that 
$f(x)=u$ and $f(y)=v$}\Big\}
\end{eqnarray*}
Let us remind that for a    set $X$ and a family  ${\cal A}$, we use   
 ${\cal A} + X$ to denote $\{A \cup X ~:~A \in {\cal A}\}.$ 
For every $x,y \in V(T)$ such that $y \leq x$, and every $u$ in $V(G)$ we define 
\begin{align}\label{eqn:fastdefn}
{\cal F}^{xy}_{u\ast} = \bigcup_{v \in V(G) \setminus \{u\}} {\cal F}^{xy}_{uv} + \{v\}
\end{align}

In order to solve the problem it is sufficient to select an arbitrary leaf $\ell$ of $T$ and determine whether there exists a $u \in V(G)$ such that the family ${\cal
F}^{r\ell}_{u\ast}$ is non-empty. We show that the collections of families $\{{\cal F}^{xy}_{uv}\}$ and  $\{{\cal F}^{xy}_{u\ast}\}$ satisfy a recurrence relation. We will then
exploit this recurrence relation to get a fast algorithm for  {\sc  $k$-Tree}.

\begin{lemma}\label{lem:ktreeRecurrence} For every $x$,$y \in V(T)$ such that $y \leq x$, every $\widehat{W} = W \cup \{x,y\}$ where $W \subseteq C^{xy}$, such that  for every pair of vertices in $\widehat{W}$ their least common ancestor is also in $\widehat{W}$, every $X \subseteq C^{xy} \setminus W$ such that $X$ contains exactly one leaf of $T$ in each connected component $U$ of $T^{xy} \setminus \widehat{W}$ with $|N(U)|=1$, the following recurrence holds.
\begin{align}\label{eqn:ktreeRecurrence}
{\cal F}^{xy}_{uv} = \bigcup_{\substack{g : \widehat{W} \rightarrow V(G) \\ g(x)=u \wedge g(y)=v}} \left[\left(\prod^\bullet_{\substack{x',y' \in \widehat{W} \\ y' \preceq_{\widehat{W}} x'}} {\cal F}^{x'y'}_{g(x')g(y')} \bullet \prod^\bullet_{\substack{x' \in\widehat{W} \mbox{ , } y' \in X \\ y' \preceq_{\widehat{W}} x'}} {\cal F}^{x'y'}_{g(x')\ast}\right) + g(W) \right]
\end{align}
Here the union goes over all $O(n^{|W|})$ injective maps $g$ from $\widehat{W}$ to $V(G)$ such that $g(x)=u$ and $g(y)=v$, and by $g(W)$ we mean $\{g(c)~:~c \in W\}$.
\end{lemma}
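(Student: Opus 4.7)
The plan is to prove \eqref{eqn:ktreeRecurrence} by double inclusion, using Lemma~\ref{lem:identifyComponents} applied to $\widehat{W}$ inside the rooted subtree $T^{xy}$. That lemma, together with the LCA-closedness of $\widehat{W}$ and the defining property of $X$, gives a canonical decomposition of $V(T^{xy}) \setminus \widehat{W}$ as the disjoint union of the sets $C^{x'y'}$ (one for each pair $x',y' \in \widehat{W}$ with $y' \preceq_{\widehat{W}} x'$) and the sets $C^{x'y'} \cup \{y'\}$ (one for each $x' \in \widehat{W}$, $y' \in X$ with $y' \preceq_{\widehat{W}} x'$). These are exactly the index sets over which the two $\bullet$-products on the right-hand side of the recurrence run, which already reveals why the shape of the recurrence is forced by the tree structure.

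For the $\subseteq$ direction, I would take any $F \in {\cal F}^{xy}_{uv}$ witnessed by a subgraph isomorphism $f : V(T^{xy}) \to F \cup \{u,v\}$ with $f(x) = u$ and $f(y) = v$, and define $g = f|_{\widehat{W}}$, which is injective. For each consecutive pair $y' \preceq_{\widehat{W}} x'$ with $x',y' \in \widehat{W}$, the restriction of $f$ to $V(T^{x'y'})$ is a subgraph isomorphism into $G[f(C^{x'y'}) \cup \{g(x'),g(y')\}]$, placing $f(C^{x'y'})$ in ${\cal F}^{x'y'}_{g(x')g(y')}$; for the remaining pairs with $y' \in X$, the analogous restriction combined with~\eqref{eqn:fastdefn} places $f(C^{x'y'}) + \{f(y')\}$ in ${\cal F}^{x'y'}_{g(x')\ast}$. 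Since $f$ is injective, the images of these pieces are pairwise disjoint and disjoint from $g(W)$, so their union is exactly $F$, witnessing membership in the term of the outer union corresponding to $g$.

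For the $\supseteq$ direction, fix an injective $g$ and choose, for each index, a set from the prescribed family together with its witnessing subgraph isomorphism $f^{x'y'}$ from $T^{x'y'}$ (or from $T^{x'y'}$ extended by the leaf chosen in the $\ast$-case) into $G$. Define $f$ on $V(T^{xy})$ by taking $f = g$ on $\widehat{W}$ and $f = f^{x'y'}$ on each corresponding piece. The disjointness enforced by the $\bullet$-products together with the injectivity of $g$ makes $f$ a well-defined injection; every edge of $T^{xy}$ either lies inside some $T^{x'y'}$ (handled by the corresponding isomorphism) or connects two consecutive vertices of $\widehat{W}$ (captured by the factor ${\cal F}^{x'y'}_{g(x')g(y')}$ in the degenerate case $C^{x'y'} = \emptyset$, which consists of $\emptyset$ precisely when $g(x')g(y') \in E(G)$), so $f$ is a subgraph isomorphism and the resulting $F$ lies in ${\cal F}^{xy}_{uv}$.

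The main obstacle is purely bookkeeping: verifying that the decomposition of $T^{xy}$ induced by $\widehat{W}$ and $X$ is indexed by exactly the same pairs that appear in the two products, and that the degenerate case $C^{x'y'} = \emptyset$ (edges internal to $\widehat{W}$) is handled by the recurrence rather than silently lost. Once these matching conditions are in place, the proof reduces to the standard restriction-and-gluing argument for subgraph isomorphisms outlined above.
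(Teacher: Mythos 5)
Your proof is correct and follows essentially the same route as the paper's: both directions proceed by restricting a witnessing subgraph isomorphism $f$ to the pieces $T^{x'y'}$ carved out by $\widehat{W}$ and $X$ (invoking Lemma~\ref{lem:identifyComponents} to identify those pieces), and conversely gluing compatible isomorphisms along $\widehat{W}$. You are somewhat more explicit than the paper about the bookkeeping — in particular you call out that the degenerate pairs with $C^{x'y'}=\emptyset$ are what encode the edges of $T^{xy}$ internal to $\widehat{W}$, and that injectivity of $f$ is what makes the image pieces disjoint as the $\bullet$-product requires — but these are refinements of, not departures from, the argument in the paper.
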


\begin{proof}
For the $\subseteq$ direction of the equality consider any subgraph isomorphism $f$ from $T^{xy}$ to $V(G)$ such that $f(x) = u$ and $f(y)=v$. Let $g$ be the restriction of $f$ to $W$. The map $f$ can be considered as a collection of subgraph isomorphisms with one isomorphism for each $x', y' \in \widehat{W}$ such that $y' \preceq_{\widehat{W}} x$ from $T^{x'y'}$ to $G$ such that $f(x') = g(x')$ and $f(y') = g(y')$, and  one isomorphism for each $x' \in \widehat{W}, y' \in X$ such that $y' \preceq_{\widehat{W}} x$ from $T^{x'y'}$ to $G$ such that $f(x') = g(x')$. Taking the union of the ranges of each of the small subgraph isomorphisms clearly give the range of $f$. Here we used Lemma~\ref{lem:identifyComponents} to argue that for every connected component $U$ of $T^{xy} \setminus \widehat{W}$ we have that $T[U \cup N(U)]$ is in fact on the form $T^{x'y'}$ for some $x', y'$.

For the reverse direction take any collection of subgraph isomorphisms with one isomorphism $f$ for each $x', y' \in \widehat{W}$ such that $y' \preceq_{\widehat{W}} x$ from $T^{x'y'}$ to $G$ such that $f(x') = g(x')$ and $f(y') = g(y')$, and  one isomorphism for each $x' \in \widehat{W}, y' \in X$ such that $y' \preceq_{\widehat{W}} x$ from $T^{x'y'}$ to $G$ such that $f(x') = g(x')$, such that the range of all of these subgraph isomorphisms are pairwise disjoint (except on vertices in $\widehat{W}$). Since all of these subgraph isomorphisms agree on the set $W$ they can be glued together to a subgraph isomorphism from $T^{xy}$ to $G$.
\end{proof}

Our goal is to compute for every $x, y \in V(T)$ such that $y \leq x$ and $u, v \in V(G)$ a family $\hat{\cal F}^{xy}_{uv}$ such that $\hat{\cal F}^{xy}_{uv} \subseteq_{rep}^{k-|C^{xy}|} {\cal F}^{xy}_{uv}$ and for every $x, y \in V(T)$ such that  $y \leq x$ and $u \in V(G)$ a family $\hat{\cal F}^{xy}_{u\ast}$ such that $\hat{\cal F}^{xy}_{u\ast} \subseteq_{rep}^{k-|C^{xy}|-1} {\cal F}^{xy}_{uv}$. 
We will also maintain the following size invariants.
\begin{align}
 |\hat{\cal F}^{xy}_{uv}| &\leq 
{\left(\frac{2k-|C^{xy}|}{|C^{xy}|}\right)^{|C^{xy}|} \left(\frac{2k-|C^{xy}|}{2k-2|C^{xy}|}\right)^{k-|C^{xy}|}}2^{o(k)}
%\log n 
\label{eqn:sxy}\\
 |\hat{\cal F}^{xy}_{u\ast}| &\leq 
{\left(\frac{2k-|C^{xy}|-1}{|C^{xy}|+1}\right)^{|C^{xy}|+1} \left(\frac{2k-|C^{xy}|-1}{2k-2|C^{xy}|-2}\right)^{k-|C^{xy}|-1}}2^{o(k)}
%\log n 
\label{eqn:sxystar}
\end{align}
Let the right hand side of equation~\ref{eqn:sxy} be $s_{xy}$ and the right had side of equation~\ref{eqn:sxystar} be $s_{xy}^{*}$.
%and $|\hat{\cal F}^{xy}_{u\ast}| \leq {k \choose |C^{xy}|+1}2^{o(k)}\log n$.
%
We first compute such families $\hat{\cal F}^{xy}_{uv}$ for all $x, y \in V(T)$ such that $y \leq x$ and $xy \in E(T)$. Observe that in this case we have
\begin{equation*}
{\cal F}^{xy}_{uv} =
\begin{cases}
\{\emptyset\} & \text{if } uv \in E(G),\\
\emptyset & \text{if } uv \notin E(G).
\end{cases}
\end{equation*}
For each  $x, y \in V(T)$ such that $y \leq x$ and $xy \in E(T)$ and every $u,v \in V(G)$ we set  $\hat{\cal F}^{xy}_{uv} =  {\cal F}^{xy}_{uv}$. We can now for compute  $\hat{\cal F}^{xy}_{u\ast}$ for every $x, y \in V(T)$ such that $y \leq x$ and $xy \in E(T)$ and every $u \in V(G)$ by applying Equation~\ref{eqn:fastdefn}.  Clearly the computed families are within the required size bounds.

We now show how to compute a family $\hat{\cal F}^{xy}_{uv}$ of size %${k \choose |C^{xy}|}2^{o(k)}\log n$ 
$s_{xy}$ for every $x, y \in V(T)$ such that $y \leq x$ and $u, v \in V(G)$ and $|C^{xy}| = t$, assuming that the families $\hat{\cal F}^{xy}_{uv}$ and $\hat{\cal F}^{xy}_{u\ast}$ have been computed for every $x, y \in V(T)$ such that $y \leq x$ and $u, v \in V(G)$ and $|C^{xy}| < t$. We also assume that for each family $\hat{\cal F}^{xy}_{uv}$  that has been computed, 
%$|\hat{\cal F}^{xy}_{uv}| \leq {k \choose |C^{xy}|}2^{o(k)}\log n$. 
$|\hat{\cal F}^{xy}_{uv}| \leq s_{xy}$. 
Similarly we assume that for each family $\hat{\cal F}^{xy}_{u\ast}$  that has been computed,  
%$|\hat{\cal F}^{xy}_{u\ast}| \leq {k \choose |C^{xy}|+1}2^{o(k)}\log n$.
$|\hat{\cal F}^{xy}_{u\ast}| \leq s_{xy}^{*}$.

We fix a constant $c$ whose value will be decided later. First apply Lemma~\ref{lem:treeDivide} on $T^{xy}$, vertex pair $\{x,y\}$ and constant $c$ and obtain a set $\widehat{W}$  such that $\{x, y\} \subseteq \widehat{W}$ and every connected component $U$ of $T \setminus \widehat{W}$ satisfies $|U| \leq \frac{|V(T)|}{c}$ and $|N(U)| \leq 2$. Select a set $X \subseteq V(T^{x,y}) \setminus \widehat{W}$ such that each connected component $U$ of $T \setminus\widehat{W}$ with $|N(U)| = 1$ contains exactly one leaf which is in $X$. Now, set $W = \widehat{W} \setminus \{x,y\}$ and consider Equation~\ref{eqn:ktreeRecurrence} for $\hat{\cal F}^{xy}_{uv}$ for this choice of $x$,$y$,$W$ and $X$. Define
\begin{align}\label{eqn:ktreeHatRecurrence}
\tilde{\cal F}^{xy}_{uv} = \bigcup_{\substack{g : \widehat{W} \rightarrow V(G) \\ g(x)=u \wedge g(y)=v}} \left[\left(\prod^\bullet_{\substack{x',y' \in \widehat{W} \\ y' \preceq_{\widehat{W}} x'}} \hat{\cal F}^{x'y'}_{g(x')g(y')} \bullet \prod^\bullet_{\substack{x' \in\widehat{W} \mbox{ , } y' \in X \\ y' \preceq_{\widehat{W}} x'}} \hat{\cal F}^{x'y'}_{g(x')\ast}\right) + g(W) \right]
\end{align}
Lemma~\ref{lem:ktreeRecurrence} together with Lemmata~\ref{lem:repunion} and \ref{lem:repconvolution} directly imply that $\tilde{\cal F}^{xy}_{uv} \subseteq_{rep}^{k-|C^{xy}|} {\cal F}^{xy}_{uv}$. Furthermore, each family on the right hand side of Equation~\ref{eqn:ktreeHatRecurrence} has already been computed, since $C^{x'y'} \subset C^{xy}$ and so $|C^{x'y'}| < t$.
For a fixed injective map $g : W \rightarrow V(G)$ we define 
\begin{align}\label{eqn:fixedFunction}
\tilde{\cal F}^{xy}_{g} = \left(\prod^\bullet_{\substack{x',y' \in \widehat{W} \\ y' \preceq_{\widehat{W}} x'}} \hat{\cal F}^{x'y'}_{g(x')g(y')} \bullet \prod^\bullet_{\substack{x' \in\widehat{W} \mbox{ , } y' \in X \\ y' \preceq_{\widehat{W}} x'}} \hat{\cal F}^{x'y'}_{g(x')\ast}\right) + g(W) 
\end{align}
It follows directly from the definition of $\tilde{\cal F}^{xy}_{uv}$ and $\tilde{\cal F}^{xy}_{g}$ that
\begin{align*}
\tilde{\cal F}^{xy}_{uv} = \bigcup_{\substack{g : \widehat{W} \rightarrow V(G) \\ g(x)=u \wedge g(y)=v}} \tilde{\cal F}^{xy}_{g}.
\end{align*}

Our goal is to compute a family $\hat{\cal F}^{xy}_{uv} \subseteq_{rep}^{k-|C^{xy}|} \tilde{\cal F}^{xy}_{uv}$ such that  
%$|\hat{\cal F}^{xy}_{uv}| \leq {k \choose |C^{xy}|}2^{o(k)}\log n$.
$|\hat{\cal F}^{xy}_{uv}| \leq s_{xy}$. 
 Lemma~\ref{lem:reptransitive} then implies that $\hat{\cal F}^{xy}_{uv} \subseteq_{rep}^{k-|C^{xy}|} {\cal F}^{xy}_{uv}$. To that end, we define the function {\sf reduce}. Given a family ${\cal F}$ of sets of size $p$, the function {\sf reduce} will run the algorithm of %Theorem~\ref{thm:fastRepUniform}  
Theorem~\ref{thm:repset uniform general3}
 on ${\cal F}$ with $x=\frac{p}{2k-p}$ and produce a family of size 
%${k \choose p} \cdot 2^{o(k)} \cdot \log n$ 
$\left(\frac{2k-p}{p}\right)^p \left(\frac{2k-p}{2k-2p}\right)^{k-p} 2^{o(k)}$ 
that $k-p$ represents ${\cal F}$ .

We will compute for each $g : \widehat{W} \rightarrow V(G)$ such that $g(x)=u$ and $g(y)=v$ a family $\hat{\cal F}^{xy}_{g}$ of size at most %${k \choose |C^{xy}|}2^{o(k)}\log n$ 
$s_{xy}$
such that $\hat{\cal F}^{xy}_{g} \subseteq_{rep}^{k-|C^{xy}|} \tilde{\cal F}^{xy}_{g}$. We will then set
\begin{align}
\hat{\cal F}^{xy}_{uv} = {\sf reduce}\left(\bigcup_{\substack{g : \widehat{W} \rightarrow V(G) \\ g(x)=u \wedge g(y)=v}} \hat{\cal F}^{xy}_{g}\right).
\end{align}
To compute $\hat{\cal F}^{xy}_{g}$, inspect Equation~\ref{eqn:fixedFunction}. Equation~\ref{eqn:fixedFunction} shows that $\tilde{\cal F}^{xy}_{g}$ basically is a long chain of $\bullet$ operations, specifically
\begin{align}\label{eqn:fixedFunction2}
\tilde{\cal F}^{xy}_{g} = \left(\hat{F}_1 \bullet \hat{F}_2 \bullet \hat{F}_3 \ldots \bullet \hat{F}_\ell\right) + g(W) 
\end{align}
We define (and compute) $\hat{\cal F}^{xy}_{g}$ as follows
\begin{align}\label{eqn:computeHatG}
\hat{\cal F}^{xy}_{g} = {\sf reduce}\left({\sf reduce}\left(\ldots {\sf reduce}\left({\sf reduce}\left(\hat{F}_1 \bullet \hat{F}_2\right)  \bullet \hat{F}_3\right) \bullet \ldots \right) \bullet \hat{F}_\ell\right) + g(W) 
\end{align}
$\hat{\cal F}^{xy}_{g} \subseteq_{rep}^{k-|C^{xy}|} \tilde{\cal F}^{xy}_{g}$ and thus also $\hat{\cal F}^{xy}_{uv} \subseteq_{rep}^{k-|C^{xy}|} \tilde{\cal F}^{xy}_{uv} \subseteq_{rep}^{k-|C^{xy}|} {\cal F}^{xy}_{uv}$ follows from Lemma~\ref{lem:repconvolution} and 
%Theorem~\ref{thm:fastRepUniform}. 
Theorem~\ref{thm:repset uniform general3}. 
Since the last operation we do in 
the construction of $\hat{\cal F}^{xy}_{uv}$ is a call to {\sf reduce}, 
%$|\hat{\cal F}^{xy}_{uv}| \leq {k \choose |C^{xy}|}2^{o(k)}\log n$ 
$|\hat{\cal F}^{xy}_{uv}| \leq s_{xy}$  
follows from Theorem~\ref{thm:repset uniform general3}.  
%Theorem~\ref{thm:fastRepUniform}. 
To conclude the computation we set 
\begin{align}\label{eqn:computeHatAst}
\tilde{\cal F}^{xy}_{u\ast} = {\sf reduce}\left(\bigcup_{v \in V(G) \setminus \{u\}} \hat{\cal F}^{xy}_{uv} + \{v\}\right)
\end{align}
Lemma~\ref{lem:repconvolution} and 
Theorem~\ref{thm:repset uniform general3} 
%Theorem~\ref{thm:fastRepUniform} 
imply that $\tilde{\cal F}^{xy}_{u\ast}  \subseteq_{rep}^{k-|C^{xy}|-1} {\cal F}^{xy}_{u\ast}$ and that 
$|\hat{\cal F}^{xy}_{u\ast}| \leq s_{xy}^*$.

The algorithm computes the families $\hat{\cal F}^{xy}_{u\ast}$ and $\hat{\cal F}^{xy}_{uv}$ for every $x,y \in V(T)$ such that $y \leq x$. It then selects an arbitrary leaf $\ell$ of $T$ and checks whether there exists a $u \in V(G)$ such that the family $\hat{\cal F}^{r\ell}_{u\ast}$ is non-empty. Since $\hat{\cal F}^{r\ell}_{u\ast} \subseteq_{rep}^{0} {\cal F}^{r\ell}_{u\ast}$ there is a non-empty ${\cal F}^{r\ell}_{u\ast}$ if and only if there is a non empty  $\hat{\cal F}^{r\ell}_{u\ast}$. Thus the algorithm can answer that there is a subgraph isomorphism from $T$ to $G$ if some $\hat{\cal F}^{r\ell}_{u\ast}$ is non-empty, and that no such subgraph isomorphism exists otherwise.

It remains to bound the running time of the algorithm. Up to polynomial factors, the running time of the algorithm is dominated by the computation of $\hat{\cal F}^{xy}_{uv}$. This computation consists of $n^{\cO(\widehat{|W|})}$ independent computations of the families $\hat{\cal F}^{xy}_{g}$. Each computation of the family $\hat{\cal F}^{xy}_{g}$ consists of at most $k$ repeated applications of the operation
$$\hat{\cal F}^{i+1} = {\sf reduce}(\hat{\cal F}^i \bullet \hat{\cal F}_{i+1}).$$
Here ${\cal F}^i$ is a family of sets of size $p$, and so 
%$|{\cal F}^i| \leq {k \choose p}2^{o(k)}\log n$. 
$|{\cal F}^i| \leq {\left(\frac{2k-p}{p}\right)^{p} \left(\frac{2k-p}{2k-2p}\right)^{k-p}}2^{o(k)}\log n $.
On the other hand $\hat{\cal F}_{i+1}$ is a family of sets of size $p' \leq \frac{k}{c}$ since we used Lemma~\ref{lem:treeDivide} to construct $\widehat{W}$. Thus,
\begin{align*}
|\hat{\cal F}_{i+1}| &\leq {\left(\frac{2k-p'}{p'}\right)^{p'} \left(\frac{2k-p'}{2k-2p'}\right)^{k-p'}}2^{o(k)}\\
&\leq {\left(\frac{2k}{p'}\right)^{p'} \left(\frac{2k}{2k-2p'}\right)^{k-p'}}2^{o(k)}\\ 
&\leq {k \choose p'}\cdot 2^{p'}\cdot 2^{o(k)}\\
&\leq {k \choose k/c}\cdot 2^{k/c}\cdot 2^{o(k)}\\
&\leq 2^{(\varepsilon+1/c) k} \cdot 2^{o(k)}
\end{align*}
%  $|\hat{\cal F}_{i+1}| \leq {k \choose k/c}2^{o(k)}\log n \leq 2^{\varepsilon k}\log n$. 
Thus $|\hat{\cal F}^i \bullet \hat{\cal F}_{i+1}| \leq 
{\left(\frac{2k-p}{p}\right)^{p} \left(\frac{2k-p}{2k-2p}\right)^{k-p}}2^{(\varepsilon+1/c) k + o(k)}$. Hence, when we apply Theorem~\ref{thm:repset uniform general3} with $x=\frac{p+p'}{2k-p-p'}$ to compute ${\sf reduce}(\hat{\cal F}^i \bullet \hat{\cal F}_{i+1})$, this takes time
%\begin{eqnarray*}
%& & {k \choose p}\left(\frac{k}{k-p-p'}\right)^{k-p-p'}2^{\epsilon k + o(k)}\log^2 n \\
%& \leq & \left(\frac{k}{p}\right)^p \cdot \left(\frac{k}{k-p}\right)^{k-p} \cdot \left(\frac{k}{k-p-p'}\right)^{k-p-p'}2^{2\epsilon k + o(k)}\log^2 n \\
%& \leq &\left[ \left(\frac{1}{\alpha}\right)^\alpha \cdot \left(\frac{1}{1-\alpha}\right)^{1-\alpha} \cdot \left(\frac{1}{1-\alpha-\epsilon}\right)^{1-\alpha-\epsilon}2^{2\epsilon + o(1)}\right]^k \log^2 n \\
%\end{eqnarray*}
%where we have substituted $\alpha k$ for $p$

\begin{eqnarray*}
&& |\hat{\cal F}^i \bullet \hat{\cal F}_{i+1}| \left(\frac{2k-p-p'}{2k-2p-2p'}\right)^{k-p-p'}2^{o(k)}  \log n\\
&\leq&|\hat{\cal F}^i \bullet \hat{\cal F}_{i+1}| \left(\frac{2k-p}{2k-2p}\right)^{k-p} \left(\frac{2k-2p}{2k-2p-2p'}\right)^{k-p-p'}2^{o(k)}  \log n\\
&\leq&|\hat{\cal F}^i \bullet \hat{\cal F}_{i+1}| \left(\frac{2k-p}{2k-2p}\right)^{k-p} \left(1+\frac{p'}{k-p-p'}\right)^{k-p-p'}2^{o(k)} \log n\\
&\leq&|\hat{\cal F}^i \bullet \hat{\cal F}_{i+1}| \left(\frac{2k-p}{2k-2p}\right)^{k-p} e^{p'}2^{o(k)}  \log n\\
&\leq& {\left(\frac{2k-p}{p}\right)^{p} \left(\frac{2k-p}{2k-2p}\right)^{2k-2p}}2^{(\varepsilon+3/c) k + o(k)}\log n\\
%& & {k \choose p}\left(\frac{k}{k-p-p'}\right)^{k-p-p'}2^{\varepsilon k + o(k)}n^{\cO(1)}\\
%& \leq & {k \choose p}\left(\frac{k}{k-p}\right)^{k-p} \cdot \left(\frac{k-p}{k-p-p'}\right)^{k-p-p'}2^{\varepsilon k + o(k)}n^{\cO(1)}
\end{eqnarray*}
Since there are $n^{\cO(\widehat{|W|})}$ (which is equal to $n^{\cO(c)}$, where $c$ is a constant) independent computations of the families $\hat{\cal F}^{xy}_{g}$, the total running time is upper bounded by  
$$
{\left(\frac{2k-p}{p}\right)^{p} \left(\frac{2k-p}{2k-2p}\right)^{2k-2p}}2^{(\varepsilon+3/c) k + o(k)} n^{\cO(1)}
$$
The maximum value of ${\left(\frac{2k-p}{p}\right)^{p} \left(\frac{2k-p}{2k-2p}\right)^{2k-2p}}$ is when $p=(1-\frac{1}{\sqrt{5}})k$ and the maximum value is 
$\phi^{2k}$, where $\phi$ is the golden ratio $\frac{1+\sqrt{5}}{2}$.
% $2.619^k$. 
Now we can choose the value of $c$ in such a way that $\varepsilon+3/c$ is small enough and the above running time is bounded by $2.619^{k}n^{\cO(1)}$. 
This yields the following theorem.
\begin{theorem}\label{thm:k-tree}
{\sc $k$-Tree} can be solved in time $2.619^{k}n^{\cO(1)}$.
%${k \choose \alpha k}\left(\frac{1}{1-\alpha}\right)^{k(1-\alpha)} \cdot 2^{3\varepsilon k}n^{\cO(\frac{1}{\varepsilon})}$, where  $\alpha = 1 + \frac{1 - \sqrt{1+4e}}{2e}$. For sufficiently small $\varepsilon$ this is upper bounded by $2.851^kn^{\cO(1)}$.
\end{theorem}

The algorithm for {\sc $k$-Tree}  can be generalized  to {\sc $k$-Subgraph Isomorphism} for the case when the pattern graph $F$ has treewidth at most $t$. Towards this we need a result analogous to Lemma~\ref{lem:treeDivide} for trees, which can be proved using the separation properties of graphs of treewidth at most $t$.  This will lead to an algorithm with running time  $2.619^{k} \cdot n^{\cO{(t)}}$.

\subsection{Other Applications}
 %!TEX root = repset-main.tex
 
% \defparproblem{ {\sc  $\ell$-Matroid Intersection} }{Let $M_1=(E,{\cal I}_1),\dots, M_1=(E,{\cal I}_\ell) $ be matroids with the same ground set $E$ and a  positive integers $k$.}{$k$ }
% {Does there exist a simple path of length $k$ in $G$?}

 Marx~\cite{Marx09} gave algorithms for several problems  based on matroid optimization. The main theorem in his work is 
 Theorem 1.1~\cite{Marx09} on which most applications of~\cite{Marx09} are based. The proof of the theorem uses an algorithm to find representative sets as a black box. Applying our algorithm (Theorem~\ref{thm:repsetlovasz} of this paper) instead gives an improved version of 
 Theorem 1.1 of \cite{Marx09}. 

\begin{proposition}
%[\cite{Marx09}]
\label{prop:marxmainresult}
Let \mat{} be a linear matroid where the ground set is partitioned into blocks of size $\ell$. Given a linear representation $A_M$ of $M$, it can be determined in $\cO(2^{\omega k\ell} ||A_M||^{\cO(1)})$ randomized time whether there is an independent set that is the union of $k$ blocks. ($||A_M||$ denotes 
the length of $A_M$ in the input.)
\end{proposition}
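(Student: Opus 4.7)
The plan is to follow Marx's original strategy~\cite{Marx09}, substituting our faster computation of representative families (Theorem~\ref{thm:repsetlovasz}) for the one he used. First, since any union of $k$ blocks has size exactly $k\ell$, we may restrict attention to the $(k\ell)$-truncation $M'$ of $M$: apply Proposition~\ref{prop:truncationrep} to obtain a representation of $M'$ in randomized polynomial time (this is the sole source of randomization). From now on we work with the linear matroid $M'$ of rank exactly $k\ell$ and its representation.

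Let $B_1, \ldots, B_m$ be the blocks, each of size $\ell$. The plan is a dynamic program whose state is the number of blocks currently chosen. For $0 \leq i \leq m$ and $0 \leq j \leq k$, let $\mathcal{S}_i^j$ denote the family of all independent sets of $M'$ obtained as a union of exactly $j$ blocks from $\{B_1,\ldots,B_i\}$; each member is a $(j\ell)$-set. The algorithm maintains, for each $(i,j)$, a subfamily $\widehat{\mathcal{S}}_i^j \subseteq \mathcal{S}_i^j$ that is $(k-j)\ell$-representative for $\mathcal{S}_i^j$. Starting from $\widehat{\mathcal{S}}_0^0 = \{\emptyset\}$ and $\widehat{\mathcal{S}}_0^j = \emptyset$ for $j \geq 1$, we process block $B_{i+1}$ by setting
\[
\mathcal{T}_{i+1}^j \;=\; \widehat{\mathcal{S}}_i^j \;\cup\; \Bigl\{\, S \cup B_{i+1} \;:\; S \in \widehat{\mathcal{S}}_i^{j-1},\; S \cap B_{i+1} = \emptyset,\; S \cup B_{i+1} \in \mathcal{I} \,\Bigr\},
\]
and then computing $\widehat{\mathcal{S}}_{i+1}^j$ as a $(k-j)\ell$-representative subfamily of $\mathcal{T}_{i+1}^j$ via Theorem~\ref{thm:repsetlovasz}. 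Correctness follows by induction: Lemma~\ref{lem:repunion} together with the $\bullet$-style argument of Lemma~\ref{lem:repconvolution} (applied with the degenerate $p_2 = \ell$, $\widehat{\mathcal{S}}_2 = \{B_{i+1}\}$) shows that $\mathcal{T}_{i+1}^j$ itself is $(k-j)\ell$-representative for $\mathcal{S}_{i+1}^j$, and Lemma~\ref{lem:reptransitive} lets us shrink it further to $\widehat{\mathcal{S}}_{i+1}^j$. At the end, $\widehat{\mathcal{S}}_m^k$ is a $0$-representative family for $\mathcal{S}_m^k$, so the algorithm answers \textsc{yes} iff $\widehat{\mathcal{S}}_m^k \neq \emptyset$.

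For the running time, Theorem~\ref{thm:repsetlovasz} guarantees that $|\widehat{\mathcal{S}}_i^j| \leq \binom{k\ell}{j\ell}$, hence $|\mathcal{T}_{i+1}^j| \leq 2\binom{k\ell}{j\ell}$. Each invocation of Theorem~\ref{thm:repsetlovasz} with $p = j\ell$, $q = (k-j)\ell$, $t = 2\binom{k\ell}{j\ell}$ costs
\[
\cO\!\left(\binom{k\ell}{j\ell}\cdot t\cdot (k\ell)^\omega + t\cdot \binom{k\ell}{j\ell}^{\omega-1}\right) \;=\; \cO\!\left(\binom{k\ell}{j\ell}^{\omega}\cdot (k\ell)^{\cO(1)}\right)
\]
field operations, which is $\cO(2^{\omega k\ell}\cdot (k\ell)^{\cO(1)})$ since $\binom{k\ell}{j\ell} \leq 2^{k\ell}$. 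Summing over the $\cO(mk)$ choices of $(i,j)$ and accounting for the $\|A_M\|^{\cO(1)}$ cost of each field operation and of the truncation step, the total running time is $\cO(2^{\omega k\ell}\|A_M\|^{\cO(1)})$ as claimed.

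The only substantive point to verify carefully is the representative-family invariant under the two operations used in the recurrence, namely union of families (Lemma~\ref{lem:repunion}) and concatenation with a single fixed block; this is where one must be a little careful to invoke the correct disjointness and independence side-conditions. The running-time bookkeeping and the ensuing $2^{\omega k\ell}$ factor are then immediate consequences of plugging our $\binom{p+q}{p}^{\omega-1}$-type bound from Theorem~\ref{thm:repsetlovasz} into Marx's DP skeleton.
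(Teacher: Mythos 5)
Your proposal is correct and is essentially the argument the paper intends: the paper gives no explicit proof, saying only that Marx's proof of his Theorem~1.1 uses representative-set computation as a black box and that substituting Theorem~\ref{thm:repsetlovasz} yields the improved bound, which is precisely your plan of randomized truncation to rank $k\ell$ via Proposition~\ref{prop:truncationrep} followed by Marx's block-by-block dynamic program with each intermediate family shrunk by Theorem~\ref{thm:repsetlovasz} (justified through Lemmata~\ref{lem:reptransitive}, \ref{lem:repunion} and~\ref{lem:repconvolution}). One harmless bookkeeping slip: $|\mathcal{T}_{i+1}^j|$ is bounded by $\binom{k\ell}{j\ell}+\binom{k\ell}{(j-1)\ell}$ rather than $2\binom{k\ell}{j\ell}$ (the latter fails for $j>k/2$), but since both are at most $2\cdot 2^{k\ell}$ and $\omega\geq 2$, the stated $\cO(2^{\omega k\ell}\,||A_M||^{\cO(1)})$ running time is unaffected.
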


Finally, we mention another application from~\cite{Marx09} which we believe could be useful to obtain single exponential time 
parameterized and exact algorithms.  

 \medskip
\begin{center} 
\fbox{\begin{minipage}{0.96\textwidth}
\noindent{\sc  $\ell$-Matroid Intersection} \hfill {\bf Parameter:} $k$ \\
\noindent {\bf Input}: Let $M_1=(E,{\cal I}_1),\dots, M_1=(E,{\cal I}_\ell) $ be matroids on the same ground set $E$  given \\
 \noindent{\phantom{{\em Input}:}} by their representations $A_{M_1},\ldots, A_{M_{\ell}}$ over the same field $\mathbb{F}$ and a positive integer $k$.\\
\noindent{\bf Question}: Does there exist $k$ element set that is independent in each $M_i$ ($X\in {\cal I}_1\cap \ldots  \cap {\cal I}_\ell$)?  
%and $G[ X]$ is \\
%\noindent{\phantom{{\em  Question}:}} connected?
\end{minipage}}
\end{center}
\medskip
 
 Using Theorem 1.1 of ~\cite{Marx09}, Marx~\cite{Marx09} gave a randomized algorithm for {\sc  $\ell$-Matroid Intersection}. By using 
Proposition~\ref{prop:marxmainresult} instead we get  the following result. 
\begin{proposition}
{\sc  $\ell$-Matroid Intersection} can be solved in $\cO(2^{\omega k\ell} ||A_M||^{\cO(1)})$ randomized time. 
\end{proposition}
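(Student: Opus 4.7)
The plan is to reduce \textsc{$\ell$-Matroid Intersection} to the block-union problem solved by Proposition~\ref{prop:marxmainresult}. Given matroids $M_1=(E,{\cal I}_1), \ldots, M_\ell=(E,{\cal I}_\ell)$ with representations $A_{M_1}, \ldots, A_{M_\ell}$ over a common field $\mathbb{F}$, I would first take $\ell$ disjoint tagged copies $E_1, \ldots, E_\ell$ of $E$ (writing $e^{(i)}$ for the copy of $e\in E$ in $E_i$) and form the direct sum matroid $M = M_1 \oplus M_2 \oplus \cdots \oplus M_\ell$ on ground set $\bigsqcup_{i=1}^{\ell} E_i$. By Proposition~\ref{prop:disjointsumrep}, a representation $A_M$ of $M$ over $\mathbb{F}$ can be computed in polynomial time, with $\|A_M\|$ polynomial in $\sum_i \|A_{M_i}\|$.

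Next, I would partition the ground set of $M$ into $|E|$ blocks of size $\ell$, where for each $e \in E$ the block $B_e = \{e^{(1)}, e^{(2)}, \ldots, e^{(\ell)}\}$ collects the $\ell$ copies of $e$. The key observation, which is immediate from the definition of direct sum, is the following equivalence: a set $X \subseteq E$ with $|X| = k$ satisfies $X \in {\cal I}_1 \cap \cdots \cap {\cal I}_\ell$ if and only if the union of blocks $\bigcup_{e \in X} B_e$ is independent in $M$. Indeed, a subset of the ground set of $M$ is independent exactly when its projection to each $E_i$ is independent in $M_i$, and the projection of $\bigcup_{e \in X} B_e$ onto $E_i$ is precisely the $i$-th copy of $X$.

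Finally, I would invoke Proposition~\ref{prop:marxmainresult} on $M$, with block size $\ell$ and parameter $k$: this decides in randomized time $\cO(2^{\omega k \ell} \|A_M\|^{\cO(1)})$ whether $M$ admits an independent set that is the union of $k$ blocks. By the equivalence above, this answers the \textsc{$\ell$-Matroid Intersection} instance within the same running time, since the polynomial overhead of constructing $A_M$ from $A_{M_1}, \ldots, A_{M_\ell}$ is absorbed into $\|A_M\|^{\cO(1)}$.

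There is essentially no obstacle here beyond setting up the reduction: the content of the result is entirely contained in Proposition~\ref{prop:marxmainresult} (itself a consequence of Theorem~\ref{thm:repsetlovasz}), and the reduction from matroid intersection to the block-union formulation via direct sums is a standard and straightforward construction. The only bookkeeping points to verify are that the direct sum is representable over the common field $\mathbb{F}$ (guaranteed by Proposition~\ref{prop:disjointsumrep}) and that the block size in the constructed instance matches the number $\ell$ of matroids, while the number of blocks to be selected matches the required size $k$ of the common independent set.
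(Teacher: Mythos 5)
Your proposal is correct and follows essentially the same route as the paper, which simply invokes Marx's original reduction (form the direct sum $M_1\oplus\cdots\oplus M_\ell$, take as blocks the $\ell$ copies of each ground element, and apply the block-union result, here Proposition~\ref{prop:marxmainresult}) with the improved representative-set computation plugged in. Your explicit verification that a union of $k$ blocks is independent in the direct sum exactly when the corresponding $k$-element set is a common independent set, together with the representability of the direct sum over the common field, is precisely the bookkeeping the paper leaves implicit.
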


\section{Conclusion and Recent Developments}\label{sec:conclusion}
%!TEX root = repset-main.tex

In this paper,  we gave an efficient algorithm for computing a representative familiy of a family of independent sets in a linear matroid.  For the special case where the underlying matroid is uniform we  developed an even faster algorithm. We also showed interesting links between representative families of matroids and the design of single-exponential parameterized and exact exponential algorithms. We believe that these connections have a potential for a wide range of applications.  This works opens up an interesting avenue for further research, we list some of the natural open problems below.
\begin{itemize}
\item What is the best possible running time of an algorithm that computes a $q$-representative family of size at most ${p + q \choose p}$ for a $p$-family ${\cal F}$ of independent sets of a linear matroid? Does an algorithm with linear dependence of the running time on $|{\cal F}|$ exist, or is it possible to prove superlinear lower bounds?
\item It would be interesting to find faster algorithms even for special classes of linear matroids. Uniform matroids and graphic matroids are especially interesting in this regard.
\item Finally, the only matroids we used in our algorithmic applications were graphic, uniform, and partition matroids. It would be interesting to see what kind of applications can be handled by other kinds of matroids. 
\end{itemize}

\medskip

The results and methods from the preliminary conference version of this paper have already  been utilized to obtain several deterministic parameterized  
algorithms~\cite{FominG14,GoyalMP13,GoyalMPPS14,PinterSZ14,wgShachnaiZ14,ShachnaiZ14}. The results also have been used in the context exact learning~\cite{AbasiBM14} and linear time constructions of some $d$-restriction problems~\cite{Bshouty14}.  Lokshtanov et al.~\cite{LokshtanovMPS14} obtained a deterministic algorithm for computing a $\ell$-truncation of a given matrix and using this obtained a deterministic version of Theorem~\ref{thm:repsetlovaszrandomized} for those matroids whose representation can be found in deterministic polynomial time. Very recently  Zehavi~\cite{Zehavi14} has announced a further improvement for {\sc $k$-Path}  algorithm. The algorithm presented in~\cite{Zehavi14} runs in time $2.597^k \cdot n^{\cO(1)}$.  It has also been brought to out attention by Marek Cygan~\cite{megMarek}, in a private communication, that one can obtain single exponential time algorithms for {\sc Minimum Equivalent Graph} based on the methods described in~\cite{BodlaenderCK12,Cygan11}. 

%After the appearance of this paper there has been several results 

%\todo[inline]{conclusion}
\bibliographystyle{siam}
{\bibliography{ref}}

\end{document}